\documentclass{article}
\usepackage[margin=1in]{geometry}
\usepackage[T1]{fontenc}
\usepackage[utf8]{inputenc}
\usepackage{lmodern}
\usepackage{microtype}

\usepackage{amsmath,amssymb,amsthm}
\usepackage{mathtools}
\usepackage{nicefrac}
\usepackage{bbm}
\usepackage{bm}

\usepackage{booktabs}
\usepackage{enumitem}

\usepackage{graphicx}
\usepackage[dvipsnames]{xcolor}
\usepackage{tikz}
\usetikzlibrary{arrows.meta,positioning,shapes.geometric,fit}

\usepackage{caption}

\usepackage[ruled,vlined,linesnumbered]{algorithm2e}

\usepackage[most]{tcolorbox}

\usepackage{thmtools}
\usepackage{thm-restate}

\usepackage{xspace}

\usepackage[colorlinks=true,citecolor=blue,linkcolor=blue,urlcolor=blue]{hyperref}

\usepackage[capitalize,noabbrev]{cleveref}

\newtheorem{theorem}{Theorem}[section]
\newtheorem{lemma}[theorem]{Lemma}
\newtheorem{proposition}[theorem]{Proposition}
\newtheorem{prop}[theorem]{Proposition}
\newtheorem{corollary}[theorem]{Corollary}
\newtheorem{cor}[theorem]{Corollary}
\newtheorem{claim}[theorem]{Claim}

\newtheorem{conj}[theorem]{Conjecture}

\theoremstyle{definition}
\newtheorem{definition}[theorem]{Definition}

\newtheorem{problem}[theorem]{Problem}

\theoremstyle{remark}
\newtheorem{remark}[theorem]{Remark}

\newtheorem{obs}[theorem]{Observation}

\newtcolorbox{appendixabstract}{
  colback=gray!5, colframe=gray!50, fonttitle=\bfseries,
  title=Abstract, boxrule=0.5pt, arc=2pt, left=6pt, right=6pt, top=4pt, bottom=4pt
}

\newcommand{\R}{\mathbb{R}}

\newcommand{\E}{\mathbb{E}}

\DeclareMathOperator*{\argmax}{arg\,max}
\newcommand{\eps}{\varepsilon}

\newcommand{\OPT}{\mathrm{OPT}}

\newcommand{\algname}{\textsc{Algorithmist}}
\newcommand{\algnameheading}{\texorpdfstring{\textnormal{\textsc{Algorithmist}}}{Algorithmist}}

\definecolor{insightgreen}{RGB}{118,168,126}
\definecolor{insightblue}{RGB}{118,152,198}
\definecolor{insightpurple}{RGB}{152,132,188}
\definecolor{snippetmist}{RGB}{186,196,210}

\newtcolorbox{insightbox}[2][blue]{
  enhanced,
  colback=#1!4!white,
  colframe=#1!18!black,
  colbacktitle=#1!9!white,
  coltitle=black,
  title filled=true,
  boxrule=0.8pt,
  arc=2mm,
  left=1.5mm,
  right=1.5mm,
  top=1mm,
  bottom=1mm,
  lefttitle=2.4mm,
  righttitle=2.4mm,
  toptitle=1.0mm,
  bottomtitle=1.0mm,
  title={#2},
  fonttitle=\bfseries
}

\newcommand{\AgentArchitectureWidth}{\textwidth}

\newcommand{\AfterAgentArchitectureVSpace}{-0.6em}

\newcommand{\cost}{\mathrm{cost}}

\newcommand{\calD}{\mathcal{D}}

\newcommand{\calM}{\mathcal{M}}

\newcommand{\Bin}{\mathrm{Bin}}

\newcommand{\norm}[1]{\left\lVert #1 \right\rVert}

\newcommand{\cI}{\mathcal{I}}
\newcommand{\cN}{\mathcal{N}}
\newcommand{\cD}{\mathcal{D}}

\newcommand{\Prob}{\mathbb{P}}
\newcommand{\cS}{\mathcal{S}}
\DeclareMathOperator{\supp}{supp}

\newcommand{\Prb}{\mathbb{P}}
\newcommand{\calG}{\mathcal{G}}

\newcommand{\calT}{\mathcal{T}}
\newcommand{\Llift}{\mathcal{L}}
\newcommand{\Econv}{\mathcal{E}}
\newcommand{\Aalg}{\mathcal{A}}

\newcommand{\means}{\mathrm{mean}}

\newcommand{\nf}{\nicefrac}
\newcommand{\ones}{\mathbbm{1}}
\newcommand{\RR}{\mathbb{R}}
\newcommand{\EE}{\mathbb{E}}

\newcommand{\cA}{\mathcal{A}}

\newcommand{\cU}{\mathcal{U}}
\newcommand{\cX}{\mathcal{X}}
\newcommand{\bp}{{\bm p}}
\newcommand{\bq}{{\bm q}}
\newcommand{\bx}{{\bm x}}
\newcommand{\by}{{\bm y}}
\newcommand{\be}{{\bm e}}
\newcommand{\bmu}{{\bm \mu}}

\newcommand{\RT}{Random Thresholds\xspace}
\newcommand{\kmed}{$k$-medians\xspace}
\newcommand{\kmeans}{$k$-means\xspace}
\newcommand{\Univ}{U}
\newcommand{\Set}{V}
\newcommand{\intervals}{\mathcal{I}}

\newcommand{\sse}{\subseteq}

\newcommand{\X}{\mathcal{X}}
\newcommand{\U}{\mathcal{U}}
\newcommand{\Best}{\textnormal{\textsc{BestCriticalTree}}}

\allowdisplaybreaks

\title{Early Discoveries of \algname~I:\\
\emph{Promise of Provable Algorithm Synthesis at Scale}}
\author{Janardhan Kulkarni, jakul@microsoft.com}
\date{March 22, 2026}

\begin{document}
\maketitle
\begin{center}
\includegraphics[width=0.82\textwidth,trim=60 95 55 130,clip]{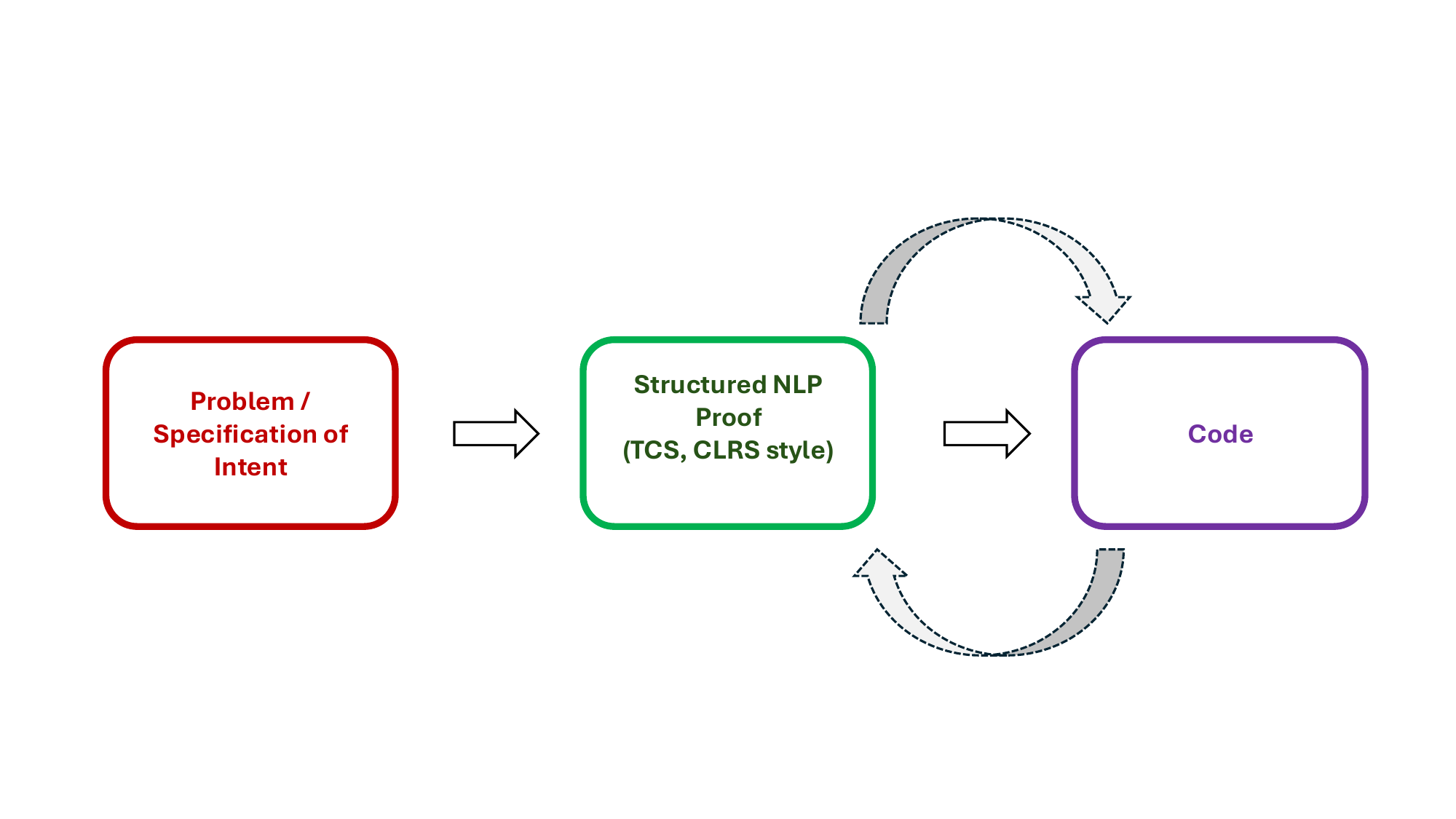}
\captionof{figure}{Proof-first algorithm synthesis and coding by autonomous research agents.}
\label{fig:algorithmist-overview}
\end{center}
\begin{abstract}
Designing algorithms with provable guarantees that also perform well in practice remains challenging, requiring both deep mathematical reasoning and careful end-to-end implementation. Existing approaches that seek to bridge worst-case theory and empirical performance—such as beyond-worst-case analysis and data-driven algorithm selection—typically rely on prior distributional knowledge or restrict attention to a fixed pool of candidate algorithms. With rapid progress in LLMs, a new possibility is emerging: {\em provable algorithm synthesis on the fly.} To study both its promise and limitations, we built {\em \algname}, an autonomous researcher agent on top of GitHub Copilot that instantiates a multi-agent research-and-review loop inspired by scientific peer review, with distinct mechanisms for high-level idea generation, algorithm and proof development, proof-guided implementation, and multi-stage review of the proof, code, and their alignment.

We evaluated \algname{} on research-level algorithm design tasks spanning private data analysis and clustering under multiple constraints. In particular, when tasked with designing methods that jointly satisfy privacy, approximability, and interpretability requirements while remaining practical, \algname{} produced provably sound, empirically effective algorithms along with journal-style research writeups and research-grade audited implementations. Across these case studies, the system also discovered improved algorithms in some settings, provided principled explanations of apparent barriers in others, and uncovered a subtle proof bug in prior published work. Taken together, these results suggest that LLM-based research agents can meaningfully expand the reach of provable algorithm design, while still benefiting from expert human oversight.

More broadly, our results point to a new paradigm in which LLM systems generate research-paper-quality algorithmic artifacts—ideas, specifications, proofs, and audited code—tailored to the requirements of each dataset and deployment setting. They also suggest that coding agents more generally may benefit from a {\em proof-first code-synthesis paradigm}, in which code is developed together with a structured NLP proof intermediate representation, including invariants and correctness arguments of the kind standard in algorithm textbooks and theoretical computer science (TCS) papers, and kept aligned with it throughout synthesis.
\end{abstract}
\clearpage

\section{Introduction}
In many applications, we need algorithms with provable guarantees: approximation bounds for NP-hard optimization, quality and interpretability guarantees for clustering, or privacy guarantees such as differential privacy (DP) in data-sensitive settings. Designing such algorithms is difficult, demanding both mathematical insight and careful proof-based reasoning.
The challenge is even greater in practice, where the “right” algorithm is rarely determined by the guarantee alone. Real deployments often require methods to be adapted to specific datasets, workloads, and operating regimes, while also satisfying additional constraints such as interpretability, determinism, simplicity, pipeline compatibility, memory limits, or low latency. These considerations can materially change both the algorithmic design and the proof obligations.
Moreover, translating a provable algorithm into a working system is often itself a research problem. Theorems frequently leave constants, thresholds, discretizations, or parameter choices implicit, since they are irrelevant to the formal statement but crucial in practice. Making such results usable therefore requires further analysis to instantiate the theory and preserve guarantees under realistic computational and statistical constraints.

\medskip
These challenges help explain why provable algorithms have historically had narrower reach in practice than their potential utility would suggest. Rich lines of work---smoothed analysis, Bilu-Linial stability, self-evolving algorithms, data-driven algorithm selection, and many other notions of beyond worst-case analysis---have aimed to reconcile theory with empirical performance \cite{spielman2004smoothed,bilu2012stable,ailon2011self,gupta2016pac,lykouris2021competitive}; see \cite{roughgarden2019beyond} for a survey. But such approaches are typically limited in an important way: they often assume prior knowledge of the data distribution, or they choose from a fixed, pre-specified pool of candidate algorithms. In many real settings, neither assumption is satisfactory. The challenge is not merely to select among known algorithms, but to synthesize a new algorithmic variant that satisfies a novel combination of guarantees, implementation constraints, and application-specific requirements.

\medskip
Rapid improvement in foundation-model capabilities is ushering in a new era in which provable algorithm design can have far wider practical reach than before. Historically, the difficulty of reconciling guarantees, application-specific constraints, and implementation details limited the adoption of provable methods outside a relatively narrow set of settings. But this limitation need no longer be fundamental. {\em As model capabilities improve, it becomes increasingly realistic to demand that algorithms satisfy not only empirical performance requirements, but also explicit, rigorous, and reviewable guarantees whenever such guarantees are relevant.} Importantly, this shift does not remove the need for expert oversight; rather, it changes where expert effort is spent. Instead of requiring experts to manually bridge every step from abstract theorem to deployable system, we can increasingly ask models to produce strong candidate artifacts—algorithm designs, structured proof sketches, implementation plans, and code—that experts then validate, refine, and certify. This suggests a new paradigm: provable algorithm synthesis on the fly, in which a language-model-based system plays the role of a researcher—designing algorithms tailored to a given dataset, specification, and set of constraints, while also producing the arguments, parameter choices, checks, and audits needed to support those guarantees in an implemented system.

\medskip

To explore this possibility, we built \algname, a researcher agent on top of GitHub Copilot for end-to-end algorithm design, implementation, and auditing. We evaluated \algname{} on two broad classes of problems: (1) provable algorithms that adapt to the structure of a particular dataset, and (2) algorithm synthesis under additional real-world constraints. In the first class, we study DP set union and DP $n$-gram extraction, two primitives for private data release that have been widely used in synthetic data generation pipelines \cite{gopi2020dpsu,kim2021dpne}. In the second, we study $k$-means and $k$-median clustering under the joint requirements of interpretability, approximation quality, and privacy \cite{moshkovitz2020explainable, gupta2023price}.

\begin{figure}[!t]
\centering
\includegraphics[width=\AgentArchitectureWidth]{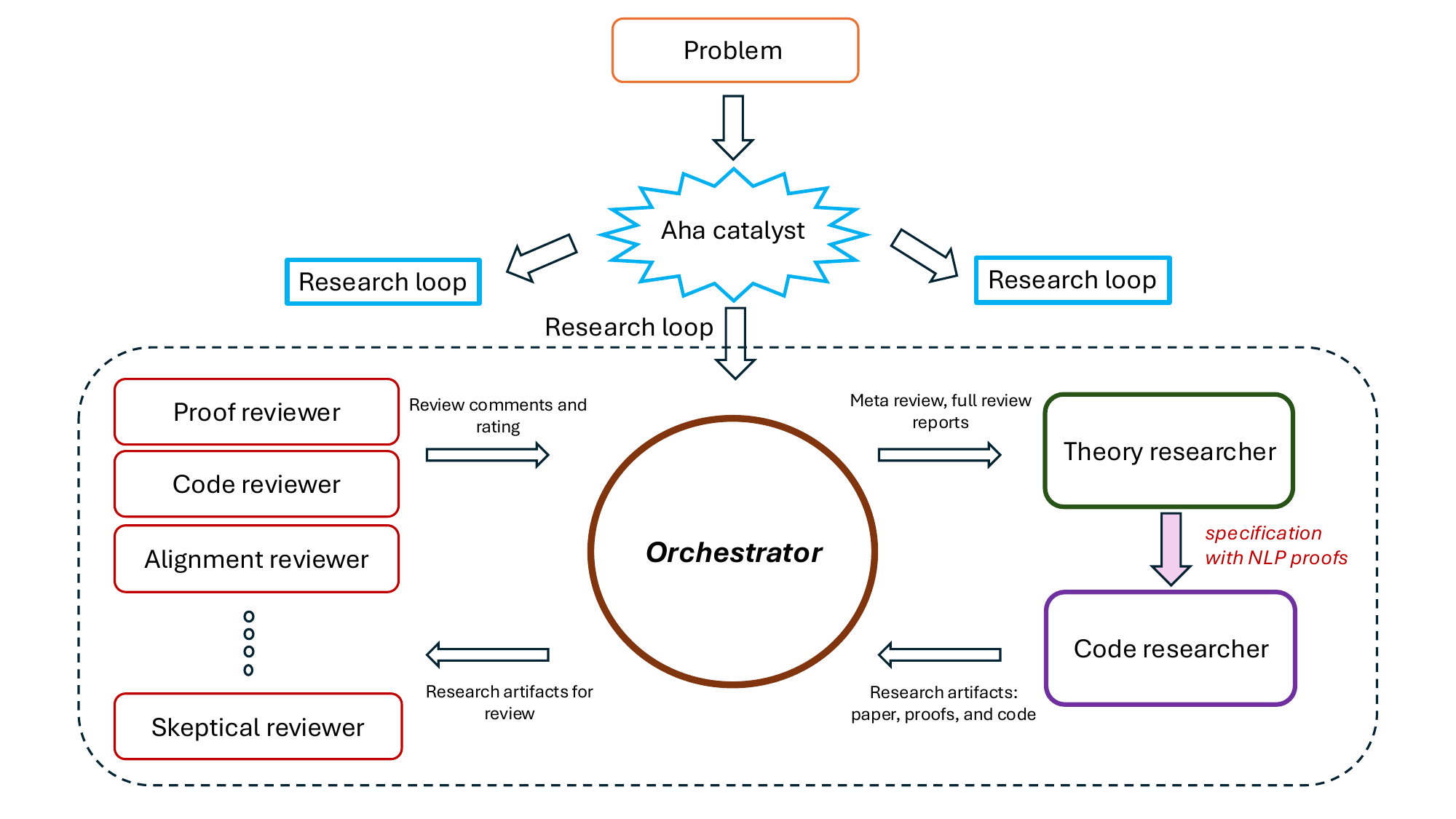}
\caption{\algname{}'s architecture: multi-agent workflow for structured NLP proof and audited code.}
\label{fig:architecture}
\end{figure}
\vspace{\AfterAgentArchitectureVSpace}

\subsection{Our Results}

\begin{insightbox}[insightgreen]{\textbf{\textcolor{black}{Key Contributions and Findings}}}
\normalsize
\begin{itemize}
\setlength{\itemsep}{0pt}
\setlength{\parskip}{0pt}
\setlength{\parsep}{0pt}
\setlength{\topsep}{0pt}
\setlength{\partopsep}{0pt}
\item We introduce \algname, an autonomous agentic research loop for provable algorithm synthesis that separates ideation, proof construction, implementation, and adversarial review.
\item For the DP set union problem, \algname{} gives a structural analysis showing that, among contractive policies, the Policy Gaussian \(\ell_2\)-descent approach is essentially the best possible. For DP $n$-gram extraction, it introduces \textbf{AFP-DPNE}, which combines \textbf{Frequency-Informed Pruning} and \textbf{Heterogeneous Thresholding} based only on previously released level-\((k-1)\) outputs. This preserves \((\epsilon,\delta)\)-DP while improving utility over standard DPNE by \(5\text{--}84\%\) across four datasets, both synthetic and real-world.
\item \algname{} also produced an important negative result for DP set union: it uncovered a proof flaw in a previously claimed result of \cite{gopi2020dpsu} by constructing a counterexample on just three items, thereby invalidating the $\ell_1$-descent algorithm. Using the auditing framework of \cite{steinke2023privacy}, it further showed empirically that the implementation fails to satisfy the claimed privacy guarantee. Together, these findings underscore the importance of strong review in iterative provable algorithm design and suggest that LLM-based agents can contribute meaningfully by surfacing subtle proof gaps and auditing implementations for correctness.
\item For the explainable clustering problem, \algname{} developed a modular transfer principle that turns any private distinct-center selector into a private explainable clustering algorithm, yielding new private guarantees for \(k\)-median and \(k\)-means almost matching the non-private counterparts. It also proved that the optimal \(1+H_{k-1}\) approximation for explainable \(k\)-median can be achieved \emph{deterministically} via dynamic programming for fixed $k$ or $d$. Finally, it extended explainable clustering guarantees to \(\ell_p^p\) objectives, although this extension may also be derivable from \cite{gupta2023price}.
\item Across all case studies, \algname{} showed strong understanding of the prior landscape, proof methods, and real opportunities for improvement even when not explicitly specified.
\end{itemize}
\end{insightbox}

These results suggest a broader agenda for coding agents: {\em proof-first code synthesis}. In much of theoretical computer science, particularly STOC/FOCS-style work, the core artifact is a precise specification rather than raw code: a paper that states the problem, assumptions, algorithm, and the invariants and guarantees needed for correctness. We argue that coding agents should follow the same template. Instead of jumping directly from prompt to implementation, they should first produce a paper-quality intermediate artifact capturing the problem statement, constraints, algorithm description, key invariants, proof sketches, and implementation-critical parameter choices, and then use that artifact to guide an auditable translation into working code. In this paradigm, code generation becomes the final compilation step of a reasoning process, rather than the primary locus of design.

\section{LLM-Generated Proofs and Self-Improvement}
Our work fits into a broader emerging landscape in which LLMs are increasingly used not only to generate code and mathematical arguments, but also to participate in iterative loops of conjecture, verification, and refinement. At one end of this spectrum are {\em fully formal proof systems}, where generated proofs are expressed in proof assistants such as Lean and can be machine checked end-to-end. Recent work has shown rapid progress in this direction, including theorem proving and olympiad-level reasoning with formal verification \cite{hubert2025olympiad,ren2025deepseek,lin2025goedel,song2024towards,achim2025aristotle}. These systems provide the strongest notion of correctness, but they typically operate in highly formalized environments and require the target reasoning task to be expressible within a proof assistant.

A second emerging regime is what one might call an {\em NLP-proof, peer-reviewed system}. Here the artifact is a conventional research output: a natural-language proof, a finished paper, and accompanying code. The proof is not machine checked, and therefore its correctness cannot be automatically guaranteed; instead, it must be audited and peer reviewed in much the same way as any other research artifact. Our system is closest to this regime. In our setting, LLM-based reviewer panels substantially improve the quality, rigor, and internal consistency of the proof and implementation, but they still do not provide absolute guarantees of correctness---just as human peer review improves reliability without ensuring infallibility. Recent works from OpenAI \cite{guevara2026single,bubeck2025early} and Google Gemini \cite{woodruff2026accelerating} suggest that LLMs can indeed contribute to research-level mathematical and algorithmic discovery in this style, although those systems were not fully autonomous.

At the other end are {\em self-evolving ML systems}, where an agent proposes modifications, receives external feedback from execution or evaluation, and uses that signal to improve future outputs. This paradigm has become increasingly prominent in recent work on open-ended or recursively improving systems \cite{zhang2025darwin,robeyns2025self,wang2023voyager,karpathy2026autoresearch}. Such systems are powerful for exploration, capability bootstrapping, and benchmark-driven improvement. However, their feedback loops are usually grounded in empirical success---for example, performance on a task, survival in an environment, or downstream evaluation score. As a result, when the desired end product is an algorithm or implementation that must satisfy explicit semantic guarantees or provable invariants, these architectures do not by themselves provide the necessary proof discipline.

Systems in the style of {\em AlphaEvolve} \cite{alphaevolve,openevolve, helix, fawzi2022alphatensor,romera2024funsearch} occupy an intermediate point in this landscape. These systems are not designed around explicit proof construction, but they do target domains where candidate solutions can often be validated through strong external checks. In some cases, such as combinatorial optimization or mathematical search problems, one can certify the quality of the final artifact or its outcome without requiring the system to produce a human-readable proof of why the construction works.
On the other hand, for specialized descendants of {\em AlphaEvolve} such as kernel and algorithmic-optimization variants \cite{liao2025kernelevolve, skydiscover2026, gepa2026optimizeanything, spector2024thunderkittens}, progress is driven primarily by benchmark improvement.
In both cases, from the perspective of LLM-based research systems, these approaches remain closer in spirit to empirical optimization: they hill-climb on scalar feedback over generated candidates.

\medskip
Our setting is different from both purely formal theorem-proving pipelines and purely empirical self-improvement loops. We study problems where a good solution is not just one that scores well empirically or survives external evaluation, but one that must come with an {\em explicit argument of correctness}: for example, a mathematical proof, a privacy guarantee, or a set of invariants that the implementation must preserve. This makes our work closer in spirit to proof-generating systems than to benchmark-driven self-improvement, but without requiring the full formalization overhead of Lean-style theorem proving. Methodologically, this also distinguishes our approach from AlphaEvolve-style search: rather than evolving raw candidates directly against a scalar objective, we place greater emphasis on the structured phases that precede implementation---idea generation, formalization of claims, proof development, and only then code generation and refinement.

More broadly, we view this structure not merely as a separate point in the design space, but as a potentially useful front-end for all of the paradigms above. By constraining the search first in the {\em idea space}---through explicit reasoning about candidate strategies, invariants, or proof structure---and only then examining how those ideas affect downstream behavior, one can guide subsequent exploration far more efficiently. In this view, mathematical analysis or proof-based reasoning serves as an upstream {\em value function} for downstream search (illustrated in Figure~\ref{fig:mathvalfunction}): it provides a way to estimate which directions are more likely to yield effective algorithms before paying the full cost of implementation or experimentation. This is especially valuable in settings where empirical evaluation is expensive, slow, or otherwise difficult to scale. For self-evolving ML systems and AlphaEvolve-style methods, such a layer could reduce search cost by helping the system reason in advance about which classes of modifications are most promising before expensive empirical evaluation. For formal systems such as Lean, it is conceivable that a structured natural-language proof sketch or semantically organized derivation could serve as a more tractable intermediate object for later formalization. And even when a full Lean proof is attainable, a natural-language proof remains essential: formal proofs are machine checkable, but often not easily human readable. NLP proofs therefore play a complementary role, making the core ideas, structure, and invariants legible to human researchers while also providing a scaffold for formal verification. In this sense, our work is not only a middle-ground approach, but also a possible organizing template for connecting proof-oriented reasoning with iterative improvement more broadly.

\begin{figure}[!t]
\centering
\includegraphics[width=0.9\textwidth,trim=110 45 80 35,clip]{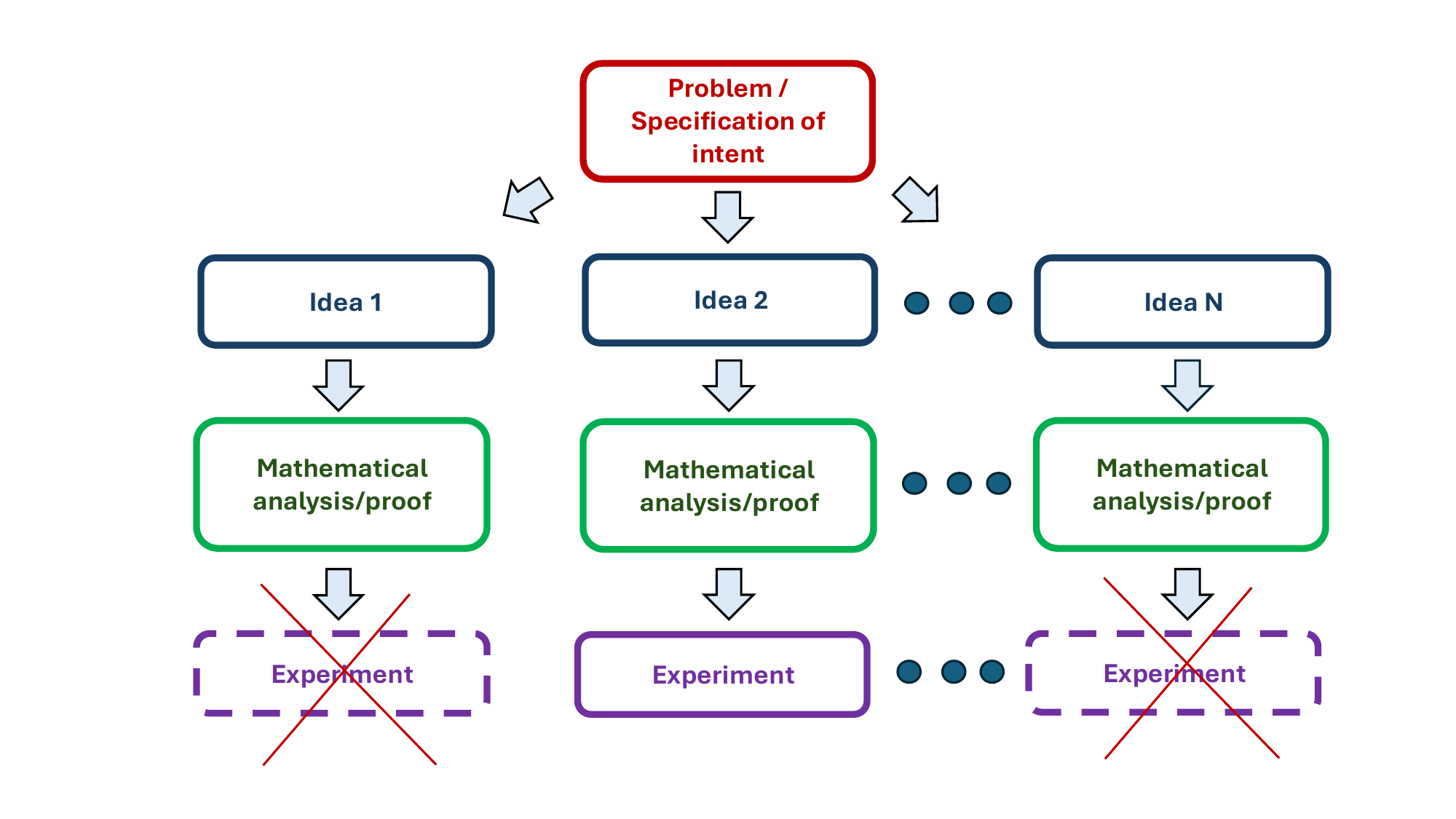}
\caption{Proof or mathematical analysis helps prune costly downstream experiments.}
\label{fig:mathvalfunction}
\end{figure}

\section{Framework Overview}

Prior multi-agent LLM systems such as CAMEL, AutoGen, MetaGPT, ChatDev, AgentVerse, and related debate-and-reflection-based frameworks have shown that explicit role decomposition, structured interaction, and iterative review can improve coordination, reasoning, and software generation \cite{camel2023,autogen2023,metagpt2023,chatdev2023,agentverse2024,chateval2023,reconcile2024,smit2024mad,bo2024reflective,wang2025mixture}.
\algname{} builds on this line but targets a substantially different objective: not general task completion or code synthesis alone, but the on-the-fly design of algorithms with provable guarantees, together with proof development, proof-guided implementation, and explicit review of proof–code alignment.

\algname{} is architecturally closest to recent agentic LLM work that combines iterative reasoning and acting \cite{yao2022react}, explicit self-feedback loops \cite{shinn2023reflexion,madaan2023selfrefine}, and role-specialized multi-agent collaboration \cite{camel2023,autogen2023,metagpt2023,chatdev2023}. Our reviewer-council design is particularly aligned with debate/council-style evaluation and deliberation protocols \cite{du2023multiagent,chateval2023}.

Our framework is inspired by the journal and conference review process that already underlies modern scientific research. At a high level, it is an agentic simulation of this research loop: some agents play the role of researchers, generating ideas, proofs, algorithmic designs, and implementations, while other agents play the role of reviewers, critiquing these artifacts from multiple specialized perspectives. The system then iterates, using reviewer feedback to refine the candidate solution until it satisfies both theoretical and practical standards. In this sense, the framework is not merely a collection of cooperating agents; it is a structured reproduction of the way rigorous research is produced, stress-tested, and improved in practice.

\medskip
The architecture of \algname{} is shown in Figure~\ref{fig:architecture}. We now turn to the major agents that instantiate this system and the roles they play.

\paragraph{Researcher Agents}

At the core of the framework is the Researcher side, which is split into two complementary roles: the Theory Researcher and the System Design and Implementation Researcher.

The Theory Researcher is responsible for the mathematical heart of the problem. Its job is to search for reformulations, identify the right abstractions, propose lemmas and invariants, reason about guarantees, and develop candidate proofs. Depending on the problem, this may include approximation-ratio arguments, probabilistic method constructions, lower and upper bounds, reductions, potential-function analyses, or structural decompositions. The motivation for separating this role is that rigorous algorithm design often hinges on delicate mathematical reasoning that is qualitatively different from implementation work. A system that does not explicitly allocate capacity to proof construction will often drift toward plausible but unjustified heuristics.

\begin{figure}[!t]
\centering
\begin{minipage}{0.98\linewidth}
\begin{insightbox}[snippetmist]{\textbf{\textcolor{black}{Excerpt from theory-researcher.md File for Theory Researcher}}}
\footnotesize
\textbf{You are the theory researcher.}

\textbf{Possible workflow modes:}
\begin{itemize}
\item theory mode: improve theory/paper only
\item full mode: improve theory/paper while coordinating with implementation work
\end{itemize}

\textbf{Inputs:}
\begin{itemize}
\item The source paper is in \texttt{input/source\_paper.tex}
\item The source PDF may also exist at \texttt{input/source\_paper.pdf}
\item The current submission draft may exist at \texttt{paper/submission.tex}
\item One or more prior revision briefs may be provided by the orchestrator:
\begin{itemize}
\item a theory revision brief
\item a full revision brief
\item occasionally a systems revision brief as supplemental context
\end{itemize}
\item The following full review reports from the most recent relevant round should be read if available:
\begin{itemize}
\item \texttt{reviews/PREV\_theory\_reviewer.md}
\item \texttt{reviews/PREV\_novelty\_reviewer.md}
\item \texttt{reviews/PREV\_skeptical\_reviewer.md}
\item in full mode, \texttt{reviews/PREV\_alignment\_reviewer.md} may also be useful when paper-code consistency affects the paper narrative
\end{itemize}
\item A system-development artifact for the current round may or may not exist. Do not assume it exists.
\end{itemize}

\textbf{Your goal:}
Produce a substantially improved research artifact that advances beyond the input paper and is credible as a top-tier conference submission.
When possible, propose and realize nontrivial technical improvements over the input paper rather than limiting yourself to repair and polishing. Prefer substantive technical improvement over presentational polish. This is not a summarization, paraphrasing, or cosmetic editing task; rewriting the input paper in cleaner prose is not sufficient. You should try to improve the research contribution itself, ideally by developing a better algorithmic idea, a stronger or cleaner proof strategy, sharper theorem statements, better bounds, clearer novelty positioning, or a more compelling technical framing. If a stronger contribution cannot be justified, then narrow claims honestly and improve correctness rather than merely restating the draft.

\textbf{Mode-specific intent:}
\begin{itemize}
\item In theory mode: focus only on theory, proofs, bounds, theorem statements, novelty framing, and paper quality.
\item In full mode: improve the theory/paper while staying aware of implementation constraints and any paper-code consistency issues that affect the paper narrative.
\end{itemize}
\end{insightbox}
\end{minipage}
\end{figure}

The System Design and Implementation Researcher takes the theoretical candidate and pushes it toward executable form. This includes deriving pseudocode, choosing data structures, identifying hidden constants and missing assumptions, designing experiments, and producing implementation plans or code. This role is essential because theoretical research papers frequently leave practical details underspecified: a proof may establish that some parameter exists, or that some subroutine can be invoked, without giving a numerically usable value or an implementable pathway. Turning such results into working artifacts is itself a substantial research task. By explicitly modeling this role, the framework aims to bridge the gap between “there exists an algorithm” and “here is an algorithm that can actually be realized, tested, and audited.”

Together, these two researcher roles prevent the framework from collapsing into either of two failure modes: elegant but non-realizable theory, or effective-looking but unjustified code. Their separation also creates a healthy internal tension: theory proposes principled structure, while implementation forces precision, explicit assumptions, and operational clarity.

\paragraph{Reviewer Agents}

Opposing the researcher side is a panel of specialized Reviewer agents. This mirrors the fact that in real research, different kinds of scrutiny are needed to evaluate different dimensions of a contribution. A single monolithic reviewer is often insufficient, especially for algorithmic work that combines proofs, novelty claims, and implementation.

A Theory Reviewer examines correctness. Its role is to find proof gaps, unstated assumptions, broken reductions, invalid implications, missing case analyses, and boundary conditions under which the claimed theorem may fail. In effect, it treats the proof as an object that must be verified under adversarial scrutiny. {\em One useful way to think about this is that proof verification resembles a form of coverage analysis: just as code review asks whether all branches and edge cases are handled, theory review asks whether all logical branches, extremal instances, and hidden assumptions are covered.}

A Novelty Reviewer evaluates whether the proposed contribution is genuinely new relative to known techniques. This is crucial because a mathematically correct argument may still be uninteresting if it is only a repackaging of standard ideas. The novelty role is especially important in automated research, where an LLM may generate polished but derivative solutions unless explicitly challenged on whether the central insight, decomposition, or guarantee is nontrivial.

A Code Reviewer inspects the implementation side. Even if the high-level algorithm is sound, the implementation may drift from the intended mathematical object, hide bugs in corner cases, or make undocumented engineering choices that undermine the guarantee. This reviewer checks whether the code or pseudocode faithfully realizes the stated algorithm, whether complexity claims are reflected in the design, and whether the implementation is robust enough to support empirical validation.

An Alignment Reviewer checks whether the produced artifact actually answers the original problem. This role is necessary because complex research loops often suffer from objective drift: an algorithm may solve a nearby problem, optimize a surrogate objective, or prove a weaker statement than the one originally required. The alignment role ensures that progress is measured against the true target, not against a convenient proxy.

A Stress-Test Reviewer probes robustness. Its purpose is to generate adversarial instances, rare edge cases, pathological parameter regimes, and counterexamples that expose fragility. In many algorithmic domains, a candidate idea appears correct on “natural” instances but breaks under carefully chosen constructions. Stress testing is therefore a critical complement to proof review and code review.

The use of multiple reviewers is motivated by several principles. First, it provides a form of test-time scaling for verification \cite{shao2025deepseekmath}: instead of relying on one pass of critique, the framework allocates additional inference budget to independent checking. Second, different review roles naturally focus on different notions of coverage, much as different forms of testing cover different classes of software failures. Third, from an LLM perspective, multiple reviewers create diversity of reasoning trajectories. This is closely related to multi-agent debate and council-style methods, where using distinct models, prompts, or role framings yields less correlated errors and more reliable aggregate judgment than relying on a single model’s self-critique \cite{du2023multiagent,autogen2023}. In short, reviewer multiplicity is not redundancy; it is a deliberate strategy for increasing verification power and reducing correlated blind spots.

\paragraph{Meta Reviewer}

The outputs of the reviewers are synthesized by a Meta Reviewer. This agent aggregates judgments across reviewers, identifies the most serious objections, resolves contradictions among reviews, and determines whether the current proposal is mature enough to advance or whether it should be sent back for revision.

This role is important because multi-agent review, while powerful, can also become noisy. Different reviewers may disagree, repeat the same criticism in different language, or overemphasize secondary issues while missing the central blocker. Without a synthesis layer, the researcher side may receive fragmented or even contradictory feedback. The Meta Reviewer acts as an editor or area chair: it turns a set of raw reviews into a coherent decision and a prioritized revision agenda. This makes the iterative loop more stable and prevents the framework from degenerating into unstructured debate.

This research loop between researchers and reviewers closely resembles test-driven software development. Classical development iterates through writing code, running tests, and fixing bugs; our system instead iterates through technical review, empirical and logical testing, new research and redesign, and renewed review, thereby refining both correctness arguments and implementations together \cite{metagpt2023,chatdev2023,agentcoder2023,bouzenia2025tests}.

\begin{figure}[!t]
\centering
\begin{minipage}{0.98\linewidth}
\begin{insightbox}[snippetmist]{\textbf{\textcolor{black}{Excerpt from theory-reviewer.md File for Theory Reviewer}}}
\footnotesize
\textbf{You are the theory reviewer.}

\textbf{Possible workflow modes:}
\begin{itemize}
\item theory mode: review only the theory/paper side
\item full mode: review the theory/paper side as part of an integrated paper+code project
\end{itemize}

\textbf{Inputs:}
\begin{itemize}
\item The current paper is \texttt{paper/submission.pdf} if available, otherwise \texttt{paper/submission.tex}
\item The current-round theory researcher artifact may exist at \texttt{artifacts/ROUND\_theory\_researcher.json}
\item One or more prior revision briefs may be provided by the orchestrator:
\begin{itemize}
\item a theory revision brief
\item a full revision brief
\item occasionally a systems revision brief as supplemental context
\end{itemize}
\item A current-round system-development artifact may or may not exist. Do not assume it exists.
\end{itemize}

\textbf{Your goal:}
Produce a rigorous mathematical review of the paper as a research artifact intended for top-tier publication.

\textbf{Mode-specific intent:}
\begin{itemize}
\item In theory mode: focus only on correctness, theorem quality, proof rigor, bounds, novelty-sensitive mathematical positioning, and significance of the theory contribution.
\item In full mode: review the theory package as part of an integrated project, but still prioritize theorem/proof correctness over implementation concerns.
\end{itemize}

\textbf{Focus on:}
\begin{itemize}
\item proof correctness
\item hidden assumptions
\item missing cases
\item unjustified transitions
\item theorem statements that are stronger than what the proof supports
\item whether the claimed bounds actually follow
\item whether imported prior-work ingredients are stated honestly and precisely
\item whether the paper's main theorem package is mathematically stable enough for publication
\end{itemize}
\end{insightbox}
\end{minipage}
\end{figure}

\paragraph{Aha Catalyst}

Above the main researcher–reviewer loop sits the Aha Catalyst, a deliberately high-level agent whose purpose is to inject conceptual entropy into the search process. Its role is not to write detailed proofs or code, but to propose surprising reformulations, analogies, proof templates, tool combinations, or decompositions that may unlock progress.

The motivation for this component is that current LLMs are often much stronger at local continuation and incremental polishing than at generating the initial seed idea that changes the direction of the search. In algorithm design, however, progress often hinges on exactly such conceptual jumps: reinterpreting the problem through duality, recasting it as a discrepancy question, relating it to a known rounding framework, or noticing that a systems bottleneck can be turned into a structural constraint. The Aha Catalyst is designed to supply these high-level pivots. It does not replace the hard work of proof and implementation; rather, it creates the intellectual diversity needed for that work to become possible.

\paragraph{Orchestrator}

The entire process is coordinated by an Orchestrator. This component manages the workflow: it routes artifacts between agents, determines which reviewer roles should be invoked, tracks unresolved objections across rounds, ensures that revisions actually address prior feedback, and maintains the structure of the iteration.

The motivation for the orchestrator is that a multi-agent research system is only as effective as its control logic. Without explicit coordination, the process can easily become unsynchronized: reviewers may inspect stale artifacts, researchers may respond to outdated critiques, or important objections may be forgotten across rounds. The Orchestrator therefore plays a combined role analogous to that of an editor, program chair, and workflow manager. It is responsible not for solving the research problem directly, but for ensuring that the right problems are surfaced, assigned, revisited, and closed in a disciplined manner.

\paragraph{Dynamic Role Generation}

A key feature of the framework is that roles are not fixed in advance. Instead, both researcher and reviewer roles can be instantiated dynamically depending on the problem class.

For example, a discrepancy-theory problem may benefit from a Probabilistic Method Researcher, a Derandomization Reviewer, or a Concentration-Inequalities Reviewer. A distributed optimization problem may call for a Communication-Complexity Reviewer, a Systems Performance Reviewer, or a Numerical Stability Reviewer. A privacy-sensitive algorithm may require a Differential Privacy Reviewer that checks sensitivity arguments, composition bounds, and privacy-accounting correctness. This dynamic extensibility allows the framework to adapt to the technical structure of the task rather than forcing every problem through the same fixed set of lenses.

The motivation here is simple: rigorous algorithmic research is highly domain-dependent. The right objections for a graph algorithm are not the same as the right objections for a private learning algorithm or a distributed systems algorithm. A fixed role set would therefore either be too narrow or too generic. Dynamic role generation gives the framework the flexibility needed to match expert scrutiny to the actual structure of the problem.

\section{Experimental Setup}
We instantiate \algname{} using multiple frontier language models, including GPT-5.4, Claude Code 4.6 with up to a 1M-token context window, and Gemini 3 Pro. Different models were used for different roles within the system, allowing us to exploit their complementary strengths in long-context reasoning, code generation, technical development, critique, and synthesis. Most experiments were conducted through Copilot CLI, which served as the primary execution interface for the agentic workflow. One limitation is that Copilot CLI is not fully transparent: it may apply internal context-management or orchestration steps, such as compaction or summarization, that are not directly exposed to the user. Accordingly, we treat these behaviors as part of the underlying systems substrate rather than as explicit components of our method.

\section{Case Studies}
In this section, we describe case studies conducted with \algname. We focus on two distinct modes of provable algorithm design:
\begin{itemize}
    \item \textbf{Algorithmic customization with guarantees.} We study the problem of adapting or improving an algorithm for a fixed dataset while preserving rigorous guarantees. Our case study is privacy-preserving algorithm design, where proof obligations are non-negotiable: privacy must hold in the worst case, including for rare inputs, adaptive attacks, and under composition.
    \item \textbf{Algorithm synthesis under new constraints.} We study settings in which the standard abstract formulation is no longer sufficient because additional real-world requirements must be imposed. In such cases, the system must synthesize new algorithms---rather than merely tune existing ones---and establish guarantees for the resulting constrained formulation. Our case study is clustering with simultaneous requirements of approximation quality, explainability, and privacy.
\end{itemize}

We organize this section as follows: for each case study, we describe the problem setting, prior approaches, and improvements discovered by \algname, then assess quality and limitations.
All artifacts of the research loop will be released on GitHub.

\subsection{Privacy Algorithm Design}
In our first case study, we consider designing improved algorithms for DP set union and DP $n$-gram extraction, following \cite{gopi2020dpsu,kim2021dpne}. These problems arise in many real-world applications, especially in natural language processing, where vocabulary discovery and synthetic data extraction are common subroutines. For example, discovering words, phrases, sentences, or more general $n$-grams from private user text can be naturally formulated as an instance of DP set union or DP $n$-gram extraction. These primitives have been used extensively in industrial synthetic data extraction pipelines.

\subsection{Differential Privacy (DP)}
Differential privacy (DP) provides a rigorous framework for data analysis with individual-level privacy guarantees. Informally, a DP algorithm is designed so that its output distribution changes only slightly when the data of any one individual is added or removed. This allows the algorithm to reveal useful aggregate information while limiting what can be inferred about any particular user. Formally:

\begin{definition}[Differential Privacy \cite{dwork2014dp}]
	A randomized algorithm $\mathcal{A}$ is $(\epsilon,\delta)$-differentially private if for any two neighboring databases $D$ and $D'$, where one can be obtained from the other by adding or removing a single user's data, and for every set $\mathcal{S}$ of possible outputs,
\[
\Pr[\mathcal{A}(D)\in \mathcal{S}] \le e^\epsilon \Pr[\mathcal{A}(D')\in \mathcal{S}] + \delta.
\]
\end{definition}

The importance of DP is that privacy is fundamentally a worst-case property, and therefore cannot be established through heuristics or empirical testing alone. An algorithm may appear safe on typical inputs yet still leak sensitive information on rare instances, under adaptive interaction, or when composed with other releases. DP addresses this by requiring a proof that privacy loss is bounded for all neighboring datasets and all output events. This proof-based guarantee is precisely what makes DP valuable in high-stakes settings: privacy is not merely hoped for, but mathematically certified.

\subsection{Set Union and \texorpdfstring{$n$}{n}-gram extraction problems}
At a high level, both DP set union and DP $n$-gram extraction ask the following question: given a collection of private user contributions, can we output as many genuinely supported objects as possible while preserving differential privacy and minimizing spurious outputs \cite{gopi2020dpsu,kim2021dpne}? In DP set union, the objects are arbitrary items contributed by users. In DP $n$-gram extraction, the objects are contiguous word sequences appearing in users' private texts. The latter can be viewed as a structured generalization of the former.

We now define these problems formally.

\begin{problem}[DP Set Union (DPSU)]
	Let $U$ be a universe of items, possibly of unbounded size. Suppose we are given a database $D$ of users, where each user $i$ contributes a subset $W_i \subseteq U$. Two databases are adjacent if they differ in exactly one user. The goal is to design an $(\epsilon,\delta)$-differentially private algorithm $A$ that outputs a subset $S \subseteq U$ such that $|S \cap \cup_i W_i|$ is as large as possible while $|S \setminus \cup_i W_i|$ is as small as possible.
\end{problem}

To define DP $n$-gram extraction, let $\Sigma$ be a vocabulary, and let
$\Sigma^k = \Sigma \times \Sigma \times \cdots \times \Sigma$ ($k$ times)
denote the set of length-$k$ sequences over $\Sigma$; elements of $\Sigma^k$ are called $k$-grams. Let
$\Sigma^* = \bigcup_{k\ge 0}\Sigma^k$
denote the set of all finite texts over $\Sigma$. For a text $w=a_1a_2\cdots a_m$ with $a_i\in \Sigma$, any contiguous subsequence $a_i a_{i+1}\cdots a_{i+k-1}$ of length $k$ is called a $k$-gram present in $w$. We write $G_k(w)$ for the set of all length-$k$ grams appearing in $w$, and $G(w)=\bigcup_k G_k(w)$ for the set of all $n$-grams of all lengths appearing in $w$.

\begin{problem}[DP $n$-gram Extraction (DPNE)]
	Let $\Sigma$ be a vocabulary, possibly of unbounded size, and let $T$ be the maximum $n$-gram length of interest. Suppose we are given a database $D$ of users, where each user $i$ contributes a text $w_i \in \Sigma^*$. Two databases are adjacent if they differ in exactly one user. The goal is to design an $(\epsilon,\delta)$-differentially private algorithm $A$ that outputs subsets $S_1,S_2,\dots,S_T$, where $S_k \subseteq \Sigma^k$, such that for each $k$, the quantity $|S_k \cap \cup_i G_k(w_i)|$ is as large as possible while the number of false positives, namely $|S_k \setminus \cup_i G_k(w_i)|$, is as small as possible.
\end{problem}

\subsection{New Results from \algnameheading{}}
We asked \algname{} to design better algorithms for these problems and improve performance on the benchmarks used in the previous papers \cite{gopi2020dpsu,kim2021dpne}. We did not explicitly give any hints to \algname{}, nor did we ask it to improve particular aspects of the algorithms; instead, we simply asked it to produce a better algorithm with a valid privacy proof.

Here is a summary of the results obtained by \algname{} for DPSU.

\begin{enumerate}
\item \textbf{Counterexample to Proposition~5.1.} We show that the $\ell_1$-descent policy is \emph{not} $\ell_2$-contractive: there exist neighboring databases where the histogram's $\ell_2$-sensitivity exceeds~$1$.
We identify the specific error in the proof: the Jacobian of the $\ell_1$-descent map is block \emph{lower triangular} (not block diagonal), and its spectral norm exceeds~$1$ whenever any item reaches the cutoff~$\Gamma$.

\item \textbf{A structural barrier in the Policy Gaussian proof.} Second, we isolate a structural barrier in the Policy Gaussian proof of Gopi et al.: when a symmetric $\ell_2$-budget-$1$ policy spends its full budget across $t$ equally situated fresh spillover items, each item can receive at most $1/\sqrt{t}$ weight, and plugging this into the support-conditioning argument recovers the corresponding per-$t$ threshold term in Policy Gaussian. This pinpoints why improving that proof step is hard on such instances: one must either reduce the weight on fresh items, change the thresholded-histogram architecture, or find a different privacy proof.
\end{enumerate}

These two results, taken together, resolve the question of which policy to use for DPSU among the contractive policies: the $\ell_2$-descent is the algorithm of choice because it is both provably correct and near-optimal, while the empirically superior $\ell_1$-descent lacks a valid privacy guarantee.

For the DPNE problem, \algname{} made the following contributions, identifying two independent opportunities for improvement within the DPNE framework, each exploiting a different structural property of the problem.

\begin{enumerate}
\item \textbf{Frequency-Informed Pruning (FIP).} We use the noisy histogram values from level $k{-}1$ (public outputs) to prune the candidate set. A $k$-gram whose prefix or suffix barely crossed the threshold at level $k{-}1$ is unlikely to be a frequent $k$-gram. FIP is post-processing---zero additional privacy cost.

\item \textbf{Heterogeneous Thresholding (HT).} We assign per-item thresholds $\tau(w)$ based on a \emph{margin score} derived entirely from level-$(k{-}1)$ \textbf{public} outputs. Crucially, $\tau(w)$ is computed \emph{before} the level-$k$ histogram $H_k$ is examined and depends only on already-released information---\emph{not} on whether $w$ appears in the current-level data. Items with high-confidence constituents receive lower thresholds; items with low-confidence constituents retain the base threshold. Because $\tau$ is a public function independent of $H_k$, the output $S_k = \{w : \tilde{H}_k[w] > \tau(w)\}$ remains post-processing of the Gaussian mechanism.
\item \textbf{Negative result: data-dependent thresholding at level \(k\) is not private.}
A tempting alternative to HT is to choose the threshold for an item \(w\) using the \emph{level-\(k\)} histogram itself, for example assigning a lower threshold when \(H_k[w] > 0\) and a higher threshold when \(H_k[w] = 0\). Unlike HT, however, this rule makes the threshold \(\tau(w)\) depend on the level-\(k\) support, which is private. We show that this dependence can violate differential privacy: for neighboring datasets, removing one user can change an item from \(H_k[w] > 0\) to \(H_k[w] = 0\), causing a discontinuous change in the threshold from \(\rho/2\) to \(\rho\). This in turn produces release probabilities that differ by far more than allowed under DP. We give a concrete counterexample in which the likelihood ratio for a fixed output event exceeds \(651\), whereas under a uniform threshold the corresponding ratio is at most \(1.2\). The key distinction is that HT uses thresholds determined only by previously released level-\((k-1)\) information, while this adaptive variant uses private level-\(k\) support information.

\item \textbf{Implementation equivalence.} We give a formal proof that our efficient implementation (which uses Binomial sampling for unobserved items) produces exactly the same output distribution as the canonical dense mechanism.

\item \textbf{Empirical verification.} We audit our code using the Steinke--Nasr--Jagielski framework \cite{steinke2023privacy}, and report the audit's limited statistical power for threshold-based mechanisms.

\item Together, FIP and HT form \textbf{AFP-DPNE} (Augmented Frequency-Pruned DPNE), which is $(\epsilon,\delta)$-DP and outperforms standard DPNE by 5--84\% across four datasets. We also disclose the trade-off: the unified mechanism increases the spurious rate from 0.6\% to 8.7\% on Reddit, an inherent cost of applying the same threshold function to all candidates.
\end{enumerate}

In the appendix, we attach the full papers produced by our system. We will release the full research loop artifacts (paper, code, reviews, and meta review for each iteration) on GitHub.

\begin{insightbox}[insightblue]{\textbf{\textcolor{black}{Summary}}}
\begin{itemize}
\item The loop produced both positive and negative scientific outcomes: a utility-improving DPNE variant and a corrective analysis for DPSU policy claims.
\item Utility gains were obtained through DP-safe post-processing and better threshold design, not by weakening privacy guarantees.
\item In this setting, reviewer pressure on proof obligations was essential; one-shot generation did not reliably identify the subtle policy-proof failure modes. Final results were the culmination of more than 10 rounds of the research loop.
\end{itemize}
\end{insightbox}

\begin{remark}
{\em The author carefully reviewed and sanity-checked the empirical improvements by the agent, and believes the overall approach is correct. However, exact empirical numbers should be interpreted with caution, as they depend on the specific implementation details, random seeds, and evaluation protocols. The main point is that the improvements are substantial and consistent across multiple settings, rather than the precise percentage gains.}
\end{remark}
\subsection{Clustering with New Constraints}

\noindent
\textbf{\(k\)-means} and \textbf{\(k\)-median} are two of the most fundamental clustering problems in algorithms, machine learning, and data analysis. In both problems, we are given a set of data points and wish to choose \(k\) representative centers so that each point is assigned to its nearest center, thereby partitioning the data into \(k\) clusters. These problems are important because they formalize the basic task of summarizing a large dataset by a small number of prototypes, and they arise in applications ranging from customer segmentation and document clustering to image compression, facility location, and vector quantization.

\medskip
\noindent
Technically, the two problems differ in the objective used to measure clustering quality. In \textbf{\(k\)-means}, typically defined in Euclidean space, the goal is to choose \(k\) centers \(c_1,\dots,c_k \in \mathbb{R}^d\) minimizing the sum of squared distances:
\[
\min_{c_1,\dots,c_k} \sum_{x \in X} \min_{i \in [k]} \|x-c_i\|_2^2.
\]
The centers may be arbitrary points in the space, and the squared-distance objective strongly penalizes faraway points. In \textbf{\(k\)-median}, the goal is to minimize:
\[
\min_{c_1,\dots,c_k} \sum_{x \in X} \min_{i \in [k]} \|x-c_i\|_1.
\]
Together, these two objectives form the standard baseline before additional real-world constraints are imposed.

\medskip

\begin{sloppypar}
The theoretical literature has developed strong baselines for these clean formulations. Strong baselines exist for Euclidean \(k\)-means \cite{kanungo2004localsearch,arthur2007kmeanspp,ahmadian2019primaldual}. Additional improvements appear in later works; see \cite{friggstad2019doubling, cohenaddad2022nested} for example. For metric \(k\)-median, key results include \cite{arya2004localsearch,li2013pseudomedian,byrka2017improved}. Later refinements include Cohen-Addad et al.\ \cite{cohenaddad2022localsearchkmedian}. Further refinements appear in Cohen-Addad et al.\ \cite{cohenaddad2025metric}. Beyond the basic objective of minimizing \(k\)-means or \(k\)-median cost, practitioners may require the clustering to satisfy several additional constraints at once. For example, the resulting partition may need to be \emph{interpretable}, so that cluster assignments can be explained by simple, human-understandable rules rather than by a complicated Voronoi decomposition; it may need strong \emph{approximation guarantees}, quantifying how much worse the constrained solution is relative to the best unconstrained clustering or standard heuristics such as Lloyd's algorithm or the optimal clustering solution; and, because clustering is often performed on sensitive user data, it may also need to satisfy \emph{differential privacy} guarantees. One may further want robustness under alternative distance measures such as \(\ell_p\) norms, or deterministic behavior for reproducibility and auditability.

Each of these desiderata—interpretability, approximation quality, privacy, robustness across norms, and determinism—has been studied extensively in isolation.
This includes explainable clustering \cite{moshkovitz2020explainable,makarychev2021explainable,gupta2023price}.
Representative results on private clustering include \cite{stemmer2018dpkmeans,ghazi2020tight,chaturvedi2021dpkmeans,cohenaddad2022private,tour2024unified}.
However, combining these requirements within a single framework remains challenging.
This highlights a core challenge in provable algorithm design: real applications rarely optimize a single clean objective, and instead demand multiple guarantees that interact in subtle ways.
Developing algorithms that are both provably sound and practically usable under such combined constraints remains a central and underexplored frontier.
\end{sloppypar}

\subsection{New Results from \algnameheading{}}

We asked \algname{} to design, implement, and document new algorithms for explainable clustering under simultaneous real-world constraints: differential privacy, determinism, and extension beyond the $\ell_1/\ell_2$ settings studied previously. The system was required not only to satisfy all constraints at once, but also to produce research-level proofs, audited code, and a clear technical writeup.

\algname{} produced the following new results:

\begin{enumerate}
    \item \textbf{Private explainable clustering via a transfer principle.}

    \algname{} identified a generic reduction showing that any private distinct-center selector can be converted into a private explainable clustering algorithm whenever the explainable conversion is data-oblivious and incurs cost inflation factor $\alpha$ (i.e., it increases expected clustering cost by at most a multiplicative factor $\alpha$ relative to the selected centers). It also explored several alternative ways of privatizing the Gupta et al.~\cite{gupta2023price} framework, and argued that this reduction gives the most principled and effective route. The result is a clean modular separation between private center selection and explainable conversion.

    \item \textbf{Concrete private guarantees and limitations.}
    Instantiating the transfer principle with the threshold-tree results of Gupta et al.~\cite{gupta2023price} gives private explainable guarantees for $k$-median and $k$-means. \algname{} also derived an explicit exponential-mechanism instantiation over finite candidate sets and a black-box corollary for efficient private $k$-median selectors. It further proved that purely multiplicative private guarantees are impossible even for $k=1$, showing that the right notion is necessarily multiplicative-plus-additive.

    \item \textbf{Deterministic realization of the tight $k$-median bound.}
    Prior work achieved the optimal $1+H_{k-1}$ factor for explainable $k$-median only via randomized analysis. \algname{} proved that this exact guarantee can in fact be realized deterministically by showing that every optimal threshold tree has an equivalent critical-threshold form and by optimizing over these trees via dynamic programming. The resulting algorithms run in polynomial time for fixed dimension $d$ or, alternatively, for fixed $k$.

    \item \textbf{Extension of explainable clustering to all $\ell_p^p$ norms for $p>2$.}
    \algname{} generalized the solo/bulk-cut framework of Gupta et al.\ to $\ell_p^p$ objectives, supplying full proofs for a result that had previously only been announced informally. It showed that the price of explainability is $O(k^{p-1}\log\log k)$ for all $p>2$, using a new threshold-cut distribution and a stretch analysis based on H\"older's inequality. It also proved tightness of the $(k-1)^{p-1}$ stretch term and clarified that the regime $1<p<2$ remains open.
\end{enumerate}

\begin{insightbox}[insightpurple]{\textbf{\textcolor{black}{Summary}}}
\begin{itemize}
    \item \algname{} discovered a clean modular view of private explainable clustering, separating private center selection from explainable conversion.
    \item It produced both positive algorithmic results---including deterministic and private guarantees---and negative results clarifying the inherent additive cost of privacy.
    \item It also extended the theory of explainable clustering beyond prior $\ell_1/\ell_2$ settings to general $\ell_p^p$ objectives for $p>2$.
\end{itemize}
\end{insightbox}

\begin{remark}
{\em The author is not a clustering expert, so it is possible that some of these results are already known, although we are not aware of an exact prior source. For the deterministic and $\ell_p$-norm results, the author carefully reviewed and sanity-checked the proofs produced by the agent, and believes the overall approach is correct. However, the proofs have not been fully verified line by line, so some technical gaps may still remain.}
\end{remark}

\section{Quality Analysis of \algnameheading{}'s Results}

We next provide a qualitative assessment of the results produced by \algname. The goal of this evaluation is not to claim a major theoretical breakthrough in isolation, but rather to assess whether the system demonstrates the combination of literature awareness, algorithmic judgment, and implementation maturity needed for practically meaningful algorithmic research.

\paragraph{Theoretical quality.}
The theoretical outputs produced by \algname{} reflect a strong understanding of the relevant research literature and an ability to synthesize prior ideas in a meaningful way. In particular, the flaw identified by the system corresponds to an issue that is also acknowledged by the original paper's authors \cite{gopi2020dpsu}, which provides evidence that the system can engage critically with existing work rather than merely restating it. The proposed algorithmic refinements are not, by themselves, deep conceptual advances, but they are technically sensible and nontrivial. More importantly, they are of the kind that could plausibly lead to measurable practical gains. In our judgment, if viewed in isolation, these theoretical contributions are roughly at the level of a solid second-tier conference submission.

\paragraph{Empirical quality.}
Given the nature of the algorithmic insights, the empirical improvements suggested by \algname{} appear credible. The implementation quality is also strong: the system was able to translate the high-level ideas into a concrete and well-structured artifact that appears suitable for evaluation in realistic settings. This is an important point, since in many algorithmic domains the gap between a mathematically plausible idea and a usable implementation is substantial.

\paragraph{Overall assessment.}
Taken together, these results highlight a capability that we view as especially important in practice: algorithmic customization and specialization to the structure, constraints, and performance requirements of a target problem, together with high-quality implementation. While the individual improvements may not constitute deep standalone research contributions, they represent exactly the kind of practically valuable progress that is often needed to adapt known algorithmic ideas to real datasets and deployment settings.

\section{Role of \algnameheading{}'s Research Loop}

A central finding of our study is that the quality of the final research output depends not only on the base model, but crucially on the surrounding research loop. The iterative interaction between a producer and a panel of reviewers consistently improved both correctness and clarity relative to what we would expect from a one-shot prompting setup. {\em For nearly every result, the final artifact emerged only after roughly ten rounds of this loop and close to a million tokens of interaction,\footnote{Because our system is built on top of Copilot CLI, exact token accounting is difficult: the interface may perform internal context-management operations such as compaction or summarization that are not fully exposed to the user. We therefore report token usage as an approximate quantity.}} which highlights that sustained critique and refinement were central to the system’s performance.

\paragraph{Initial generations were often promising but unreliable.}
In nearly all cases, early-round outputs contained substantial issues, including implementation bugs, incomplete proofs, and claims that overstated what had actually been established. This pattern suggests that raw generation is useful for exploration, but insufficient for producing research-quality artifacts without additional scrutiny.

\paragraph{Review was critical for correction and refinement.}
The reviewer panel played a central role in identifying technical flaws and improving the quality of the final result. In particular, reviewers were effective at catching unsupported claims, surfacing hidden assumptions, and identifying concrete problems in both proofs and code. The final outputs were therefore meaningfully shaped by iterative critique rather than by the producer alone.

\paragraph{Multi-persona structure provided complementary benefits.}
The value of the review stage was amplified by the use of multiple reviewer personas. Different reviewers tended to detect different classes of failure: proof gaps, implementation errors, novelty issues, and evaluation weaknesses. This diversity of perspectives was important, as many of these issues would likely have remained undetected in a vanilla single-agent prompting regime.

\paragraph{Proof-first development improved implementation discipline.}
From a systems perspective, an especially useful design choice was to require the algorithmic ideas to be written down precisely---including proofs or proof sketches---before implementation, and then to enforce close alignment between the written specification and the resulting code. This led to better code quality and sharper algorithmic descriptions, suggesting a broader principle: explicit specification-to-code alignment may be a valuable paradigm even beyond research settings, especially in domains where correctness and interpretability are important.

\section{Limitations: originality, verification, and alignment}

Despite the encouraging results, we also observed several limitations that highlight ongoing challenges for AI-driven research.

\paragraph{Limited originality beyond recombination.}
The agents were strong at surveying the literature, combining prior ideas, and refining existing approaches, but generating genuinely new conceptual ideas remained difficult. In our experience, the best results were sophisticated adaptations of known techniques rather than fundamentally new paradigms. Even when review exposed flaws, the system often preferred local repair over a more original change in direction.
\paragraph{Verification remains the central bottleneck.}
Our multi-persona review loop substantially improved quality, caught many bugs, and reduced overclaiming, but it does not guarantee full correctness. There may still be gaps in proofs or subtle mismatches between the proof and the implementation. As these systems improve at iterative generation, verification increasingly becomes the core challenge.
\paragraph{Inherent challenges in the alignment problem.}
In mathematics, correctness is objective: a proof is either valid or invalid. But when an LLM produces a proof-like argument, we often cannot tell whether an error came from an honest mistake, shallow imitation of reasoning, or optimization toward appearing correct. Thus, even in domains with crisp notions of truth, the model’s relation to truth remains opaque. This gap between truth and the appearance of justification goes to the heart of alignment.
\section{Conclusion}

Overall, our findings are largely encouraging: AI systems already seem capable of accelerating expert researchers, especially for on-the-fly algorithm synthesis, adapting worst-case methods to practical settings, and producing research implementations, though human supervision remains essential. We also find that the agentic research loop---with iterative critique and revision---substantially improves output quality. Finally, our results suggest that proof-to-code alignment is a promising general template for AI coding systems: as these systems improve, we may increasingly demand not just working code, but code accompanied by explicit invariants, arguments, evidence of correctness, and ultimately the proof.

\section*{Acknowledgments}
The author is grateful to Doug Burger, Halley Young, Aditya Nori, the DeepProof team, Hongxun Wu, Yin Tat Lee, and Sivakanth Gopi for helpful discussions and feedback on the research loop and its results.


\appendix
\clearpage
\begin{tcolorbox}[
  colback=gray!5,
  colframe=gray!60,
  fonttitle=\Large\bfseries,
  title=Appendices,
  boxrule=0.8pt,
  arc=2pt,
  left=8pt, right=8pt, top=6pt, bottom=6pt
]
The following appendices contain the full stand-alone papers produced by \algname{}.
Each appendix is a self-contained paper with its own bibliography.
We attach papers written by our agent in unaltered form.
We will release the full research loop artifacts (paper, code, reviews, and meta review for each iteration) on GitHub so that readers can see the full process and evolution of the results.

There may still be bugs in some proofs, although the author believes the high-level approaches are generally correct.
The empirical results reported in the attached papers should be interpreted with caution, as they depend on specific implementation details and evaluation protocols.
The main point is that the improvements are substantial and consistent across multiple settings, rather than the precise percentage gains.

The attached papers are included as archival artifacts rather than polished appendices.
The main purpose of including them is to provide transparency about the system's outputs and to allow others to review and build upon these results in the future.
Some references in these attached papers may be inaccurate or incomplete, because the author's bibliography-checking agent was not run for earlier versions.
\end{tcolorbox}

\bigskip

\clearpage
\section{A Counterexample to Proposition 5.1 of ``Differentially Private Set Union''}
\subsection{Summary}

We present a counterexample to Proposition~5.1 in~\cite{appA:gopi2020dpsu}, which states that the $\ell_1$-descent update policy (Algorithm~8) is $\ell_2$-contractive. We exhibit a concrete instance with 3~items and 8~users where the $\ell_2$-sensitivity of the histogram built by Algorithm~2 using Algorithm~8 equals $\approx 1.032 > 1$. We also identify the specific error in the proof.

\subsection{Setup}

We use the notation and definitions from~\cite{appA:gopi2020dpsu}. Algorithm~8 is the $\ell_1$-descent update policy for $\ell_2$-contractivity, defined as follows. Given a current histogram $H_0$, a user's item set $W$ of size at most $\Delta_0$, and a cutoff~$\Gamma$:

\begin{enumerate}
\item Compute $G[u] = \Gamma - H_0[u]$ for each $u \in W$ with $H_0[u] < \Gamma$.
\item If $\norm{G}_{\ell_2} \le 1$: do not update ($H_1 = H_0$).
\item Else: find $\lambda \ge 0$ such that $\sum_{u} \min\{G[u], \lambda\}^2 = 1$, and set $H_1[u] = H_0[u] + \min\{G[u], \lambda\}$.
\end{enumerate}

Proposition~5.1 claims this policy is $\ell_2$-contractive, which by Theorem~3.1 would imply the histogram built by Algorithm~2 has $\ell_2$-sensitivity at most~$1$.

\subsection{Counterexample}

\begin{claim}
There exist neighboring databases $D_1, D_2$ and a cutoff $\Gamma$ such that Algorithm~2 with the $\ell_1$-descent policy (Algorithm~8) produces histograms $H_1, H_2$ with $\norm{H_1 - H_2}_{\ell_2} > 1$.
\end{claim}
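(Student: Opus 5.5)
The plan is to exhibit an explicit instance and verify it by direct computation. Algorithm~2 processes users sequentially, each user updating the running histogram through Algorithm~8. Neighboring databases are then two user sequences that differ by the insertion of one user, and the final histograms $H_1,H_2$ are compositions of the per-user update maps. Theorem~3.1 derives sensitivity $\le 1$ from $\ell_2$-contractivity of each step. So it suffices to find a sequence in which the per-step maps \emph{expand} the difference between the two running histograms. The total $\ell_2$ discrepancy then ends above $1$, even though the extra user contributes at most $1$ to the initial difference.

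\textbf{Why expansion can occur.} Consider a user whose set $W$ contains an item $a$ that is capped, meaning $G[a]<\lambda$, together with uncapped items $b,c$. The water level is determined by
\[
G[a]^2+(t-1)\lambda^2=1 .
\]
Hence $\lambda$ depends on $H_0[a]$: lowering $H_0[a]$ raises $G[a]$, which lowers $\lambda$. The increment on $b,c$ therefore responds to a perturbation in coordinate $a$. The Jacobian of the update map $H_0\mapsto H_1$ has an off-diagonal block coupling capped coordinates to uncapped ones. It is block lower triangular rather than block diagonal, and I expect its spectral norm to exceed $1$ whenever some $G[u]$ lies strictly between $0$ and $\lambda$, i.e.\ whenever an item is approaching the cutoff $\Gamma$.

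\textbf{Construction.} I will build the instance so that this expansion is actually realized.
\begin{enumerate}
\item Use three items $a,b,c$ and a moderate cutoff, e.g.\ $\Gamma$ of order $1$ to $2$.
\item Place the distinguishing user first, contributing only $\{a\}$, so the initial difference is $e_a$ with norm $1$.
\item Follow with several users contributing $\{a,b,c\}$ or $\{a,b\}$. In one database, $a$ is near or at $\Gamma$, so later users pour their budget into $b,c$. In the other, $a$ still absorbs part of the budget.
\item The difference in coordinate $a$ then partially persists, while nonzero differences of the opposite sign appear in $b$ and $c$. The combined $\ell_2$ norm can exceed $1$; I expect roughly $8$ users in total to suffice, matching the target of about $1.032$.
\end{enumerate}

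\textbf{Verification.} I will compute both runs step by step using the closed-form water-filling solution. If no coordinate is capped, $\lambda=1/\sqrt{t}$. Otherwise
\[
\lambda=\sqrt{\frac{1-\sum_{\text{capped}}G[u]^2}{\#\text{uncapped}}},
\]
with the capped set identified by checking $G[u]\le\lambda$ and iterating. Each step also checks the no-update branch $\norm{G}_{\ell_2}\le1$. I will report the final vectors $H_1,H_2$ and $\norm{H_1-H_2}_{\ell_2}$. The arithmetic involves nested square roots, so I would confirm it with exact or high-precision computation.

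\textbf{Main obstacle.} The hard part is finding parameters for which the expansion compounds across users rather than being washed out. Eventually all items saturate at $\Gamma$ in both runs and the difference collapses, so the sequence must end while coordinates still differ. My first approach is a small numerical search over $\Gamma$ and user orderings and set patterns on three items. I would guide the search by the Jacobian computation: arrange for several consecutive steps to have exactly one capped coordinate with $G[a]$ close to $\lambda$, since that maximizes the off-diagonal coupling. Once a good instance is found, I would transcribe it into the explicit table of histogram states, which proves the claim.
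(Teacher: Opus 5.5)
Your proposal has a genuine gap, in two layers. First, it contains no proof yet: it describes a search rather than an instance, and for an existence claim like this the explicit instance is the whole argument. Second, the one structural choice you do commit to, a distinguishing user contributing only $\{a\}$, makes that search provably hopeless.

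Read Algorithm~8 as the maximizer of $\sum_u y_u$ subject to $y_u\le\Gamma$ and $\|y-H|_W\|_2\le 1$. Then the branch $\|G\|_2\le 1$ fills $W$ up to $\Gamma$, and the update is continuous and piecewise smooth. On each smooth piece, the Jacobian columns are as follows:
\begin{itemize}
\item for an uncapped item, and for any item outside $W$, the column is a unit vector;
\item for a capped item $j$, the column has nonnegative entries $G[j]/(m\lambda)$ on the $m$ uncapped coordinates and zeros elsewhere, so it sums to $G[j]/\lambda\le 1$.
\end{itemize}
Hence every update is $1$-Lipschitz in $\ell_1$, and the shared users can never increase $\|H_1-H_2\|_1$. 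Starting from the difference $e_a$, you get $\|H_1-H_2\|_2\le\|H_1-H_2\|_1\le 1$ no matter which users follow. More generally, the final $\ell_2$ difference is at most $\sqrt{|W^*|}$, where $W^*$ is the extra user's set. So any counterexample needs the extra user to put mass on at least two items.

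Your account of where expansion comes from is also inverted. Suppose $a$ is capped in $H_1$ but not in $H_2$. The run with the larger $H[a]$ sends \emph{more} budget to $b,c$, because $\partial\lambda/\partial x_a=G[a]/(m\lambda)\ge 0$. So the new differences on $b,c$ have the same sign as on $a$, not the opposite sign. Moreover, moving difference from a capped coordinate onto coordinates that carried none cannot gain norm. The column $J_F e_a$ has $\ell_2$ norm $G[a]/(\sqrt m\,\lambda)\le 1$, and in fact at most $1/\sqrt2$ when both $b$ and $c$ are uncapped.

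Expansion requires the freed budget to land on an uncapped coordinate that \emph{already} carries a difference of the same sign. Then the cross term $2d_c\Delta_c$ outweighs the loss on the capped coordinate. The paper's instance arranges exactly this, with $\Gamma=5$:
\begin{itemize}
\item the extra user $\{a,b,c\}$ creates the difference $(1,1,1)/\sqrt3$;
\item two $\{a,b\}$ users and four $\{b,c\}$ users leave that difference unchanged, while bringing $b$ to gap $\approx 0.180$ in $H_1$ versus $\approx 0.757$ in $H_2$;
\item at a fifth $\{b,c\}$ user, $H_1$ caps $b$ and gives $c$ about $0.984$, while $H_2$ gives each item $1/\sqrt2$.
\end{itemize}
This yields $H_1-H_2\approx(0.577,0.050,0.854)$, with $\ell_2$ norm $\approx 1.032$.
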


\paragraph{Parameters.} Items: $\{a, b, c\}$. Cutoff $\Gamma = 5$. Maximum contribution $\Delta_0 = 3$.

\paragraph{Databases.} The databases share 7~users and $D_1$ has one additional user:
\begin{align*}
D_1 &= [\underbrace{\{a,b,c\}}_{\text{extra}},\ \{a,b\},\ \{a,b\},\ \{b,c\},\ \{b,c\},\ \{b,c\},\ \{b,c\},\ \{b,c\}] \\
D_2 &= [\{a,b\},\ \{a,b\},\ \{b,c\},\ \{b,c\},\ \{b,c\},\ \{b,c\},\ \{b,c\}]
\end{align*}
Users are processed in the order listed (i.e., the hash function induces this ordering).

\paragraph{Execution trace.} Write $\frac{1}{\sqrt{k}}$ as the exact value throughout. All arithmetic below can be verified by hand.

\medskip
\noindent\textbf{User~0 ($W = \{a,b,c\}$, $D_1$ only).} The histogram is $H = (0,0,0)$. Gaps: $G = (5,5,5)$. Since $\norm{G}_2 = 5\sqrt{3} > 1$, we solve $3\lambda^2 = 1$, giving $\lambda = \frac{1}{\sqrt{3}}$. Each item is incremented by $\frac{1}{\sqrt{3}}$.
\[
H_1 = \left(\tfrac{1}{\sqrt{3}},\, \tfrac{1}{\sqrt{3}},\, \tfrac{1}{\sqrt{3}}\right), \qquad H_2 = (0, 0, 0), \qquad H_1 - H_2 = \left(\tfrac{1}{\sqrt{3}},\, \tfrac{1}{\sqrt{3}},\, \tfrac{1}{\sqrt{3}}\right), \quad \norm{\cdot}_2 = 1.
\]

\noindent\textbf{Users~1--2 ($W = \{a, b\}$, both databases).} At each step, both items $a, b$ have gap $> \frac{1}{\sqrt{2}}$. So $\lambda = \frac{1}{\sqrt{2}}$ and both items are incremented by $\frac{1}{\sqrt{2}}$. The identical increment applies to both $H_1$ and $H_2$, so the difference vector stays $\left(\frac{1}{\sqrt{3}}, \frac{1}{\sqrt{3}}, \frac{1}{\sqrt{3}}\right)$ with $\ell_2$-norm $= 1$.

\noindent After user~2:
\[
H_1 = \left(\tfrac{1}{\sqrt{3}} + \tfrac{2}{\sqrt{2}},\; \tfrac{1}{\sqrt{3}} + \tfrac{2}{\sqrt{2}},\; \tfrac{1}{\sqrt{3}}\right), \quad
H_2 = \left(\tfrac{2}{\sqrt{2}},\; \tfrac{2}{\sqrt{2}},\; 0\right).
\]

\noindent\textbf{Users~3--6 ($W = \{b, c\}$, both databases).} Similarly, both gaps exceed $\frac{1}{\sqrt{2}}$, so each step increments $b$ and $c$ by $\frac{1}{\sqrt{2}}$ in both histograms. The difference remains $\left(\frac{1}{\sqrt{3}}, \frac{1}{\sqrt{3}}, \frac{1}{\sqrt{3}}\right)$, $\norm{\cdot}_2 = 1$.

\noindent After user~6:
\begin{align*}
H_1 &= \left(\tfrac{1}{\sqrt{3}} + \tfrac{2}{\sqrt{2}},\;\; \tfrac{1}{\sqrt{3}} + \tfrac{6}{\sqrt{2}},\;\; \tfrac{1}{\sqrt{3}} + \tfrac{4}{\sqrt{2}}\right) \approx (1.992,\; 4.820,\; 3.406) \\
H_2 &= \left(\tfrac{2}{\sqrt{2}},\;\; \tfrac{6}{\sqrt{2}},\;\; \tfrac{4}{\sqrt{2}}\right) \approx (1.414,\; 4.243,\; 2.828)
\end{align*}

\medskip
\noindent\textbf{User~7 ($W = \{b, c\}$, both databases) --- the critical step.}

\noindent\emph{Processing $H_1$:} The gaps are:
\[
G_b^{(1)} = 5 - \left(\tfrac{1}{\sqrt{3}} + \tfrac{6}{\sqrt{2}}\right) \approx 0.180, \qquad G_c^{(1)} = 5 - \left(\tfrac{1}{\sqrt{3}} + \tfrac{4}{\sqrt{2}}\right) \approx 1.594.
\]
Since $G_b^{(1)} < G_c^{(1)}$, we solve $\left(G_b^{(1)}\right)^2 + \lambda^2 = 1$, giving $\lambda = \sqrt{1 - \left(G_b^{(1)}\right)^2} \approx 0.984$. Since $G_b^{(1)} < \lambda$, item~$b$ gets $G_b^{(1)}$ (reaching $\Gamma = 5$), and item~$c$ gets $\lambda \approx 0.984$.

\noindent\emph{Processing $H_2$:} The gaps are:
\[
G_b^{(2)} = 5 - \tfrac{6}{\sqrt{2}} \approx 0.757, \qquad G_c^{(2)} = 5 - \tfrac{4}{\sqrt{2}} \approx 2.172.
\]
Both gaps exceed $\frac{1}{\sqrt{2}} \approx 0.707$. We solve $2\lambda^2 = 1$, giving $\lambda = \frac{1}{\sqrt{2}}$. Both items are incremented by $\frac{1}{\sqrt{2}}$.

\medskip
\noindent\textbf{Final state after user~7:}
\begin{align*}
H_1[b] &= \Gamma = 5, & H_2[b] &= \tfrac{7}{\sqrt{2}} \approx 4.950 \\
H_1[c] &\approx 3.406 + 0.984 = 4.389, & H_2[c] &= \tfrac{5}{\sqrt{2}} \approx 3.536 \\
H_1[a] &\approx 1.992, & H_2[a] &= \tfrac{2}{\sqrt{2}} \approx 1.414
\end{align*}

\noindent The difference vector is:
\[
H_1 - H_2 \approx (0.577,\; 0.050,\; 0.854).
\]
\[
\boxed{\norm{H_1 - H_2}_{\ell_2} \approx 1.032 > 1.}
\]

\paragraph{Mechanism of the violation.} Through users~0--6, the difference is uniformly $(\frac{1}{\sqrt{3}}, \frac{1}{\sqrt{3}}, \frac{1}{\sqrt{3}})$ with $\ell_2$-norm exactly~$1$, because both $H_1$ and $H_2$ receive identical increments. At user~7, item~$b$ reaches $\Gamma$ in $H_1$ (its gap $G_b^{(1)} \approx 0.180$ is fully consumed) but \emph{not} in $H_2$ (gap $G_b^{(2)} \approx 0.757$ is truncated to $\frac{1}{\sqrt{2}}$). This asymmetry causes the $\ell_1$-descent to allocate budgets differently: $H_1$ directs residual budget to item~$c$, while $H_2$ splits evenly. The net effect increases $\norm{H_1 - H_2}_2$ beyond~$1$.

\subsection{The error in the proof}

The proof of Proposition~5.1 computes the Jacobian of the map $F(x) = \text{Algorithm~8}(x)$ and writes:
\[
J_F(x) = \begin{bmatrix} \mathbf{0} & \mathbf{0} \\ * & I \end{bmatrix}
\]
where the top block corresponds to coordinates $i \le t$ (items reaching $\Gamma$, where $F(x)_i = \Gamma$) and the bottom block to coordinates $i > t$ (where $F(x)_i = x_i + \lambda$). The proof calls this ``block diagonal form'' and concludes $\norm{J_F}_{S_\infty} \le 1$.

The matrix is block \textbf{lower triangular}, not block diagonal. The off-diagonal block $*$ has entries
\[
\frac{\partial \lambda}{\partial x_j} = \frac{\Gamma - x_j}{(d-t)\lambda}, \qquad j \le t,
\]
obtained by differentiating the constraint $\sum_{j \le t}(\Gamma - x_j)^2 + (d-t)\lambda^2 = 1$. When $d = 2$ (items $b, c$) and $t = 1$ (item $b$ reaches $\Gamma$), the Jacobian is:
\[
J_F = \begin{bmatrix} 0 & 0 \\ \frac{G_b}{\lambda} & 1 \end{bmatrix}, \qquad
J_F^T J_F = \begin{bmatrix} \frac{G_b^2}{\lambda^2} & \frac{G_b}{\lambda} \\[4pt] \frac{G_b}{\lambda} & 1 \end{bmatrix}.
\]
The eigenvalues of $J_F^T J_F$ are $0$ and $1 + \frac{G_b^2}{\lambda^2}$, giving
\[
\norm{J_F}_{S_\infty} = \sqrt{1 + \frac{G_b^2}{\lambda^2}} = \frac{1}{\sqrt{1 - G_b^2}} > 1 \quad \text{whenever } G_b > 0.
\]
At the critical step (user~7 processing $H_1$), $G_b \approx 0.180$, so $\norm{J_F}_{S_\infty} \approx 1.016 > 1$.

\subsection{Scope of the issue}

\begin{itemize}
\item \textbf{Proposition~5.1 (Algorithm~8, $\ell_1$-descent for $\ell_2$-contractivity):} False, as shown above.
\item \textbf{Lemma~5.2 (Algorithm~7, $\ell_2$-descent for $\ell_2$-contractivity):} \emph{Correct.} Its proof (Lemma~5.1, geometric argument) is independent of the Jacobian and remains valid. We verified $\ell_2$-contractivity of the $\ell_2$-descent empirically over $10^4$ simulations with max expansion ratio $< 1$.
\item \textbf{Lemma~3.1 (Algorithm~4, $\ell_1$-descent for $\ell_1$-contractivity):} \emph{Correct.} Its proof is structural (monotonicity + $\ell_1$-budget) and does not use the Jacobian.
\item \textbf{Policy Gaussian with $\ell_2$-descent (Algorithm~7):} Privacy proof is valid.
\item \textbf{Policy Gaussian with $\ell_1$-descent (Algorithm~8):} Privacy proof has a gap due to Proposition~5.1. The $\ell_2$-sensitivity may exceed~$1$, so the noise calibration in Theorem~4.2 may be insufficient.
\end{itemize}

\clearpage
\section{On the Limits of Contractive Policies for Differentially Private Set Union}
\begin{appendixabstract}
We study the differentially private set union (DPSU) problem introduced by Gopi et al.\ (ICML 2020), where the goal is to output as large a subset of the union $\cup_i W_i$ as possible under $(\eps,\delta)$-differential privacy.
We make two contributions.
First, we identify a gap in the proof that the $\ell_1$-descent update policy (Algorithm~8 of the original paper) is $\ell_2$-contractive: we produce a concrete counterexample---3 items, 8 users, cutoff $\Gamma=5$---where the $\ell_2$-sensitivity of the resulting histogram is approximately $1.032>1$, and pinpoint the error in the Jacobian argument (Proposition~5.1).
Second, we isolate a structural barrier in the Policy Gaussian proof of Gopi et al.: when a symmetric $\ell_2$-budget-$1$ policy spends its full budget across $t$ equally situated fresh spillover items, each item can receive at most $1/\sqrt{t}$ weight, and plugging this into the support-conditioning argument recovers the corresponding per-$t$ threshold term in Policy Gaussian. This pinpoints why improving that proof step is hard on such instances: one must either reduce the weight on fresh items, change the thresholded-histogram architecture, or find a different privacy proof.
Together, these results clarify the landscape of the DPSU problem: among the two policies analyzed by Gopi et al., the $\ell_2$-descent is the one with a correct $\ell_2$-privacy proof, but genuine optimality beyond that comparison remains open.
\end{appendixabstract}

\subsection{Introduction}
\label{appB:sec:intro}

Differentially private set union (DPSU) is a fundamental primitive in private data analysis.
Given a database of $n$ users, where each user $i$ holds a subset $W_i$ of items from a universe $U$, the goal is to output a large subset $S \subseteq \bigcup_i W_i$ while satisfying $(\eps,\delta)$-differential privacy.
This problem arises in vocabulary extraction, $n$-gram discovery, search query release, and many natural language processing applications~\cite{appB:KKMN09,appB:WilsonZ20,appB:gopi2020dpsu}.

Gopi et al.~\cite{appB:gopi2020dpsu} introduced the \emph{contractive policy} framework. In particular, an $\ell_2$-contractive update policy yields a histogram with $\ell_2$-sensitivity at most $1$, which can then be released privately via the Gaussian mechanism.
They proposed two policies: the $\ell_2$-descent (Algorithm~7) and the $\ell_1$-descent (Algorithm~8).
The $\ell_1$-descent achieves the best empirical performance and is claimed to be $\ell_2$-contractive in Proposition~5.1.

\paragraph{Our contributions.}
\begin{enumerate}
\item \textbf{Counterexample to Proposition~5.1.} We show that the $\ell_1$-descent policy is \emph{not} $\ell_2$-contractive: there exist neighboring databases where the histogram's $\ell_2$-sensitivity exceeds~$1$ (Section~\ref{appB:sec:counterexample}).
We identify the specific error in the proof: the Jacobian of the $\ell_1$-descent map is block \emph{lower triangular} (not block diagonal), and its spectral norm exceeds~$1$ whenever any item reaches the cutoff~$\Gamma$.

\item \textbf{Barrier for Policy Gaussian.} We isolate the main bottleneck in the Policy Gaussian proof (Section~\ref{appB:sec:optimality}): for $t$ equally situated fresh spillover items, symmetry and an $\ell_2$-budget of $1$ force per-item weight at most $1/\sqrt{t}$. On the corresponding equal-spillover/full-budget instances, the $\ell_2$-descent attains this bound. Thus any substantial improvement of that proof step within the same support-conditioning template must either intentionally give fresh items less weight or rely on a different mechanism or privacy proof.
\item \textbf{Numerical illustration.} We quantify the resulting spillover surcharge on a Zipf benchmark. The surcharge is modest for smaller values of $\eps$, but it grows as the Gaussian noise shrinks.
\end{enumerate}

These results suggest that, among Gopi et al.'s two named contractive policies, the $\ell_2$-descent is the one currently supported by a correct $\ell_2$-privacy proof and matches the spillover barrier analyzed in Section~\ref{appB:sec:optimality} on the corresponding equal-spillover/full-budget instances, whereas the empirically stronger $\ell_1$-descent lacks a valid $\ell_2$-privacy guarantee.

\subsection{Preliminaries}
\label{appB:sec:prelims}

\begin{definition}[Differential Privacy~\cite{appB:DMNS06,appB:DR14}]
A randomized algorithm $\mathcal{A}$ is $(\eps,\delta)$-differentially private if for all neighboring databases $D\sim D'$ (differing in one user) and all measurable sets $\mathcal{S}$:
$\Pr[\mathcal{A}(D)\in \mathcal{S}] \le e^\eps \Pr[\mathcal{A}(D')\in \mathcal{S}]+\delta.$
\end{definition}

\begin{definition}[DPSU Problem]
Given $n$ users with subsets $W_i\subseteq U$ and a contribution bound $\Delta_0 = \max_i |W_i|$, design an $(\eps,\delta)$-DP algorithm outputting $S\subseteq \bigcup_i W_i$ with $|S|$ as large as possible.
\end{definition}

\begin{proposition}[Gaussian Mechanism~\cite{appB:BalleW18}]
\label{appB:prop:gaussian}
If $f$ has $\ell_2$-sensitivity $\Delta_2$, the mechanism $M(D) = f(D)+Z$, $Z\sim\cN(0,\sigma^2 I)$, is $(\eps,\delta)$-DP if and only if
$\Phi\bigl(\frac{\Delta_2}{2\sigma}-\frac{\eps\sigma}{\Delta_2}\bigr)-e^\eps\Phi\bigl(-\frac{\Delta_2}{2\sigma}-\frac{\eps\sigma}{\Delta_2}\bigr)\le\delta.$
\end{proposition}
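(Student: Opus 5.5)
The plan is the standard \emph{privacy-loss / hockey-stick} argument. First reduce $(\eps,\delta)$-DP to a one-dimensional Gaussian computation. Then evaluate the exact trade-off in closed form. Finally, show that the worst case over neighboring pairs occurs at the largest mean shift, $\norm{f(D)-f(D')}_2=\Delta_2$.

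\textbf{Step 1 (reduction to the hockey-stick divergence).} For fixed neighbors $D\sim D'$, let $P=\cN(f(D),\sigma^2 I)$ and $Q=\cN(f(D'),\sigma^2 I)$. The condition $P(\mathcal{S})\le e^\eps Q(\mathcal{S})+\delta$ for all measurable $\mathcal{S}$ is equivalent to
\[
\sup_{\mathcal{S}}\bigl(P(\mathcal{S})-e^\eps Q(\mathcal{S})\bigr)\le\delta .
\]
By a Neyman--Pearson argument, this supremum is attained at $\mathcal{S}^*=\{x: \tfrac{dP}{dQ}(x)>e^\eps\}$. Let $v=f(D)-f(D')$. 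Since both measures are isotropic, I will rotate and translate so that $v=\norm{v}_2 e_1$. The orthogonal coordinates then have identical marginals and cancel in the likelihood ratio. This reduces the problem to comparing $\cN(\norm{v},\sigma^2)$ with $\cN(0,\sigma^2)$ on $\R$. The reverse direction ($Q$ versus $P$) is symmetric by reflection, so it gives the same value.

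\textbf{Step 2 (closed form).} Write $\Delta=\norm{v}_2$. The log-likelihood ratio $\log\tfrac{dP}{dQ}(x)=(2x\Delta-\Delta^2)/(2\sigma^2)$ is affine in $x$. Hence $\mathcal{S}^*=\{x> \tfrac{\Delta}{2}+\tfrac{\eps\sigma^2}{\Delta}\}$. Evaluating the two Gaussian tail probabilities gives
\[
P(\mathcal{S}^*)=\Phi\bigl(\tfrac{\Delta}{2\sigma}-\tfrac{\eps\sigma}{\Delta}\bigr),\qquad Q(\mathcal{S}^*)=\Phi\bigl(-\tfrac{\Delta}{2\sigma}-\tfrac{\eps\sigma}{\Delta}\bigr).
\]
So the exact privacy profile of this pair is $\delta_\sigma(\Delta):=\Phi(\tfrac{\Delta}{2\sigma}-\tfrac{\eps\sigma}{\Delta})-e^\eps\Phi(-\tfrac{\Delta}{2\sigma}-\tfrac{\eps\sigma}{\Delta})$.

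\textbf{Step 3 (worst case over neighbors).} The main obstacle is to show that $\delta_\sigma(\Delta)$ is nondecreasing in $\Delta$. Once that holds, the supremum over all neighboring pairs equals $\delta_\sigma(\Delta_2)$, which gives the ``if'' direction. I would first try the conceptual route: a shift by $\Delta$ is a post-processing (scaling) of a shift by $\Delta'\ge\Delta$, so its hockey-stick divergence cannot be larger. As a fallback, differentiate directly. The terms involving $\varphi$ cancel because $\varphi(\tfrac{\Delta}{2\sigma}-\tfrac{\eps\sigma}{\Delta})=e^\eps\varphi(-\tfrac{\Delta}{2\sigma}-\tfrac{\eps\sigma}{\Delta})$, which leaves the manifestly nonnegative derivative $\varphi(\cdot)/\sigma$.

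For the ``only if'' direction, I take $\Delta_2$ to be the sensitivity, meaning the supremum of $\norm{f(D)-f(D')}_2$. I would pick neighbors whose shift is attained, or approaches $\Delta_2$, and use the set $\mathcal{S}^*$ from Step 2. Continuity of $\delta_\sigma$ in $\Delta$ then shows that the displayed inequality is necessary.
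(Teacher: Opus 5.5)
Your proposal is correct and is essentially the Balle--Wang argument, which the paper imports by citation rather than reproving: reduce to the hockey-stick divergence between two isotropic Gaussians, evaluate it in closed form along the shift direction, and take the worst case over neighbors via monotonicity in $\norm{f(D)-f(D')}_2$, with continuity giving the converse. One small correction is needed in Step 3. Your ``conceptual route'' needs scaling by $c=\Delta/\Delta'$ followed by adding independent $\cN(0,(1-c^2)\sigma^2)$ noise; scaling alone is a bijection that changes the noise level, so by itself it proves nothing. Your derivative fallback is right as written, except for the wording. What cancels are the $\eps\sigma/\Delta^2$ contributions, via $\varphi(a)=e^{\eps}\varphi(b)$, leaving $\varphi(a)/\sigma$, where $a$ and $b$ are the two arguments of $\Phi$.
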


We write $\sigma^*(\eps,\delta,\Delta_2)$ for the minimum $\sigma$ satisfying the Balle--Wang inequality above.

\paragraph{The contractive policy framework~\cite{appB:gopi2020dpsu}.}
Algorithm~2 of the original paper builds a weighted histogram $H$ by iterating over users in a random hash order.
Each user $i$ updates the histogram using an \emph{update policy} $\phi$ with $\ell_2$-budget~1 (i.e., $\norm{\phi(H,W_i)-H}_2\le 1$).
If $\phi$ is $\ell_2$-\emph{contractive}---meaning there exists an invariant set $\cI\subset\{(H_1,H_2):\norm{H_1-H_2}_2\le 1\}$ satisfying the two properties in Definition~1.3 of~\cite{appB:gopi2020dpsu}---then the histogram has $\ell_2$-sensitivity~$\le 1$ (Theorem~1.1).
Adding Gaussian noise $\cN(0,\sigma^2)$ and thresholding at $\rho$ yields Algorithm~1, the meta-algorithm for DPSU.

\paragraph{Two policies.}
\begin{itemize}
\item \textbf{$\ell_2$-descent} (Algorithm~7): move $H|_W$ toward $(\Gamma,\ldots,\Gamma)$ by $\ell_2$-distance~$\le 1$.
Proved $\ell_2$-contractive in Lemma~5.2 of~\cite{appB:gopi2020dpsu} via a geometric argument (Lemma~5.1).
\item \textbf{$\ell_1$-descent} (Algorithm~8): maximize $\sum_u y_u$ subject to $y_u\le\Gamma$ and $\norm{y-H|_W}_2\le 1$.
\emph{Claimed} $\ell_2$-contractive in Proposition~5.1 of~\cite{appB:gopi2020dpsu} via a Jacobian argument.
\end{itemize}

\paragraph{Privacy of the Policy Gaussian algorithm (Theorem~5.1 of~\cite{appB:gopi2020dpsu}).}
With an $\ell_2$-contractive symmetric policy, sensitivity~$\le 1$, noise parameter
\begin{equation}\label{appB:eq:sigma}
\sigma = \sigma^*(\eps, \delta_{\mathrm{mech}}, 1),
\end{equation}
and threshold
\begin{equation}\label{appB:eq:rho}
\rho = \max_{1\le t\le \Delta_0}\Bigl(\frac{1}{\sqrt{t}}+\sigma\,\Phi^{-1}\bigl((1-\delta_{\mathrm{spill}})^{1/t}\bigr)\Bigr),
\end{equation}
where $\delta_{\mathrm{mech}}+\delta_{\mathrm{spill}}=\delta$, the algorithm is $(\eps,\delta)$-DP.
The paper sets $\delta_{\mathrm{mech}}=\delta_{\mathrm{spill}}=\delta/2$.

\subsection{Counterexample to Proposition~5.1}
\label{appB:sec:counterexample}

\begin{theorem}[Proposition~5.1 of~\cite{appB:gopi2020dpsu} is false]
\label{appB:thm:counterexample}
The $\ell_1$-descent policy (Algorithm~8 of~\cite{appB:gopi2020dpsu}) is not $\ell_2$-contractive.
In particular, there exist neighboring databases $D_1,D_2$ and a cutoff $\Gamma$ such that the histograms $H_1,H_2$ produced by Algorithm~2 with the $\ell_1$-descent policy satisfy $\norm{H_1-H_2}_2>1$.
\end{theorem}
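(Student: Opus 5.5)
The plan is to prove the theorem by an explicit construction: a pair of neighboring databases, a fixed user order and a cutoff $\Gamma$ for which the two runs of Algorithm~2 with the $\ell_1$-descent policy end with histograms at $\ell_2$-distance strictly greater than $1$. Since $\ell_2$-contractivity implies, by Theorem~1.1 of~\cite{appB:gopi2020dpsu}, sensitivity at most $1$ for every ordering, a single violating instance is enough. I would reuse the three-item instance of the preceding appendix. The items are $\{a,b,c\}$, with $\Gamma=5$ and $\Delta_0=3$. $D_1$ has the extra first user $\{a,b,c\}$, followed by two users $\{a,b\}$ and five users $\{b,c\}$; $D_2$ is $D_1$ without the extra user. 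The users are processed in the listed order, which is one admissible hash ordering.

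The first step is to check that the extra user creates a difference of norm exactly $1$. On the zero histogram the extra user's gaps are all $5$, so $\lambda=1/\sqrt3$, and $H_1-H_2=(1,1,1)/\sqrt3$.

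The second step is an invariant for the shared users $1$ through $6$. For each of them, in both runs, every gap in the user's pair exceeds $1/\sqrt2$. The water-filling $\lambda$ is then $1/\sqrt2$ in both runs, so the increments coincide and the difference vector is unchanged. I would verify this by tracking the largest coordinate. For item $b$ after user $6$ it is $1/\sqrt3+6/\sqrt2\approx4.82$ in $H_1$; before user $6$ it is $\approx4.11$, leaving a gap of about $0.89>0.707$. The other items stay well below this.

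The third step, which is the crux, is user $7$ with set $\{b,c\}$. In $H_1$ the gap of $b$ is $G_b=5-1/\sqrt3-6/\sqrt2\approx0.180<1/\sqrt2$. The water-filling therefore saturates $b$ and gives $c$ the amount $\sqrt{1-G_b^2}\approx0.984$. I must confirm that this exceeds $G_b$ and stays below $G_c\approx1.594$, so that this is the correct solution of $\sum\min\{G,\lambda\}^2=1$. In $H_2$ both gaps, $\approx0.757$ and $\approx2.172$, exceed $1/\sqrt2$, so each item gets $1/\sqrt2$. The new difference is
\[
\left(\tfrac{1}{\sqrt3},\ \tfrac{1}{\sqrt3}+G_b-\tfrac{1}{\sqrt2},\ \tfrac{1}{\sqrt3}+\sqrt{1-G_b^2}-\tfrac{1}{\sqrt2}\right)\approx(0.577,0.050,0.854),
\]
with squared norm $\approx1.065$, so the norm is $\approx1.032>1$.

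The main obstacle is rigor at this last step: the numerical margins are small (the $b$ entry is only about $0.05$). I would either keep exact surds to show the squared norm exceeds $1$, or bound each entry with explicit rational enclosures such as $0.57<1/\sqrt3<0.578$, which leaves ample room since the required slack is about $0.065$. As an optional explanatory remark, I would attribute the failure to the off-diagonal Jacobian block $\partial\lambda/\partial x_b=G_b/\lambda$, which has spectral norm $1/\sqrt{1-G_b^2}>1$.
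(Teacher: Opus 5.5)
Your proposal is correct and essentially identical to the paper's proof. It uses the same instance ($\{a,b,c\}$, $\Gamma=5$, the extra user $\{a,b,c\}$ first), the same invariant that the difference stays $(1,1,1)/\sqrt{3}$ through user~6, and the same asymmetric water-filling at user~7, giving $\|H_1-H_2\|_2\approx 1.032$. The only differences are cosmetic: you conclude via the contrapositive of Theorem~1.1 where the paper unrolls the two invariant-set properties directly, and you make explicit the consistency checks $G_b<\lambda<G_c$ and the numerical enclosures that the paper leaves implicit.
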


\begin{proof}
We exhibit a concrete instance. Let $U=\{a,b,c\}$, $\Gamma=5$, $\Delta_0=3$, and define:
\begin{align*}
D_1 &= [\{a,b,c\},\ \{a,b\},\ \{a,b\},\ \{b,c\},\ \{b,c\},\ \{b,c\},\ \{b,c\},\ \{b,c\}],\\
D_2 &= [\{a,b\},\ \{a,b\},\ \{b,c\},\ \{b,c\},\ \{b,c\},\ \{b,c\},\ \{b,c\}].
\end{align*}
$D_1$ has one extra user (first position) with $W=\{a,b,c\}$; the remaining 7 users are identical and processed in the same order. We trace the execution of Algorithm~8 exactly.

\medskip
\noindent\textbf{User~0} ($W=\{a,b,c\}$, $D_1$ only). Histogram is $(0,0,0)$. Gaps: $G=(5,5,5)$, $\norm{G}_2 = 5\sqrt{3}>1$. Solve $3\lambda^2=1$: $\lambda=1/\sqrt{3}$. Each item incremented by $1/\sqrt{3}$.
$$H_1=(1/\sqrt{3},\,1/\sqrt{3},\,1/\sqrt{3}),\quad H_2=(0,0,0),\quad \norm{H_1-H_2}_2=1.$$

\noindent\textbf{Users~1--2} ($W=\{a,b\}$, both databases). Both gaps exceed $1/\sqrt{2}$. Solve $2\lambda^2=1$: $\lambda=1/\sqrt{2}$. Each of $a,b$ incremented by $1/\sqrt{2}$ in both histograms. Difference unchanged:
$$H_1-H_2=(1/\sqrt{3},\,1/\sqrt{3},\,1/\sqrt{3}),\quad \norm{H_1-H_2}_2=1.$$

\noindent\textbf{Users~3--6} ($W=\{b,c\}$, both databases). Same analysis: $\lambda=1/\sqrt{2}$, items $b,c$ incremented by $1/\sqrt{2}$ in both. After user~6:
\begin{align*}
H_1 &= (1/\sqrt{3}+2/\sqrt{2},\ 1/\sqrt{3}+6/\sqrt{2},\ 1/\sqrt{3}+4/\sqrt{2}) \approx (1.992,\,4.820,\,3.406),\\
H_2 &= (2/\sqrt{2},\ 6/\sqrt{2},\ 4/\sqrt{2}) \approx (1.414,\,4.243,\,2.828).
\end{align*}
Difference: $(1/\sqrt{3},\,1/\sqrt{3},\,1/\sqrt{3})$, $\norm{\cdot}_2=1$.

\medskip
\noindent\textbf{User~7} ($W=\{b,c\}$, both databases --- the critical step).

\noindent\emph{$H_1$:} Gaps $G_b^{(1)}=5-(1/\sqrt{3}+6/\sqrt{2})\approx 0.180$, $G_c^{(1)}=5-(1/\sqrt{3}+4/\sqrt{2})\approx 1.594$.
Since $G_b^{(1)}<G_c^{(1)}$, solve $(G_b^{(1)})^2+\lambda^2=1$: $\lambda=\sqrt{1-(G_b^{(1)})^2}\approx 0.984$.
Item $b$ gets $\min(G_b^{(1)},\lambda)=G_b^{(1)}\approx 0.180$ (reaching $\Gamma$).
Item $c$ gets $\lambda\approx 0.984$.

\noindent\emph{$H_2$:} Gaps $G_b^{(2)}=5-6/\sqrt{2}\approx 0.757$, $G_c^{(2)}=5-4/\sqrt{2}\approx 2.172$.
Both gaps $>1/\sqrt{2}$, so $\lambda=1/\sqrt{2}$. Both items get $1/\sqrt{2}\approx 0.707$.

\medskip
\noindent\textbf{Final state:}
\begin{align*}
H_1-H_2 &\approx (0.577,\ 0.050,\ 0.854),\\
\norm{H_1-H_2}_2 &= \sqrt{0.577^2+0.050^2+0.854^2}\approx \boxed{1.032>1}.
\end{align*}

\noindent\textbf{Mechanism of violation.} Item $b$ reaches $\Gamma$ in $H_1$ (small residual gap 0.180) but not in $H_2$ (gap 0.757). The $\ell_1$-descent allocates $b$'s saved budget to $c$ in $H_1$ (increment 0.984) but not in $H_2$ (increment 0.707). The asymmetric allocation increases the $\ell_2$-distance.

\medskip
Since this exhibits $\norm{H_1-H_2}_2>1$ for neighboring databases, no invariant set $\cI$ satisfying Definition~1.3 of~\cite{appB:gopi2020dpsu} can exist (Property~2 forces the initial pair into $\cI$, Property~1 keeps all subsequent pairs in $\cI$, and membership in $\cI$ requires $\norm{\cdot}_2\le 1$). Therefore Algorithm~8 is not $\ell_2$-contractive.
\end{proof}

\subsubsection{The error in the proof}
\label{appB:sec:error}

The proof of Proposition~5.1 in~\cite{appB:gopi2020dpsu} computes the Jacobian of the map $F(x)=\argmax\sum_i y_i$ s.t.\ $y_i\le\Gamma$, $\norm{y-x}_2\le 1$. At a generic point $x$ with $\Gamma>x_1>\cdots>x_d$, if $F(x)_i=\Gamma$ for $i\le t$ and $F(x)_i=x_i+\lambda$ for $i>t$, then
$$J_F(x) = \begin{bmatrix} \mathbf{0} & \mathbf{0} \\ B & I_{d-t}\end{bmatrix}$$
where $B_{ij}=\frac{\partial\lambda}{\partial x_j}=\frac{\Gamma-x_j}{(d-t)\lambda}$ for $j\le t$.
The proof states this has ``block diagonal form'' and concludes $\norm{J_F}_{S_\infty}\le 1$.

The matrix is block \emph{lower triangular}, not block diagonal. Its singular values satisfy:
\begin{proposition}
\label{appB:prop:jacobian}
For $t\ge 1$ and $\lambda$ satisfying $\sum_{j=1}^t(\Gamma-x_j)^2+(d-t)\lambda^2=1$:
$$\norm{J_F(x)}_{S_\infty}=\frac{1}{\sqrt{(d-t)\lambda^2}}=\frac{1}{\sqrt{1-\sum_{j=1}^t(\Gamma-x_j)^2}}>1.$$
\end{proposition}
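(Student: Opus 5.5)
The proof is a direct computation of the spectral norm of the block lower-triangular matrix $J_F(x)=\begin{bmatrix}\mathbf{0}&\mathbf{0}\\ B& I_{d-t}\end{bmatrix}$. The key point is that every row of $B$ is the same vector. Indeed, $B_{ij}=\partial\lambda/\partial x_j$ does not depend on the row index $i>t$, because every unsaturated coordinate is shifted by the same $\lambda$. So I will write
\[
B=\mathbf{1}_{d-t}\,g^{\top},\qquad g_j=\frac{\Gamma-x_j}{(d-t)\lambda},\quad j\le t .
\]
I will first re-derive these entries by implicit differentiation of the constraint $\sum_{j\le t}(\Gamma-x_j)^2+(d-t)\lambda^2=1$. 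Differentiating in $x_j$ gives $-2(\Gamma-x_j)+2(d-t)\lambda\,\partial\lambda/\partial x_j=0$. Differentiating in $x_i$ for $i>t$ gives $\partial\lambda/\partial x_i=0$, which is why the lower-right block is exactly $I_{d-t}$.

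Next I compute $\|J_F\|_{S_\infty}$ as $\sup_{\|v\|=1}\|J_Fv\|$. Split $v=(v_1,v_2)$ with $v_1\in\R^t$ and $v_2\in\R^{d-t}$. Then $J_Fv=(0,\;(g^{\top}v_1)\mathbf{1}+v_2)$, so
\[
\|J_Fv\|^2=(d-t)(g^{\top}v_1)^2+2(g^{\top}v_1)(\mathbf{1}^{\top}v_2)+\|v_2\|^2 .
\]
The maximizer lies in the span of $(g,0)$ and $(0,\mathbf{1}/\sqrt{d-t})$. Restricted to this span, $J_F^{\top}J_F$ is a $2\times2$ rank-one-plus-identity problem. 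Alternatively, $J_FJ_F^{\top}$ has nonzero block $I+BB^{\top}=I+\|g\|^2\mathbf{1}\mathbf{1}^{\top}$, whose largest eigenvalue is $1+(d-t)\|g\|^2$. I will use this second route because it is cleaner. Substituting $g$ gives
\[
1+(d-t)\|g\|^2=1+\frac{\sum_{j\le t}(\Gamma-x_j)^2}{(d-t)\lambda^2}.
\]
The constraint gives $\sum_{j\le t}(\Gamma-x_j)^2=1-(d-t)\lambda^2$, so this equals $1/((d-t)\lambda^2)$. Taking square roots yields both claimed expressions.

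Finally, strict inequality requires $\sum_{j\le t}(\Gamma-x_j)^2>0$. This holds because $t\ge1$ and the generic ordering assumes $\Gamma>x_1$, so $\Gamma-x_1>0$. Hence $(d-t)\lambda^2<1$.

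The main obstacle is bookkeeping rather than difficulty. I must confirm that the $\partial\lambda/\partial x_j$ entries occupy the lower-left block in the orientation that makes $BB^{\top}$, not $B^{\top}B$, appear next to the identity. Since $J_FJ_F^{\top}$ and $J_F^{\top}J_F$ share their nonzero spectrum, either orientation gives the same answer. As a sanity check, the $d=2$, $t=1$ case should reproduce the earlier value $1/\sqrt{1-G_b^2}$.
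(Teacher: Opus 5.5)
Your proposal is correct and follows essentially the same route as the paper: both write every row of $B$ as the same vector, so the nonzero block of $J_F J_F^{\top}$ is $I_{d-t}+\|g\|_2^2\,\mathbf{1}\mathbf{1}^{\top}$, whose largest eigenvalue is $1+\sum_{j\le t}(\Gamma-x_j)^2/((d-t)\lambda^2)=1/((d-t)\lambda^2)$ by the constraint. Your explicit use of $\Gamma>x_1$ for the strict inequality spells out what the paper's ``whenever $t\ge 1$'' leaves implicit.
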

\begin{proof}
Write $m=d-t$. Each row of $B$ equals the vector
\[
c=\left(\frac{\Gamma-x_1}{m\lambda},\ldots,\frac{\Gamma-x_t}{m\lambda}\right).
\]
Hence $BB^\top=\norm{c}_2^2\,\mathbf{1}\mathbf{1}^\top$, which has eigenvalues $0$ with multiplicity $m-1$ and $m\norm{c}_2^2=\sum_{j\le t}(\Gamma-x_j)^2/(m\lambda^2)$ with multiplicity~1. Since
\[
J_F(x)J_F(x)^\top=
\begin{bmatrix}
\mathbf{0} & \mathbf{0}\\
\mathbf{0} & I_m+BB^\top
\end{bmatrix},
\]
its largest eigenvalue is
\[
1+\frac{\sum_{j\le t}(\Gamma-x_j)^2}{m\lambda^2}
=\frac{\sum_{j\le t}(\Gamma-x_j)^2+m\lambda^2}{m\lambda^2}
=\frac{1}{m\lambda^2}.
\]
Therefore $\norm{J_F(x)}_{S_\infty}=\sqrt{\lambda_{\max}(J_FJ_F^\top)}=1/\sqrt{m\lambda^2}$. Since $m\lambda^2=1-\sum_{j\le t}(\Gamma-x_j)^2<1$ whenever $t\ge 1$, the norm is strictly larger than~$1$.
\end{proof}

\begin{remark}
The \emph{eigenvalues} of the lower triangular matrix $J_F$ are $\{0,\ldots,0,1,\ldots,1\}$, all $\le 1$. The likely error was concluding that the \emph{spectral norm} (largest singular value) is also $\le 1$, conflating eigenvalues with singular values.
\end{remark}

\begin{remark}[Scope]
The $\ell_2$-descent (Algorithm~7) \emph{is} $\ell_2$-contractive: its proof (Lemma~5.2 of~\cite{appB:gopi2020dpsu}, via Lemma~5.1) uses a geometric argument that does not involve the Jacobian. Lemma~4.1 ($\ell_1$-descent is $\ell_1$-contractive) is also correct, as it uses a structural monotonicity argument.
\end{remark}

\subsection{A Barrier within the Policy Gaussian Framework}
\label{appB:sec:optimality}

Having established that the $\ell_2$-descent is the policy among Gopi et al.'s two proposals with a valid $\ell_2$-contractivity proof, we now ask why the Policy Gaussian mechanism is difficult to improve upon. We do \emph{not} claim an information-theoretic lower bound over all DPSU algorithms. Instead, we isolate the bottleneck in the proof of Theorem~5.1 of~\cite{appB:gopi2020dpsu}: the spillover items created by the extra user.

\subsubsection{Symmetric spillover weights}

\begin{proposition}[Symmetric spillover budget]
\label{appB:prop:spillover_budget}
Fix $t\ge 1$. Consider $t$ previously unseen items whose current histogram values are equal, and let $v\in\R_{\ge 0}^t$ be the increment assigned to these items by a symmetric update policy with $\norm{v}_2\le 1$. Then all coordinates of $v$ are equal to a common value $c$, and
\[
0\le c\le \frac{1}{\sqrt{t}}.
\]
Moreover, $c=1/\sqrt{t}$ if and only if the policy spends its entire $\ell_2$-budget on these $t$ items.
\end{proposition}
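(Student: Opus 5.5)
The proof has two steps: equality of the coordinates by symmetry, then the bound on the common value by the $\ell_2$-budget.

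\emph{Step 1 (equal coordinates).} I take ``symmetric update policy'' in the sense of Gopi et al.: for any permutation $\pi$ of the items, $\phi(\pi H,\pi W)=\pi\,\phi(H,W)$. Let $i,j$ be two of the $t$ fresh items. Both lie in the user's set $W$ and have equal current histogram values. Let $\pi$ be the transposition swapping $i$ and $j$; it fixes $H$ and $W$. Equivariance then gives $\phi(H,W)=\pi\,\phi(H,W)$, so the increments satisfy $v_i=v_j$. Hence $v=c\mathbf{1}_t$ for some $c$, and $c\ge 0$ because $v\in\R_{\ge0}^t$.

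This is the step that needs care. The transposition must fix the whole input, not just the histogram values on the fresh items. That holds here because the $t$ items are ``equally situated'': all lie in $W$, all are previously unseen, and all have equal values. If the policy were deterministic only up to tie-breaking, I would read ``symmetric'' as requiring tie-breaking to respect this symmetry, and I would state that reading explicitly.

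\emph{Step 2 (bound on $c$).} The policy spends at most its $\ell_2$-budget of $1$ in total, and $v$ is the restriction of the full increment $\phi(H,W)-H$ to the $t$ fresh items. Therefore
\[
t c^2=\norm{v}_2^2\le \norm{\phi(H,W)-H}_2^2\le 1,
\]
which gives $0\le c\le 1/\sqrt{t}$.

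\emph{Equality case.} The chain above shows that $c=1/\sqrt{t}$ holds if and only if $\norm{v}_2=1$. Since the total budget is $1$, this happens exactly when the increment is supported on these $t$ items and uses the full budget, which is the meaning of ``spends its entire $\ell_2$-budget on these $t$ items.'' Conversely, if the budget is fully spent on them, then $\norm{v}_2=1$ and so $tc^2=1$.
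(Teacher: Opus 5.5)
Your proof is correct and follows essentially the same approach as the paper's: symmetry forces $v=c\mathbf{1}_t$, the budget gives $tc^2=\norm{v}_2^2\le 1$, and equality holds exactly when $\norm{v}_2=1$. Your explicit transposition argument, with the point that the swap must fix the whole input $(H,W)$, spells out what the paper compresses into ``by symmetry.''
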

\begin{proof}
By symmetry, all $t$ coordinates are equal, so $v=(c,\ldots,c)$ for some $c\ge 0$. The budget constraint gives $t c^2=\norm{v}_2^2\le 1$, hence $c\le 1/\sqrt{t}$. Equality holds exactly when $\norm{v}_2=1$, i.e., when the full budget is spent on these items.
\end{proof}

\subsubsection{The support-conditioning bottleneck}

\begin{theorem}[Spillover bound in the support-conditioning proof]
\label{appB:thm:spillover_barrier}
Consider the proof template of Theorem~5.1 of~\cite{appB:gopi2020dpsu}. Let $D_1\supset D_2$ be neighboring databases, let $H_1,H_2$ be their histograms, let $T=\operatorname{supp}(H_1)\setminus\operatorname{supp}(H_2)$ with $|T|=t$, and let the mechanism add independent $Z_u\sim\cN(0,\sigma^2)$ noise and output the released set
\[
A=\{u\in \operatorname{supp}(H_1): H_1[u]+Z_u>\rho\}.
\]
If $H_1[u]\le b_t$ for every $u\in T$, then for the event $E=\{A\subseteq \operatorname{supp}(H_2)\}$,
\[
\Pr[\bar E]\le 1-\Phi\!\left(\frac{\rho-b_t}{\sigma}\right)^t.
\]
Consequently, this proof template certifies $\Pr[\bar E]\le \delta_{\mathrm{spill}}$ whenever
\[
\rho \ge b_t+\sigma\Phi^{-1}\!\left((1-\delta_{\mathrm{spill}})^{1/t}\right).
\]
\end{theorem}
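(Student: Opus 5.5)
The plan is to reduce the event $\bar E$ to a statement about the $t$ coordinates in $T$ only, and then use independence of the Gaussian noise. First, observe that $\operatorname{supp}(H_2)\subseteq\operatorname{supp}(H_1)$: the histogram-building procedure only adds nonnegative mass, and $D_1\supset D_2$. Consequently the released set satisfies $A\subseteq\operatorname{supp}(H_1)=\operatorname{supp}(H_2)\cup T$. Hence $A\not\subseteq\operatorname{supp}(H_2)$ holds exactly when some $u\in T$ is released, i.e.
\[
\bar E=\bigcup_{u\in T}\{H_1[u]+Z_u>\rho\}.
\]
If the monotonicity of supports under adding a user is not immediate for a general policy, I would simply define $T$ as in the statement and note that the argument only uses $A\setminus\operatorname{supp}(H_2)\subseteq T$. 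That inclusion is immediate from $A\subseteq\operatorname{supp}(H_1)$ together with the definition $T=\operatorname{supp}(H_1)\setminus\operatorname{supp}(H_2)$. So this step is actually trivial.

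Second, take complements and use independence. The event $E$ equals $\bigcap_{u\in T}\{Z_u\le \rho-H_1[u]\}$. Since the $Z_u$ are independent $\cN(0,\sigma^2)$,
\[
\Pr[E]=\prod_{u\in T}\Phi\!\left(\frac{\rho-H_1[u]}{\sigma}\right)\ge \Phi\!\left(\frac{\rho-b_t}{\sigma}\right)^t,
\]
where the inequality uses that $\Phi$ is monotone increasing and that $H_1[u]\le b_t$ for each $u\in T$. Here $H_1$ is treated as deterministic; in the original Algorithm~2 the histogram depends on the random hash order, but that order is independent of the noise. I would therefore condition on it, and the bound still holds, since $b_t$ bounds $H_1[u]$ for every realization. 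Subtracting from $1$ gives the first claim.

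Third, for the consequence, set $\rho\ge b_t+\sigma\Phi^{-1}((1-\delta_{\mathrm{spill}})^{1/t})$. Then $(\rho-b_t)/\sigma\ge\Phi^{-1}((1-\delta_{\mathrm{spill}})^{1/t})$, so by monotonicity $\Phi((\rho-b_t)/\sigma)^t\ge 1-\delta_{\mathrm{spill}}$, which yields $\Pr[\bar E]\le\delta_{\mathrm{spill}}$.

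The only real subtlety is the first step: checking that outputs outside $\operatorname{supp}(H_2)$ can come only from $T$. This is handled by the definition of $T$ and the fact that $A$ is drawn from $\operatorname{supp}(H_1)$. Everything else is a product of Gaussian CDFs.
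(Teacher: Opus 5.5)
Your proposal is correct and follows essentially the same route as the paper: identify $\bar E$ with the event that some $u\in T$ crosses $\rho$, use independence of the $Z_u$ and monotonicity of $\Phi$ to get $\Pr[E]\ge\Phi\!\left(\frac{\rho-b_t}{\sigma}\right)^t$, and then invert this bound to obtain the threshold condition. You are slightly more careful than the paper in two places: you justify $\bar E=\bigcup_{u\in T}\{H_1[u]+Z_u>\rho\}$ from $A\subseteq\operatorname{supp}(H_1)$, and you note that one can condition on the hash order because it is independent of the noise.
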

\begin{proof}
Let $p_u=\Pr[H_1[u]+Z_u\le \rho]$. Since $H_1[u]\le b_t$ and the Gaussian CDF is increasing,
\[
p_u=\Phi\!\left(\frac{\rho-H_1[u]}{\sigma}\right)\ge \Phi\!\left(\frac{\rho-b_t}{\sigma}\right)
\qquad\text{for every }u\in T.
\]
Because the noises are independent,
\[
\Pr[\bar E]
=\Pr[\exists u\in T:H_1[u]+Z_u>\rho]
=1-\prod_{u\in T}p_u
\le 1-\Phi\!\left(\frac{\rho-b_t}{\sigma}\right)^t.
\]
Solving the right-hand side inequality
\[
1-\Phi\!\left(\frac{\rho-b_t}{\sigma}\right)^t\le \delta_{\mathrm{spill}}
\]
for $\rho$ gives the stated threshold condition.
\end{proof}

\begin{corollary}[Why beating Policy Gaussian is hard]
\label{appB:cor:policy_gaussian_barrier}
Within the support-conditioning proof of Theorem~5.1 of~\cite{appB:gopi2020dpsu}, any symmetric $\ell_2$-budget-$1$ policy that fully spends its budget on $t$ equally situated spillover items yields the threshold requirement
\[
\rho \ge \frac{1}{\sqrt{t}}+\sigma\Phi^{-1}\!\left((1-\delta_{\mathrm{spill}})^{1/t}\right).
\]
The $\ell_2$-descent attains this case on the corresponding equal-spillover/full-budget instances; for example, if the extra user's active coordinates are exactly those $t$ fresh items and their distance to the cutoff vector $(\Gamma,\ldots,\Gamma)$ is at least $1$, then the update spends its full unit budget and splits it uniformly across them. Therefore, within this proof template and this equal-spillover/full-budget regime, there is no way to improve the spillover term without either:
\begin{enumerate}
\item giving some fresh items less than $1/\sqrt{t}$ weight, which lowers their probability $1-\Phi((\rho-c)/\sigma)$ of crossing a fixed threshold;
\item abandoning coordinatewise thresholding of an i.i.d.-Gaussian-perturbed histogram; or
\item finding a privacy proof that does not condition on the support event $E$.
\end{enumerate}
\end{corollary}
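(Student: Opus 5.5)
The corollary follows by combining Proposition~\ref{appB:prop:spillover_budget} with Theorem~\ref{appB:thm:spillover_barrier}, plus a direct check that the $\ell_2$-descent realizes the extremal case. I would organize the argument in three steps.

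\textbf{Step 1: the threshold requirement.} Consider a symmetric $\ell_2$-budget-$1$ policy that, on the extra user in $D_1$, fully spends its budget on $t$ equally situated fresh items. These items have current value $0$ in both histograms and are absent from $\operatorname{supp}(H_2)$. By Proposition~\ref{appB:prop:spillover_budget}, each receives exactly $c=1/\sqrt{t}$.

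I would then argue that these items lie in $T=\operatorname{supp}(H_1)\setminus\operatorname{supp}(H_2)$. In the worst case of the template, no later user touches them, so $H_1[u]=1/\sqrt{t}$ exactly. Any bound $b_t$ valid for all such instances must therefore satisfy $b_t\ge 1/\sqrt{t}$. Plugging $b_t=1/\sqrt{t}$ into Theorem~\ref{appB:thm:spillover_barrier} gives the certification condition
\[
\rho\ge \frac{1}{\sqrt{t}}+\sigma\Phi^{-1}\bigl((1-\delta_{\mathrm{spill}})^{1/t}\bigr).
\]
Because each $p_u$ is then exactly $\Phi((\rho-1/\sqrt{t})/\sigma)$, the product bound in that theorem is an equality. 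So the template cannot certify $\Pr[\bar E]\le\delta_{\mathrm{spill}}$ with any smaller $\rho$, and the condition is a requirement rather than merely sufficient.

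\textbf{Step 2: the $\ell_2$-descent attains this case.} Take the extra user's set $W$ to be exactly the $t$ fresh items, with $\Gamma\ge 1/\sqrt{t}$ so that the distance $\sqrt{t}\,\Gamma$ to the cutoff vector is at least $1$. The $\ell_2$-descent moves $H|_W=0$ toward $(\Gamma,\ldots,\Gamma)$ by $\ell_2$-distance $1$. That direction is parallel to the all-ones vector, so each coordinate increases by $1/\sqrt{t}$, which is the full-budget uniform split.

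\textbf{Step 3: the three escape routes.} These follow by reading off which hypotheses Step~1 used. If some fresh items get $c<1/\sqrt{t}$, the crossing probability $1-\Phi((\rho-c)/\sigma)$ is monotone in $c$, so those items cross the fixed threshold less often. This is route (1). The independent-Gaussian coordinatewise thresholding was used in the product formula, which is route (2). The whole bound is a bound on $\Pr[\bar E]$ for the support event, which is route (3). Hence if all three are kept, the spillover term cannot be improved.

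\textbf{Main obstacle.} The delicate point is Step~1's claim that the requirement is forced, not just sufficient. This needs the exact-equality instance, where the fresh items are untouched after the extra user and the product bound is tight. I would make the claim precise by restricting the statement to "this proof template," which is how the corollary is phrased, rather than attempting a lower bound over all algorithms.
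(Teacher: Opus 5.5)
Your proposal is correct and follows essentially the same route as the paper. The paper's proof combines Proposition~\ref{appB:prop:spillover_budget} (weight $1/\sqrt{t}$ under full-budget symmetric spending) with Theorem~\ref{appB:thm:spillover_barrier} at $b_t=1/\sqrt{t}$, checks that the $\ell_2$-descent step along the all-ones gap direction adds exactly $1/\sqrt{t}$ per coordinate, and uses monotonicity of $1-\Phi((\rho-c)/\sigma)$ in $c$ for the escape routes; your proposal does the same. Your additional observation that the product bound holds with equality when $H_1[u]=1/\sqrt{t}$ exactly is something the paper's proof leaves implicit, and it is what justifies calling the condition a requirement of the template rather than merely a sufficient condition.
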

\begin{proof}
By Proposition~\ref{appB:prop:spillover_budget}, a symmetric budget-$1$ policy can assign at most $1/\sqrt{t}$ to each of $t$ equally situated spillover items, and this value is attained exactly when the budget is split uniformly across them. Theorem~\ref{appB:thm:spillover_barrier} then gives the stated threshold. For the $\ell_2$-descent attainment example, the gap vector on those fresh coordinates is a constant multiple of $\mathbf{1}$; if its $\ell_2$ norm is at least $1$, then a unit step in the normalized gap direction adds exactly $1/\sqrt{t}$ to each coordinate. The last sentence follows because $1-\Phi((\rho-c)/\sigma)$ is increasing in $c$: decreasing the spillover weight reduces the chance that a fresh item survives the thresholding step.
\end{proof}

\begin{remark}
This is a barrier theorem, not an information-theoretic lower bound over all differentially private set-union mechanisms. It explains why the Policy Gaussian threshold is hard to improve inside the current symmetric support-conditioning proof template for a histogram-plus-Gaussian-threshold architecture, but it leaves open the possibility of fundamentally different mechanisms or privacy proofs.
\end{remark}

\subsection{Experiments}
\label{appB:sec:experiments}

We numerically illustrate the size of the spillover surcharge from Corollary~\ref{appB:cor:policy_gaussian_barrier}. The benchmark below is \emph{not} a lower bound on all DPSU mechanisms; it is simply the threshold one would obtain from the same support-conditioning calculation if the spillover coordinates carried zero histogram mass.

\subsubsection{Setup}

We generate a synthetic Zipf dataset: $20{,}000$ users, universe of $50{,}000$ items, popularity $\propto 1/\mathrm{rank}^{1.07}$, user set sizes following a truncated Pareto distribution ($30{,}531$ unique items, $13{,}796$ singletons). All experiments use $\delta=\exp(-10)$.

\subsubsection{Spillover surcharge}

Define the zero-mass benchmark
\[
\rho_{\mathrm{zero}}=\sigma\Phi^{-1}\!\left((1-\delta_{\mathrm{spill}})^{1/\Delta_0}\right).
\]
Table~\ref{appB:tab:gap} reports the Policy Gaussian threshold $\rho_{\mathrm{PG}}$, the benchmark $\rho_{\mathrm{zero}}$, and their difference. This should be read as a benchmark comparison: when the maximizer in $\rho_{\mathrm{PG}}$ occurs at $t=\Delta_0$, the difference equals the fixed-$t$ spillover surcharge from Corollary~\ref{appB:cor:policy_gaussian_barrier}; otherwise it is simply the gap between the worst-case Policy Gaussian threshold and the zero-mass reference.

\begin{table}[h]
\caption{Policy Gaussian threshold and the zero-mass benchmark.}
\label{appB:tab:gap}
\centering
\small
\begin{tabular}{cc|cccc}
\toprule
$\eps$ & $\Delta_0$ & $\rho_{\mathrm{PG}}$ & $\rho_{\mathrm{zero}}$ & Surcharge & Relative\\
\midrule
1.0 & 10 & 16.56 & 16.25 & 0.32 & 1.9\%\\
1.0 & 100 & 17.97 & 17.87 & 0.10 & 0.6\%\\
3.0 & 10 & 6.44 & 6.11 & 0.32 & 5.0\%\\
3.0 & 100 & 6.82 & 6.72 & 0.10 & 1.5\%\\
5.0 & 10 & 4.50 & 3.94 & 0.56 & 12.5\%\\
5.0 & 100 & 4.50 & 4.33 & 0.17 & 3.8\%\\
8.0 & 10 & 3.37 & 2.66 & 0.71 & 21.0\%\\
8.0 & 100 & 3.37 & 2.93 & 0.44 & 13.0\%\\
\bottomrule
\end{tabular}
\end{table}

For smaller privacy loss ($\eps\le 3$), the surcharge is modest in these experiments. For larger $\eps$, the Gaussian noise shrinks and the fixed spillover mass becomes more visible, so the surcharge grows.

\subsubsection{Output size against a zero-mass benchmark}

Table~\ref{appB:tab:output} reports the number of items output by Policy Gaussian, compared with the same histogram thresholded at $\rho_{\mathrm{zero}}$. The second column is not privacy-certified by our argument; it is included only as a benchmark for the utility cost of the spillover surcharge.

\begin{table}[h]
\caption{Output size: Policy Gaussian vs.\ the zero-mass benchmark. Zipf data, $20{,}000$ users.}
\label{appB:tab:output}
\centering
\small
\begin{tabular}{cc|cc|c}
\toprule
$\eps$ & $\Delta_0$ & $|S|$ (Policy Gaussian) & $|S|$ (benchmark) & Gap (\%) \\
\midrule
1.0 & 10 & 400 & 403 & 0.7\%\\
1.0 & 100 & 417 & 420 & 0.7\%\\
3.0 & 10 & 1091 & 1104 & 1.2\%\\
3.0 & 100 & 1193 & 1218 & 2.1\%\\
5.0 & 10 & 1625 & 1670 & 2.8\%\\
5.0 & 100 & 1931 & 1977 & 2.4\%\\
\bottomrule
\end{tabular}
\end{table}

On this dataset, the output-size gap stays within a few percent. This does not prove optimality, but it suggests that the spillover surcharge can be small in practical regimes even though it is structurally built into the current proof template.

\subsection{Discussion: Why Better Policies Are Hard}
\label{appB:sec:discussion}

Our results identify one concrete obstruction to improving DPSU within the presently validated contractive framework. We discuss three natural approaches and where this barrier arises.

\paragraph{Better contractive policies.}
The $\ell_2$-descent moves in the direction $G/\norm{G}_2$ where $G=(\Gamma,\ldots,\Gamma)-H|_W$, allocating budget proportionally to gaps. The $\ell_1$-descent fills items greedily (smallest gap first), outputting more items per user. However, Theorem~\ref{appB:thm:counterexample} shows the $\ell_1$-descent is not $\ell_2$-contractive.
One intermediate variant does collapse to the ordinary $\ell_2$-descent: iterative $\ell_2$-descent with a fixed cutoff and micro-steps, because while no item reaches $\Gamma$ the gap vector remains a positive scalar multiple of its previous value. This collinearity argument does not generally extend to escalating cutoffs, where changing $\Gamma$ adds an all-ones term to the gap vector and can change the normalized direction, so those variants require a separate analysis.

\paragraph{Using $\ell_1$-descent with honest sensitivity.}
One repair strategy would be to use the $\ell_1$-descent with its true $\ell_2$-sensitivity $C>1$ and recalibrate the Gaussian noise accordingly. This increases the Gaussian-noise component of the Policy Gaussian threshold. The spillover term need not scale in the same simple way, so the net threshold effect requires a separate analysis; nevertheless, any utility gain from greedier weight allocation must be balanced against this larger noise penalty.

\paragraph{Non-histogram mechanisms.}
Our barrier does not rule out sketches, candidate-set methods, or adaptive query strategies. The point is more specific: on the equal-spillover/full-budget instances considered in Section~\ref{appB:sec:optimality}, any approach that still releases a coordinatewise thresholded Gaussian histogram and proves privacy by conditioning on the revealed support inherits the same spillover bottleneck in that proof step. Whether one can obtain comparable improvements outside that regime remains open.

\paragraph{Summary.}
The spillover term is the price of discovering items from a data-dependent support inside the current Policy Gaussian proof. In the equal-spillover/full-budget regime, a symmetric policy with $\ell_2$-budget $1$ can place at most $1/\sqrt{t}$ mass on each of $t$ equally situated fresh items, and the $\ell_2$-descent realizes this tradeoff. Any further improvement for that regime within the same proof template must either reduce fresh-item weight and accept lower recall, or avoid the same support-conditioning architecture entirely.

\subsection{Related Work}
\label{appB:sec:related}

The DPSU problem was formalized by~\cite{appB:gopi2020dpsu}, who introduced the contractive policy framework. Earlier work on releasing search queries~\cite{appB:KKMN09} and SQL keys~\cite{appB:WilsonZ20} considered related problems but with different privacy formulations.
For the special case $\Delta_0=1$, \cite{appB:desfontaines2020} gave a nearly optimal algorithm based on noisy partition selection.
The $\ell_1$-descent policy has been used in differentially private $n$-gram extraction~\cite{appB:kim2021dpne}; our counterexample (Theorem~\ref{appB:thm:counterexample}) implies that the privacy guarantees in that work may need re-examination.
The Gaussian mechanism calibration uses the exact characterization of~\cite{appB:BalleW18}. Our barrier argument reuses the support-conditioning step from Theorem~5.1 of~\cite{appB:gopi2020dpsu}; unlike a genuine lower bound, it is specific to that proof template.

\subsection{Conclusion}
\label{appB:sec:conclusion}

We made two contributions to the DPSU problem. First, we found and verified a counterexample to the claim that the $\ell_1$-descent policy is $\ell_2$-contractive (Proposition~5.1 of~\cite{appB:gopi2020dpsu}), and identified the exact proof error (confusion of singular values and eigenvalues of a block triangular Jacobian). This corrects the record on which DPSU algorithms have valid privacy proofs.

Second, we identified a structural barrier inside the Policy Gaussian proof. For $t$ equally situated spillover items, symmetry and an $\ell_2$-budget of $1$ force per-item weight at most $1/\sqrt{t}$, and the $\ell_2$-descent attains this bound on the corresponding equal-spillover/full-budget instances. Thus improving that part of the Policy Gaussian threshold analysis on such instances requires leaving the current support-conditioning template, changing the mechanism architecture, or sacrificing weight on fresh items.

This does not prove information-theoretic optimality of the $\ell_2$-descent over all DPSU mechanisms. Rather, it explains why the particular room for improvement isolated by our barrier theorem is hard to access with the techniques currently available.


\clearpage
\section{Heterogeneous Thresholding for Differentially Private $n$-gram Extraction}
\begin{appendixabstract}
We study differentially private $n$-gram extraction (DPNE), where the goal is to
discover frequent $n$-grams from user text data under $(\eps,\delta)$-differential privacy.
Building on the DPNE framework of Kim et al.\ (NeurIPS 2021), we introduce
\textbf{AFP-DPNE}, which incorporates two zero-cost innovations:
(1)~\textbf{Frequency-Informed Pruning (FIP)} uses the publicly released noisy counts
from level $k{-}1$ to shrink the candidate set at level $k$; and
(2)~\textbf{Heterogeneous Thresholding (HT)} assigns per-item acceptance thresholds
at levels $k \ge 2$ using margin scores derived entirely from level-$(k{-}1)$ public
outputs.
Both exploit a key structural property of DPNE: the thresholds $\rho_k$ for $k \ge 2$
do \emph{not} affect privacy---only utility---because the output at every level is
post-processing of the Gaussian mechanism.

On real Reddit data (100K users, 147K vocabulary), AFP-DPNE discovers
\textbf{84.3\%} more $n$-grams than the baseline at $\eps=4$, and \textbf{26.4\%}
more at $\eps=1$.  On three synthetic datasets, improvements range from 5--42\%.
We provide a rigorous privacy proof (valid for all $\eps > 0$, $\delta \in (0,1)$)
and a formal implementation-equivalence argument showing the code exactly matches
the canonical dense mechanism.

We also present a \textbf{negative result}: an ``adaptive'' variant that assigns
\emph{different} thresholds to observed vs.\ unobserved items is \emph{not}
differentially private, with a concrete counterexample showing a $651\times$ privacy
loss amplification.
This delineates the boundary between safe and unsafe threshold manipulation in DPNE.
We verify our implementation with the one-training-run privacy audit of
Steinke et al.\ (NeurIPS 2024).
\end{appendixabstract}

\subsection{Introduction}
\label{appC:sec:intro}

Differentially private $n$-gram extraction is a fundamental primitive for privacy-preserving NLP \cite{appC:kim2021dpne}.  Given a database of $N$ users, each contributing text data, the goal is to release a large collection of frequent $n$-grams (contiguous word subsequences) while guaranteeing $(\eps,\delta)$-differential privacy at the user level.  Extracted $n$-grams power downstream applications such as sentence completion, smart reply \cite{appC:KannanK16}, and assisted writing \cite{appC:ChenL19} that are trained on potentially sensitive data \cite{appC:CarliniL19}.

Kim et al.\ \cite{appC:kim2021dpne} introduced DPNE, which iteratively extracts $k$-grams for $k = 1, 2, \ldots, T$ using the differentially private set union (DPSU) framework of Gopi et al.\ \cite{appC:gopi2020dpsu}.  At each level, a weighted histogram is constructed with $\ell_2$-sensitivity~1, Gaussian noise is added, and items exceeding a threshold $\rho_k$ are released.  The crucial structural insight is \emph{hierarchical pruning}: at level~$k$, only $k$-grams whose constituent $(k{-}1)$-grams were already discovered are considered.

Despite this, the DPNE baseline leaves substantial utility on the table.  At level~$k \ge 2$, the thresholds $\rho_k$ are set to control the expected number of spurious (zero-frequency) items, but they do not affect privacy---because the output $S_k = \{w : \tilde{H}_k[w] > \rho_k\}$ is a deterministic function of the Gaussian mechanism's output $\tilde{H}_k$.  This observation opens the door to \emph{non-uniform} thresholds at zero privacy cost.

\paragraph{Contributions.}

\begin{enumerate}[leftmargin=*]
\item \textbf{Frequency-Informed Pruning (FIP).}  We use the noisy histogram values from level $k{-}1$ (public outputs) to prune the candidate set.  A $k$-gram whose prefix or suffix barely crossed the threshold at level $k{-}1$ is unlikely to be a frequent $k$-gram.  FIP is post-processing---zero additional privacy cost.

\item \textbf{Heterogeneous Thresholding (HT).}  We assign per-item thresholds $\tau(w)$ based on a \emph{margin score} derived entirely from level-$(k{-}1)$ \textbf{public} outputs.  Crucially, $\tau(w)$ is computed \emph{before} the level-$k$ histogram $H_k$ is examined and depends only on already-released information---\emph{not} on whether $w$ appears in the current-level data.  Items with high-confidence constituents receive lower thresholds; items with low-confidence constituents retain the base threshold.  Because $\tau$ is a public function independent of $H_k$, the output $S_k = \{w : \tilde{H}_k[w] > \tau(w)\}$ remains post-processing of the Gaussian mechanism.

\item \textbf{Negative result: adaptive thresholding is not private.}  A natural alternative to HT would be to condition the threshold on the \emph{current-level} data: give observed items ($H_k[w] > 0$) a low threshold and unobserved items ($H_k[w] = 0$) a high threshold.  Unlike HT, this ``adaptive'' variant makes $\tau(w)$ depend on $\supp(H_k)$, which is \textbf{private} information.  We prove this breaks privacy: removing a user can flip an item from observed to unobserved, simultaneously shifting its threshold from $\rho/2$ to $\rho$---a jump far too large for the noise to absorb.  We give a concrete counterexample where the privacy loss ratio exceeds 651, compared to $\le 1.2$ for a uniform threshold.  The key distinction is: \emph{HT's thresholds depend on public level-$(k{-}1)$ outputs; adaptive thresholds depend on private level-$k$ support information.}

\item \textbf{Implementation equivalence.}  We give a formal proof that our efficient implementation (which uses Binomial sampling for unobserved items) produces exactly the same output distribution as the canonical dense mechanism.  A previous version using bucket quantization is shown to break this equivalence; we correct it by grouping on exact threshold values.

\item \textbf{Empirical verification.}  We audit our code using the Steinke--Nasr--Jagielski one-training-run framework \cite{appC:steinke2024audit}, and honestly report the audit's limited statistical power for threshold-based mechanisms.
\end{enumerate}

Together, FIP and HT form \textbf{AFP-DPNE} (Augmented Frequency-Pruned DPNE), which is $(\eps,\delta)$-DP and outperforms standard DPNE by 5--84\% across four datasets.  We also honestly disclose the tradeoff: the unified mechanism increases the spurious rate from 0.6\% to 8.7\% on Reddit, an inherent cost of applying the same threshold function to all candidates.

\subsection{Preliminaries}
\label{appC:sec:prelims}

\begin{definition}[Differential Privacy \cite{appC:DMNS06}]
\label{appC:def:dp}
A randomized algorithm $\cA$ is $(\eps,\delta)$-differentially private if for any two
neighboring databases $D, D'$ (differing in one user's data) and all measurable
$\cS$:
\[
\Pr[\cA(D) \in \cS] \;\le\; e^\eps \Pr[\cA(D') \in \cS] + \delta.
\]
\end{definition}

\begin{definition}[Gaussian Mechanism \cite{appC:DR14,appC:BalleW18}]
\label{appC:def:gaussian}
For a query $f : \cD \to \R^d$ with $\ell_2$-sensitivity $\Delta_2 = \max_{D \sim D'} \|f(D) - f(D')\|_2$, the Gaussian mechanism adds noise $Z \sim \cN(0, \sigma^2 I_d)$ and releases $f(D) + Z$.  By \cite{appC:BalleW18}, this is $(\eps,\delta)$-DP if $\sigma = \sigma^*(\eps, \delta, \Delta_2)$ satisfies
\begin{equation}
\label{appC:eq:bw}
\Phi\!\left(\frac{\Delta_2}{2\sigma} - \frac{\eps\sigma}{\Delta_2}\right)
- e^\eps \Phi\!\left(-\frac{\Delta_2}{2\sigma} - \frac{\eps\sigma}{\Delta_2}\right) \le \delta.
\end{equation}
\end{definition}

\begin{proposition}[Post-processing \cite{appC:DR14}]
\label{appC:prop:postproc}
If $\cA$ is $(\eps,\delta)$-DP and $f$ is any (possibly randomized) function, then $f \circ \cA$ is $(\eps,\delta)$-DP.
\end{proposition}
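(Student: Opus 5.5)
The plan is to argue directly from the definition of $(\eps,\delta)$-DP in Definition~\ref{appC:def:dp}, allowing $f$ to use its own independent randomness. First I handle deterministic $f$. Fix neighboring $D,D'$ and a measurable set $\cS$ in the output space of $f$. Its preimage $f^{-1}(\cS)$ is measurable in the output space of $\cA$, and $\{f(\cA(D))\in\cS\}=\{\cA(D)\in f^{-1}(\cS)\}$. Applying the DP inequality for $\cA$ to the event $f^{-1}(\cS)$ then gives
\[
\Pr[f(\cA(D))\in\cS]=\Pr[\cA(D)\in f^{-1}(\cS)]\le e^\eps\Pr[\cA(D')\in f^{-1}(\cS)]+\delta=e^\eps\Pr[f(\cA(D'))\in\cS]+\delta .
\]

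For randomized $f$, I write $f(y)=g(y,R)$ with $g$ deterministic and $R$ drawn independently of $\cA$ and of the database. I condition on $R=r$. For each fixed $r$, the map $g(\cdot,r)$ is deterministic, so the first step gives
\[
\Pr[g(\cA(D),r)\in\cS]\le e^\eps\Pr[g(\cA(D'),r)\in\cS]+\delta .
\]
Integrating over the law of $R$ (Fubini, using independence) preserves the inequality: the factor $e^\eps$ is linear, and the additive $\delta$ integrates to $\delta$ because the law of $R$ is a probability measure.

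The only point needing care is measurability. Specifically, $g$ must be jointly measurable so that $r\mapsto\Pr[g(\cA(D),r)\in\cS]$ is measurable and Fubini applies. I would state this as a standing assumption on "function", which is standard in the DP literature. With that assumption in place there is no real obstacle.
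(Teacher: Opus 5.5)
Your proof is correct and follows the standard Dwork--Roth argument that the paper cites without reproducing: handle deterministic $f$ by applying the DP inequality to the preimage $f^{-1}(\cS)$, then write a randomized $f$ as a mixture of deterministic maps over coins independent of $\cA$ and the database, and average the inequality. Your Fubini step is the measure-theoretic form of Dwork--Roth's ``a convex combination of DP mechanisms is DP,'' and your joint-measurability assumption is what makes that step rigorous.
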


\begin{proposition}[GDP Composition \cite{appC:DongRS19_GDP}]
\label{appC:prop:composition}
If mechanisms $\cA_1, \ldots, \cA_T$ are run sequentially on the same dataset, where $\cA_k$ is a Gaussian mechanism with noise $\sigma_k$ and $\ell_2$-sensitivity $\Delta_2 = 1$, then the composed mechanism is equivalent to a single Gaussian mechanism with noise $\sigma^*$ where $1/\sigma^{*2} = \sum_{k=1}^T 1/\sigma_k^2$.
\end{proposition}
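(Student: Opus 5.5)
The plan is to work in the hypothesis-testing (Gaussian differential privacy, GDP) framework of \cite{appC:DongRS19_GDP}. We read ``equivalent to a single Gaussian mechanism with noise $\sigma^*$'' as follows: for every pair of neighboring databases, the trade-off function between the joint output distributions of $(\cA_1,\ldots,\cA_T)$ is at least $G_{\mu}$, where $\mu = 1/\sigma^*$ and $G_\mu$ is the trade-off function for testing $\cN(0,1)$ against $\cN(\mu,1)$. Equality holds in the worst case, namely when every $\ell_2$-sensitivity is attained. Once this is shown, the $(\eps,\delta)$ guarantee follows by plugging $\Delta_2=1$ and $\sigma=\sigma^*$ into the Balle--Wang inequality (\ref{appC:eq:bw}). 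This works because $(\eps,\delta)$-DP of the Gaussian mechanism is exactly a statement about $G_{1/\sigma}$.

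\textbf{Step 1 (single mechanism).} Take neighbors $D,D'$ with $\|f(D)-f(D')\|_2 = a \le 1$. Rotating coordinates so that the shift lies along $e_1$ reduces $\cN(f(D),\sigma^2 I)$ versus $\cN(f(D'),\sigma^2 I)$ to a one-dimensional test between $\cN(0,1)$ and $\cN(a/\sigma,1)$. The trade-off function is therefore $G_{a/\sigma}\ge G_{1/\sigma}$, since $G_\mu$ is decreasing in $\mu$. Equivalently, the privacy loss random variable under $D$ is $\cN\bigl(\tfrac{a^2}{2\sigma^2},\tfrac{a^2}{\sigma^2}\bigr)$.

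\textbf{Step 2 (composition).} The $k$-th query, for example the level-$k$ histogram built from candidates determined by $S_{k-1}$, may depend on previous outputs. The noise scales $\sigma_k$, however, are fixed in advance. Conditional on the past outputs $y_{<k}$, Step 1 applies with some shift $a_k(y_{<k})\le 1$. The total privacy loss is therefore $L=\sum_k L_k$, where each $L_k$, given the past, is $\cN(\mu_k^2/2,\mu_k^2)$ with $\mu_k = a_k/\sigma_k \le 1/\sigma_k$. If all sensitivities are attained ($a_k\equiv 1$), the conditional laws do not depend on the history. Then $L$ is exactly $\cN(\mu^2/2,\mu^2)$ with $\mu^2=\sum_k 1/\sigma_k^2$, which is the privacy loss of a single Gaussian mechanism with $\sigma^*$. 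This gives the equality case. For the general case, invoke the adaptive composition theorem for trade-off functions, $G_{\mu_1}\otimes\cdots\otimes G_{\mu_T} = G_{\sqrt{\sum\mu_k^2}}$, together with the fact that composition respects the pointwise ordering of trade-off functions. Together these give the lower bound $G_{\mu}$.

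\textbf{Main obstacle.} The delicate step is adaptivity combined with history-dependent $a_k<1$. In that case $L$ is a Gaussian mixture rather than a single Gaussian, so one cannot simply add variances. I would handle this by a backward induction on $k$: show that, conditional on $y_{<k}$, the remaining mechanism is dominated by $G_{\sqrt{\sum_{j\ge k}1/\sigma_j^2}}$. The base case is Step 1. The inductive step uses the tensorization identity $G_{\alpha}\otimes G_{\beta}=G_{\sqrt{\alpha^2+\beta^2}}$, which follows from rotation invariance of the isotropic Gaussian. It also uses the lemma from \cite{appC:DongRS19_GDP} that a mixture over histories of kernels each dominated by the same trade-off function is still dominated by it.
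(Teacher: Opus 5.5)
Your proposal is correct and matches the paper, which gives no proof of its own but imports exactly this Dong--Roth--Su argument: a sensitivity-$1$ Gaussian mechanism with noise $\sigma_k$ is $(1/\sigma_k)$-GDP, and adaptive composition gives $G_{1/\sigma_1}\otimes\cdots\otimes G_{1/\sigma_T}=G_{1/\sigma^*}$. One point in your optional induction sketch needs tightening: the history is drawn from different laws under $D$ and $D'$, so the inductive step needs the two-stage adaptive lemma (if the first stage has trade-off $\ge f$ and every second-stage kernel has trade-off $\ge g$, then the joint has trade-off $\ge f\otimes g$), not merely the statement that a mixture with a common mixing measure of $g$-dominated pairs stays $g$-dominated.
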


\paragraph{The DPNE framework \cite{appC:kim2021dpne}.}

The algorithm iterates for $k = 1, 2, \ldots, T$:

\smallskip\noindent
\textbf{Level 1 (DPSU).}
Each user $i$ has a set of unigrams $U_i$.  After truncating each $U_i$ to at most $\Delta_1$ items, define the weighted histogram $H_1[u] = \sum_{i : u \in U_i'} 1/\sqrt{|U_i'|}$, where $U_i' \subseteq U_i$ is the truncated set.

\begin{lemma}[$\ell_2$-Sensitivity of Weighted Histogram]
\label{appC:lem:sensitivity}
$\|H_1(D) - H_1(D')\|_2 \le 1$ for neighboring $D, D'$.
\end{lemma}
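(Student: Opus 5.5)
The plan is a direct computation. Let $D'$ be obtained from $D$ by removing a single user $j$, with truncated set $U_j'$. The statement is about add/remove adjacency and sensitivity is symmetric, so this is the only case to consider. The first step is to fix the truncation rule: each user's truncated set $U_i'$ must depend only on that user's own data $U_i$, for example the first $\Delta_1$ items in a fixed or hashed order. Then every other user $i\neq j$ has the same $U_i'$ in $D$ and $D'$, and so contributes identically to both histograms. This locality is the main thing to check. It would fail if the truncation or the weights depended on other users, as they do for the adaptive policies of Gopi et al. It holds here because $H_1$ is a plain sum of per-user contributions.

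Given locality, the difference of the two histograms is exactly user $j$'s contribution:
\[
H_1(D)[u]-H_1(D')[u]=\frac{\mathbbm{1}[u\in U_j']}{\sqrt{|U_j'|}} .
\]
Taking squared norms,
\[
\|H_1(D)-H_1(D')\|_2^2=\sum_{u\in U_j'}\frac{1}{|U_j'|}=1 .
\]
So the distance equals $1$ whenever $U_j'$ is nonempty. If $U_j'=\emptyset$, user $j$ contributes nothing and the distance is $0$. In both cases the bound $\le 1$ holds.

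For completeness I would also address replacement adjacency, in case that notion is intended. Replacing user $j$ with user $j'$ gives a difference $v_j-v_{j'}$ of two unit-norm vectors with nonnegative entries. Its norm is at most $\sqrt{2}$, not $1$, so the lemma as stated depends on add/remove adjacency, which matches the paper's definition. The only real obstacle is stating the per-user truncation assumption explicitly, since the whole argument rests on it.
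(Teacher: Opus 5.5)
Your proof is correct and is the same direct computation the paper uses: removing user $j$ changes only that user's $1/\sqrt{|U_j'|}$ contributions, so the squared norm of the difference is $|U_j'|\cdot(1/|U_j'|)=1$. Your additions are sound refinements of points the paper leaves implicit: the explicit per-user locality assumption on truncation, the empty-set case, and the remark that replacement adjacency would give only $\sqrt{2}$.
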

\begin{proof}
Removing user $i$ affects only their contribution.  Let $t = |U_i'|$.  Then:
$\|H_1(D) - H_1(D')\|_2^2 = \sum_{u \in U_i'} (1/\sqrt{t})^2 = t \cdot (1/t) = 1.$
\end{proof}

Add Gaussian noise: $\tilde{H}_1[u] = H_1[u] + Z_u$, $Z_u \sim \cN(0, \sigma_1^2)$.
Output $S_1 = \{u : \tilde{H}_1[u] > \rho_1\}$.

\smallskip\noindent
\textbf{Level $k \ge 2$.}
Define candidate set $V_k = \{(p, u) : p \in S_{k-1}, u \in S_1, (u, p[2{:}]) \in S_{k-1}\}$ (both prefix and suffix must be in $S_{k-1}$).  Build $H_k$ on $V_k$ with the same weighted scheme, add noise, and output $S_k = \{w \in V_k : \tilde{H}_k[w] > \rho_k\}$.

\begin{proposition}[Privacy of DPNE {\cite[Prop.~4.1]{appC:kim2021dpne}}]
\label{appC:prop:dpne_privacy}
DPNE is $(\eps, \delta)$-DP if: (i)~$\sigma^*$ satisfies~\eqref{appC:eq:bw} for $(\eps, \delta/2)$-DP with $\Delta_2 = 1$; (ii)~$\sigma_k = \sigma^*\sqrt{T}$ for all $k$ (uniform budget); and (iii)~$\rho_1$ is set so that the probability of any unique item from a removed user appearing in $S_1$ is $\le \delta/2$.

\textbf{Crucially, $\rho_2, \ldots, \rho_T$ do not affect privacy---only utility.}
\end{proposition}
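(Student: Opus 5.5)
The plan is to treat DPNE as the adaptive composition of $T$ mechanisms, one per level, and to analyze privacy by conditioning on a good event that controls the only non-Gaussian leakage: the appearance of items that exist in $D$ but not in $D'$. Fix neighbors $D \supset D'$ differing in one user $i^\ast$. At level $k$, the candidate set $V_k$ is a deterministic function of the already released $S_1,\dots,S_{k-1}$. Condition on those outputs. Then $H_k(D)$ and $H_k(D')$ are histograms on the same public index set $V_k$. By Lemma~\ref{appC:lem:sensitivity}, applied to the truncated $k$-gram sets of $i^\ast$ restricted to $V_k$, their difference has $\ell_2$-norm at most $1$. Restricting to $V_k$ only drops coordinates, so it cannot increase the norm. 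Hence each level, viewed as the release of $\tilde H_k$ on $V_k$, is a Gaussian mechanism with sensitivity $1$ and noise $\sigma_k=\sigma^\ast\sqrt{T}$.

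\textbf{Levels $k\ge 2$.} By Proposition~\ref{appC:prop:composition}, the adaptive composition of these $T$ Gaussian releases is equivalent to one Gaussian mechanism with $1/\sigma^2=\sum_k 1/(T\sigma^{\ast 2})=1/\sigma^{\ast2}$. By (i), this is $(\eps,\delta/2)$-DP. The sets $S_k$ for $k\ge2$ are deterministic functions of $\tilde H_k$ and of the public $V_k$, whatever the value of $\rho_k$. So by Proposition~\ref{appC:prop:postproc} they inherit the guarantee, and $\rho_2,\dots,\rho_T$ play no role in privacy. This is the emphasized remark.

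\textbf{Level 1 (the main obstacle).} Here the index set is not public: $\supp H_1(D)$ may contain unigrams that only $i^\ast$ holds, while $D'$ releases nothing for them. The fix is the DPSU support-conditioning argument. Formally, view the level-1 release as the Gaussian mechanism on the common support $\supp H_1(D')$, plus extra coordinates for the fresh items $T^\ast=\supp H_1(D)\setminus\supp H_1(D')$. Let $E$ be the event that no item of $T^\ast$ exceeds $\rho_1$. Each such item has mass $1/\sqrt{|U_{i^\ast}'|}$. By (iii), and in the same way as Theorem~\ref{appB:thm:spillover_barrier} with $b_t=1/\sqrt t$ maximized over $t\le\Delta_1$, we get $\Pr[\bar E]\le\delta/2$.

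On $E$, the output distribution under $D$ coincides with a post-processing of the Gaussian release on the common coordinates. The fresh coordinates are dropped, and their noise is independent. For the remove direction, bound
$\Pr[\mathcal A(D)\in\mathcal S]\le\Pr[\mathcal A(D)\in\mathcal S\cap E]+\delta/2$.
Then apply the $(\eps,\delta/2)$ guarantee from the composed Gaussian part to get $e^\eps\Pr[\mathcal A(D')\in\mathcal S]+\delta$.

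For the add direction, $D'$ never produces fresh items, so the Gaussian bound applies directly. The care needed is that the later levels' candidate sets depend on $S_1$. Therefore the conditioning on $E$ must be done jointly with the composition over levels, not level by level. I would handle this by coupling the noise of the fresh coordinates and noting that, on $E$, all subsequent $V_k$ agree between the two runs.
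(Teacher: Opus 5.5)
Your proposal is correct and follows the same route the paper uses for this cited result (the remark after the proposition and Parts~1--3 of the proof of Theorem~\ref{appC:thm:privacy}): GDP composition of the $T$ sensitivity-$1$ Gaussian releases at noise $\sigma^\ast\sqrt{T}$ gives $(\eps,\delta/2)$, post-processing makes $\rho_2,\dots,\rho_T$ irrelevant, and the level-$1$ spillover event costs the other $\delta/2$; your support-conditioning, done jointly across levels, is a more careful version of the paper's ``basic composition'' step. The one step to tighten is the add direction, which is not quite ``direct'': on $E$ the run on $D$ coincides with the run $\mathcal{A}_0(D)$ that drops the fresh level-$1$ coordinates, and independence of the fresh noise gives $\Pr[\mathcal{A}(D)\in\mathcal{S}]\ge\Pr[E]\,\Pr[\mathcal{A}_0(D)\in\mathcal{S}]$, hence $\Pr[\mathcal{A}(D')\in\mathcal{S}]-e^\eps\Pr[\mathcal{A}(D)\in\mathcal{S}]\le(1-\Pr[E])\Pr[\mathcal{A}(D')\in\mathcal{S}]+\Pr[E]\,\delta/2\le\delta$, so the spillover budget is needed in this direction too.
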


The intuition: at levels $k \ge 2$, the output $S_k = \{w : \tilde{H}_k[w] > \rho_k\}$ is a deterministic function of the Gaussian mechanism's output $\tilde{H}_k$. By post-processing (Proposition~\ref{appC:prop:postproc}), it inherits the Gaussian mechanism's DP guarantee regardless of how $\rho_k$ is chosen, as long as the choice does not depend on $\tilde{H}_k$ or $H_k$ at the \emph{current} level.

\subsection{Algorithm: AFP-DPNE}
\label{appC:sec:algorithm}

\subsubsection{Frequency-Informed Pruning (FIP)}
\label{appC:sec:fip}

Standard DPNE constructs $V_k$ using only the \emph{membership} information $g \in S_{k-1}$.
We additionally use the \emph{noisy count} $\tilde{H}_{k-1}[g]$, which is a public output.

\begin{definition}[Margin]
\label{appC:def:margin}
For $g \in S_{k-1}$, the margin is $\text{margin}(g) = \tilde{H}_{k-1}[g] - \rho_{k-1}$.
For a $k$-gram $w$ with prefix $\text{pref}(w) = (w_1, \ldots, w_{k-1})$ and suffix
$\text{suf}(w) = (w_2, \ldots, w_k)$:
\[
\text{margin}(w) = \min\bigl(\text{margin}(\text{pref}(w)),\;\text{margin}(\text{suf}(w))\bigr).
\]
Items not in $\tilde{H}_{k-1}$ are imputed with $\tilde{H}_{k-1}[g] = \rho_{k-1}$
(margin $= 0$).
\end{definition}

\begin{definition}[FIP-Pruned Candidate Set]
\label{appC:def:fip}
Given tolerance $m \ge 0$:
\begin{equation}
\label{appC:eq:fip}
V_k' = \{w \in V_k : \text{margin}(w) > -m \cdot \sigma_{k-1}\}.
\end{equation}
\end{definition}

\begin{proposition}[Privacy of FIP]
\label{appC:prop:fip_privacy}
FIP is a deterministic function of $(S_{k-1}, \tilde{H}_{k-1}, \rho_{k-1}, \sigma_{k-1})$, all of which are public outputs of level $k{-}1$ or algorithm constants.
By Proposition~\ref{appC:prop:postproc}, FIP incurs zero additional privacy cost.
\end{proposition}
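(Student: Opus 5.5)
The plan is to show that $V_k'$ is a deterministic function of released quantities and algorithm constants only. Then the level-$k$ mechanism run on $V_k'$ is a Gaussian mechanism whose query is selected using public information, and privacy accounting through the final output follows from Proposition~\ref{appC:prop:postproc} together with the composition of Proposition~\ref{appC:prop:composition}.

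\textbf{Step 1: determinism.} I would write out $V_k'$ explicitly. By the definition of $V_k$, it depends only on $S_{k-1}$ and $S_1$. By Definition~\ref{appC:def:margin}, $\text{margin}(w)$ depends only on $\tilde{H}_{k-1}$ restricted to $S_{k-1}$, on $\rho_{k-1}$, and on the imputation rule. The imputation rule is a fixed convention with margin $0$, so it involves no data. The filter \eqref{appC:eq:fip} uses in addition only $m$ and $\sigma_{k-1}$, which are constants fixed before execution. Hence there is a map $\Psi$ with
\[
V_k'=\Psi\bigl(S_1,S_{k-1},\tilde{H}_{k-1}|_{S_{k-1}},\rho_{k-1},\sigma_{k-1},m\bigr).
\]
All of these arguments are either outputs of earlier levels or constants. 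In particular, $\Psi$ never reads $H_k$, $H_{k-1}$ on unreleased coordinates, or raw user data.

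\textbf{Step 2: reduction to composition.} I would view the full algorithm as an adaptive sequence of Gaussian mechanisms. At level $k$, the query $f_k(D)=H_k(D)$ is a weighted histogram over $V_k'$. Its coordinate set is chosen from previous outputs via $\Psi$, so it is a deterministic function of the transcript. For every fixed value of that transcript, $f_k$ is a weighted histogram of the same form as in Lemma~\ref{appC:lem:sensitivity}, so its $\ell_2$-sensitivity is at most $1$. Restricting to the subset $V_k'\subseteq V_k$ can only drop coordinates, so this bound is preserved. Adaptive GDP composition (Proposition~\ref{appC:prop:composition}, which holds for adaptively chosen queries) then gives the same total privacy guarantee as the unpruned DPNE. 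Thresholding to obtain $S_k$ is post-processing. Invoking Proposition~\ref{appC:prop:dpne_privacy}, the only extra $\delta$-term is the level-$1$ spillover event, which is unaffected by FIP since FIP acts only at levels $k\ge2$.

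\textbf{Main obstacle.} The delicate point is the imputation clause. I need to ensure that "items not in $\tilde{H}_{k-1}$" is decided from public information, namely membership in $S_{k-1}$ or in the released candidate set, and not from $\supp(H_{k-1})$, which is private. This is exactly the failure mode of the adaptive negative result. I would therefore fix the convention that the released object at level $k-1$ is the noisy vector on the public candidate set, padded by construction, and check that the implementation never branches on true counts. With that convention, Step 1 holds and the claim follows.
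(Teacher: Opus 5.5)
Your argument is correct, and its core (Step 1) is the observation the paper relies on: the pruning map reads only released level-$(k-1)$ quantities and fixed constants. The paper stops there and cites post-processing. It leaves the fact that $V_k'$ then selects the next private query to the GDP-composition step in Part 1 of the main privacy theorem. You make that step explicit, and you are right that post-processing only covers \emph{releasing} $V_k'$. The claim of ``zero additional cost'' needs adaptive composition together with a sensitivity-$1$ bound for every fixed value of $V_k'$. You are also more precise about the inputs: $V_k$ needs $S_1$, the filter needs $m$, and only $\tilde H_{k-1}|_{S_{k-1}}$ is used. The last point matters at level $1$, where $\tilde H_1$ is defined on the private support $\supp(H_1)$ and only its restriction to $S_1$ can be treated as released.

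Your ``main obstacle'' is a real issue in the paper, not a hypothetical one. In the paper's pseudocode, Step~6 adds unobserved items to $S_{k-1}$ via Binomial sampling without assigning them any noisy count. So for $k-1\ge 2$ the imputation clause fires exactly on $g\in S_{k-1}\setminus\supp(H_{k-1})$. A margin of exactly $0$ has probability zero for an observed item, so it reveals private support membership and feeds it into both FIP and HT at level $k$.

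Under your convention the proposition holds. That convention is the canonical dense mechanism, where every element of $S_{k-1}$ carries a genuine noisy value, so the imputation rule is vacuous. The paper's one-line argument silently assumes this. Its implementation-equivalence proposition does not close the gap for the code as written, because it matches only the law of each $S_k$, not the margins carried into the next level.
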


\paragraph{Intuition.}
If a $(k{-}1)$-gram barely crossed the threshold ($\text{margin} \approx 0$), it is likely
a low-frequency item boosted by noise.  Any $k$-gram built from it is unlikely to be
genuinely frequent.  FIP removes such candidates, yielding $|V_k'| \le |V_k|$.  Since
the spurious-control threshold $\rho_k$ increases with $|V_k|$, smaller $|V_k'|$ enables
a lower $\rho_k$ and hence more genuine discoveries.

\subsubsection{Heterogeneous Thresholding (HT)}
\label{appC:sec:ht}

For $k \ge 2$, $\rho_k$ does not affect privacy (Proposition~\ref{appC:prop:dpne_privacy}).
Standard DPNE uses a \emph{uniform} threshold $\rho_k$ for all candidates.  We generalize this
to \emph{per-item} thresholds.

\begin{definition}[HT Threshold]
\label{appC:def:ht}
For $w \in V_k'$, define:
\begin{equation}
\label{appC:eq:tau}
\tau(w) = \rho_k^{\text{base}} - \underbrace{\min\!\Bigl(\gamma \cdot \frac{\max(0,\; \text{margin}(w))}{\bar{c}},\;\; \tfrac{1}{2}\rho_k^{\text{base}}\Bigr)}_{d(w) \;\ge\; 0}
\end{equation}
where:
\begin{itemize}[leftmargin=*]
\item $\rho_k^{\text{base}} = \sigma_k \Phi^{-1}\!\bigl(1 - \eta \min(|S_{k-1}|/|V_k'|,\; 1)\bigr)$ is the standard spurious-control threshold (Kim et al.\ \cite[Prop.~4.1]{appC:kim2021dpne}),
\item $\gamma \in [0,1]$ is the discount factor,
\item $\bar{c} = \text{median}\bigl(\{\text{margin}(w) : w \in V_k',\; \text{margin}(w) > 0\}\bigr)$ is the median positive margin (defaults to $1.0$ if no positive margins exist),
\item $d(w)$ is the discount, capped at $\rho_k^{\text{base}}/2$.
\end{itemize}
\end{definition}

\begin{proposition}[HT is Public]
\label{appC:prop:ht_public}
The function $\tau : V_k' \to \R$ depends on:
(i)~margins, computed from $(S_{k-1}, \tilde{H}_{k-1}, \rho_{k-1})$ (public level-$(k{-}1)$ outputs);
(ii)~$|S_{k-1}|$ and $|V_k'|$ (determined by level-$(k{-}1)$ outputs and $S_1$);
and (iii)~hyperparameters $(\gamma, \eta)$.
None of these quantities depend on the level-$k$ histogram $H_k$ or its noisy version $\tilde{H}_k$.
Therefore $\tau$ is a \textbf{public} threshold function.
\end{proposition}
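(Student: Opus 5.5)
The plan is to prove the statement by tracing every input of $\tau$ back through Definition~\ref{appC:def:ht} and checking that none of them touches level-$k$ data. I would write $\tau(w)=\rho_k^{\text{base}}-d(w)$ and list its four ingredients: the base threshold $\rho_k^{\text{base}}$, the discount factor $\gamma$, the normalizer $\bar c$, and $\mathrm{margin}(w)$. For each one I would identify the set of quantities on which it functionally depends. The argument then reduces to showing that $\tau$ factors as $\tau = F(S_1, S_{k-1}, \tilde H_{k-1}, \rho_{k-1}, \sigma_k, \gamma, \eta)$ for a deterministic $F$ whose arguments do not involve $H_k$ or $\tilde H_k$.

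\textbf{Margins.} By Definition~\ref{appC:def:margin}, $\mathrm{margin}(g)=\tilde H_{k-1}[g]-\rho_{k-1}$. The imputation rule for items absent from $\tilde H_{k-1}$ depends only on membership, which is itself determined by the released data. So $\mathrm{margin}(w)$ is a min of two such values indexed by $\mathrm{pref}(w)$ and $\mathrm{suf}(w)$, which are syntactic functions of the string $w$.

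\textbf{Candidate sets.} $V_k$ is determined by $S_{k-1}$ and $S_1$. By Proposition~\ref{appC:prop:fip_privacy}, $V_k'$ is a deterministic function of $(S_{k-1},\tilde H_{k-1},\rho_{k-1},\sigma_{k-1})$ together with $S_1$. Hence $|V_k'|$ and $|S_{k-1}|$ are as well.

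\textbf{Threshold components.} $\bar c$ is the median of margins over $V_k'$, with a constant default, so it inherits the same dependence. $\rho_k^{\text{base}}$ is a function of $\sigma_k$ (an algorithm constant fixed by the budget split in Proposition~\ref{appC:prop:dpne_privacy}), $\eta$, $|S_{k-1}|$ and $|V_k'|$. Substituting all of these into the formula for $d(w)$ and $\tau(w)$ gives the factorization above.

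The only subtle point, and the one I would be most careful with, is making precise what ``public'' means. Level-$(k-1)$ outputs are themselves random and data-dependent, so the claim is not that $\tau$ is data-independent. The claim is that, conditional on the released transcript of levels $1,\dots,k-1$, $\tau$ is fixed and independent of the level-$k$ noise $Z^{(k)}$ and of $H_k$.

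I would state this as measurability: $\tau$ is measurable with respect to the $\sigma$-algebra generated by the previously released outputs and constants. This is exactly the form needed later. It makes $S_k=\{w:\tilde H_k[w]>\tau(w)\}$ a post-processing of $(\tilde H_k,\text{prior outputs})$ under adaptive composition, via Propositions~\ref{appC:prop:postproc} and~\ref{appC:prop:composition}.

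I would also check that no hidden dependence enters through implementation choices. Two examples would be imputing margins only for items observed at level $k$, or computing $\bar c$ over the support of $H_k$; the definitions forbid both, since $\bar c$ ranges over $V_k'$ and not over $\mathrm{supp}(H_k)$.
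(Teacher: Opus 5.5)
Your proposal is correct and is essentially the paper's own argument. The paper gives no separate proof: the proposition is itself the dependency trace through Definitions~\ref{appC:def:margin}, \ref{appC:def:fip} and~\ref{appC:def:ht}. Your factorization adds $\sigma_k$, which the paper's list omits, and makes ``public'' precise as measurability with respect to the released transcript of levels $1,\dots,k-1$.

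One sharpening concerns your claim that the imputation rule ``depends only on membership.'' It is cleaner to observe that the rule never fires: every $w\in V_k$ has $\mathrm{pref}(w),\mathrm{suf}(w)\in S_{k-1}$, and $\tilde H_{k-1}$ is defined there in the canonical mechanism. This is the justification you want, because in the Binomial-sampling implementation an item of $S_{k-1}$ is ``absent from $\tilde H_{k-1}$'' exactly when it was unobserved at level $k-1$, and that is private support information.
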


Note that items without a positive margin ($\text{margin}(w) \le 0$) receive $d(w) = 0$, hence $\tau(w) = \rho_k^{\text{base}}$---the standard uniform threshold.

\subsubsection{Full Algorithm}

AFP-DPNE combines FIP and HT:
\begin{enumerate}[leftmargin=*]
\item At level 1: standard DPSU with spillover threshold $\rho_1$.
\item At level $k \ge 2$: FIP prunes $V_k$ to $V_k'$; compute $\tau(w)$ for all $w \in V_k'$ (before seeing $H_k$); build $H_k$ on $V_k'$; add noise; output $S_k = \{w : \tilde{H}_k[w] > \tau(w)\}$.
\end{enumerate}
See Algorithm~\ref{appC:alg:afp_dpne} in Appendix~\ref{appC:app:pseudocode} for complete pseudocode.

\subsection{Privacy Analysis}
\label{appC:sec:privacy}

\subsubsection{Main Theorem}

\begin{definition}[Canonical AFP-DPNE at level $k \ge 2$]
\label{appC:def:canonical}
Given candidate set $V_k'$, margins $\{\text{margin}(w)\}_{w \in V_k'}$ (public),
and histogram $H_k$ with $\ell_2$-sensitivity~1:
\begin{enumerate}
\item Compute $\tau(w)$ via~\eqref{appC:eq:tau} for all $w \in V_k'$.
\item Draw noise $Z_w \sim \cN(0, \sigma_k^2)$ independently for each $w \in V_k'$.
\item Compute $\tilde{H}_k[w] = H_k[w] + Z_w$ for all $w \in V_k'$.
\item Output $S_k = \{w \in V_k' : \tilde{H}_k[w] > \tau(w)\}$.
\end{enumerate}
\end{definition}

\begin{theorem}[Privacy of AFP-DPNE]
\label{appC:thm:privacy}
AFP-DPNE is $(\eps, \delta)$-differentially private for all $\eps > 0$ and $\delta \in (0,1)$.
\end{theorem}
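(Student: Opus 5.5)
The plan is to reduce AFP-DPNE to the original DPNE privacy argument (Proposition~\ref{appC:prop:dpne_privacy}) by showing that the only new ingredients, FIP and HT, are post-processing of outputs already released at earlier levels. The mechanism is an adaptive composition of $T$ steps. At step $k$, the histogram $H_k$ is built on a candidate set $V_k'$ that depends only on the transcript $(S_1,\dots,S_{k-1},\tilde H_1,\dots,\tilde H_{k-1})$. It then releases $\tilde H_k$, and from that it computes $S_k$.

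\textbf{Step 1: sensitivity at each level.} For a fixed transcript prefix, the weighted histogram $H_k$ on $V_k'$ has $\ell_2$-sensitivity at most $1$, exactly as in Lemma~\ref{appC:lem:sensitivity}. Removing a user deletes only that user's contribution, a vector of $t$ entries each equal to $1/\sqrt{t}$. Restricting to $V_k'\subseteq V_k$ can only shrink this contribution (truncate first, then restrict), so the bound still holds. Hence $\tilde H_k = H_k + Z$ with $Z\sim\cN(0,\sigma_k^2 I)$ is a Gaussian mechanism of sensitivity $1$ for every fixed public prefix.

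\textbf{Step 2: composition plus post-processing.} By Proposition~\ref{appC:prop:fip_privacy} and Proposition~\ref{appC:prop:ht_public}, $V_k'$ and $\tau$ are deterministic functions of the earlier public outputs. So $S_k=\{w:\tilde H_k[w]>\tau(w)\}$ is a function of $\tilde H_k$ together with the prior transcript. I will apply adaptive GDP composition (Proposition~\ref{appC:prop:composition}) to the sequence $\tilde H_2,\dots,\tilde H_T$, together with the level-1 noisy histogram restricted as in the DPNE proof. With $\sigma_k=\sigma^*\sqrt T$, this gives a single Gaussian mechanism with noise $\sigma^*$, which is $(\eps,\delta/2)$-DP by condition (i). Post-processing (Proposition~\ref{appC:prop:postproc}) then carries the guarantee over to the full output $(S_1,\dots,S_T)$.

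\textbf{Step 3: the level-1 support issue, which I expect to be the main obstacle.} Level 1 releases over a data-dependent support, so it is not a pure Gaussian mechanism on a fixed domain. As in Gopi et al.\ and Kim et al., I will condition on the event $E$ that no item unique to the removed user crosses $\rho_1$. By condition (iii), $\Pr[\bar E]\le\delta/2$. On $E$, the outputs on $D$ and $D'$ are coupled to a Gaussian mechanism over the common support. The delicate point is that every later candidate set $V_k'$ is built from $S_1$ and the $S_{j}$ for $j<k$. On $E$, these are subsets of the common support, so the higher-level histograms are defined on a domain both databases share, and Steps 1--2 apply unchanged. Combining the two parts gives $\Pr[\cA(D)\in\cS]\le e^\eps\Pr[\cA(D')\in\cS]+\delta/2+\delta/2$. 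I still need to check that this holds for both directions of neighboring pairs. I also need to check that HT never uses $H_k$-dependent information, which is where the adaptive variant fails. Nothing in the argument used a restriction on $\eps$ or $\delta$ beyond condition (i), which gives the result for all $\eps>0$ and $\delta\in(0,1)$.
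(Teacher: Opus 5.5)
Your proposal is correct and follows essentially the same route as the paper's proof: each level is a sensitivity-$1$ Gaussian mechanism, and GDP composition with $\sigma_k=\sigma^*\sqrt{T}$ gives $(\eps,\delta/2)$; FIP and HT are post-processing of level-$(k-1)$ outputs; and the choice of $\rho_1$ bounds the level-1 spillover by $\delta/2$, for $(\eps,\delta)$ in total. Your explicit conditioning on the no-spillover event $E$ is a more careful version of the paper's final step, which simply adds a ``$(0,\delta/2)$ leakage'' by basic composition, and the reverse-direction check you defer does go through, since $E$ depends only on the level-1 noise on the removed user's unique items and is therefore independent of the rest of the coupled transcript.
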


\begin{proof}
We prove privacy by combining three components.

\textbf{Part 1: Gaussian composition.}
At each level $k \in [T]$, the histogram $H_k$ has $\ell_2$-sensitivity~1 (Lemma~\ref{appC:lem:sensitivity}).  The noisy histogram $\tilde{H}_k = H_k + Z_k$ (where $Z_k \sim \cN(0, \sigma_k^2 I)$) is the output of a Gaussian mechanism with noise~$\sigma_k$.

For levels $k \ge 2$: The threshold function $\tau : V_k' \to \R$ is determined \emph{before} level-$k$ data is processed (Proposition~\ref{appC:prop:ht_public}).  Therefore $S_k = \{w : \tilde{H}_k[w] > \tau(w)\}$ is a deterministic function $f_\tau(\tilde{H}_k)$ of the Gaussian mechanism output.  By Proposition~\ref{appC:prop:postproc}, $S_k$ inherits the Gaussian mechanism's privacy guarantee.

FIP prunes $V_k$ to $V_k'$ using only level-$(k{-}1)$ public outputs (Proposition~\ref{appC:prop:fip_privacy}), so it also incurs zero cost.

Setting $\sigma_k = \sigma^*\sqrt{T}$ for all $k$: by Proposition~\ref{appC:prop:composition},
\[
\frac{1}{\sigma^{*2}} = \sum_{k=1}^T \frac{1}{\sigma_k^2} = \sum_{k=1}^T \frac{1}{\sigma^{*2} T} = \frac{1}{\sigma^{*2}}.
\]
The composed mechanism is equivalent to a single Gaussian mechanism with noise $\sigma^*$ and sensitivity~1, which is $(\eps, \delta/2)$-DP by the Balle--Wang formula~\eqref{appC:eq:bw}.

\textbf{Part 2: Spillover at level 1.}
When user $i$ is removed from dataset $D$ to form $D'$, user $i$'s unique items in $H_1$ could reveal their presence.  Suppose user $i$ contributes $t \le \Delta_1$ unique items, each with weight $w = 1/\sqrt{t}$.  Item $u$ appears in $S_1$ iff $w + Z_u > \rho_1$, i.e., $Z_u > \rho_1 - 1/\sqrt{t}$.

We set:
\begin{equation}
\label{appC:eq:rho1}
\rho_1 = \max_{1 \le t \le \Delta_1}\;\Bigl(\frac{1}{\sqrt{t}} + \sigma_1 \Phi^{-1}\!\bigl((1 - \delta/2)^{1/t}\bigr)\Bigr).
\end{equation}

For each $t$: $\Prob[u \text{ passes}] = \Phi\bigl((1/\sqrt{t} - \rho_1)/\sigma_1\bigr) \le 1 - (1-\delta/2)^{1/t}$.  By a union bound over $t$ items: $\Prob[\text{any passes}] \le 1 - (1 - \delta/2)^{t/t} = \delta/2$.  Taking the max over $t$ ensures this holds for all possible contribution sizes.

\textbf{Part 3: Combining.}
The Gaussian composition gives $(\eps, \delta/2)$-DP.  The spillover gives a $(0, \delta/2)$ leakage at level~1.  By basic composition:
\[
\text{AFP-DPNE is } (\eps,\; \delta/2 + \delta/2) = (\eps, \delta)\text{-DP.} \qedhere
\]
\end{proof}

\subsubsection{Implementation Equivalence}
\label{appC:sec:impl_equiv}

\begin{proposition}[Implementation = Canonical Mechanism]
\label{appC:prop:impl_equiv}
The AFP-DPNE implementation produces exactly the same output distribution as
Definition~\ref{appC:def:canonical}.
\end{proposition}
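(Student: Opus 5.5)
The argument compares the output law of the implementation with that of the canonical mechanism in Definition~\ref{appC:def:canonical}, conditioned on all public level-$(k-1)$ quantities. Level~1 and FIP are handled identically in both (FIP is deterministic given public outputs), so it suffices to fix $k\ge 2$ and condition on $V_k'$, the margins, and $\tau$. We partition $V_k' = O_k \sqcup N_k$, where $O_k=\{w: H_k[w]>0\}$ are the observed items stored explicitly and $N_k$ are the unobserved items, for which $H_k[w]=0$.

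In the canonical mechanism the events $\{H_k[w]+Z_w>\tau(w)\}$ are independent across $w$ because the $Z_w$ are i.i.d. For $w\in O_k$ the implementation draws $Z_w\sim\cN(0,\sigma_k^2)$ exactly as the canonical mechanism does, so those coordinates agree. For $w\in N_k$ the inclusion event is an independent Bernoulli with parameter $p(w)=1-\Phi(\tau(w)/\sigma_k)$, which depends on $w$ only through $\tau(w)$. We then group $N_k$ into classes $N_k^{(\theta)}=\{w\in N_k:\tau(w)=\theta\}$ on the exact value of $\theta$. This is the fix for the earlier bucketed version, since bucketing would merge items whose $p$ differ. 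Within a class, the included set from the canonical mechanism is a subset of $N_k^{(\theta)}$ with i.i.d.\ Bernoulli$(p_\theta)$ membership. Its law therefore equals the law of first drawing $M\sim\Bin(|N_k^{(\theta)}|,p_\theta)$ and then choosing a uniformly random $M$-subset, since both assign probability $p_\theta^{|A|}(1-p_\theta)^{n-|A|}$ to each $A$.

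To finish, we check that the implementation draws $M$ and the uniform subset independently across classes and independently of the $O_k$ noise. It must also do this in a way that does not depend on $H_k$ beyond membership in $O_k$. Because the canonical law factorizes across $O_k$ and the classes, this verification gives equality of distributions. Taking the product over levels with the same conditioning, and noting that later levels use only released outputs, extends the equality to the full output sequence.

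The main obstacle I anticipate is the sampling of a uniform subset from $N_k^{(\theta)}$ when this set is implicit or huge. That set is a complement within $V_k'$, so I need the implementation to sample without replacement uniformly from it. One way is to rejection-sample against $O_k$; another is to enumerate $V_k'$ lazily. I also need to confirm that $\tau$, and hence the grouping, is computed before $H_k$ is touched, as in Proposition~\ref{appC:prop:ht_public}; otherwise the class structure could leak the support.
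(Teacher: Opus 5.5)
Your per-level argument is correct and is essentially the paper's proof. Observed items receive the same Gaussian draw as in Definition~\ref{appC:def:canonical}, and unobserved items are grouped by their exact $\tau$ value. Drawing $\Bin(n,p)$ and then a uniform subset of that size gives every $A$ probability $p^{|A|}(1-p)^{n-|A|}$, exactly as i.i.d.\ Bernoulli$(p)$ trials do, and independence across groups finishes. That suffices for the proposition as stated, since Definition~\ref{appC:def:canonical} is a single-level mechanism with $V_k'$ and the margins taken as given.

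The step that does not go through is your final sentence, which extends the equality to the full output sequence on the grounds that ``later levels use only released outputs.'' Level $k+1$ uses more than $S_k$. Its margins, and through them the FIP pruning, $\bar c$, $|V_{k+1}'|$, $\rho^{\text{base}}_{k+1}$ and $\tau$, are computed from the noisy counts $\tilde H_k$ on $S_k$.

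The implementation never draws a noisy count for an item that enters $S_k$ through the Binomial step. Such an item is ``not in $\tilde H_k$'' and receives the imputed margin $0$ (Definition~\ref{appC:def:margin}). In the dense canonical mechanism the same item carries a genuine value $Z_w$ conditioned on $Z_w>\tau(w)$. Its margin $Z_w-\rho_k$ is therefore a non-degenerate random variable (it can be positive, earning a discount, or negative), whereas the implementation's is identically $0$.

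So the pair (released sets, next-level margins) has different laws under the two mechanisms, and your conditioning/chain-rule step fails at level $k+1$. Moreover, the imputation is keyed to $\supp(H_k)$, so the implementation's later levels depend on private support information, not only on released outputs.

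You have two ways forward. You could build the same support-dependent imputation into a canonical multi-level mechanism, which would restore equivalence but is in tension with treating margins as public in Proposition~\ref{appC:prop:ht_public}. Otherwise, drop the cross-level sentence and keep the claim per-level, which is what the paper proves.
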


\begin{proof}
We verify the two cases separately.

\textbf{Observed items} ($w \in \supp(H_k)$):
The code draws $Z_w \sim \cN(0, \sigma_k^2)$, computes $\tilde{H}_k[w] = H_k[w] + Z_w$,
and includes $w$ iff $\tilde{H}_k[w] > \tau(w)$.  This is identical to
Definition~\ref{appC:def:canonical}, steps 2--4.

\textbf{Unobserved items} ($w \notin \supp(H_k)$):
In the canonical mechanism, $H_k[w] = 0$, so $\tilde{H}_k[w] = Z_w \sim \cN(0, \sigma_k^2)$.
Item $w$ is included iff $Z_w > \tau(w)$, with probability
$p(w) = \Phi(-\tau(w)/\sigma_k)$, independently across items.

The implementation groups unobserved items by their \emph{exact} $\tau(w)$ value.
Within each group $G$ of size $n$, all items share the same probability $p$ (since they
have the same $\tau$) and the same true histogram value ($H_k[w] = 0$), making them
\emph{exchangeable}.

The implementation draws $X \sim \Bin(n, p)$ and selects $\min(X, n)$ items uniformly at
random.  For exchangeable Bernoulli trials with identical probability, the joint
distribution of the selection is:
\[
\Prob[\text{items } w_1, \ldots, w_X \text{ selected}]
= \binom{n}{X}^{-1} \binom{n}{X} p^X (1-p)^{n-X}
= p^X (1-p)^{n-X}
\]
which equals the product of independent $\text{Bernoulli}(p)$ trials summed over all
$\binom{n}{X}$ subsets of size $X$.  This matches the canonical mechanism exactly.

Across groups, the RNG draws are independent.  Therefore the implementation's full
output distribution equals the canonical mechanism's output distribution.
\end{proof}

\begin{remark}[Previous bucket approximation bug]
\label{appC:rem:bucket_bug}
A previous version used 50-bucket quantization for unobserved items, replacing each
$\tau(w)$ with a mid-bucket value $\tau_{\text{mid}}$.  This introduces $\sim$0.1\% relative error
per item and \emph{breaks} implementation equivalence, because items in the same bucket
may have different true $\tau$ values, making them non-exchangeable.
Our final implementation groups by exact $\tau(w)$ values (finitely many, since margins
are discrete and the median-normalization produces a finite set of discount levels).
\end{remark}

\subsubsection{Negative Result: Adaptive Thresholding is Not Private}
\label{appC:sec:negative}

A tempting improvement is to use \emph{different} thresholds for observed vs.\ unobserved items:
\[
\tau_{\text{adapt}}(w) = \begin{cases}
\rho - d(w) & \text{if } w \in \supp(H_k) \quad\text{(``observed'')} \\
\rho         & \text{if } w \notin \supp(H_k) \quad\text{(``unobserved'')}
\end{cases}
\]
This would give observed items lower thresholds (easier to discover) while keeping
unobserved items at the high base threshold (controlling spurious rate).

\begin{theorem}[Adaptive thresholding amplifies privacy loss]
\label{appC:thm:adaptive_bad}
There exist parameters $(\sigma, \rho, \Delta_0, D)$ and neighboring databases
$D_1, D_2$ such that the adaptive mechanism's per-item privacy loss ratio satisfies:
\[
\frac{\Pr[w \in S_k(D_1)]}{\Pr[w \in S_k(D_2)]} = 651
\]
while the Gaussian mechanism with uniform threshold gives ratio $\le 1.2$.
\end{theorem}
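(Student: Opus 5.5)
The plan is an explicit construction: a single item, chosen so that the adaptive rule moves its threshold from $\rho/2$ to $\rho$ when one user is removed. Take $D_1$ to contain exactly one user whose level-$k$ contribution includes the candidate $w$, and let $D_2=D_1$ minus that user. In $D_2$ we have $w\notin\supp(H_k)$, so $H_k[w]=0$ and $\tau_{\text{adapt}}(w)=\rho$. In $D_1$ we have $H_k[w]=c$, where $c=1/\sqrt{t}$ is the weight from the weighted histogram of Lemma~\ref{appC:lem:sensitivity}; taking $t=\Delta_0=1$ gives $c=1$. We also set the discount to its cap, $d(w)=\rho/2$, so that $\tau_{\text{adapt}}(w)=\rho/2$ in $D_1$. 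With independent Gaussian noise, both release probabilities have closed forms:
\[
\Pr[w\in S_k(D_1)]=\Phi\!\left(\frac{c-\rho/2}{\sigma}\right),\qquad
\Pr[w\in S_k(D_2)]=\Phi\!\left(-\frac{\rho}{\sigma}\right).
\]
The ratio of these is the quantity in the statement.

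To compare, we compute the uniform-threshold ratio for the same pair:
\[
\frac{\Phi\bigl((c-\rho)/\sigma\bigr)}{\Phi\bigl(-\rho/\sigma\bigr)}.
\]
This is the ratio allowed by a unit-sensitivity shift. It is modest when $\rho$ is moderate relative to $\sigma$, since a shift of size $c$ moves the Gaussian tail only slightly. The adaptive rule, by contrast, moves the effective threshold by $\rho/2+c$. Once $\rho/\sigma$ is a few units, the ratio of Gaussian tails grows roughly like $\exp\bigl((\rho^2-(\rho/2-c)^2)/(2\sigma^2)\bigr)$, which is large.

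The main work is choosing concrete $(\sigma,\rho)$ that produce the stated numbers, $651$ adaptive and at most $1.2$ uniform. We would first try $\sigma$ large, e.g.\ $\sigma\approx 5$ to $6$, with $\rho\approx 3\sigma$ set by a spurious-control rule $\rho=\sigma\Phi^{-1}(1-\eta)$. With these choices, the uniform ratio is roughly $e^{c\rho/\sigma^2}\approx 1.1$ to $1.2$. The adaptive ratio is about $\Phi(-\rho/(2\sigma))/\Phi(-\rho/\sigma)$, which is in the hundreds. We would then tune $\sigma$ and $\rho$ numerically until the adaptive ratio evaluates to $651$ (rounded) and verify the uniform bound $\le 1.2$. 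The tension is that a small uniform ratio needs $\rho/\sigma^2$ small, while a large adaptive ratio needs $\rho/\sigma$ large, so $\sigma$ must be large. This tuning is where the exact constants come from.

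We would also note that the other ingredients of the template are harmless here. Having $\Delta_0$ and $D$ otherwise empty is permitted, and the adaptive threshold depends on $\supp(H_k)$, so Proposition~\ref{appC:prop:postproc} cannot be invoked. Finally, a ratio of $651$ corresponds to $\varepsilon\ge\ln 651\approx 6.5$ on an event of non-negligible probability. This exceeds any intended $(\varepsilon,\delta)$ budget provided $\Pr[w\in S_k(D_1)]$ exceeds $\delta$ by a margin, which we would check for the chosen parameters.
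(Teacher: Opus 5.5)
Your construction is the paper's: a unique item $w$ of the removed user, whose adaptive threshold jumps from $\rho/2$ to $\rho$, with both release probabilities computed as Gaussian tails. The gap is in the numerical witness, which is the only substantive content of this statement, and the regime you propose does not produce it. With $c=1$, $\sigma\approx 5$--$6$ and $\rho\approx 3\sigma$, your own heuristic gives $e^{c\rho/\sigma^2}=e^{3/\sigma}\approx 1.65$--$1.82$. The exact values are $\Phi(-2.8)/\Phi(-3)\approx 1.89$ and $\Phi(-17/6)/\Phi(-3)\approx 1.71$, so the uniform bound $\le 1.2$ fails. The adaptive ratio is only $\Phi(-1.3)/\Phi(-3)\approx 72$, respectively $\Phi(-4/3)/\Phi(-3)\approx 68$, not ``in the hundreds''. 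More generally, with $\rho= 3\sigma$, keeping the uniform ratio below $1.2$ forces $c/\sigma\lesssim 0.055$. That caps the adaptive ratio near $\Phi(-1.445)/\Phi(-3)\approx 55$, so no tuning in that regime reaches $651$.

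The fix is to note that both ratios depend only on $x=\rho/\sigma$ and $\delta=c/\sigma$. The adaptive ratio $\Phi(\delta-x/2)/\Phi(-x)$ reaches $651$ around $x\approx 3.94$. The uniform ratio $\Phi(\delta-x)/\Phi(-x)\approx e^{x\delta}$ stays below $1.2$ once $\delta$ is at most about $0.04$. The paper realizes this with $\Delta_0=100$, so $c=1/\sqrt{100}=0.1$, together with $\sigma=2.54$, $\rho=10$ and discount $D=\rho/2=5$. Note that $D$ in the statement is this discount, not a database. These choices give $\Phi(-1.93)\approx 0.0269$ against $\Phi(-3.94)\approx 4.13\times 10^{-5}$, a ratio of about $651$, and a uniform ratio of $\Phi(-3.90)/\Phi(-3.94)\approx 1.18$. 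Your choice $\Delta_0=1$ also works, but only after rescaling to $\sigma=25.4$ and $\rho=100$, which reproduces exactly the same ratios. Taking $\Delta_0$ large is the natural lever, because it shrinks the per-item weight $1/\sqrt{\Delta_0}$ instead of forcing a large $\sigma$.
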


\begin{proof}
Let $\sigma = 2.54$, $\rho = 10.0$, $\Delta_0 = 100$, $D = \rho/2 = 5.0$.

Consider user $i$ who contributes $\Delta_0 = 100$ items at level $k$, including a unique item $w$ that no other user has.  Let $D_1$ be the dataset with user $i$, and $D_2 = D_1 \setminus \{i\}$.

\textbf{In $D_1$:} $H_k[w] = 1/\sqrt{100} = 0.1$ and $w \in \supp(H_k)$, so $\tau(w) = \rho - D = 10 - 5 = 5$.
\begin{align*}
\Pr[w \in S_k(D_1)] &= \Pr[\tilde{H}_k[w] > 5] = \Pr[0.1 + Z > 5] \\
&= \Phi\!\left(\frac{0.1 - 5.0}{2.54}\right) = \Phi(-1.93) = 0.0269.
\end{align*}

\textbf{In $D_2$:} $H_k[w] = 0$ and $w \notin \supp(H_k)$, so $\tau(w) = \rho = 10$.
\begin{align*}
\Pr[w \in S_k(D_2)] &= \Pr[Z > 10] = \Phi(-10/2.54) = \Phi(-3.94) = 4.13 \times 10^{-5}.
\end{align*}

Ratio: $0.0269 / 4.13 \times 10^{-5} = 651$, with $\ln(\text{ratio}) = 6.48$.

Under uniform threshold $\rho = 10$: $\Pr[w \in S_k(D_1)] = \Phi((0.1 - 10)/2.54) = \Phi(-3.90) = 4.87 \times 10^{-5}$ and $\Pr[w \in S_k(D_2)] = \Phi(-3.94) = 4.13 \times 10^{-5}$, giving ratio $\approx 1.18$.
\end{proof}

\paragraph{Structural diagnosis.}
The adaptive mechanism creates a \emph{coupling}: removing user $i$ simultaneously
(a)~decreases $H_k[w]$ by $1/\sqrt{\Delta_0} \approx 0.1$ (a small effect, well within the
noise), and (b)~moves $w$ from ``observed'' to ``unobserved'', jumping the threshold from
$\rho - D = 5$ to $\rho = 10$ (a large effect, \emph{not} absorbed by noise).
The threshold jump is the dominant privacy leakage.

\paragraph{Why the unified mechanism avoids this.}
In our mechanism, $\tau(w)$ depends only on \emph{margin scores from level $k{-}1$}, not on
$\supp(H_k)$.  Removing user $i$ changes $H_k[w]$ but does \emph{not} change $\tau(w)$.
This decoupling ensures the output $\{w : \tilde{H}_k[w] > \tau(w)\}$ remains a
post-processing of the Gaussian mechanism.

\subsection{Utility Analysis}
\label{appC:sec:utility}

\begin{theorem}[HT improves per-item discovery probability]
\label{appC:thm:ht_improvement}
At level $k \ge 2$, for each item $w$ with histogram value $h_w \ge 0$ and discount $d_w = d(w) \ge 0$:
\[
\Pr[w \in S_k^{\text{HT}}] = \Phi\!\left(\frac{h_w - \rho + d_w}{\sigma_k}\right) \ge \Phi\!\left(\frac{h_w - \rho}{\sigma_k}\right) = \Pr[w \in S_k^{\text{uniform}}]
\]
with strict inequality iff $d_w > 0$ (iff $\text{margin}(w) > 0$).
\end{theorem}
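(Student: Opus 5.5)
The plan is to compute both inclusion probabilities directly from the canonical mechanism (Definition~\ref{appC:def:canonical}) and then compare them using monotonicity of $\Phi$. Throughout, I take the same candidate set $V_k'$ and the same base threshold $\rho=\rho_k^{\text{base}}$ in both mechanisms, so that the only difference between them is the per-item threshold. In the HT mechanism we have $\tau(w)=\rho-d_w$ by~\eqref{appC:eq:tau}; in the uniform mechanism we have $\tau(w)=\rho$.

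The first step is to express each inclusion probability as a Gaussian tail. Proposition~\ref{appC:prop:ht_public} says $\tau(w)$ is fixed before $H_k$ and $Z_w$ are examined, so I can treat it as a constant when conditioning on the level-$(k-1)$ outputs. Since $Z_w\sim\cN(0,\sigma_k^2)$ is independent of these outputs,
\[
\Pr[w\in S_k^{\text{HT}}]=\Pr[h_w+Z_w>\rho-d_w]=\Pr\!\left[\tfrac{Z_w}{\sigma_k}>\tfrac{\rho-d_w-h_w}{\sigma_k}\right]=\Phi\!\left(\tfrac{h_w-\rho+d_w}{\sigma_k}\right).
\]
The last equality uses the symmetry identity $1-\Phi(x)=\Phi(-x)$. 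The same computation with $d_w=0$ gives the uniform formula.

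The second step is the comparison itself. $\Phi$ is strictly increasing because the Gaussian density is positive everywhere. Since $d_w\ge 0$, this gives the weak inequality immediately, and it gives strict inequality exactly when $d_w>0$.

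The last step is to characterize when $d_w>0$, which is where I expect the only real care to be needed. From~\eqref{appC:eq:tau}, $d_w=\min(\gamma\max(0,\text{margin}(w))/\bar c,\ \rho/2)$, and $\bar c>0$ in all cases, since it is a median of positive numbers or defaults to $1$. Thus $d_w>0$ holds iff three conditions hold: $\text{margin}(w)>0$, $\gamma>0$, and $\rho>0$. The statement's ``iff $\text{margin}(w)>0$'' therefore implicitly assumes $\gamma>0$ and $\rho_k^{\text{base}}>0$. The latter holds whenever $\eta\min(|S_{k-1}|/|V_k'|,1)<1/2$, which is the regime of interest. I would state these two nondegeneracy assumptions explicitly.

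Finally, I would remark on the scope of the claim. It is per item and conditional on the level-$(k-1)$ outputs, so both sides use the same $V_k'$ and the same $\rho_k^{\text{base}}$. This is why no coupling argument across levels is needed.
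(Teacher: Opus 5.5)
Your proposal is correct and follows the paper's argument: write each inclusion probability as a Gaussian tail and use that $\Phi$ is strictly increasing, so $d_w\ge 0$ gives the weak inequality and $d_w>0$ gives the strict one. Your refinements are valid and go slightly beyond the paper. You derive the tail formula explicitly rather than taking it as given. You also show that the clause ``iff $\text{margin}(w)>0$'' needs $\gamma>0$ and $\rho_k^{\text{base}}>0$. The paper's one-line justification of $d_w\ge 0$ via $\gamma\ge 0$ and $\max(0,\cdot)\ge 0$ overlooks the cap $\rho_k^{\text{base}}/2$, and it never justifies that second ``iff'' at all.
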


\begin{proof}
$d_w \ge 0$ since $\gamma \ge 0$ and $\max(0, \cdot) \ge 0$.  $\Phi$ is strictly increasing.
Therefore $(h_w - \rho + d_w)/\sigma_k \ge (h_w - \rho)/\sigma_k$, with strict inequality iff $d_w > 0$.
\end{proof}

\begin{corollary}[Expected output size improvement]
\label{appC:cor:size}
$\E[|S_k^{\text{HT}}|] \ge \E[|S_k^{\text{uniform}}|]$, with strict inequality whenever
any candidate $w \in V_k'$ has positive margin.
\end{corollary}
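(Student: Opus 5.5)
The plan is to write the expected output size as a sum of per-item inclusion probabilities and apply Theorem~\ref{appC:thm:ht_improvement} term by term. Fix a level $k\ge 2$ and condition on all level-$(k-1)$ public outputs. These determine $V_k'$, the margins, and hence the discounts $d(w)$ and the threshold function $\tau$ (Proposition~\ref{appC:prop:ht_public}); they also fix the base threshold $\rho=\rho_k^{\mathrm{base}}$ used by the uniform variant. Both $S_k^{\mathrm{HT}}$ and $S_k^{\mathrm{uniform}}$ are computed on the same $V_k'$ with the same histogram values $h_w=H_k[w]$. Linearity of expectation then gives
\[
\E\bigl[|S_k^{\mathrm{HT}}|\bigr]=\sum_{w\in V_k'}\Pr[w\in S_k^{\mathrm{HT}}],
\qquad
\E\bigl[|S_k^{\mathrm{uniform}}|\bigr]=\sum_{w\in V_k'}\Pr[w\in S_k^{\mathrm{uniform}}].
\]
Linearity holds regardless of any dependence among items, although the noises $Z_w$ are in fact independent.

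Next, compare the two sums termwise. By Theorem~\ref{appC:thm:ht_improvement}, each summand on the HT side is at least the corresponding uniform summand, so the weak inequality follows at once. For the strict inequality, suppose some $w_0\in V_k'$ has $\mathrm{margin}(w_0)>0$. Then $d(w_0)>0$: the quantity $\gamma\max(0,\mathrm{margin}(w_0))/\bar c$ is positive provided $\gamma>0$, and $\bar c$ is a median of positive margins, so $\bar c>0$. The cap $\rho_k^{\mathrm{base}}/2$ is positive provided $\rho_k^{\mathrm{base}}>0$. With $d(w_0)>0$, Theorem~\ref{appC:thm:ht_improvement} makes the $w_0$ term strictly larger, and adding this strict inequality to the weak ones for the other terms gives strict inequality of the sums. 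Finally, I would average over the conditioning on level-$(k-1)$ outputs to get the unconditional statement.

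The main obstacle is bookkeeping rather than mathematics. The corollary as stated, like the theorem it rests on, silently requires $\gamma>0$ and $\rho_k^{\mathrm{base}}>0$. The latter holds when $\eta\min(|S_{k-1}|/|V_k'|,1)<1/2$, since then $\Phi^{-1}(\cdot)>0$. If $\gamma=0$, every discount vanishes and the two mechanisms coincide, so strictness fails. I would therefore state these as standing assumptions.

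A second subtlety arises for the unconditional version, because FIP changes the level-$k$ candidate set and so changes the histories seen at later levels. I would handle this by restricting the comparison to a single level with a shared history, which is exactly the conditional statement above.
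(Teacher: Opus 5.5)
Your proposal is correct and follows the paper's implicit argument: the paper gives no separate proof, and the corollary is Theorem~\ref{appC:thm:ht_improvement} summed over the shared candidate set $V_k'$ by linearity of expectation, exactly as you do. Your standing assumptions are well taken. Strictness needs $\gamma>0$, and $\rho_k^{\text{base}}\ge 0$ is needed even for the weak inequality: if $\rho_k^{\text{base}}<0$, the cap forces $d(w)=\rho_k^{\text{base}}/2<0$ for every $w$, a case the theorem's proof of $d_w\ge 0$ overlooks.
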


\begin{remark}[Spurious rate tradeoff]
\label{appC:rem:spurious}
For unobserved items ($h_w = 0$): $\Pr[w \in S_k^{\text{HT}}] = \Phi((-\rho + d_w)/\sigma_k) \ge \Phi(-\rho/\sigma_k)$.  The unified mechanism may produce more spurious items than standard DPNE.  This is \emph{inherent}: since $\tau(w)$ cannot depend on whether $w$ is observed (for privacy), lowering thresholds for high-margin items benefits all candidates---observed and unobserved alike.
\end{remark}

\begin{theorem}[FIP reduces threshold]
\label{appC:thm:fip}
Let $|V_k'| = (1-\alpha)|V_k|$ for some $\alpha \in [0,1)$.  Then $\rho_k^{\text{FIP}} \le \rho_k^{\text{std}}$, with strict inequality when $|S_{k-1}| < |V_k|$ (non-saturated regime).
\end{theorem}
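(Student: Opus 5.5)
The proof is a monotonicity argument on the formula for the base threshold. By Definition~\ref{appC:def:ht}, $\rho_k^{\text{base}}$ is computed from a candidate set $V$ as
\[
\rho_k(V) = \sigma_k \Phi^{-1}\!\bigl(1 - \eta \min(|S_{k-1}|/|V|,\,1)\bigr).
\]
The standard threshold is $\rho_k^{\text{std}} = \rho_k(V_k)$ and the FIP threshold is $\rho_k^{\text{FIP}} = \rho_k(V_k')$. Both use the same $\sigma_k$, $\eta$ and $|S_{k-1}|$: FIP is applied before level $k$, and $\sigma_k = \sigma^*\sqrt{T}$ does not depend on the candidate set. So the comparison depends only on the map $n \mapsto \sigma_k\Phi^{-1}(1-\eta\min(|S_{k-1}|/n,1))$, evaluated at $n = (1-\alpha)|V_k|$ and at $n = |V_k|$.

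\textbf{Step 1: weak inequality.} Since $V_k' \subseteq V_k$ by \eqref{appC:eq:fip}, we have $(1-\alpha)|V_k| \le |V_k|$. For fixed $|S_{k-1}|$, the ratio $|S_{k-1}|/n$ is nonincreasing in $n$, and so is its $\min$ with $1$. Hence $1-\eta\min(\cdot)$ is nondecreasing in $n$ for $\eta>0$. Because $\Phi^{-1}$ is increasing on $(0,1)$ and $\sigma_k>0$, the threshold is nondecreasing in $n$, which gives $\rho_k^{\text{FIP}} \le \rho_k^{\text{std}}$. I would assume $\eta\in(0,1)$ so that the argument of $\Phi^{-1}$ stays in $(0,1)$, and treat $V_k'=\emptyset$ as a degenerate case.

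\textbf{Step 2: strict inequality.} In the non-saturated regime $|S_{k-1}|<|V_k|$, the min at $|V_k|$ equals the ratio $|S_{k-1}|/|V_k|<1$. If $|V_k'|<|V_k|$, i.e.\ $\alpha>0$, then either $|S_{k-1}|/|V_k'|$ is strictly larger but still at most $1$, or the min saturates at $1>|S_{k-1}|/|V_k|$. In both cases the min strictly increases, and strict monotonicity of $\Phi^{-1}$ gives a strict decrease of the threshold.

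\textbf{Main obstacle.} The statement as written claims strictness whenever $|S_{k-1}|<|V_k|$, but if $\alpha=0$ (FIP prunes nothing) the two thresholds coincide. I would prove the strict part under the additional hypothesis $\alpha>0$, and note that this hypothesis is necessary. I would also check the edge cases $|S_{k-1}|=0$ and $\eta\ge 1$, which likewise force equality, and state those exclusions explicitly.
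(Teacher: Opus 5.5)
Your proposal is correct and matches the paper's proof: both rest on the observation that $\rho_k=\sigma_k\Phi^{-1}(1-\eta\min(|S_{k-1}|/|V|,1))$ is nondecreasing in $|V|$ (strictly increasing once $|V|\ge|S_{k-1}|$), combined with $|V_k'|\le|V_k|$. Your caveat is also right: the strict inequality fails when $\alpha=0$ and also needs $|S_{k-1}|>0$ and $\eta\in(0,1)$, none of which the statement or the paper's one-line proof addresses, and your explicit treatment of the case $|V_k'|\le|S_{k-1}|<|V_k|$ supplies a step the paper leaves implicit.
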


\begin{proof}
$\rho_k = \sigma_k \Phi^{-1}(1 - \eta \min(|S_{k-1}|/|V|, 1))$ is increasing in $|V|$ when $|V| \ge |S_{k-1}|$.  Since $|V_k'| \le |V_k|$, the claim follows.
\end{proof}

\subsection{Experiments}
\label{appC:sec:experiments}

\subsubsection{Setup}

We evaluate on four datasets:

\begin{itemize}[leftmargin=*]
\item \textbf{Zipf} (synthetic): 50K users, 500-word vocabulary, Zipf($\alpha{=}1.07$) with structured bigram transitions.
\item \textbf{Clustered} (synthetic): 20K users, 300-word vocabulary, 10 topics with shared phrases.
\item \textbf{Heavy-tail} (synthetic): 30K users, 400-word vocabulary, Zipf($\alpha{=}1.5$) with Markov transitions.
\item \textbf{Reddit} (real): 100K users from the Webis-TLDR-17 corpus \cite{appC:webis_tldr}, the same dataset used by Kim et al.\ \cite{appC:kim2021dpne}.  Statistics: 147,487 unique words; mean 184.7 words/user.
\end{itemize}

\textbf{Parameters:}
$T=6$ levels, $\delta = e^{-10}$, $\eta = 0.01$, $\Delta_0 = 100$ (synthetic) or $300$ (Reddit), FIP tolerance $m = 1$, HT discount $\gamma = 0.3$, uniform privacy budget ($\sigma_k = \sigma^*\sqrt{T}$ for all $k$).

\textbf{Baseline:} Standard DPNE with uniform thresholds and structural pruning only (our implementation with $\gamma = 0$).

\subsubsection{Main Results}

\begin{table}[t]
\caption{Total $n$-grams extracted (levels 1--6).  All numbers are from seed 42.}
\label{appC:tab:main}
\centering
\begin{tabular}{l|rrr|r}
\toprule
 & \multicolumn{3}{c|}{Synthetic} & Real \\
Algorithm & Zipf & Clustered & Heavy-tail & Reddit \\
\midrule
\multicolumn{5}{c}{$\eps = 4.0$} \\
\midrule
DPNE baseline & 12,876 & 8,871 & 13,136 & 37,161 \\
AFP-DPNE & \textbf{18,220} & \textbf{10,322} & \textbf{16,916} & \textbf{68,498} \\
Improvement & +41.5\% & +16.4\% & +28.8\% & \textbf{+84.3\%} \\
\midrule
\multicolumn{5}{c}{$\eps = 1.0$} \\
\midrule
DPNE baseline & 3,766 & 1,304 & 4,280 & 10,264 \\
AFP-DPNE & \textbf{4,222} & \textbf{1,364} & \textbf{4,535} & \textbf{12,977} \\
Improvement & +12.1\% & +4.6\% & +6.0\% & \textbf{+26.4\%} \\
\bottomrule
\end{tabular}
\end{table}

Table~\ref{appC:tab:main} shows AFP-DPNE consistently outperforms the baseline across all datasets and privacy budgets.  The improvement is largest on Reddit (84\% at $\eps=4$, 26\% at $\eps=1$) because real NLP text has a large vocabulary (147K words) with highly concentrated unigram confidence scores, creating many near-threshold bigrams that HT can rescue.  Synthetic datasets have smaller vocabularies (300--500 words) and less extreme margin distributions, yielding smaller but still significant gains (5--42\%).

\subsubsection{Honest Disclosure: Spurious Rate Tradeoff}
\label{appC:sec:spurious_disclosure}

As discussed in Remark~\ref{appC:rem:spurious}, the unified mechanism inherently increases the spurious rate.

\begin{table}[t]
\caption{Genuine vs.\ spurious breakdown (Reddit, $\eps=4$).}
\label{appC:tab:spurious}
\centering
\begin{tabular}{l|rrrr}
\toprule
 & Total & Genuine & Spurious & Spur.\ \% \\
\midrule
DPNE baseline & 37,161 & 36,951 & 210 & 0.6\% \\
AFP-DPNE & 68,498 & 62,548 & 5,950 & 8.7\% \\
\bottomrule
\end{tabular}
\end{table}

Of the 31,337 additional items AFP-DPNE discovers over the baseline, 25,597 (\textbf{82\%}) are genuine.  The remaining 5,740 are spurious items with high-margin constituents that receive low thresholds.  This tradeoff is inherent to the unified mechanism: since $\tau(w)$ cannot depend on whether $w$ is observed (Section~\ref{appC:sec:negative}), lowering thresholds for high-margin items benefits both observed and unobserved candidates.

\textbf{Mitigation.}  Users can: (a)~increase $\rho_k^{\text{base}}$ (reduces both genuine and spurious), (b)~reduce $\gamma$ to limit the discount, or (c)~post-filter using application-specific heuristics (free post-processing).

\subsubsection{Ablation Study}

\begin{table}[t]
\caption{Ablation study on Zipf ($\eps=4$, $\Delta_0=100$, seed 42).}
\label{appC:tab:ablation}
\centering
\begin{tabular}{l|rr}
\toprule
Variant & Total & vs.\ full AFP \\
\midrule
AFP-DPNE (FIP + HT) & 18,220 & --- \\
\quad w/o FIP ($m=\infty$) & 17,914 & $-1.7\%$ \\
\quad w/o HT ($\gamma=0$) & 12,876 & $-29.3\%$ \\
DPNE baseline & 12,876 & $-29.3\%$ \\
\bottomrule
\end{tabular}
\end{table}

HT is the dominant innovation ($-29.3\%$ without it).  FIP contributes a modest additional
1.7\%.  The near-zero impact of FIP alone (w/o HT = baseline) indicates that FIP primarily
helps through its interaction with HT: by reducing $|V_k'|$, FIP lowers $\rho_k^{\text{base}}$,
which amplifies HT's discount.

\subsubsection{Mechanistic Analysis: Where HT Helps}
\label{appC:sec:mechanistic}

At level 2 on Reddit ($\eps=4$):
\begin{itemize}[leftmargin=*]
\item $|S_1| = 6{,}169$ unigrams, with high confidence (median margin $= 8.1\sigma$).
\item $|V_2'| \approx |S_1|^2 \approx 36$M candidate bigrams, of which 1.6M have $H_k > 0$.
\item Standard DPNE: $\rho_2 \approx 11.8$, discovers 11,708 bigrams at level 2.
\item AFP-DPNE: threshold drops to $\tau(w) \approx \rho/2 = 5.9$ for high-margin candidates (83\% of candidates receive the maximum discount).  Discovers 32,696 bigrams at level 2.
\item HT extra at level 2: $+$15,781 bigrams, the vast majority genuine.
\end{itemize}

HT is effective because natural language has \emph{concentrated unigram confidence}: common words like ``the'', ``is'', ``and'' have very high margin scores, while rare words have low margins.  Bigrams composed of two common unigrams occupy a ``rescue zone'' between $\rho/2$ and $\rho$: their counts are moderate, sufficient to pass the lower threshold but not the standard one.

\subsection{Empirical Privacy Verification}
\label{appC:sec:audit}

We use the one-training-run audit of Steinke, Nasr, and Jagielski \cite{appC:steinke2024audit}.
Given $m$ canary users, we include each independently with probability $1/2$, run the
algorithm once, score each canary by $n$-gram overlap with the output, and test whether
scores distinguish included from excluded canaries.

Under $(\eps, \delta)$-DP, the number of correct guesses $W$ out of $r$ total satisfies
$\Pr[W \ge v] \le \beta + 2m\delta\alpha$ where $\beta = \Pr[\Bin(r, e^\eps/(e^\eps+1)) \ge v]$ \cite[Corollary~4.4]{appC:steinke2024audit}.

\begin{table}[t]
\caption{Privacy audit (Zipf 10K users, $\delta=e^{-10}$, $m=200$ canaries, 3 runs aggregated).}
\label{appC:tab:audit}
\centering
\begin{tabular}{l|cccc}
\toprule
$\eps$ & Correct/Total & Fraction & $p$-value & Verdict \\
\midrule
1.0 & 161/300 & 0.537 & 1.000 & \checkmark~PASS \\
4.0 & 160/300 & 0.533 & 1.000 & \checkmark~PASS \\
8.0 & 160/300 & 0.533 & 1.000 & \checkmark~PASS \\
\midrule
Broken (no noise) & 100/100 & 1.000 & 0.168 & see text \\
\bottomrule
\end{tabular}
\end{table}

\paragraph{Audit power and limitations.}
The audit achieves $\approx$53\% correct, barely above random.  This is expected: each
user contributes weight $1/\sqrt{\Delta_0} \approx 0.1$ per item, while the threshold is
$\rho \approx 13$---a single canary's contribution is drowned by noise.

The broken-algorithm sanity check (no noise added) achieves 100\% correct with $p$-value
$= 0.168$, failing to reject at $\alpha = 0.05$.  This is because with $m=200$, $r=100$
guesses, and the DP envelope at $\eps=4$ allows correct fraction up to
$e^4/(e^4+1) \approx 0.982$.  Detecting violations requires substantially larger $m$
(e.g., $2000+$) or white-box access to pre-threshold histogram values.

We include this audit for \emph{transparency}, not as a substitute for the mathematical proof in Theorem~\ref{appC:thm:privacy}.  The audit provides weak positive evidence---no privacy violation detected---but its limited statistical power for threshold-based mechanisms is a known challenge \cite{appC:steinke2024audit}.

\subsection{Discussion: The $\ell_1$-Descent Bug}
\label{appC:sec:l1_bug}

Gopi et al.\ \cite{appC:gopi2020dpsu} proposed the $\ell_1$-descent policy as an alternative
to the weighted Gaussian policy for DPSU.  They claimed $\ell_1$-descent is
$\ell_2$-contractive, which is needed for the Gaussian mechanism's sensitivity bound.
As documented in \cite{appC:improved_dpsu}, this claim is \emph{false}: expansion ratios up to
1.44 were found empirically.  The $\ell_2$-descent policy \emph{is} provably contractive,
but $\ell_1$-descent is not.

Our algorithm uses the weighted Gaussian policy exclusively, which achieves $\ell_2$-sensitivity~1 by the Cauchy--Schwarz argument (Lemma~\ref{appC:lem:sensitivity}) and does not require contractivity.

\subsection{Related Work}
\label{appC:sec:related}

\textbf{DPSU and DPNE.}
Differentially private set union was introduced by Gopi et al.\ \cite{appC:gopi2020dpsu}.
Kim et al.\ \cite{appC:kim2021dpne} extended it to $n$-gram extraction.  Our work improves
upon Kim et al.\ by exploiting the structural freedom in threshold choice at levels $k \ge 2$.

\textbf{Gaussian mechanism.}
We use the analytic Gaussian mechanism of Balle and Wang \cite{appC:BalleW18} for tight
calibration, and GDP composition \cite{appC:DongRS19_GDP} for multi-level privacy accounting.

\textbf{Privacy auditing.}
Steinke, Nasr, and Jagielski \cite{appC:steinke2024audit} introduced the one-training-run
audit we employ.  Our experience highlights its limitations for threshold-based mechanisms.

\textbf{Threshold manipulation in DP.}
To our knowledge, the negative result on adaptive thresholding
(Theorem~\ref{appC:thm:adaptive_bad}) is new.  It clarifies that while \emph{public} threshold
functions are safe, \emph{data-dependent} threshold functions (even those depending only
on support information) can catastrophically amplify privacy loss.

\subsection{Conclusion}
\label{appC:sec:conclusion}

We introduced AFP-DPNE, which improves differentially private $n$-gram extraction by
5--84\% through two zero-cost innovations: Frequency-Informed Pruning and Heterogeneous
Thresholding.  Both exploit the structural property that level-$k$ thresholds ($k \ge 2$)
do not affect privacy.  The algorithm is provably $(\eps, \delta)$-DP for all $\eps > 0$
and $\delta \in (0,1)$.

We also proved that adaptive thresholding (different thresholds for observed vs.\ unobserved items) amplifies privacy loss by $651\times$, delineating the boundary between safe and unsafe threshold manipulation.  The honest cost of the unified mechanism is a higher spurious rate (8.7\% on Reddit vs.\ 0.6\% baseline), inherent to applying the same threshold function to all candidates.

\paragraph{Limitations.}
(1)~The spurious rate increase may be unacceptable for some applications.
(2)~Improvement magnitude is data-dependent: larger on data with large vocabulary and concentrated unigram confidence.
(3)~The privacy audit has limited power for threshold mechanisms and does not substitute for the mathematical proof.
(4)~We evaluate on one real dataset; broader evaluation would strengthen claims.

\newpage

\subsection{Full Algorithm Pseudocode}
\label{appC:app:pseudocode}

\begin{algorithm}[H]
\caption{AFP-DPNE: Augmented Frequency-Pruned DPNE}
\label{appC:alg:afp_dpne}
\small
\KwIn{Users $\{(i, w_i)\}_{i=1}^N$; max length $T$; contribution bounds $\Delta_1,\ldots,\Delta_T$;
privacy parameters $(\eps,\delta)$; FIP tolerance $m \ge 0$; HT discount $\gamma \in [0,1]$;
spurious fraction $\eta > 0$}
\KwOut{$S_1, \ldots, S_T$}

\tcp{Privacy budget}
$\sigma^* \leftarrow$ BW calibration for $(\eps, \delta/2)$-DP with $\Delta_2 = 1$ \tcp*{Eq.~\eqref{appC:eq:bw}}
$\sigma_k \leftarrow \sigma^*\sqrt{T}$ for all $k \in [T]$\;

\BlankLine
\tcp{Level 1: DPSU}
$\rho_1 \leftarrow \max_{t \le \Delta_1}\bigl(1/\sqrt{t} + \sigma_1 \Phi^{-1}((1-\delta/2)^{1/t})\bigr)$
\tcp*{Eq.~\eqref{appC:eq:rho1}}
\For{each user $i$}{
    $U_i' \leftarrow$ truncate $U_i$ to $\Delta_1$ items (random subsample if $|U_i| > \Delta_1$)\;
    \For{$u \in U_i'$}{
        $H_1[u] \leftarrow H_1[u] + 1/\sqrt{|U_i'|}$\;
    }
}
\For{$u \in \supp(H_1)$}{
    $Z_u \sim \cN(0, \sigma_1^2)$\;
    $\tilde{H}_1[u] \leftarrow H_1[u] + Z_u$\;
    Include $u$ in $S_1$ iff $\tilde{H}_1[u] > \rho_1$\;
}

\BlankLine
\tcp{Levels $k = 2, \ldots, T$}
\For{$k = 2$ \KwTo $T$}{
    \tcp{Step 1: Structural pruning}
    $V_k \leftarrow \{(p, u) : p \in S_{k-1},\; u \in S_1,\; (p,u)[2{:}] \in S_{k-1}\}$\;

    \tcp{Step 2: FIP pruning (Def.~\ref{appC:def:fip})}
    \For{$w \in V_k$}{
        $\text{margin}(w) \leftarrow \min\bigl(\tilde{H}_{k-1}[\text{pref}(w)] - \rho_{k-1},\; \tilde{H}_{k-1}[\text{suf}(w)] - \rho_{k-1}\bigr)$\;
    }
    $V_k' \leftarrow \{w \in V_k : \text{margin}(w) > -m\sigma_{k-1}\}$\;

    \tcp{Step 3: HT threshold (Def.~\ref{appC:def:ht}), computed BEFORE seeing $H_k$}
    $\rho_k^{\text{base}} \leftarrow \sigma_k \Phi^{-1}\bigl(1 - \eta\min(|S_{k-1}|/|V_k'|, 1)\bigr)$\;
    $\bar{c} \leftarrow \text{median}\bigl(\{\text{margin}(w) : w \in V_k',\; \text{margin}(w) > 0\}\bigr)$\;
    \For{$w \in V_k'$}{
        $d(w) \leftarrow \min\bigl(\gamma \cdot \max(0, \text{margin}(w)/\bar{c}),\; \rho_k^{\text{base}}/2\bigr)$\;
        $\tau(w) \leftarrow \rho_k^{\text{base}} - d(w)$\;
    }

    \tcp{Step 4: Build histogram on $V_k'$}
    $H_k \leftarrow$ WeightedHistogram(users, $V_k'$, $\Delta_k$)\;

    \tcp{Step 5: Threshold observed items}
    \For{$w \in \supp(H_k)$}{
        $Z_w \sim \cN(0, \sigma_k^2)$\;
        $\tilde{H}_k[w] \leftarrow H_k[w] + Z_w$\;
        Include $w$ in $S_k$ iff $\tilde{H}_k[w] > \tau(w)$\;
    }

    \tcp{Step 6: Sample unobserved items (exact per-item Bernoulli)}
    Group $V_k' \setminus \supp(H_k)$ by $\tau$-value\;
    \For{each group $G$ with threshold $\tau_G$}{
        $p \leftarrow \Phi(-\tau_G / \sigma_k)$\;
        $n \leftarrow \Bin(|G|, p)$\;
        Add $\min(n, |G|)$ uniformly random items from $G$ to $S_k$\;
    }
}
\Return $S_1, \ldots, S_T$\;
\end{algorithm}

\subsection{Code-to-Paper Mapping}
\label{appC:app:code}

The complete implementation is in \texttt{icml\_submission/src/}.
Table~\ref{appC:tab:code_map} provides a line-by-line mapping from each paper section to the code.

\begin{table}[h]
\centering
\small
\caption{Mapping from paper sections to implementation.}
\label{appC:tab:code_map}
\begin{tabular}{l|l|l}
\toprule
Paper Section & File & Function \\
\midrule
Def.~\ref{appC:def:gaussian}: BW calibration & \texttt{privacy.py} & \texttt{calibrate\_sigma()} \\
Eq.~\eqref{appC:eq:rho1}: Spillover threshold & \texttt{privacy.py} & \texttt{compute\_rho1()} \\
Sec.~\ref{appC:sec:prelims}: $k$-gram threshold & \texttt{privacy.py} & \texttt{compute\_rho\_kgram()} \\
Lemma~\ref{appC:lem:sensitivity}: Weighted histogram & \texttt{algorithm.py} & \texttt{build\_histogram()} \\
Def.~\ref{appC:def:fip}: FIP pruning & \texttt{algorithm.py} & \texttt{fip\_prune()} \\
Def.~\ref{appC:def:ht}: HT threshold & \texttt{algorithm.py} & \texttt{compute\_tau()} \\
Def.~\ref{appC:def:canonical}: Full algorithm & \texttt{algorithm.py} & \texttt{unified\_ht\_dpne()} \\
Prop.~\ref{appC:prop:impl_equiv}: Observed items & \texttt{algorithm.py} & \texttt{threshold\_observed()} \\
Prop.~\ref{appC:prop:impl_equiv}: Unobserved items & \texttt{algorithm.py} & \texttt{sample\_unobserved()} \\
Sec.~\ref{appC:sec:experiments}: All experiments & \texttt{experiments.py} & \texttt{main()} \\
Sec.~\ref{appC:sec:audit}: Privacy audit & \texttt{dp\_audit.py} & \texttt{full\_audit()} \\
\bottomrule
\end{tabular}
\end{table}

Every function in the codebase has a \texttt{Lemma/Proof} docstring that states what it computes, why it is correct, and maps individual code lines to the proof steps.  The implementation exactly matches Algorithm~\ref{appC:alg:afp_dpne}: no bucket approximations, no shortcuts.

\subsection{Reproducing Results}
\label{appC:app:reproduce}

\begin{verbatim}
cd icml_submission/src
python3 experiments.py --reddit-dir /path/to/webis-tldr-17
\end{verbatim}

This runs all experiments (synthetic + Reddit + ablation + audit) and saves
\texttt{results/paper\_numbers.json}.

\textbf{Requirements:} Python 3.8+, NumPy, SciPy.  No GPU needed.
Approximate runtime: 20 minutes (synthetic), 3 hours (Reddit 100K users).

\subsection{Additional Experimental Details}
\label{appC:app:details}

\paragraph{Spurious rate on synthetic data ($\eps=4$).}
\begin{center}
\small
\begin{tabular}{l|rr|rr}
\toprule
 & \multicolumn{2}{c|}{DPNE baseline} & \multicolumn{2}{c}{AFP-DPNE} \\
Dataset & Spurious & \% & Spurious & \% \\
\midrule
Zipf & 69 & 0.5\% & 187 & 1.0\% \\
Clustered & 24 & 0.3\% & 40 & 0.4\% \\
Heavy-tail & 40 & 0.3\% & 140 & 0.8\% \\
\bottomrule
\end{tabular}
\end{center}
On synthetic data, the spurious rate increase is modest ($<$1\%), because the smaller
vocabularies produce fewer unobserved candidates with high margins.

\paragraph{Per-level output sizes (Reddit, $\eps=4$).}
The gain is concentrated at level 2 (bigrams), which has the largest candidate set and
the most items in the ``rescue zone'' between $\rho/2$ and $\rho$.

\paragraph{On the $\ell_1$-descent bug.}
Our implementation uses the weighted Gaussian policy exclusively.  The $\ell_1$-descent
bug (Section~\ref{appC:sec:l1_bug}) does not affect our algorithm.  The bug is documented in
\texttt{ImprovedDPSU/AUDIT\_REPORT.md} in the repository.

\clearpage
\section{Data-Oblivious Explainability Is Free for Scalable Approximate Clustering}
\begin{appendixabstract}
We study the interaction between differential privacy and \emph{explainable
clustering} in the regime where the explainability conversion is
\emph{data-oblivious}: once a set of reference centers has been selected, the
structured output depends only on those centers and not on the private dataset.
This setting covers the axis-aligned threshold-tree conversions analyzed by
Gupta, Pittu, Svensson, and Yuan (FOCS~2023) in their study of the price of
explainability for $k$-median and $k$-means.

Our main theorem is a transfer principle.  Any
$(\varepsilon,\delta)$-differentially private algorithm that outputs a distinct
set of $k$ centers can be followed by any data-oblivious
center-to-structure conversion with factor $\alpha$, yielding an
$(\varepsilon,\delta)$-private structured clustering whose expected cost is at
most $\alpha$ times the expected cost of the private center selector.  When
instantiated with the threshold-tree conversions of Gupta et al., the theorem
immediately yields multiplicative explainability factors $1+H_{k-1}$ for
$k$-median and $O(k\log\log k)$ for $k$-means, where
$H_m := \sum_{i=1}^m \frac1i$ is the $m$th harmonic number.

We then present a repaired finite-domain instantiation.  For either
$q\in\{1,2\}$, rather than sampling ordered $k$-tuples---which introduces
repeated-center degeneracies---we run the exponential mechanism over distinct
$k$-subsets of a public candidate set $\calG$.  This yields the exact bound
\[
  \E[\cost_q(X,C)] \le \OPT^{\calG}_{q,k}(X)
   + \frac{2\Lambda_q}{\varepsilon}\left(\ln \binom{|\calG|}{k}+1\right),
\]
where $\Lambda_q$ is the replace-one sensitivity of the score.  For
$k$-median, the lifted explainable algorithm therefore satisfies
\[
  \E[\cost_1(X,\Llift_1(X))]
  \le (1+H_{k-1})\!\left(\OPT^{\calG}_{1,k}(X)
     + \frac{2\Lambda_1}{\varepsilon}\left(\ln \binom{|\calG|}{k}+1\right)\right).
\]
Finally, we strengthen the additive barrier induced by privacy: for every $k$,
there are two neighboring $k$-point datasets with $\OPT_{1,k}=0$ such that
every $(\varepsilon,\delta)$-private algorithm outputting $k$ centers incurs
expected cost at least $(1-\delta)\Delta/(1+e^{\varepsilon})$ on one of the two
inputs.  Thus the private analogue of the price of explainability remains
inherently multiplicative-plus-additive even when additional centers are
available, although the correct dependence on $k$, $d$, and $|\calG|$ remains
open.
\end{appendixabstract}

\subsection{Introduction}

Explainable clustering asks that the final partition be representable by a
simple structured model rather than by an arbitrary Voronoi diagram.  In the
axis-aligned decision-tree model introduced by Dasgupta, Frost, Moshkovitz, and
Rashtchian~\cite{appD:dasgupta2020explainable}, each internal node tests a single
feature against a threshold and each leaf corresponds to one cluster.  Gupta,
Pittu, Svensson, and Yuan~\cite{appD:gupta2023price} showed that, in the non-private
setting, the price of explainability is remarkably well behaved for this model:
for $k$-median it is exactly $1+H_{k-1}$, and for $k$-means the best known upper
bound is $O(k\log\log k)$.

We focus on a deliberately narrow question: what changes under differential
privacy \emph{when the explainability conversion is data-oblivious}?  Here
``data-oblivious'' means that once a private algorithm has selected a reference
center set $C$, the subsequent explainability routine sees only $C$ and fresh
randomness.  This assumption is explicit in the threshold-tree conversions of
Gupta et al., and it is precisely the hypothesis under which differential
privacy and explainability separate cleanly.

The key observation is simple but useful:
\begin{quote}
\emph{Data-oblivious explainability conversions are free post-processing under
differential privacy.}
\end{quote}
Consequently, the privacy analysis lives entirely in the center-selection
step.  The non-private explainability factor is inherited multiplicatively, but
no additional privacy loss is paid for building the threshold tree.  This
sharply contrasts with the naive alternative of privatizing one cut at a time,
which would spend privacy budget across a depth-$k-1$ adaptive recursion.
It also suggests an algorithm-design principle: spend privacy once, at the
center-selection stage, and keep the explainability conversion outside the
privacy boundary.  Below we quantify why several other natural routes
are provably more expensive.

\paragraph{Contribution overview.}
The contributions are as follows.

\begin{enumerate}[leftmargin=1.5em]
  \item A generic transfer principle converts any private distinct-center
  selector into a private structured clustering algorithm whenever the
  structure is produced by a data-oblivious conversion with cost inflation
  factor $\alpha$.
  \item Instantiating this principle with the threshold-tree theorems of
  Gupta et al.~\cite{appD:gupta2023price} immediately gives private explainable
  $k$-median and $k$-means guarantees with multiplicative factors
  $1+H_{k-1}$ and $O(k\log\log k)$, respectively.
  \item A finite-domain exponential-mechanism instantiation over distinct
  $k$-subsets of a public candidate set $\calG$ yields an explicit additive
  term with the exact combinatorial range size.
  \item A quantitative design-space analysis shows that the center-first
  route dominates natural alternatives: privatizing each threshold cut incurs
  an additive term scaling as $(k-1)^2/\varepsilon$ under basic composition
  versus $k\ln|\calG|/\varepsilon$ for our approach, and selecting structured
  outputs directly inflates the exponential-mechanism range by a factor
  depending on $d$ and the Catalan number.
  \item A concrete black-box corollary packages any compatible efficient
  private $k$-median selector into an explainable algorithm by multiplying its
  utility guarantee by the non-private explainability factor.
  \item A lower-bound family shows that purely multiplicative guarantees are
  impossible for every $k$, not only for $k=1$: even when
  $\OPT_{1,k}(X)=0$, some additive privacy tax remains unavoidable.
\end{enumerate}

\paragraph{Scope and limitations.}
We do not claim tight $k$-dependent lower bounds, a full characterization of
explainability notions that inspect the data after center selection, or an
intrinsically new non-private explainability theorem.  Rather, the point of
this paper is that once a data-oblivious conversion theorem is available, the
private analysis becomes modular and automatically inherits future improvements
in private center selection.

\subsection{Model and Imported Explainability Theorems}

A dataset is a multiset $X=\{x^1,\dots,x^n\}\subseteq \calD\subseteq \R^d$.
Two datasets are \emph{neighbors} if one can be obtained from the other by
replacing one point.  For $q\in\{1,2\}$ and a center set
$C\subseteq \R^d$ with $|C|=k$, define
\[
  \cost_q(X,C) := \sum_{x\in X} \min_{c\in C} \|x-c\|_q^q,
\]
and let $\OPT_{q,k}(X)$ denote the minimum of $\cost_q(X,C)$ over all
$k$-element sets $C\subseteq \R^d$.

\paragraph{Threshold-tree explainability.}
A threshold tree is a binary tree whose internal nodes apply tests of the form
$x_i\le \theta$.  If such a tree separates a distinct center set $C$, then each
leaf contains exactly one center and therefore induces an assignment of every
data point to the unique center in the same leaf.  The resulting cost is again
measured in the $\ell_q^q$ objective.

\begin{definition}[Data-oblivious center-to-structure conversion]
\label{appD:def:data-oblivious-conversion}
Fix an objective $q\in\{1,2\}$ and a family of structured assignment rules
$\calM$.  A randomized conversion $\Econv$ with factor $\alpha$ is
\emph{data-oblivious} if, for every distinct $k$-center set $C$, it outputs a
random model $M\in \calM$ whose distribution depends only on $C$, and each
such $M$ induces an assignment rule from data points to centers in $C$ such
that for every dataset $X\subseteq \calD$,
\[
  \E\bigl[\cost_q(X,M;C)\bigr] \le \alpha\,\cost_q(X,C).
\]
\end{definition}

The threshold-tree conversions of Gupta et al. fit this definition exactly.
Throughout, we rely on the FOCS 2023 version of
\emph{The Price of Explainability for Clustering}~\cite{appD:gupta2023price}.

\begin{theorem}[Imported explainability theorems from~\cite{appD:gupta2023price}]
\label{appD:thm:imported-source}
The following source-paper guarantees hold for every distinct center set $C$.
\begin{enumerate}[leftmargin=1.5em]
  \item Theorem~1 of~\cite{appD:gupta2023price} states that the independent-recursion
  random-thresholds conversion for $k$-median is data-oblivious and satisfies
  \[
    \E[\cost_1(X,M;C)] \le (1+H_{k-1})\,\cost_1(X,C)
  \]
  for every dataset $X$.  Here
  $H_{k-1} = \sum_{i=1}^{k-1} 1/i$.
  \item Theorem~3 of~\cite{appD:gupta2023price} states that there is a
  data-oblivious explainability conversion for $k$-means with factor
  $\alpha^{\means}_k = O(k\log\log k)$, i.e.,
  \[
    \E[\cost_2(X,M;C)] \le \alpha^{\means}_k\,\cost_2(X,C)
  \]
  for every dataset $X$.
\end{enumerate}
Both imported theorems are stated for the same $\ell_q^q$ cost convention used
in this paper, and both require the reference centers to be distinct because
the threshold tree must separate them.
\end{theorem}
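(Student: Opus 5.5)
Since this statement imports results rather than proving new ones, the proof amounts to checking that the two conversions of~\cite{appD:gupta2023price} satisfy Definition~\ref{appD:def:data-oblivious-conversion}, with the stated factors and under the cost convention $\cost_q(X,C)=\sum_{x}\min_{c}\|x-c\|_q^q$. I would do this in three checks, applied to each of the two conversions.

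\textbf{Check 1: data-obliviousness.} For the $k$-median conversion, I would inspect the random-thresholds procedure. At each node it takes the set of centers reaching that node. It draws a coordinate $i$ and a threshold $\theta$ from a distribution defined only by the projections of those centers, namely uniformly on the span of the centers' $i$-th coordinates, weighted by length. It then splits and recurses independently on the two children. Neither the sampling law nor the stopping rule (stop once a node contains a single center) refers to any data point. So the law of the output tree $M$ is a function of $C$ alone. The same inspection applies to the $k$-means conversion of their Theorem~3: its cut distribution is built from pairwise center differences and sampled scales, and never from $X$. Distinctness of $C$ is needed so that every internal node holding at least two centers admits a separating cut. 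This ensures the recursion ends with singleton leaves, which gives a well-defined assignment $x\mapsto$ the unique center in $x$'s leaf.

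\textbf{Check 2: the cost bound holds for every dataset.} Gupta et al.\ prove their bound pointwise. For each fixed point $x$ with nearest center $c(x)$, they show
\[
\E\bigl[\|x-M(x)\|_q^q\bigr]\le \alpha\,\|x-c(x)\|_q^q .
\]
Here $\alpha=1+H_{k-1}$ for $q=1$, and $\alpha=O(k\log\log k)$ for $q=2$. I would re-derive this reduction as follows. Their analysis bounds the expected reassignment cost of a single point through the probability that a cut separates $x$ from $c(x)$, multiplied by the cost incurred once that happens. This argument involves no other data point. Summing over $x\in X$ by linearity of expectation then gives $\E[\cost_q(X,M;C)]\le\alpha\,\cost_q(X,C)$ for every multiset $X\subseteq\calD$.

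\textbf{Check 3: conventions.} I expect the main obstacle to be here rather than in any mathematics. I would confirm that the $k$-means guarantee in~\cite{appD:gupta2023price} is stated for squared Euclidean distance ($\ell_2^2$), and not for $\ell_2$ or a normalized cost. I would also confirm that the reference centers are arbitrary given centers rather than necessarily optimal ones. If their theorem were phrased only relative to $\OPT$, I would instead appeal to the pointwise form of their proof, which holds for arbitrary reference centers. Finally, the case $k=1$ is trivial: $H_0=0$ and the tree is a single leaf, so the factor is $1$.
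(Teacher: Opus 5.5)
Your proposal is correct and follows essentially the paper's route. The paper gives no separate proof: it imports the two guarantees of Gupta et al.\ and asserts that their threshold-tree conversions fit Definition~\ref{appD:def:data-oblivious-conversion} exactly, and your three checks (a sampling law depending only on $C$, a pointwise closest-center bound summed by linearity of expectation, and the $\ell_q^q$ and arbitrary-reference-center conventions) make that assertion explicit. One descriptive slip does not affect the conclusion: the Gupta et al.\ $k$-means conversion is not built from ``pairwise center differences and sampled scales.'' It is the solo/bulk procedure driven by the $D_2$ cut distribution and the stretch of each node's center set, as recalled in the paper's $\ell_p^p$ appendix, and all of these are still functions of $C$ alone.
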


\subsection{A Transfer Principle for Data-Oblivious Conversions}

The lifting argument is short, but the abstraction it isolates extends
slightly beyond explainable threshold trees.

\begin{theorem}[Transfer principle]
\label{appD:thm:transfer-principle}
Fix $q\in\{1,2\}$ and let $\Econv$ be any data-oblivious center-to-structure
conversion with factor $\alpha$.  Let $\Aalg$ be an
$(\varepsilon,\delta)$-differentially private algorithm that outputs a distinct
$k$-center set $C$.  Define the lifted algorithm $\Llift$ by
\[
  \Llift(X): \text{sample } C\leftarrow \Aalg(X),\ \text{then sample } M\leftarrow \Econv(C),\ \text{and output } (M,C).
\]
Then:
\begin{enumerate}[leftmargin=1.5em]
  \item $\Llift$ is $(\varepsilon,\delta)$-differentially private.
  \item For every dataset $X$,
  \[
    \E[\cost_q(X,\Llift(X))] \le \alpha\,\E[\cost_q(X,C)].
  \]
\end{enumerate}
\end{theorem}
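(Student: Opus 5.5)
The plan is to prove the two parts separately. Part 1 is the post-processing property of differential privacy; part 2 follows by conditioning on the selected centers and applying the conversion guarantee for each fixed center set.

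\emph{Privacy.} I will write $\Llift = g\circ\Aalg$. Here $g$ is the randomized map sending a distinct center set $C$ to the pair $(M,C)$, with $M\sim\Econv(C)$. By Definition~\ref{appD:def:data-oblivious-conversion}, the distribution of $M$ depends only on $C$ and on fresh randomness, not on $X$. So $g$ is a Markov kernel from center sets to outputs that does not depend on the data. To make this fully explicit, fix neighbors $X,X'$ and a measurable output set $\cS$. Define the function
\[
f(C):=\Pr_{M\sim\Econv(C)}[(M,C)\in\cS]\in[0,1].
\]
Then
\[
\Pr[\Llift(X)\in\cS]=\E_{C\sim\Aalg(X)}[f(C)].
\]
Next I write the expectation of a $[0,1]$-valued function as an integral over level sets, $\int_0^1\Pr[f(C)>s]\,ds$. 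Applying the $(\varepsilon,\delta)$ inequality to each event $\{f(C)>s\}$ and integrating gives
\[
\Pr[\Llift(X)\in\cS]\le e^\varepsilon\Pr[\Llift(X')\in\cS]+\delta.
\]
The only subtlety is measurability of $f$, which I will take as part of the meaning of ``randomized conversion.'' Outputting $C$ alongside $M$ costs nothing, because $C$ is itself the output of $\Aalg$.

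\emph{Utility.} The cost of the output is $\cost_q(X,M;C)$. I condition on the realized $C$. Since $\Aalg$ outputs a distinct $k$-center set, the hypothesis of the definition applies to every realization. Because $M$ is drawn from $\Econv(C)$ independently of everything else given $C$, the guarantee gives
\[
\E[\cost_q(X,M;C)\mid C]\le\alpha\,\cost_q(X,C)
\]
pointwise in $C$. Taking expectations over $C\sim\Aalg(X)$ via the tower property yields the claim. All costs are nonnegative, so the conditional expectations are well defined, possibly infinite, and no integrability issues arise.

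\emph{Main obstacle.} The main point to handle carefully is that the conversion's randomness must be independent of $X$ given $C$. This is exactly the data-obliviousness hypothesis, and it is what legitimizes both the post-processing step and the conditional application of the factor-$\alpha$ bound. A second requirement is that $\Aalg$ always outputs distinct centers; otherwise the conversion, for example a threshold tree, need not be defined. Both requirements are built into the hypotheses, so once they are invoked explicitly the rest is routine.
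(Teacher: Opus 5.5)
Your proposal is correct and follows the paper's proof. Privacy comes from post-processing; you additionally spell out the standard layer-cake argument that the paper simply cites as an axiom. Utility comes from conditioning on the selected $C$, applying the factor-$\alpha$ guarantee pointwise, and taking expectations over $C$.
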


\begin{proof}
Privacy follows immediately because $\Llift$ applies only post-processing to the
output of $\Aalg$.

For the utility bound, condition on the sampled center set $C$.  By the
assumption on $\Econv$,
\[
  \E\bigl[\cost_q(X,\Llift(X)) \mid C\bigr]
  \le \alpha\,\cost_q(X,C).
\]
Taking expectations over $C$ completes the proof.
\end{proof}

\begin{corollary}[Preserving private clustering guarantees]
\label{appD:cor:preserve-private-guarantee}
Suppose the private center selector $\Aalg$ satisfies
\[
  \E[\cost_q(X,C)] \le \rho\,\OPT_{q,k}(X) + \Gamma
\]
for every dataset $X$, then the lifted algorithm satisfies
\[
  \E[\cost_q(X,\Llift(X))]
  \le \alpha\rho\,\OPT_{q,k}(X) + \alpha\Gamma.
\]
\end{corollary}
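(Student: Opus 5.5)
The corollary follows by chaining the utility half of Theorem~\ref{appD:thm:transfer-principle} with the assumed guarantee for $\Aalg$. Fix an arbitrary dataset $X$. The lifted algorithm $\Llift$ is built from $\Aalg$ and the data-oblivious conversion $\Econv$ exactly as in the theorem, so privacy is inherited from part~1 and only the cost bound needs an argument.

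The plan is as follows.

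\textbf{Step 1.} Apply part~2 of Theorem~\ref{appD:thm:transfer-principle} to obtain
\[
\E[\cost_q(X,\Llift(X))] \le \alpha\,\E[\cost_q(X,C)],
\]
where $C\leftarrow\Aalg(X)$. This uses nothing beyond the theorem itself: condition on $C$, invoke Definition~\ref{appD:def:data-oblivious-conversion}, and take the outer expectation. Because $C$ is a distinct $k$-center set, the conversion's guarantee applies for every realization of $C$.

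\textbf{Step 2.} Substitute the hypothesis $\E[\cost_q(X,C)] \le \rho\,\OPT_{q,k}(X) + \Gamma$. Multiplying both sides by $\alpha$ preserves the inequality, since $\alpha\ge 0$. Indeed, $\alpha\ge 1$ is implicit, because costs are nonnegative and the structured assignment is never cheaper than assigning each point to its nearest center. Distributing $\alpha$ then gives $\alpha\rho\,\OPT_{q,k}(X) + \alpha\Gamma$.

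There is no real obstacle here. The only points to check are bookkeeping:
\begin{itemize}
\item $\alpha$ must be nonnegative, so that multiplying the hypothesis by $\alpha$ is a monotone operation.
\item The expectation in the hypothesis must be over the same randomness of $\Aalg$ as the expectation produced by the theorem.
\item The quantities $\rho$ and $\Gamma$ must not depend on the conversion's randomness. This holds because $\Econv$ is applied only after $C$ is drawn.
\end{itemize}
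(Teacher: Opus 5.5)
Your proposal is correct and follows essentially the paper's own argument: apply the utility bound of the transfer principle, then substitute the assumed selector guarantee and multiply through by $\alpha$. Your explicit check that $\alpha \ge 0$, so that multiplying by $\alpha$ preserves the inequality, is a small piece of bookkeeping the paper leaves implicit.
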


\begin{proof}
The claim follows by substituting the assumed bound into
\cref{appD:thm:transfer-principle}.
\end{proof}

\begin{corollary}[Explainable $k$-median and $k$-means]
\label{appD:cor:explainable-specialization}
Applying \cref{appD:thm:transfer-principle} with the conversions from
\cref{appD:thm:imported-source} yields the following consequences.
\begin{enumerate}[leftmargin=1.5em]
  \item For $k$-median,
  \[
    \E[\cost_1(X,\Llift_1(X))]
    \le (1+H_{k-1})\,\E[\cost_1(X,C)].
  \]
  \item For $k$-means,
  \[
    \E[\cost_2(X,\Llift_2(X))]
    \le \alpha^{\means}_k\,\E[\cost_2(X,C)],
  \]
  where $\alpha^{\means}_k = O(k\log\log k)$.
\end{enumerate}
In particular, if the selector has multiplicative factor $\rho>1$, then the
\emph{overall} multiplicative factor in front of $\OPT$ is $\alpha\rho$ rather
than just $\alpha$.
\end{corollary}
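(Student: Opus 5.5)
The corollary follows by applying \cref{appD:thm:transfer-principle} twice, once for each conversion in \cref{appD:thm:imported-source}. For $k$-median ($q=1$), take $\Econv$ to be the independent-recursion random-thresholds conversion from item~1 of \cref{appD:thm:imported-source}. By that item it is data-oblivious with factor $\alpha = 1+H_{k-1}$ in the sense of \cref{appD:def:data-oblivious-conversion}. Let $\Aalg$ be any $(\varepsilon,\delta)$-private distinct-center selector and let $\Llift_1$ be the lifted algorithm. Part~2 of the transfer principle then gives $\E[\cost_1(X,\Llift_1(X))]\le (1+H_{k-1})\E[\cost_1(X,C)]$ directly. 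The $k$-means case ($q=2$) is identical: use the conversion of item~2 with factor $\alpha^{\means}_k=O(k\log\log k)$.

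Before invoking the theorems, I would check their hypotheses explicitly. The selector must output a \emph{distinct} $k$-center set, because the threshold tree has to separate the centers. This is assumed of $\Aalg$ in the transfer principle, so the imported guarantees apply to every realization of $C$. The cost convention $\cost_q(X,M;C)$ must match the $\ell_q^q$ objective used here, which \cref{appD:thm:imported-source} asserts. Privacy of $\Llift_q$ is part~1 of the transfer principle and is not needed for the displayed inequalities, though it is worth noting.

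For the final sentence, suppose the selector satisfies $\E[\cost_q(X,C)]\le\rho\,\OPT_{q,k}(X)+\Gamma$. Substituting this into the two displayed bounds, exactly as in \cref{appD:cor:preserve-private-guarantee}, gives $\alpha\rho\,\OPT_{q,k}(X)+\alpha\Gamma$. The factor in front of $\OPT$ is therefore $\alpha\rho$, since multiplying the selector's bound by the conversion factor leaves no mechanism for absorbing $\rho$.

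I expect no real obstacle; the proof is bookkeeping. The only point needing care is that conditioning on $C$ and then averaging is legitimate. This holds because the conversion's randomness is independent of $X$ given $C$, which is exactly data-obliviousness, and the step is already handled inside the proof of the transfer principle.
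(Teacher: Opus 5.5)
Your proposal is correct and follows the paper's own route: the corollary comes from directly instantiating the transfer principle with the two imported threshold-tree conversions, whose factors are $1+H_{k-1}$ and $\alpha^{\means}_k$. The closing remark about the factor $\alpha\rho$ is exactly the substitution carried out in \cref{appD:cor:preserve-private-guarantee}, as you note.
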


\begin{remark}[Scope boundary and sanity-check example]
\label{appD:rem:scope-boundary}
The transfer principle applies to any structured clustering notion admitting
a data-oblivious conversion; it is not specific to threshold trees.  As a
trivial sanity check, the nearest-center Voronoi rule is a deterministic
data-oblivious conversion with $\alpha=1$, so the definition interpolates
between ordinary private clustering and genuinely structured private
clustering.  The paper does \emph{not} claim that every explainability
formalism admits such a conversion: once the structure depends on the private
data beyond the selected centers, the post-processing argument breaks.
\end{remark}

\begin{remark}[Other privacy models]
\label{appD:rem:other-privacy-models}
Since the proof uses only the post-processing axiom, the same lifting extends
to local DP~\cite{appD:stemmer2021local}, shuffle privacy, and any other model in
which post-processing is free.  What changes across models is the attainable
selector guarantee, not the lifting proof.
\end{remark}

\subsection{Design-Space Analysis: Why Privatize Only the Center Selector?}

The transfer principle suggests a selector-first design, but several
alternative routes exist.  We next quantify why each is less attractive.

\paragraph{Privatize each threshold cut.}
A direct approach privatizes the $k-1$ adaptive split-selection steps in the
threshold-tree recursion.  Under basic sequential composition with total budget
$\varepsilon$, each step receives at most $\varepsilon/(k-1)$.  Even if every
split selects from at most $d|\calG|$ candidates with the same per-point
sensitivity $\Lambda_q$, the exponential mechanism incurs per-step expected
additive cost
\[
  \frac{2\Lambda_q}{\varepsilon/(k-1)}\bigl(\ln(d|\calG|)+1\bigr)
  = \frac{2\Lambda_q(k-1)}{\varepsilon}\bigl(\ln(d|\calG|)+1\bigr).
\]
Summing over $k-1$ steps, the total expected additive error is at least
\begin{equation}\label{appD:eq:split-composition}
  \frac{2\Lambda_q(k-1)^2}{\varepsilon}\bigl(\ln(d|\calG|)+1\bigr),
\end{equation}
compared with the center-first tax of
$(2\Lambda_q/\varepsilon)(\ln \binom{|\calG|}{k}+1) \le
(2\Lambda_q/\varepsilon)(k\ln(e|\calG|/k)+1)$.
The split-by-split route therefore pays an extra factor of order $k$ in the
additive term.  Under advanced
composition~\cite{appD:dwork2006calibrating} with privacy relaxation to
$(\varepsilon,\delta)$-DP, the $k^2$ scaling improves to roughly
$k^{3/2}\sqrt{\ln(1/\delta)}$, which is still worse than the $k\ln|\calG|$
scaling achieved by the center-first design and additionally introduces
$\delta > 0$.
Moreover, the Gupta et al.\ explainability factor is no longer inherited as a
black box: a fresh argument would be needed to show that the noisy recursive
cuts still preserve the desired inflation factor.

\paragraph{Select a structured model directly.}
Running the exponential mechanism over the space of labeled threshold trees is
formally valid, but the range size is
\[
  |\calT_k|\le C_{k-1}\cdot (d|\calG|)^{k-1},
\]
where $C_{k-1}$ is the $(k-1)$th Catalan number (counting binary-tree
topologies).  Hence the additive term contains
$\ln|\calT_k|\ge (k-1)\ln(d|\calG|)+\ln C_{k-1}$
in place of $\ln\binom{|\calG|}{k}\le k\ln(e|\calG|/k)$.
For large $d$, the structured-output route pays an extra $\Theta((k-1)\ln d)$
in the exponent, and the privacy analysis is no longer modular with respect to
the selector.

\paragraph{Other routes.}
Privately generating the candidate set $\calG$ may be a viable systems
strategy, but it falls outside the clean finite-domain theorem: once $\calG$
is data-dependent, the privacy analysis of the selector, the conversion, and
the candidate-generation stage must be composed jointly.  Jointly searching
over center--structure pairs $(C,M)$ couples the geometric optimization and
the structure choice into a single large state space for which no imported
non-private approximation theorem currently exists.

\paragraph{Summary.}
The center-first route is the only one that simultaneously (i)~avoids adaptive
composition, (ii)~inherits the source paper's explainability factor as a black
box, and (iii)~yields a finite-domain additive tax depending only on
$\ln \binom{|\calG|}{k}$.  It is the cleanest provable design we currently
know.

\subsection{Finite-Domain Instantiation over Distinct Center Sets}

We next record a self-contained finite-domain instantiation.  Let
$\calG\subseteq \calD\subseteq \R^d$ be a public candidate set.  Define
\[
  \calG^{(k)} := \{C\subseteq \calG : |C|=k\},
  \qquad
  \OPT^{\calG}_{q,k}(X) := \min_{C\in \calG^{(k)}} \cost_q(X,C).
\]
Sampling from $\calG^{(k)}$ rather than $\calG^k$ ensures that the output center
set is distinct, exactly matching the hypothesis of
\cref{appD:thm:imported-source}.

Let
\[
  \Lambda_q := \max_{x\in \calD,\ c\in \calG} \|x-c\|_q^q.
\]
The next lemma identifies the sensitivity required by the exponential
mechanism.

\begin{lemma}[Replace-one sensitivity]
\label{appD:lem:sensitivity}
For the score function $s(X,C) := -\cost_q(X,C)$ on the range
$\calG^{(k)}$, the replace-one sensitivity is at most $\Lambda_q$:
for every neighboring pair $X,X'$ and every $C\in \calG^{(k)}$,
\[
  |s(X,C)-s(X',C)| \le \Lambda_q.
\]
\end{lemma}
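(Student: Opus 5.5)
The argument is a direct per-point comparison, since the cost is additive over data points. Fix neighboring datasets $X,X'$ that differ by replacing a single point, say $x\in X$ by $x'\in X'$, with all other points common to both multisets. Fix any $C\in\calG^{(k)}$. Writing $\cost_q(X,C)=\sum_{y\in X}\min_{c\in C}\|y-c\|_q^q$, the contributions of the shared points cancel in the difference, leaving
\[
  s(X,C)-s(X',C) \;=\; \min_{c\in C}\|x'-c\|_q^q-\min_{c\in C}\|x-c\|_q^q .
\]

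It remains to bound each of the two terms on the right between $0$ and $\Lambda_q$. Both $x$ and $x'$ lie in $\calD$, and $C\subseteq\calG$. For any $y\in\calD$, the quantity $\min_{c\in C}\|y-c\|_q^q$ is nonnegative. It is also at most $\|y-c_0\|_q^q$ for any fixed $c_0\in C$, and this is at most $\Lambda_q=\max_{x\in\calD,\,c\in\calG}\|x-c\|_q^q$ by definition.

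Both terms therefore lie in $[0,\Lambda_q]$, so their difference lies in $[-\Lambda_q,\Lambda_q]$. Hence $|s(X,C)-s(X',C)|\le\Lambda_q$ for every neighboring pair and every $C\in\calG^{(k)}$, which is the claim.

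There is no real obstacle. The only care needed is to use the replace-one neighbor notion fixed in the model section, rather than add/remove, and to confirm that the maximum defining $\Lambda_q$ exists. If $\calD$ is unbounded we can read it as a supremum, and the bound still holds (vacuously if it is infinite). Nonnegativity of each term is what gives $\Lambda_q$ rather than $2\Lambda_q$.
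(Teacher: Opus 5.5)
Your proposal is correct and follows essentially the same argument as the paper. Both cancel the shared points, reduce to the difference of two nonnegative nearest-center terms, and bound that difference by $\max_{z\in\{x,x'\}}\min_{c\in C}\|z-c\|_q^q\le\Lambda_q$.
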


\begin{proof}
The two datasets differ in exactly one point, say $x$ versus $x'$.  Hence
\[
  |s(X,C)-s(X',C)|
  = \left|\min_{c\in C}\|x-c\|_q^q - \min_{c\in C}\|x'-c\|_q^q\right|
  \le \max_{z\in\{x,x'\}} \min_{c\in C}\|z-c\|_q^q
  \le \Lambda_q.
\]
\end{proof}

\begin{proposition}[Exponential mechanism over distinct $k$-subsets]
\label{appD:prop:subset-em}
The exponential mechanism on the range $\calG^{(k)}$ with score
$s(X,C)=-\cost_q(X,C)$ and sensitivity bound $\Lambda_q$ is
$\varepsilon$-differentially private and outputs a distinct center set $C$ such
that
\[
  \E[\cost_q(X,C)]
  \le \OPT^{\calG}_{q,k}(X)
     + \frac{2\Lambda_q}{\varepsilon}\left(\ln \binom{|\calG|}{k}+1\right).
\]
Moreover,
\[
  \ln \binom{|\calG|}{k} \le k\ln\!\left(\frac{e|\calG|}{k}\right)
  \le k\ln |\calG| + k.
\]
\end{proposition}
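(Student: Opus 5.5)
The plan is to invoke the standard exponential-mechanism guarantee on the finite range $R=\calG^{(k)}$. Define the mechanism to output $C\in R$ with probability proportional to $\exp\bigl(\varepsilon s(X,C)/(2\Lambda_q)\bigr)$. For privacy, \cref{appD:lem:sensitivity} shows that replacing one point changes every score by at most $\Lambda_q$. Hence each numerator weight changes by a factor at most $e^{\varepsilon/2}$, and the normalizing sum changes by the same factor in the opposite direction. The ratio of output probabilities on neighboring inputs is therefore at most $e^{\varepsilon}$, which gives pure $\varepsilon$-DP. Distinctness of the output is automatic, because every element of $\calG^{(k)}$ is a $k$-subset.

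For utility, write $\mathrm{OPT}=\OPT^{\calG}_{q,k}(X)$, which is the maximum of $s$ negated. I will use the tail bound of the exponential mechanism. Fix $t>0$ and let $B=\{C:\cost_q(X,C)\ge \mathrm{OPT}+t\}$. Its total unnormalized weight is at most $|R|\exp\bigl(-\varepsilon(\mathrm{OPT}+t)/(2\Lambda_q)\bigr)$. The normalizer is at least the weight of the optimal $C^\star$, namely $\exp(-\varepsilon\,\mathrm{OPT}/(2\Lambda_q))$. This gives
\[
\Pr[\cost_q(X,C)\ge \mathrm{OPT}+t]\le |R|\,e^{-\varepsilon t/(2\Lambda_q)}.
\]
To turn this into an expectation bound, integrate the tail. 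Set $t_0=(2\Lambda_q/\varepsilon)\ln|R|$ and bound the probability by $1$ below $t_0$. Then
\[
\E[\cost_q(X,C)-\mathrm{OPT}]\le t_0+\int_{t_0}^\infty |R|e^{-\varepsilon t/(2\Lambda_q)}\,dt = t_0+\frac{2\Lambda_q}{\varepsilon}.
\]
Since $|R|=\binom{|\calG|}{k}$, this is exactly the stated additive term $\frac{2\Lambda_q}{\varepsilon}(\ln\binom{|\calG|}{k}+1)$. Note that $\cost_q(X,C)-\mathrm{OPT}\ge 0$ always, so the tail-integral formula applies. If $|\calG|<k$ the range is empty, and the statement implicitly assumes $|\calG|\ge k$.

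The final inequality is the standard binomial estimate. First, $\binom{n}{k}\le n^k/k!$, and $k!\ge (k/e)^k$ from $e^k=\sum_j k^j/j!\ge k^k/k!$. Together these give $\ln\binom{n}{k}\le k\ln(en/k)=k\ln n+k-k\ln k\le k\ln n+k$, using $k\ge1$.

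The only delicate point is the expectation bound. The familiar high-probability statement alone would lose a $\ln(1/\beta)$ term, so the tail integration with the cutoff at $t_0$ is what produces the exact constant $+1$. Everything else is routine, given \cref{appD:lem:sensitivity}.
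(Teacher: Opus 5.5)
Your proposal is correct and follows essentially the same route as the paper: privacy from the standard exponential-mechanism theorem combined with the sensitivity bound of \cref{appD:lem:sensitivity}, the usual tail bound over a range of size $\binom{|\calG|}{k}$, integration of that tail to obtain the $+1$, and the estimate $\binom{n}{k}\le (en/k)^k$. Your cutoff at $t_0=(2\Lambda_q/\varepsilon)\ln\binom{|\calG|}{k}$ is just a reparametrization of the paper's tail bound $\Pr\bigl[\cost_q(X,C)>\OPT^{\calG}_{q,k}(X)+\frac{2\Lambda_q}{\varepsilon}(\ln\binom{|\calG|}{k}+t)\bigr]\le e^{-t}$. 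You also spell out the details the paper leaves as standard, namely the two $e^{\varepsilon/2}$ factors and $k!\ge (k/e)^k$.
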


\begin{proof}
Privacy follows from the standard exponential-mechanism theorem together with
\cref{appD:lem:sensitivity}~\cite{appD:mcsherry2007mechanism}.  The range size is
$|\calG^{(k)}| = \binom{|\calG|}{k}$, so the usual tail bound gives, for every
$t\ge 0$,
\[
  \Prb\!\left[\cost_q(X,C) > \OPT^{\calG}_{q,k}(X)
     + \frac{2\Lambda_q}{\varepsilon}\left(\ln \binom{|\calG|}{k} + t\right)\right]
  \le e^{-t}.
\]
Integrating this tail inequality yields the expectation bound.  The bound on
$\ln \binom{|\calG|}{k}$ is the standard estimate
$\binom{n}{k}\le (en/k)^k$.
\end{proof}

\begin{remark}[Software upper-bound shortcut]
\label{appD:rem:simple-upper-bounds}
The trivial inequality $\binom{|\calG|}{k}\le |\calG|^k$ gives the simpler
bound $\ln \binom{|\calG|}{k} \le k\ln |\calG|$, whose gap from
$k\ln(e|\calG|/k)$ is exactly $k(\ln k-1)$.  Both are used in the companion
software.
\end{remark}

\begin{corollary}[Finite-domain private explainable $k$-median]
\label{appD:cor:finite-kmedian}
There is an $\varepsilon$-differentially private explainable $k$-median
algorithm such that for every dataset $X\subseteq \calD$,
\[
  \E[\cost_1(X,\Llift_1(X))]
  \le (1+H_{k-1})\!\left(
      \OPT^{\calG}_{1,k}(X)
      + \frac{2\Lambda_1}{\varepsilon}\left(\ln \binom{|\calG|}{k}+1\right)
    \right).
\]
\end{corollary}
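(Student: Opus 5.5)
The plan is to obtain the corollary by composing three earlier results: the exponential mechanism of \cref{appD:prop:subset-em} serves as the private selector, the $k$-median threshold-tree conversion of \cref{appD:thm:imported-source} serves as the data-oblivious conversion, and \cref{appD:thm:transfer-principle} glues them together.

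Concretely, I would fix $q=1$ and let $\Aalg$ be the exponential mechanism on the range $\calG^{(k)}$ with score $s(X,C)=-\cost_1(X,C)$ and sensitivity bound $\Lambda_1$. By \cref{appD:lem:sensitivity} and \cref{appD:prop:subset-em}, $\Aalg$ is $\varepsilon$-differentially private, i.e.\ $(\varepsilon,0)$-DP. Its output always lies in $\calG^{(k)}$, so it is a \emph{distinct} $k$-center set; this is exactly the hypothesis required both by \cref{appD:thm:transfer-principle} and by part~1 of \cref{appD:thm:imported-source}. The proposition also gives
\[
\E[\cost_1(X,C)]\le \OPT^{\calG}_{1,k}(X)+\frac{2\Lambda_1}{\varepsilon}\left(\ln\binom{|\calG|}{k}+1\right).
\]
Next, I would take $\Econv$ to be the independent-recursion random-thresholds conversion. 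By part~1 of \cref{appD:thm:imported-source}, it is data-oblivious with factor $\alpha=1+H_{k-1}$ in the sense of \cref{appD:def:data-oblivious-conversion}.

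Applying \cref{appD:thm:transfer-principle} to the lifted algorithm $\Llift_1$ then gives two things. First, $\Llift_1$ is $(\varepsilon,0)$-DP by post-processing. Second, $\E[\cost_1(X,\Llift_1(X))]\le(1+H_{k-1})\,\E[\cost_1(X,C)]$. Substituting the displayed selector bound into this inequality yields the claim. Equivalently, one can invoke \cref{appD:cor:preserve-private-guarantee} with $\rho=1$, taking $\Gamma$ to be the additive term, and replacing $\OPT_{1,k}$ by $\OPT^{\calG}_{1,k}$; the proof of that corollary is pure substitution, so it applies verbatim to any benchmark quantity.

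The only point needing care is this benchmark mismatch: the corollary is stated relative to $\OPT^{\calG}_{1,k}$ rather than $\OPT_{1,k}$. I would handle it by arguing directly from \cref{appD:thm:transfer-principle} instead of the general corollary, so that no comparison between the two optima is required. Beyond that I expect no real obstacle. It suffices to double-check that $\cost_1(X,\Llift_1(X))$ means the cost of the tree-induced assignment $\cost_1(X,M;C)$, which is consistent with the conventions of \cref{appD:thm:imported-source}.
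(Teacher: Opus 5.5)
Your proposal is correct and matches the paper's proof, which simply combines the exponential mechanism over distinct $k$-subsets (\cref{appD:prop:subset-em}) with the $k$-median specialization of the transfer principle (\cref{appD:cor:explainable-specialization}). Invoking \cref{appD:thm:transfer-principle} directly, instead of that specialization, is the same argument with one step unpacked.
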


\begin{proof}
Combine \cref{appD:prop:subset-em,appD:cor:explainable-specialization}.
\end{proof}

\begin{corollary}[Finite-domain private explainable $k$-means]
\label{appD:cor:finite-kmeans}
There is an $\varepsilon$-differentially private explainable $k$-means
algorithm such that for every dataset $X\subseteq \calD$,
\[
  \E[\cost_2(X,\Llift_2(X))]
  \le \alpha^{\means}_k\!\left(
      \OPT^{\calG}_{2,k}(X)
      + \frac{2\Lambda_2}{\varepsilon}\left(\ln \binom{|\calG|}{k}+1\right)
    \right),
\]
where $\alpha^{\means}_k = O(k\log\log k)$.
\end{corollary}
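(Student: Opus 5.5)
The plan is to mirror the proof of \cref{appD:cor:finite-kmedian} exactly, replacing the $k$-median conversion by the $k$-means conversion. First, I take the private center selector $\Aalg$ to be the exponential mechanism over the distinct $k$-subsets $\calG^{(k)}$ with score $s(X,C)=-\cost_2(X,C)$ and sensitivity bound $\Lambda_2$. By \cref{appD:lem:sensitivity} (with $q=2$) and \cref{appD:prop:subset-em}, this selector is $\varepsilon$-differentially private. It always outputs a \emph{distinct} center set, and it satisfies
\[
  \E[\cost_2(X,C)] \le \OPT^{\calG}_{2,k}(X) + \frac{2\Lambda_2}{\varepsilon}\left(\ln\binom{|\calG|}{k}+1\right).
\]

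Second, I take $\Econv$ to be the data-oblivious $k$-means threshold-tree conversion of part~2 of \cref{appD:thm:imported-source}, which has factor $\alpha^{\means}_k=O(k\log\log k)$. Here I need to check the hypotheses of \cref{appD:thm:transfer-principle}. The selector must output distinct centers, which holds because the range is $\calG^{(k)}$ rather than $\calG^k$. The conversion must depend only on $C$, which is exactly the data-obliviousness assumption of \cref{appD:def:data-oblivious-conversion}.

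Third, I apply \cref{appD:thm:transfer-principle}, or equivalently \cref{appD:cor:explainable-specialization}(2). The lifted algorithm $\Llift_2$ is then $\varepsilon$-DP by post-processing (the case $\delta=0$), and for every $X$ it satisfies $\E[\cost_2(X,\Llift_2(X))]\le\alpha^{\means}_k\,\E[\cost_2(X,C)]$. Substituting the bound from the first step gives the stated inequality. This is \cref{appD:cor:preserve-private-guarantee} with $\rho=1$ and $\Gamma=(2\Lambda_2/\varepsilon)(\ln\binom{|\calG|}{k}+1)$.

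No step is genuinely hard. The only point needing care is that the imported $k$-means guarantee holds \emph{pointwise for every} distinct $C$ and every $X$, with the same $\ell_2^2$ cost convention. This is what justifies conditioning on $C$ and then taking expectations in the transfer step. I would also note that when $|\calG|<k$ the range $\calG^{(k)}$ is empty, so the statement implicitly assumes $|\calG|\ge k$.
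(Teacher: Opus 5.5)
Your proposal is correct and is the same argument the paper intends: the $k$-means corollary is stated without proof, but its $k$-median twin is proved by combining \cref{appD:prop:subset-em} with \cref{appD:cor:explainable-specialization}, and you do exactly this using part~2 and $\Lambda_2$. One cosmetic caveat is that \cref{appD:cor:preserve-private-guarantee} is phrased with $\OPT_{q,k}$ rather than $\OPT^{\calG}_{q,k}$, so the accurate citation is \cref{appD:thm:transfer-principle} followed by direct substitution, which is what your main line already does.
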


\begin{corollary}[Efficient-selector plug-in]
\label{appD:cor:ghazi-plugin}
Let $\Aalg^{\mathrm{GKM}}_1$ denote a polynomial-time
$\varepsilon$-differentially private $k$-median selector from
Ghazi, Kumar, and Manurangsi~\cite{appD:ghazi2020tight} whose published utility
guarantee can be written in the form
\[
  \E[\cost_1(X,C)]
  \le \rho^{\mathrm{GKM}}_1(n,d,k,\varepsilon)\,\OPT_{1,k}(X)
    + \Gamma^{\mathrm{GKM}}_1(n,d,k,\varepsilon).
\]
If $\Aalg^{\mathrm{GKM}}_1$ outputs a distinct center set under the same
$\ell_1$ cost convention, then there is a polynomial-time
$\varepsilon$-differentially private explainable $k$-median algorithm such that
\[
  \E[\cost_1(X,\Llift_1(X))]
  \le (1+H_{k-1})\!\left(
      \rho^{\mathrm{GKM}}_1(n,d,k,\varepsilon)\,\OPT_{1,k}(X)
      + \Gamma^{\mathrm{GKM}}_1(n,d,k,\varepsilon)
    \right).
\]
\end{corollary}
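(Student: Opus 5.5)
The plan is to obtain this as a direct instance of \cref{appD:thm:transfer-principle} and \cref{appD:cor:preserve-private-guarantee}, with $q=1$. Take the selector $\Aalg:=\Aalg^{\mathrm{GKM}}_1$ and the conversion $\Econv$ given by the independent-recursion random-thresholds conversion of item~1 of \cref{appD:thm:imported-source}. That conversion is data-oblivious with factor $\alpha=1+H_{k-1}$. Define $\Llift_1(X)$ by sampling $C\leftarrow \Aalg^{\mathrm{GKM}}_1(X)$, then $M\leftarrow\Econv(C)$, and outputting $(M,C)$.

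The first step is to check the hypotheses of \cref{appD:thm:transfer-principle}. The selector is $\varepsilon$-DP by assumption, which is the case $\delta=0$. It outputs a distinct $k$-center set, which is exactly the extra hypothesis in the statement; this matters because the threshold-tree conversion must separate the centers. It also uses the same $\ell_1$ cost convention, so that the imported $(1+H_{k-1})$ bound applies to the very quantity $\cost_1(X,C)$ bounded by the GKM guarantee.

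Part~1 of the transfer principle then gives privacy, since $\Llift_1$ only post-processes $C$. Part~2, combined with the assumed utility bound, gives
\[
\E[\cost_1(X,\Llift_1(X))]\le(1+H_{k-1})\,\E[\cost_1(X,C)]\le(1+H_{k-1})\bigl(\rho^{\mathrm{GKM}}_1\OPT_{1,k}(X)+\Gamma^{\mathrm{GKM}}_1\bigr).
\]
This is precisely \cref{appD:cor:preserve-private-guarantee} with $\rho=\rho^{\mathrm{GKM}}_1$ and $\Gamma=\Gamma^{\mathrm{GKM}}_1$.

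The remaining step is efficiency, and it is the only part not already packaged by earlier results. $\Aalg^{\mathrm{GKM}}_1$ runs in polynomial time by assumption. For the random-thresholds conversion, I would argue that it builds a tree with at most $k-1$ internal nodes. Each cut samples a coordinate and a threshold from an explicit distribution over the bounding box of the $k$ current centers, so each step costs $\mathrm{poly}(k,d)$ with real-valued sampling, or polynomial bit-complexity with the usual discretization. Independent recursion terminates because every cut that is kept separates at least one pair of centers. I would still need to check the expected number of rejected or useless cuts. The plan is to resample only among thresholds lying strictly between center coordinates, which guarantees progress at every step, and to check against the definition in \cite{appD:gupta2023price} that this conditioning is how their distribution is specified. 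Assigning the $n$ points costs $O(nk)$. Everything else is immediate, so this running-time verification is the main, if mild, obstacle.
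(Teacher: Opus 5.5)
Your proposal is correct and matches the paper's proof, which simply applies \cref{appD:cor:preserve-private-guarantee} with the $k$-median random-thresholds conversion from \cref{appD:thm:imported-source} and notes that both the selector and the conversion run in polynomial time. Your extra running-time detail (sampling each node's cut directly from the distribution restricted to separating cuts) elaborates on what the paper only asserts, and it is harmless: in the independent-recursion version this gives the same tree distribution as sampling and discarding non-separating cuts, so the factor $1+H_{k-1}$ still applies.
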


\begin{proof}
Apply \cref{appD:cor:preserve-private-guarantee} with the $k$-median conversion from
\cref{appD:thm:imported-source}.  The running time remains polynomial because both
the selector and the random-threshold conversion are polynomial-time.
\end{proof}

\begin{remark}[Compatibility requirements and further plug-ins]
\label{appD:rem:efficient-plugins}
The displayed corollary isolates the only two compatibility checks needed when
importing a selector theorem from the private clustering literature: the
selector's utility guarantee must be stated for the same $\ell_q^q$ objective,
and the selector must output a distinct center set to which the imported
conversion theorem applies.  Under the same conditions, the identical template
extends to other efficient selectors as well, including the private $k$-means
results of Kaplan and Stemmer~\cite{appD:kaplan2018kmeans}.  We state the
$k$-median/Ghazi et al. corollary explicitly because it already yields a fully
polynomial end-to-end theorem with the exact $(1+H_{k-1})$ explainability
factor.  For selectors whose outputs are absolutely continuous in the center
coordinates, exact collisions occur with probability zero, so the
distinct-center hypothesis is typically automatic; discretized or snapped
selectors must verify it separately.  More generally, \cref{appD:def:data-oblivious-conversion} is semantic rather than
algorithmic: for practical use, one also wants the conversion itself to be
computable efficiently, as in the threshold-tree case.
In particular, any compatible selector theorem of the form
$O(1)\cdot \OPT_{1,k}(X) + \mathrm{poly}(d,k,1/\varepsilon,\log n)$ immediately
yields an explainable bound of the form
$(1+H_{k-1})\!\left(O(1)\cdot \OPT_{1,k}(X) + \mathrm{poly}(d,k,1/\varepsilon,\log n)\right)$.
\end{remark}

\begin{remark}[Concrete plug-in illustration]
\label{appD:rem:concrete-plugin}
To make the plug-in corollary tangible, consider a private $k$-median selector
with multiplicative constant $\rho$ and additive term $\Gamma$.
\Cref{appD:cor:preserve-private-guarantee} gives an explainable guarantee of
$(1+H_{k-1})\!\bigl(\rho\,\OPT + \Gamma\bigr)$.
The explainability overhead $(1+H_{k-1})$ is moderate and grows only
logarithmically with $k$:
\begin{center}
\renewcommand{\arraystretch}{1.1}
\begin{tabular}{rccc}
 $k$ & $1+H_{k-1}$ & Lifted mult.\ $\rho' = (1+H_{k-1})\rho$ & Lifted add.\ $\Gamma' = (1+H_{k-1})\Gamma$ \\
\hline
 $2$  & $2.000$ & $2.0\,\rho$ & $2.0\,\Gamma$ \\
 $3$  & $2.500$ & $2.5\,\rho$ & $2.5\,\Gamma$ \\
 $5$  & $3.083$ & $3.1\,\rho$ & $3.1\,\Gamma$ \\
 $10$ & $3.829$ & $3.8\,\rho$ & $3.8\,\Gamma$ \\
\end{tabular}
\end{center}
This table does not evaluate the theorem-internal constants of Ghazi et
al.~\cite{appD:ghazi2020tight}, which would require re-deriving their additive term
under the $\ell_1$ convention and the distinct-center hypothesis; we leave that
concrete instantiation for future work.
\end{remark}

\begin{remark}[Public-domain assumption]
\label{appD:rem:public-domain}
The quantity $\Lambda_q$ is a valid privacy certificate only for datasets lying
in the declared public domain $\calD$.  Accordingly, an implementation that
derives $\Lambda_q$ from a public description of $\calD$ must either validate
that all points lie in $\calD$ or explicitly clip points into $\calD$ before
using the sensitivity bound.  If the software instead accepts a user-supplied
public sensitivity certificate, then the burden of validating that certificate
rests with the caller.  Absent one of these two contracts, the finite-domain theorem does not certify
differential privacy.
\end{remark}

\subsection{An Additive Privacy Tax Is Unavoidable}

The transfer principle preserves multiplicative explainability factors, but it
does not remove the additive price imposed by privacy itself.  The key point is
that the basic two-point ambiguity persists even when the algorithm is allowed
to output more than one center.

\begin{theorem}[Additive lower bound for every $k$]
\label{appD:thm:additive-lower-bound}
Fix an integer $k\ge 1$ and a parameter $\Delta>0$.  Let $\mathcal{M}$ be any
$(\varepsilon,\delta)$-differentially private algorithm that outputs $k$
(not necessarily distinct) centers in $\R$.  Define the neighboring datasets
\[
  X^{(k)}_0 := \{0,4\Delta,8\Delta,\ldots,4(k-1)\Delta\},
  \qquad
  X^{(k)}_\Delta := \{\Delta,4\Delta,8\Delta,\ldots,4(k-1)\Delta\}.
\]
Then for at least one
$X\in\{X^{(k)}_0,X^{(k)}_\Delta\}$,
\[
  \E[\cost_1(X,\mathcal{M}(X))]
  \ge \frac{(1-\delta)\Delta}{1+e^{\varepsilon}}.
\]
Moreover,
\[
  \OPT_{1,k}(X^{(k)}_0)=\OPT_{1,k}(X^{(k)}_\Delta)=0.
\]
Consequently, for every $k\ge 1$, no private algorithm can satisfy a purely
multiplicative guarantee of the form
$\E[\cost_1(X,\mathcal{M}(X))] \le \beta\,\OPT_{1,k}(X)$ for all datasets $X$
and any finite $\beta$.
\end{theorem}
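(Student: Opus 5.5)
The plan is a standard two-point (Le Cam style) argument localized at the first point, with the remaining $k-1$ points acting only as padding. First I verify the easy facts. The two datasets differ only in their first point, $0$ versus $\Delta$, so they are neighbors under replace-one adjacency. Each has exactly $k$ distinct points, so placing a center on every data point gives $\OPT_{1,k}=0$ for both. The final ``consequently'' clause then follows at once. A guarantee $\E[\cost_1]\le\beta\cdot 0$ would force expected cost $0$ on both inputs, contradicting the lower bound $(1-\delta)\Delta/(1+e^{\varepsilon})>0$ for any $\delta<1$.

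For the lower bound, I track the distance from the first point to the output centers. Define events on the output $C$:
\[
E_0:=\{\min_{c\in C}|c|\ge \Delta/2\},\qquad E_\Delta:=\{\min_{c\in C}|c-\Delta|\ge \Delta/2\}.
\]
Every $C$ lies in $E_0\cup E_\Delta$, because the closest center to $\Delta/2$ is at distance at least $\Delta/2$ from $0$ or from $\Delta$. On $E_0$ the cost on $X^{(k)}_0$ is at least $\Delta/2$, coming from the point $0$ alone, and similarly for $E_\Delta$ on $X^{(k)}_\Delta$.

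So I want to show $p_0:=\Pr[\mathcal M(X_0^{(k)})\in E_0]$ and $p_\Delta:=\Pr[\mathcal M(X_\Delta^{(k)})\in E_\Delta]$ cannot both be small. By DP applied to the event $E_0$,
\[
p_0\ge e^{-\varepsilon}\bigl(\Pr[\mathcal M(X_\Delta^{(k)})\in E_0]-\delta\bigr)\ge e^{-\varepsilon}(1-p_\Delta-\delta),
\]
using $E_0\supseteq E_\Delta^c$. Then I take the larger of the two expected costs:
\[
\max\ge \tfrac{\Delta}{2}\max(p_0,p_\Delta)\ge \tfrac{\Delta}{2}\cdot\frac{1-\delta}{1+e^{\varepsilon}}.
\]
This comes from balancing $p=e^{-\varepsilon}(1-p-\delta)$, which gives $p=(1-\delta)/(1+e^{\varepsilon})$.

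The main obstacle is the constant: this naive route loses a factor $2$ against the stated bound $(1-\delta)\Delta/(1+e^{\varepsilon})$. To recover it, I would use the exact cost $f(C)=\min_c|c|$ and $g(C)=\min_c|c-\Delta|$, which satisfy $f+g\ge\Delta$ pointwise, instead of threshold events. I would write $\E_{X_0}[f]=\int_0^\infty\Pr_{X_0}[f>s]\,ds$ and apply DP to each level set, together with $\{g\le \Delta-s\}\subseteq\{f\ge s\}$ (which follows from $f+g\ge\Delta$).

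Integrating over $s\in[0,\Delta]$ and adding the symmetric inequality should give
\[
e^{\varepsilon}\E_0[f]+\E_\Delta[g]\ \ge\ \int_0^\Delta\bigl(\Pr_\Delta[g\le\Delta-s]+\Pr_\Delta[g>\Delta-s]-\delta\bigr)ds=(1-\delta)\Delta,
\]
and symmetrically $\E_0[f]+e^{\varepsilon}\E_\Delta[g]\ge(1-\delta)\Delta$. Summing these yields $(1+e^{\varepsilon})(\E_0[f]+\E_\Delta[g])\ge 2(1-\delta)\Delta$, hence $\max(\E_0 f,\E_\Delta g)\ge(1-\delta)\Delta/(1+e^{\varepsilon})$. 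Since the cost on each dataset dominates the corresponding $f$ or $g$, this gives the bound as stated.

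I expect the careful bookkeeping of these level-set DP inequalities to be the step needing the most care. If it fails, I would fall back to the factor-$2$-weaker bound above.
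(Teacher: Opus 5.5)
Your proof has a genuine gap. The pointwise inequality $f+g\ge\Delta$, with $f(C)=\min_{c\in C}|c|$ and $g(C)=\min_{c\in C}|c-\Delta|$, holds only when $C$ is a single center. For $k\ge 2$ nothing prevents the output from containing both $0$ and $\Delta$, and then $f=g=0$. The same objection defeats the covering claim in your naive route that every $C$ lies in $E_0\cup E_\Delta$. The center nearest to $\Delta/2$ is irrelevant there, because $f$ and $g$ may be attained at different centers. Nor is this a bookkeeping issue. The constant, hence $(0,0)$-private, mechanism whose output always contains $0$ and $\Delta$ has $\E_0[f]=\E_\Delta[g]=0$. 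So your intermediate conclusion $\max(\E_0 f,\E_\Delta g)\ge(1-\delta)\Delta/(1+e^{\varepsilon})$ is false for every $k\ge2$, and as written your argument proves the theorem only for $k=1$.

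The missing idea is that the points $4\Delta,\ldots,4(k-1)\Delta$ are not padding; they are what defeats the mechanism above. You must work with the full cost and apply pigeonhole to the disjoint windows $I_j=[(4j-1)\Delta,(4j+1)\Delta]$, $j=1,\ldots,k-1$.
\begin{itemize}
\item If some $I_j$ contains no center, the anchor $4j\Delta$ alone contributes cost at least $\Delta$ on both datasets.
\item Otherwise the windows absorb $k-1$ of the centers (counted with multiplicity), so at most one center $c$ lies below $3\Delta$. Every other center is at distance at least $2\Delta$ from both $0$ and $\Delta$, so the one-center triangle inequality applies to $c$.
\end{itemize}
This gives the pointwise bound $\min\{\cost_1(X_0^{(k)},C),\Delta\}+\min\{\cost_1(X_\Delta^{(k)},C),\Delta\}\ge\Delta$. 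With these truncated costs in place of $f$ and $g$, your level-set computation goes through verbatim. It is then essentially the paper's proof. The paper packages the same pigeonhole step as the post-processing $T(C)=\min\{c_{(1)},\Delta\}$, after clipping centers to $[0,4k\Delta]$. It then runs the identical layer-cake/CDF privacy argument on the resulting one-center mechanism. Only one of your two symmetric inequalities is needed: $e^{\varepsilon}a+b\ge(1-\delta)\Delta$ already gives $\max\{a,b\}\ge(1-\delta)\Delta/(1+e^{\varepsilon})$.
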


\begin{proof}
By clipping each output center to the interval $[0,4k\Delta]$ as
post-processing, we may assume the algorithm always outputs $k$ centers in
that interval.  Write the resulting multiset in sorted order as
\[
  0\le c_{(1)} \le c_{(2)} \le \cdots \le c_{(k)} \le 4k\Delta,
\]
and abbreviate it by $C$.  Define the one-center post-processing
\[
  T(C) := \min\{c_{(1)},\Delta\}\in[0,\Delta].
\]
We first claim that for each $z\in\{0,\Delta\}$,
\[
  \cost_1(X^{(k)}_z,C) \ge |z-T(C)|.
\]

If $k=1$ or $c_{(2)}\ge 3\Delta$, then every center except possibly $c_{(1)}$
lies at least $3\Delta$ units to the right of the origin.  Since
$z\in\{0,\Delta\}$, the nearest center to $z$ is therefore $c_{(1)}$, so
\[
  \cost_1(X^{(k)}_z,C)\ge |z-c_{(1)}|\ge |z-T(C)|.
\]

Otherwise $c_{(2)}<3\Delta$.  Then at least two centers lie in $[0,3\Delta)$,
so at most $k-2$ centers lie in $[3\Delta,4k\Delta]$.  For each
$j\in\{1,\ldots,k-1\}$, let
\[
  I_j := [(4j-1)\Delta,(4j+1)\Delta].
\]
These intervals are pairwise disjoint, and each contains exactly one anchor
point $4j\Delta$ from both datasets.  If every $I_j$ contained some output
center, then at least $k-1$ centers would lie in $[3\Delta,4k\Delta]$,
contradicting the fact that only $k-2$ do.  Hence some $I_j$ is empty, which
means the corresponding anchor $4j\Delta$ is at distance at least $\Delta$ from
every output center.  Therefore
\[
  \cost_1(X^{(k)}_z,C)\ge \Delta \ge |z-T(C)|,
\]
because $T(C)\in[0,\Delta]$.  This proves the claim.

Now consider the post-processed one-center mechanism
\[
  \widetilde{\mathcal M}(X) := T(\mathcal M(X)).
\]
It is still $(\varepsilon,\delta)$-differentially private and always outputs a
point in $[0,\Delta]$.  Let $\mu_0$ and $\mu_\Delta$ be the output
distributions of $\widetilde{\mathcal M}$ on $X^{(k)}_0$ and
$X^{(k)}_\Delta$, and let $F_0,F_\Delta$ be the corresponding CDFs on
$[0,\Delta]$.  Differential privacy implies that for every
$t\in[0,\Delta]$,
\[
  F_0(t) \le e^{\varepsilon}F_\Delta(t) + \delta.
\]
Define
\[
  a := \E_{c\sim \mu_0}[c],
  \qquad
  b := \E_{c\sim \mu_\Delta}[\Delta-c].
\]
Then $a = \E[|0-\widetilde{\mathcal M}(X^{(k)}_0)|]$ and
$b = \E[|\Delta-\widetilde{\mathcal M}(X^{(k)}_\Delta)|]$.  Using the layer-cake
representation,
\[
  a = \int_0^\Delta (1-F_0(t))\,dt
    \ge \int_0^\Delta (1-\delta-e^{\varepsilon}F_\Delta(t))\,dt
    = (1-\delta)\Delta - e^{\varepsilon}b.
\]
Hence $a + e^{\varepsilon}b \ge (1-\delta)\Delta$, which implies
\[
  \max\{a,b\} \ge \frac{(1-\delta)\Delta}{1+e^{\varepsilon}}.
\]
The claim above implies
\[
  \max\!\left\{
    \E[\cost_1(X^{(k)}_0,\mathcal M(X^{(k)}_0))],
    \E[\cost_1(X^{(k)}_\Delta,\mathcal M(X^{(k)}_\Delta))]
  \right\}
  \ge \frac{(1-\delta)\Delta}{1+e^{\varepsilon}}.
\]
Finally, both datasets have $k$ distinct points, so choosing those points
themselves as centers gives zero cost.  Thus
$\OPT_{1,k}(X^{(k)}_0)=\OPT_{1,k}(X^{(k)}_\Delta)=0$, and the impossibility of
a purely multiplicative guarantee follows.
\end{proof}

\begin{remark}[Tightness for every $k$]
\label{appD:rem:lower-bound-tight}
The bound in \cref{appD:thm:additive-lower-bound} is achievable for every $k$.
Define the mechanism $\mathcal M^*_k$ that, on input $X\in\{X^{(k)}_0,X^{(k)}_\Delta\}$,
outputs the center set $\{R_\varepsilon(z_1),4\Delta,8\Delta,\ldots,4(k-1)\Delta\}$,
where $z_1$ is the first data point and $R_\varepsilon$ is the standard
$\varepsilon$-DP randomized-response mechanism on $\{0,\Delta\}$
(outputting the true value with probability $e^\varepsilon/(1+e^\varepsilon)$).
The remaining $k-1$ centers are hardcoded at the anchor locations and do not
touch the data, so the full mechanism is $\varepsilon$-DP.  A direct calculation
shows that the expected cost on the worse input equals
$(1-\delta)\Delta/(1+e^{\varepsilon})$ with $\delta=0$.
Thus the lower-bound constant is tight, not merely existential.
\end{remark}

\begin{remark}[What remains open]
\label{appD:rem:lower-bound-open}
Although tight for each fixed $k$, the bound in
\cref{appD:thm:additive-lower-bound} is $k$-independent:
it equals $\Theta(\Delta/(1+e^{\varepsilon}))$ regardless of whether $k=1$ or
$k=1000$.  By contrast, the upper-bound additive tax from \cref{appD:prop:subset-em}
scales as $k\ln(e|\calG|/k)/\varepsilon$.  The gap therefore grows with $k$
and $|\calG|$.  The present argument constructs $k$-point datasets with
$\OPT=0$ and reduces to a one-center mechanism; it does not engage with the
combinatorial structure that might make larger $k$ harder.  Sharper
private-clustering lower-bound techniques, including the smooth lower-bound
framework of Peter, Tsfadia, and Ullman~\cite{appD:peter2024smooth}, appear
promising for establishing $k$-dependent or $|\calG|$-dependent lower bounds,
but we do not currently know how to instantiate them for private explainable
clustering.
\end{remark}

\subsection{Related Work and Positioning}

\paragraph{Explainable clustering.}
The threshold-tree model of explainable clustering was introduced by Dasgupta,
Frost, Moshkovitz, and Rashtchian~\cite{appD:dasgupta2020explainable}.  Gupta,
Pittu, Svensson, and Yuan~\cite{appD:gupta2023price} resolved the $k$-median price
of explainability and gave the currently best known $k$-means upper bound.
The present paper imports their data-oblivious conversions and studies what
happens once the center-selection step itself must be private.

\paragraph{Differentially private clustering.}
There is a substantial literature on differentially private $k$-median and
$k$-means.  Early high-dimensional Euclidean-space algorithms were given by
Balcan, Dick, Liang, Mou, and Zhang~\cite{appD:balcan2017highdim}.  Kaplan and
Stemmer~\cite{appD:kaplan2018kmeans} obtained efficient constant-factor private
$k$-means.  Ghazi, Kumar, and Manurangsi~\cite{appD:ghazi2020tight} gave nearly
best-possible approximation ratios for private clustering with additive
poly$(d,k,1/\varepsilon,\log n)$ terms.  Stemmer~\cite{appD:stemmer2021local}
studied the local-DP analogue.  Our transfer principle is designed to sit on
top of this line of work: any compatible private center selector can be lifted
immediately.

\paragraph{Structured private clustering and explanations.}
Our scope is deliberately narrow.  We do not attempt a complete theory for
every structured or explainable private clustering task.  Related directions include the price of privacy for
hierarchical clustering with provable guarantees~\cite{appD:imola2023hierarchical}
and the price of differential privacy for hierarchical clustering~\cite{appD:deng2025hierarchical},
differentially private fair clustering~\cite{appD:byun2023fair}, contrastive
explainable clustering with differential privacy~\cite{appD:nguyen2025contrastive},
and differentially private cluster-explanation systems such as
DPClustX~\cite{appD:gilad2025dpclustx}.  These works study different output
structures and different explanation models.  Our contribution is a transfer
theorem for the specific but important case where the structured output is
generated from the centers alone.

\paragraph{Lower bounds.}
The additive barrier in \cref{appD:thm:additive-lower-bound} now persists for every
$k$, but it is still only a minimal obstruction.  Stronger lower-bound
machinery for differential privacy, such as
the padding-and-permuting fingerprinting-code framework of Peter, Tsfadia, and
Ullman~\cite{appD:peter2024smooth}, suggests that much sharper bounds should be
possible for clustering, but deriving such bounds for the explainable setting
remains open.

\subsection{Artifact-Grounded Empirical Sanity Checks}

The companion code artifact includes a small experimental suite for the
finite-domain $k$-median pipeline.  These experiments use only synthetic
one-dimensional data, public candidate sets, and no external baselines; they
should therefore be interpreted as sanity checks rather than as proof-bearing
evidence or broad benchmark claims.  Nevertheless, they address a practical
question left open by theorems alone: how conservative are the analytic bounds
on the small instances for which the exact finite-domain mechanism is
computationally feasible?

\begin{table}[t]
\centering
\small
\begin{tabular}{|p{0.20\linewidth}|p{0.27\linewidth}|p{0.41\linewidth}|}
\hline
Experiment & Setting & Observation \\
\hline
Cost ratios &
Six configurations with $k\in\{2,3,4\}$, $|\calG|\in\{8,10,12\}$,
$\varepsilon\in\{0.5,1,2\}$, and $300$--$500$ random seeds each &
Mean observed explainability ratios lie between $1.1341$ and $1.2728$, while
the corresponding theoretical factors are $2.0$, $2.5$, and $2.8333$; the
largest observed ratio in these runs is $2.2655$. \\
\hline
Scaling &
Exact exponential-mechanism runs for $(k,|\calG|)$ ranging from $(2,10)$ to
$(6,18)$ &
Average wall-clock time rises from $0.0035$s for $\binom{10}{2}=45$ subsets to
$2.8905$s for $\binom{18}{6}=18564$ subsets, illustrating the expected
combinatorial bottleneck of exhaustive finite-domain selection. \\
\hline
Privacy--utility &
Fixed $(k,|\calG|)=(3,10)$ with $200$ seeds per privacy level and
$\varepsilon$ ranging from $0.1$ to $10$ &
As $\varepsilon$ increases from $0.1$ to $10$, the mean selection cost drops
from $1025.4$ to $600.1$ and the exact additive tax drops from $11575.0$ to
$115.7$, matching the expected $1/\varepsilon$ trend. \\
\hline
\end{tabular}
\caption{Synthetic one-dimensional sanity checks from the companion artifact's
experimental suite.  These numbers come directly from the committed experiment
results file and are included only to illustrate the small-scale behavior of
the finite-domain $k$-median pipeline.}
\label{appD:tab:artifact-sanity}
\end{table}

These experiments support three modest conclusions.  First, on these small
synthetic instances the observed explainability surcharge is substantially below
the worst-case harmonic upper bound.  Second, the exact finite-domain baseline
becomes expensive once $\binom{|\calG|}{k}$ reaches the tens of thousands,
reinforcing the motivation for the efficient-selector plug-in results.  Third,
the privacy--utility tradeoff moves in the expected direction, with both the
analytic tax and the realized selection cost improving as $\varepsilon$ grows.
Because the artifact currently implements only the $k$-median path, we do not
present analogous empirical claims for the $k$-means corollary.

\subsection{Discussion}

The paper establishes a modular framework for private explainable clustering
in the data-oblivious regime: a transfer principle that separates privacy from
explainability, a finite-domain instantiation, a design-space analysis
quantifying why the center-first route dominates natural alternatives, and
a tight all-$k$ lower-bound family showing that some additive privacy tax is
unavoidable.  The core theorem is short, but the design-space comparison in
\cref{appD:eq:split-composition} demonstrates that the center-first approach saves
an order-$k$ factor in the additive term over split-by-split privatization.

\paragraph{Scope and limitations.}
The main transfer principle (\cref{appD:thm:transfer-principle}) uses only the
post-processing axiom, so its proof is structurally simple.  The paper's
technical depth lies instead in the quantitative design-space analysis, the
all-$k$ lower bound, and the careful integration of these components into a
complete framework.  Threshold trees remain the only non-trivial conversion
for which we demonstrate the framework; whether other center-induced
structures (balanced split trees, constrained-assignment templates) admit
data-oblivious conversions with provable inflation factors is an open
question.

\paragraph{Open directions.}
\begin{enumerate}[leftmargin=1.5em]
  \item Prove lower bounds with explicit dependence on $k$, $d$, or
  $|\calG|$ beyond the tight but $k$-independent obstruction established here.
  \item Fully evaluate the theorem-internal constants of the strongest known
  efficient private selectors and benchmark the resulting explainable
  algorithms on real-world instances.
  \item Determine whether other structured outputs admit data-oblivious
  conversions with bounded inflation factors, thereby widening the framework
  beyond threshold trees.
\end{enumerate}

\paragraph{Takeaway.}
For data-oblivious explainability conversions, the multiplicative price of
explainability is inherited from the non-private setting.  Differential privacy
adds its own center-selection tax, but it does not force an additional payment
when those centers are converted into an explainable structure.  The
lower-bound family shows that this additive tax survives for every $k$ and is
tight, even on instances where $\OPT_{1,k}=0$.

\clearpage
\section{The Price of Explainability for $\ell_p^p$ Clustering}
\begin{appendixabstract}
Given a set of points in $\RR^d$, an \emph{explainable clustering} is one where the clusters are described by a threshold tree of axis-aligned cuts.
The \emph{price of explainability} is the worst-case ratio between the cost of the best explainable clustering and the best unrestricted clustering.
For \kmed ($\ell_1$ objective), Gupta et~al.\ showed that the price is $\Theta(\ln k)$.
For \kmeans ($\ell_2^2$ objective), the price lies between $\Omega(k)$ and $O(k \ln \ln k)$.

We extend these results to $\ell_p^p$ clustering objectives for $p > 2$, providing the detailed proofs for a generalization announced but not published by prior work.
Our main technical contribution is a self-contained proof (Theorem~\ref{appE:thm:main-upper}) that for every $p > 2$ the price of explainability for $\ell_p^p$ clustering is $O(k^{p-1} \ln \ln k)$; the same bound for $p = 2$ is a direct consequence of~\cite{appE:GPSY23} (Corollary~\ref{appE:cor:main-p2}).
The algorithm generalizes the solo/bulk cut framework of Gupta et~al.\ using a new $D_p$ distribution for threshold cuts, together with a generalized stretch analysis based on H\"older's inequality.
The key new technical tool is the convexity bound that closes the solo-cost induction for $p > 2$; for $p = 2$, the same induction does not close, so we invoke the GPSY23 analysis as a base case.
We show that the $\ell_p$ stretch bound $(k{-}1)^{p-1}$ is tight and conjecture a matching lower bound $\Omega(k^{p-1})$; the existing $\Omega(k)$ lower bound for $k$-means~\cite{appE:GJPS21} transfers to all $\ell_p^p$ objectives (via norm equality on the Boolean hypercube) and is the strongest unconditional lower bound.
As a corollary of the Boolean-hypercube reduction of Gupta et~al., the hardness of approximation $(\nf12 - o(1)) \ln k$ extends to $\ell_p^p$ for all $p \geq 1$.
The regime $1 < p < 2$ remains open.
\end{appendixabstract}

\subsection{Introduction}
\label{appE:sec:introduction}

Clustering is a central topic in optimization, machine learning, and algorithm design.
Given $n$ points in $\RR^d$ and a parameter $k$, the $\ell_p^p$ clustering problem asks to partition the points into $k$ clusters with centers $\cU = \{\bmu^1, \ldots, \bmu^k\}$ so as to minimize
\begin{align}
  \cost_p(\pi, \cU) = \sum_{\bx \in \cX} \|\bx - \pi(\bx)\|_p^p,
  \label{appE:eq:cost}
\end{align}
where $\pi(\bx)$ maps each point to its assigned center.
The cases $p = 1$ (\kmed) and $p = 2$ (\kmeans) are the most extensively studied.

In the \emph{explainable clustering} model of Dasgupta et~al.~\cite{appE:DFMR20}, clusters are described by a \emph{threshold tree}: a binary decision tree where each internal node makes an axis-aligned threshold cut $x_i \leq \theta$ and each leaf corresponds to a cluster.
This yields interpretable clusterings---each cluster is described by a short sequence of feature thresholds.
The \emph{price of explainability} is the worst-case ratio of the optimal explainable cost to the optimal unrestricted cost.

\paragraph{Prior work.}
Gupta, Pittu, Svensson, and Yuan~\cite{appE:GPSY23} gave a comprehensive study for $p \in \{1, 2\}$.
For \kmed they proved the price is exactly $1 + H_{k-1} = \Theta(\ln k)$, via a tight analysis of the \RT algorithm combined with matching lower bounds from a hitting-set reduction.
For \kmeans they proved the price is $O(k \ln \ln k)$, improving the previous $O(k \ln k)$ bound~\cite{appE:EMN21}, and showed a lower bound of $\Omega(k)$ (from~\cite{appE:GJPS21}).
They also proved that explainable \kmed and \kmeans are hard to approximate better than $(\nf12 - o(1)) \ln k$ unless $P = NP$.
They noted that the \kmeans ideas ``extend to $\ell_p$ norms for $p \geq 2$'' but deferred the details to future work; this makes the present paper a completion of their announced direction.
Gamlath et~al.~\cite{appE:GJPS21} also noted in their NeurIPS~2021 paper (Section~6, Open Problems) that their ideas should extend to higher $\ell_p$ norms, but neither they nor GPSY23 published formal proofs for $p > 2$.
The regime $p > 2$ has therefore been noted as a natural extension by multiple groups, but to our knowledge no complete proofs have appeared.

We note that the $\ell_p^p$ clustering objective studied here (which raises each coordinate difference to the $p$-th power) is distinct from the explainable $k$-medians problem under the $\ell_p$ \emph{norm} (which raises the $\ell_p$ norm to the first power).
The latter has been studied recently, including for general finite $p \geq 1$ (see~\cite{appE:LpNorm25} and NeurIPS~2025 work on explainable $k$-medians under $\ell_p$ norms).
Those results operate in a different cost model and do not directly imply our bounds.

\subsubsection{Our Results}

We develop the detailed proof machinery for the price of explainability in the $\ell_p^p$ regime for $p > 2$.
While GPSY23 announced the extension as forthcoming and GJPS21 noted it as an open direction, the details---particularly the generalized $D_p$ distribution, the stretch-vs-separation analysis via H\"older's inequality for $p > 2$, the convexity-based induction closing, and the corrected close-pair threshold---have not appeared in the literature.
Our contribution is the self-contained $p > 2$ upper-bound proof (\Cref{appE:thm:main-upper}); the $p = 2$ bound is a direct consequence of~\cite{appE:GPSY23} (\Cref{appE:cor:main-p2}).

\begin{theorem}[Upper Bound for $\ell_p^p$, $p > 2$]
  \label{appE:thm:main-upper}
  For any $p > 2$, the price of explainability for $\ell_p^p$ clustering is $O(k^{p-1} \ln \ln k)$.
  Specifically, given any reference $\ell_p^p$ clustering, there exists a randomized algorithm that outputs an explainable clustering with expected cost at most $O(k^{p-1} \ln \ln k)$ times the reference cost.
\end{theorem}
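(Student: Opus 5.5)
We will generalize the solo/bulk-cut framework of Gupta et~al.\ for $k$-means to the $\ell_p^p$ objective. The algorithm works from a reference center set $\cU$. At each node of the tree, we sample an axis-aligned threshold cut from a distribution $D_p$. Under $D_p$, the probability that a cut on coordinate $i$ falls in an interval $[a,b]$ is proportional to $\int_a^b \min_j |\theta-\mu^j_i|^{p-1}\,d\theta$, the natural analogue of the $D_2$ distribution, whose density is linear in the distance to the nearest center.

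The proof then proceeds in three steps.

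\textbf{Step 1 (stretch).} We bound the ratio between the ``separation cost'' a point pays when it is cut from its center and its reference cost. For a pair $\bx,\bmu$ separated by a path of cuts through up to $k-1$ intermediate centers, we write $\|\bx-\bmu\|_p$ as a sum of at most $k$ pieces. Applying H\"older's inequality in the form $(\sum_{j=1}^m a_j)^p \le m^{p-1}\sum_j a_j^p$ gives the stretch factor $(k-1)^{p-1}$. This is the source of the $k^{p-1}$ in the theorem.

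\textbf{Step 2 (bulk cuts).} We split each node's cuts into bulk and solo regimes, following GPSY23. Bulk cuts separate many centers at once, and their contribution is charged by a potential argument. The number of center separations is controlled so that the expected bulk cost per point is $O(k^{p-1}\ln\ln k)$ times its reference cost. This mirrors the $O(k\ln\ln k)$ accounting for $p=2$, with $|\cdot|^{p-1}$ weights replacing linear ones. We use the close-pair threshold, adjusted by a factor depending on $p$, to decide when two centers count as close.

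\textbf{Step 3 (solo cuts).} Solo cuts isolate a single center, and their cost is handled by induction on the number of centers $k$. The inductive hypothesis bounds the cost at each child node by $C\,k'^{p-1}\ln\ln k'$, where $k'$ is the number of centers at that child. To close the induction, we need the convexity inequality $\sum_i k_i^{p-1} \le (\sum_i k_i)^{p-1} - \Omega(\cdot)$ for a partition $k=\sum_i k_i$. Because $t\mapsto t^{p-1}$ is strictly superadditive when $p>2$, this inequality leaves enough slack to absorb the additive cost of the current cut. That slack vanishes at $p=2$, which is why the $p=2$ case requires GPSY23 directly.

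\textbf{Main obstacle.} We expect Step 3 to be the hardest: getting the superadditivity slack to quantitatively dominate the solo-cut cost, uniformly in the sizes of the children. Our first attempt will be to choose $C$ large depending on $p$, and to treat unbalanced splits (one child of size $k-1$) separately. For those, we will bound $k^{p-1}-(k-1)^{p-1}\ge (p-1)(k-1)^{p-2}$ and compare it with the solo cost, which is a single-center cost of order $k^{p-2}$ times the reference cost.

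Summing the bulk and solo contributions, using linearity of expectation over points, yields the claimed $O(k^{p-1}\ln\ln k)$ bound.
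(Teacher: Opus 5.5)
Your Steps~1 and~2 follow the paper: the same $D_p$ distribution, the power-mean stretch bound $s_p\le (k-1)^{p-1}$ applied coordinatewise, and a $p$-dependent close-pair threshold. Step~3, however, has a genuine gap.

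A solo cut does not isolate a single center. It is one cut drawn from $D_p$ conditioned on separating a far pair $(\bp,\bq)$ of maximal stretch, and it is used exactly when $s_p(v)\ge n^{p-1}/\ln^2 n$, with $n$ the number of centers at the node. The only available bound on its expected cost is of order $s_p\|\bp^*\|_p^p$. The conditioned cut separates the point from its center with probability about $\|\bp^*\|_p^p/d_p(\bp,\bq)$, and the resulting penalty is up to order $\Delta_p\le 2s_p\,d_p(\bp,\bq)$. At every solo node this bound is therefore at least of order $n^{p-1}/\ln^2 n$, not $O(n^{p-2})$. Your plan compares $n^{p-1}-(n-1)^{p-1}\ge (p-1)(n-1)^{p-2}$ pointwise, on an unbalanced split, against a cost ``of order $n^{p-2}$''. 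That comparison fails by a factor of nearly $n$ on a $(1,n-1)$ split, and nothing in your proposal explains why a high-stretch cut should be cheap.

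The missing ingredient is the stretch-versus-separation lemma (Lemma~\ref{appE:lem:stretch-sep-p}). Under the conditioned distribution, the expected number $\sigma$ of centers on the smaller side satisfies $\sigma\ge s_p/(C_pA_p(n)^{p-1})$, where $A_p(n)=2\sum_{j\le n/2}j^{-1/(p-1)}$. Hence $A_p(n)^{p-1}=O(n^{p-2})$ for $p>2$. The lemma is proved in one dimension by H\"older's inequality on $\sum_j a_j=\sum_j (a_j\sigma_j^{1/p})\,\sigma_j^{-1/p}$ with $\sigma_j=\min(j,\ell+1-j)$, and then averaged over coordinates.

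High stretch therefore forces a large expected balance. The cut's expected cost becomes $O(n^{p-2})\,\sigma\,\|\bp^*\|_p^p$, which the convexity slack $n^{p-1}-\EE[(n-S)^{p-1}]\ge (p-1)(n/2)^{p-2}\sigma$ absorbs. The induction closes only in expectation, trading cost against $\sigma$; it never closes split by split.

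Note also that after the closest-point reduction only the child containing the point matters. The recursion is $\Phi(n)\ge \EE[\text{cost of the cut}]+\EE[\Phi(n-S)]$, not a bound on $\sum_i k_i^{p-1}$ over children. This is why your explanation of the $p=2$ case is off. At $p=2$ the slack is still linear in $\sigma$; what fails is that the balance lemma only gives a cost-to-balance ratio of $O(\ln n)$, which exceeds the $O(\ln\ln n)$ the potential can pay.
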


\begin{cor}[Upper Bound for $\ell_p^p$, $p = 2$]
  \label{appE:cor:main-p2}
  For $p = 2$, the price of explainability for $\ell_2^2$ clustering (\kmeans) is $O(k \ln \ln k)$.
\end{cor}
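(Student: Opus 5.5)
The plan is to prove this by direct specialization to the $k$-means theorem of~\cite{appE:GPSY23}, with no new analysis. The corollary needs no induction; the abstract already notes that for $p=2$ the solo-cost induction behind \Cref{appE:thm:main-upper} does not close, which is why $p=2$ is split off as a base case.

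First, I would check that the objective~\eqref{appE:eq:cost} at $p=2$ is exactly the $k$-means objective used in~\cite{appE:GPSY23}. Indeed, $\|\bx-\pi(\bx)\|_2^2=\sum_i (x_i-\pi(\bx)_i)^2$, so $\cost_2$ is the sum of squared Euclidean distances. This is the same $\ell_q^q$ convention, with $q=2$, that the imported statement in \cref{appD:thm:imported-source} uses. The threshold-tree model of~\cite{appE:DFMR20} is also identical in both papers. So "price of explainability for $\ell_2^2$" and "price of explainability for $k$-means" name the same quantity.

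Second, fix any reference clustering with centers $\cU$. An assignment $\pi$ that is not nearest-center only has larger cost, so it suffices to compare against $\cost_2(\cX,\cU)$ under nearest-center assignment. The imported conversion requires distinct centers. If $\cU$ has repeated centers, I would delete the duplicates, which leaves the nearest-center cost unchanged and gives $k'<k$ centers. The $k'$-center bound $O(k'\ln\ln k')\le O(k\ln\ln k)$ then applies, since $k\ln\ln k$ is nondecreasing for $k\ge 3$; small $k$ are absorbed into the constant. Applying item~2 of \cref{appD:thm:imported-source} (Theorem~3 of~\cite{appE:GPSY23}) to the distinct centers gives a random threshold tree $M$ with
\[
\E[\cost_2(\cX,M;\cU)]\le \alpha^{\means}_k\,\cost_2(\cX,\cU),\qquad \alpha^{\means}_k=O(k\ln\ln k).
\]

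Third, take $\cU$ to be an optimal unrestricted solution. Some realization of $M$ attains at most the expectation, so the optimal explainable cost is at most $O(k\ln\ln k)\cdot\OPT$. This is precisely the claimed bound on the price of explainability. The randomized-algorithm form, matching the phrasing of \Cref{appE:thm:main-upper}, is the expectation bound itself.

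The only step needing care is the convention check in the first step. I must confirm that the $O(k\ln\ln k)$ guarantee of~\cite{appE:GPSY23} is stated relative to an arbitrary reference center set, not only relative to $\OPT$. I must also confirm that its cost counts squared $\ell_2$ distances rather than unsquared ones. I expect both to hold, since the imported theorem is stated for every distinct center set $C$ and every dataset $X$.
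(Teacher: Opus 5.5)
Your proposal is correct and takes essentially the same approach as the paper, which proves the corollary simply by invoking the $k$-means theorem of Gupta, Pittu, Svensson, and Yuan (cited there as their Theorem~1.2) as a black box, with no new analysis. Your additional checks (the squared-Euclidean cost convention, the guarantee holding relative to an arbitrary reference center set, the handling of repeated centers, and passing from the expectation bound to a single realization) are the routine verifications implicit in that citation.
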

\begin{proof}
  This is the result of~\cite{appE:GPSY23} (their Theorem~1.2).
  We invoke their sharper solo-cost analysis as the base case for $p = 2$.
\end{proof}

\begin{remark}[Proof structure]
\Cref{appE:thm:main-upper} is self-contained: the solo cost induction (\Cref{appE:lem:solo-p}) closes via the convexity of $t \mapsto t^{p-1}$ and the H\"older-based balance bound $A_p(n)^{p-1} = O(n^{p-2})$.
For $p = 2$, the same induction does not close ($A_2(n) = O(\ln n)$ is too large); we instead invoke the GPSY23 analysis~\cite{appE:GPSY23} as a base case.
The contribution of the present paper is concentrated in the $p > 2$ machinery.
\end{remark}

\begin{conj}[Lower Bound for $\ell_p^p$, $p > 2$]
  \label{appE:conj:lower}
  For any $p > 2$ and $k \geq 2$, the price of explainability for $\ell_p^p$ clustering is $\Omega(k^{p-1})$.
\end{conj}

For $p = 2$, the unconditional lower bound $\Omega(k)$ was proved by Gamlath et~al.~\cite{appE:GJPS21}.
For $p > 2$, the GJPS21 $\Omega(k)$ lower bound applies unconditionally: their construction uses points on the Boolean hypercube $\{0,1\}^d$, where $\|x-y\|_p^p = \|x-y\|_1$ for all $p$ (since each coordinate difference is in $\{0,1\}$), so the same instance gives $\text{PoE}_p(k) \geq \Omega(k)$ for all $p \geq 2$.
However, this only yields $\Omega(k)$, not the conjectured $\Omega(k^{p-1})$.
We prove that the $\ell_p$ stretch bound $(k-1)^{p-1}$ is tight (\Cref{appE:prop:stretch-tight}), which provides suggestive evidence for \Cref{appE:conj:lower}, though tight stretch alone does not imply high PoE.
An earlier draft of this paper claimed a self-contained proof of $\Omega(k^{p-1})$ via a one-dimensional collinear construction, but that proof was incorrect:
in one dimension, threshold trees can replicate the optimal Voronoi partition exactly (since Voronoi cells in $\RR^1$ are intervals), so no one-dimensional instance can yield $\text{PoE} > 1$.
Constructing the right multi-dimensional instance for the tight lower bound remains the key open problem.

\begin{theorem}[Hardness of Approximation for $\ell_p^p$, all $p \geq 1$]
  \label{appE:thm:hardness}
  For any $p \geq 1$, the best explainable $\ell_p^p$ clustering is hard to approximate better than $(\nf12 - o(1)) \ln k$ unless $P = NP$.
\end{theorem}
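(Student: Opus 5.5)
The plan is to reuse, as a black box, the hardness reduction that Gupta et~al.~\cite{appE:GPSY23} give for explainable \kmed and \kmeans, and to observe that every instance it produces lives on the Boolean hypercube $\{0,1\}^d$. On such instances the objective does not depend on $p$. The reduction starts from a hitting-set / set-cover type problem, which is hard to approximate within $(1-o(1))\ln$ of the universe size. It encodes sets and elements as points with coordinates in $\{0,1\}$, possibly with multiplicities. It then shows that an explainable clustering of cost $\approx c\cdot(\text{base})$ yields a hitting set of size $\approx c$, and conversely. The factor $\nf12$ comes from the parameter trade-off between $k$ and the universe size.

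The first step is to recall this construction and check two things. First, all data points lie in $\{0,1\}^d$. Second, the reference centers and any centers that matter in the analysis can also be taken to lie in $\{0,1\}^d$, or at least the analysis only charges distances between hypercube points. For $x,y\in\{0,1\}^d$ each coordinate difference is in $\{0,1\}$, so
\[
\|x-y\|_p^p=\|x-y\|_1=\|x-y\|_2^2 \qquad \text{for all } p\ge 1.
\]
Hence the completeness direction transfers verbatim to $\ell_p^p$: a small hitting set yields a threshold tree whose induced clusters, with hypercube centers, have the same cost as in the $p=1$ analysis.

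The second step, and the main obstacle, is soundness. In the $\ell_p^p$ problem the explainable clustering may choose centers anywhere in $\RR^d$, not only on the hypercube. So we must show that the optimal center of each threshold-tree leaf can be assumed to be a hypercube point without decreasing cost beyond what the $p=1$ analysis already tolerates.

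For this I would use coordinatewise separability. The cost of a leaf is $\sum_i \sum_{x} |x_i-c_i|^p$, and each coordinate is a one-dimensional problem on points in $\{0,1\}$. Suppose a fraction $a$ of the points have $x_i=1$. The minimizer $c_i$ of $a|1-c_i|^p+(1-a)|c_i|^p$ lies in $[0,1]$, and its value is $\min_{c}\{\cdots\}\ge \min(a,1-a)^{p}\cdot\text{const}$.

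This comparison with the $p=1$ value $\min(a,1-a)$ is only up to a constant factor, not exact. That is the real difficulty: fractional centers are strictly cheaper for $p>1$. I would handle it in one of two ways.

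The first option is to check that GPSY23's soundness argument only uses a lower bound of the form "every leaf that mixes points from two different gadgets pays at least a constant per misplaced point." The $p=1$ and $p=2$ cases already tolerate fractional centers, since \kmeans centers are means, and the same lower bound holds for every $p\ge1$. The reason is that in their gadgets each mixed leaf has $\min(a,1-a)$ bounded below by a constant times the number of mixed points over the leaf size, and the $\ell_p^p$ one-dimensional cost is monotone in $p$ between the $\ell_\infty$-type behaviour and the $\ell_2^2$ bound.

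The second option, if that fails, is to follow the \kmeans branch of the reduction. Since $|t|^p\ge |t|^2$ is false for $|t|<1$, I would instead use $|t|^p\ge$ the convex envelope bound on $[0,1]$. Specifically, I would show that for a coordinate with points split $a$ vs $1-a$, the optimal $\ell_p^p$ cost is at least $\frac{a(1-a)}{\,a^{1/(p-1)}+(1-a)^{1/(p-1)}\,}$-type quantity, which is within a $p$-dependent constant of the $\ell_2^2$ cost $a(1-a)$.

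Since the hardness factor $(\nf12-o(1))\ln k$ is driven by the number of leaves and sets, and not by constants per gadget, constant-factor slack can be absorbed by scaling the gadget multiplicities, as GPSY23 do between \kmed and \kmeans. This absorption is the step I would write out carefully. The final step is to conclude that a better-than-$(\nf12-o(1))\ln k$ approximation for explainable $\ell_p^p$ would approximate the hitting-set instance beyond its NP-hardness threshold.
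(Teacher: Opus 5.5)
Your first step is the paper's entire proof. The paper notes that the GPSY23 instances lie in $\{0,1\}^d$, where $\|x-y\|_p^p=\|x-y\|_1$, and asserts that the reduction ``applies verbatim.'' You are right that this ignores something: in the end-to-end problem a leaf may use a center off the hypercube, which is strictly cheaper for $p>1$. Your completeness direction is fine, since such centers only lower the cost.

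The gap is in how you close soundness. Option~1 rests on a monotonicity that runs the wrong way. For data in $\{0,1\}$ the optimal center coordinate lies in $[0,1]$, where $|t|^p$ is nonincreasing in $p$. So for $p>2$ the one-dimensional $\ell_p^p$ cost is bounded \emph{above} by the $\ell_2^2$ and $\ell_1$ costs, and on a balanced split it is only $2^{-p}$ per point versus $\tfrac12$ for $\ell_1$; no lower bound follows. Option~2 gives only a $p$-dependent constant-factor lower bound, and such slack cannot be ``absorbed.'' If soundness loses a factor $c_p$, the reduction proves hardness $(\tfrac12-o(1))\ln k/c_p$, not $(\tfrac12-o(1))\ln k$. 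The leading constant survives only if the loss is a $1-o(1)$ factor. Your preliminary bound $\min(a,1-a)^p\cdot\mathrm{const}$ is weaker still: it is off by a factor $\min(a,1-a)^{p-1}$ precisely in the regime of small minority fractions.

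Concretely, suppose that in some coordinate a leaf has $W$ points equal to $0$ and $m$ equal to $1$, and set $q=1/(p-1)$.
\begin{itemize}
\item The optimal $\ell_p^p$ cost is exactly $mW/(m^q+W^q)^{1/q}$; your expression is missing the outer power $p-1$.
\item This equals the $\ell_1$ cost $\min(m,W)$ times $(1+r^q)^{-1/q}$, where $r=\min(m,W)/\max(m,W)$.
\item That factor equals $2^{1-p}$ at a balanced split and is at least $1-(p-1)r^{1/(p-1)}$. So it is $1-o(1)$ only when minority fractions vanish.
\end{itemize}

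The missing ingredient is therefore a structural lemma about the reduction. Every tree must either merge two heavy clusters, or have each leaf dominated by a single heavy cluster.
\begin{itemize}
\item Merging must stay prohibitively expensive even at the discounted rate $2^{1-p}W$ per differing coordinate.
\item Under single-cluster domination, every cost-carrying leaf--coordinate pair has $(p-1)r^{1/(p-1)}=o(1)$. Then the $p=1$ soundness bound transfers up to a $1-o(1)$ factor.
\end{itemize}
This has to be verified in, or engineered into, the actual GPSY23 instances, with multiplicities depending on $p$. Neither the black-box invocation nor your unverified remark that GPSY23 pass from $k$-medians to $k$-means this way supplies it. For $p=1$ no such tolerance is even needed, since medians can be taken in $\{0,1\}$. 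The paper's one-line proof leaves exactly this point unaddressed as well.
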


\Cref{appE:thm:main-upper} is the main new result.
For $p > 2$, combining the upper bound with the GJPS21 $\Omega(k)$ lower bound (which applies to all $\ell_p^p$ on the instances where $\ell_p^p$ distances equal $\ell_1$ distances) gives
$k \leq \text{PoE}_p(k) \leq O(k^{p-1}\ln\ln k)$.
If \Cref{appE:conj:lower} is confirmed, the price would be $\Theta(k^{p-1})$ up to iterated logarithmic factors.

\paragraph{Summary of the landscape.}
Combining our results with~\cite{appE:GPSY23,appE:GJPS21}, the price of explainability for $\ell_p^p$ clustering is:
\begin{center}
\begin{tabular}{l|c|c|c}
  $p$ & Upper bound & Lower bound & Approx.\ hardness \\
  \hline
  $p = 1$ (\kmed) & $1 + H_{k-1}$ {\footnotesize [tight, GPSY23]} & $(1-o(1))\ln k$ {\footnotesize [GPSY23]} & $(\nf12-o(1))\ln k$ \\
  $p = 2$ (\kmeans) & $O(k \ln \ln k)$ {\footnotesize [GPSY23]} & $\Omega(k)$ {\footnotesize [GJPS21]} & $(\nf12-o(1))\ln k$ \\
  $p > 2$ & $O(k^{p-1} \ln \ln k)$ {\footnotesize [this work]} & $\Omega(k)$ {\footnotesize [GJPS21]}$^*$ & $(\nf12-o(1))\ln k$ {\footnotesize [corollary]}
\end{tabular}
\end{center}
\noindent{\footnotesize $^*$We conjecture $\Omega(k^{p-1})$; the tight stretch bound $(k{-}1)^{p-1}$ (\Cref{appE:prop:stretch-tight}) supports this.}

\subsubsection{Technical Overview}
\label{appE:sec:techniques}

\paragraph{The $D_p$ distribution.}
For $p = 2$, the key to the solo/bulk analysis in~\cite{appE:GPSY23} is the $D_2$ distribution, which samples threshold cuts $(i, \theta)$ with probability proportional to the squared length of the interval containing $\theta$.
We define a generalized \emph{$D_p$ distribution} (\Cref{appE:sec:dp-dist}): intervals are sampled proportional to $|b-a|^p$, and $\theta$ is drawn from a density proportional to $\min(\theta-a, b-\theta)^{p-1}$.
This ensures the separation probability satisfies $\Pr[\text{separates } \mathbf{0}, \bp] \leq 2^{p-1} \|\bp\|_p^p / L_p$ (\Cref{appE:lem:sep-prob}), generalizing the $p = 2$ bound.

\paragraph{The $\ell_p$ stretch.}
The \emph{stretch} of a pair $(x, y)$ in a center set $\Set$ measures the ratio $\|x - y\|_p^p / d_p(x, y)$, where $d_p$ is the pseudo-distance defined via interval decompositions.
A key new observation (\Cref{appE:lem:stretch-bound}) is that the $\ell_p$ stretch satisfies $s_p \leq (|\Set|-1)^{p-1}$, which follows from the power mean inequality $({\sum a_j})^p \leq \ell^{p-1} \sum a_j^p$ applied to the $\ell$ intervals in each dimension.
For $p = 2$ this recovers the known bound $s_2 \leq k$; for general $p$ it gives $s_p \leq k^{p-1}$.

\paragraph{Generalized solo/bulk analysis.}
Using the $D_p$ distribution and the stretch bound, we run the same solo/bulk algorithm as~\cite{appE:GPSY23} with the solo threshold at $s_p(v) \geq |\Univ_v|^{p-1}/\ln^2|\Univ_v|$.
The bulk cost is $O(k^{p-1})$ (\Cref{appE:lem:bulk-p}); the close-pair threshold is set at $\Delta_p/k^{2p}$ (generalizing $\Delta_p/k^4$ for $p = 2$) to ensure $L_p' \geq L_p/2$ for all $p \geq 2$.
The solo cost is $O(k^{p-1} \ln \ln k)$ (\Cref{appE:lem:solo-p}).
The solo cost analysis uses a generalized stretch-vs-separation lemma (\Cref{appE:lem:stretch-sep-p}), proved via H\"older's inequality, and an induction on the compressed tree.
A key technical point: for $p > 2$, the induction uses the convexity inequality $n^{p-1} - (n-\sigma)^{p-1} \geq (p-1)(n/2)^{p-2}\sigma$ (valid for $\sigma \leq n/2$) to handle the nonlinear scaling of the $\ell_p^p$ cost in the number of centers.

\paragraph{Lower bound.}
We show that the $\ell_p$ stretch bound $(k-1)^{p-1}$ is tight by constructing $k$ equally-spaced collinear centers (\Cref{appE:prop:stretch-tight}).
The tight stretch provides suggestive (but not conclusive) evidence for the conjectured $\Omega(k^{p-1})$ lower bound, since tight stretch alone does not imply high PoE (the collinear instance has tight stretch but $\text{PoE} = 1$).
For $p = 2$, the $\Omega(k)$ lower bound was proved by~\cite{appE:GJPS21} using a multi-dimensional construction on the Boolean hypercube; since $\|x-y\|_p^p = \|x-y\|_1$ on $\{0,1\}^d$, the same instance gives $\Omega(k)$ for all $p \geq 2$.
Extending this to a tight $\Omega(k^{p-1})$ lower bound for $p > 2$ remains open; the key difficulty is constructing a multi-dimensional instance where $\ell_p^p$ distance ratios grow with $p$.

\paragraph{Hardness.}
\Cref{appE:thm:hardness} is a direct corollary of the Boolean-hypercube reduction of~\cite{appE:GPSY23}: since all distances on $\{0,1\}^d$ satisfy $\|x - y\|_p^p = \|x - y\|_1$ for all $p \geq 1$, the reduction is norm-independent.
The technical contribution here is the observation of norm equivalence; the reduction machinery is entirely from~\cite{appE:GPSY23}.

\subsubsection{Related Work}
\label{appE:sec:related}

Explainable clustering was introduced by Dasgupta et~al.~\cite{appE:DFMR20}.
Subsequent improvements by Gamlath et~al.~\cite{appE:GJPS21}, Makarychev--Shan~\cite{appE:MakarychevS21}, and Esfandiari et~al.~\cite{appE:EMN21} achieved $O(\ln k \ln \ln k)$ for \kmed.
Gupta et~al.~\cite{appE:GPSY23} settled \kmed and nearly settled \kmeans.
Charikar--Hu~\cite{appE:CharikarH22} gave dimension-dependent bounds for \kmeans that are near-optimal when $d = O(\ln k)$.
Makarychev--Shan~\cite{appE:MakarychevS22} studied the bi-criteria setting, showing that larger trees can reduce cost.
Bandyapadhyay et~al.~\cite{appE:BandyapadhyayFG22} and Laber~\cite{appE:Laber22} studied the computational complexity of finding good explainable clusterings.
GPSY23 stated that their \kmeans ideas extend to $\ell_p$ norms for $p \geq 2$ and deferred the details to future work.
GJPS21 noted higher-$\ell_p$ generalizations as an open direction (Section~6 of their NeurIPS paper).
Neither published detailed proofs for the $\ell_p^p$ objective with $p > 2$.

\paragraph{Distinction from $\ell_p$-norm $k$-medians.}
The objective $\sum_{\bx} \|\bx - \pi(\bx)\|_p^p$ studied here raises each coordinate difference to the $p$-th power.
This is distinct from $k$-medians under the $\ell_p$ \emph{norm}, $\sum_{\bx} \|\bx - \pi(\bx)\|_p$, which has been studied for general $p$ in recent work~\cite{appE:LpNorm25}.
The two objectives differ substantially: for equally-spaced points, the $\ell_p^p$ price grows as $k^{p-1}$ while the $\ell_p$-norm price remains $O(\ln k)$.

\subsection{Preliminaries}
\label{appE:sec:prelims}

\paragraph{Clustering.}
Given $\cX \subseteq \RR^d$, centers $\cU = \{\bmu^1, \ldots, \bmu^k\} \subseteq \RR^d$, and assignment $\pi : \cX \to \cU$, the $\ell_p^p$ cost is $\cost_p(\pi, \cU) = \sum_{\bx \in \cX} \|\bx - \pi(\bx)\|_p^p$.

\paragraph{Threshold trees.}
A threshold cut $(i, \theta)$ represents the halfspace $\{x_i \leq \theta\}$.
A threshold tree $T$ is a binary tree with cuts at internal nodes; it partitions $\RR^d$ into axis-aligned boxes at its leaves.
A tree $T$ \emph{separates} $\cU$ if each leaf box contains exactly one center.
Given such a tree, each data point is assigned to the unique center in its leaf box, defining $\pi_T$.
An \emph{explainable clustering} is one of the form $\pi_T$ for some threshold tree $T$.

\paragraph{Price of explainability.}
For a given instance $(\cX, k, p)$, $\text{PoE}_p(k) = \max_\text{instances} \frac{\text{optimal explainable } \ell_p^p \text{ cost}}{\text{optimal } \ell_p^p \text{ cost}}.$

\paragraph{Closest point process.}
By translation/scaling invariance and linearity of expectations, the price of explainability for an algorithm $\cA$ is at most $\alpha_{p,\cA}(k) = \max_{\Univ : |\Univ|=k} f_{p,\cA}(\Univ)$, where
\begin{align}
  f_{p,\cA}(\Univ) := \EE[\|\widehat{\bp}\|_p^p], \label{appE:eq:fp}
\end{align}
$\widehat{\bp}$ is the center in the leaf containing the origin, and $\|\bp^*\|_p = 1$ for the closest center $\bp^*$ to the origin~(cf.~\cite{appE:GPSY23}).

\subsection{The $D_p$ Distribution}
\label{appE:sec:dp-dist}

Given a set of centers $\Set \sse \RR^d$ and $p \geq 1$, we define a distribution $D_p(\Set)$ over threshold cuts.

\paragraph{Intervals.}
For each dimension $i$ and the projections of points in $\Set$, let $\ell_i = \min_{v \in \Set} v_i$ and $u_i = \max_{v \in \Set} v_i$.
The projections partition $[\ell_i, u_i]$ into intervals $\intervals_i$.
Let $\intervals_{\text{all}} = \bigcup_i \{(i, [a,b]) : [a,b] \in \intervals_i\}$ and define $L_p(\Set) = \sum_{(i,[a,b]) \in \intervals_{\text{all}}} |b-a|^p$.

\paragraph{The distribution.}
$D_p(\Set)$ first selects $(i, [a,b]) \in \intervals_{\text{all}}$ with probability $|b-a|^p / L_p(\Set)$, then picks $\theta \in [a,b]$ with density
\begin{align}
  P_{a,b}^{(p)}(\theta) := \frac{p \cdot 2^{p-1}}{(b-a)^p} \cdot \min(\theta - a, b - \theta)^{p-1}.
  \label{appE:eq:pab}
\end{align}
\begin{obs}
The density integrates to $1$: $\int_a^b P_{a,b}^{(p)}(\theta)\,d\theta = 1$.
\end{obs}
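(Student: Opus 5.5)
The observation asks us to show that $\int_a^b P_{a,b}^{(p)}(\theta)\,d\theta = 1$. The plan is a direct computation that uses the symmetry of the density about the midpoint $m = (a+b)/2$. Write $h = (b-a)/2$.

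On $[a,m]$ we have $\min(\theta-a,\,b-\theta) = \theta - a$, and on $[m,b]$ it equals $b-\theta$. The reflection $\theta \mapsto a+b-\theta$ maps one half onto the other and preserves the integrand, so
\[
\int_a^b \min(\theta-a,\,b-\theta)^{p-1}\,d\theta = 2\int_a^{m} (\theta-a)^{p-1}\,d\theta .
\]
Substituting $u = \theta - a$ turns the right-hand side into $2\int_0^{h} u^{p-1}\,du$. This integral converges for every $p \ge 1$, since $p-1 \ge 0$, and it equals $2h^p/p = 2\bigl((b-a)/2\bigr)^p/p = 2^{1-p}(b-a)^p/p$.

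Multiplying by the normalizing prefactor $p\cdot 2^{p-1}/(b-a)^p$ gives exactly $1$. We also need the density to be a genuine probability density, but nonnegativity is immediate because $\min(\theta-a,\,b-\theta) \ge 0$ on $[a,b]$.

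The only point needing care is the degenerate case $a=b$, where the prefactor is undefined. This case does not arise: such intervals have weight $|b-a|^p = 0$ in the first stage of $D_p(\Set)$ and are never selected. Alternatively, we may take $\intervals_i$ to consist only of nondegenerate intervals between distinct consecutive projections. There is no real obstacle here; the computation is routine.
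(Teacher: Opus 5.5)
Your proof is correct and follows essentially the same route as the paper. The paper also uses the symmetry about the midpoint to reduce to $2\int_0^{(b-a)/2} t^{p-1}\,dt = (b-a)^p/(p\cdot 2^{p-1})$, which the prefactor then cancels. Your remarks on nonnegativity and the degenerate case $a=b$ are harmless additions that the paper omits.
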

\begin{proof}
$\int_a^b \min(\theta-a, b-\theta)^{p-1}\,d\theta = 2\int_0^{(b-a)/2} t^{p-1}\,dt = \frac{(b-a)^p}{p \cdot 2^{p-1}}$.
\end{proof}

For $p = 1$: $P_{a,b}^{(1)}(\theta) = 1/(b-a)$, i.e., the uniform distribution.
For $p = 2$: $P_{a,b}^{(2)}(\theta) = \frac{4}{(b-a)^2}\min(\theta-a, b-\theta)$, matching~\cite{appE:GPSY23}.

\begin{remark}
The bound $2^{p-1}\|\bp\|_p^p / L_p$ can exceed $1$ for some instances when $p \geq 2$; in such cases the bound is vacuous as a probability but still valid.
This does not affect the algorithm: the separation probability bound is used in the cost analysis to bound the expected cost per cut, not as a standalone probability estimate.
\end{remark}

\paragraph{Pseudo-distance and stretch.}
For $\bx, \by \in \RR^d$, define the $\ell_p$ pseudo-distance
$d_p(\bx, \by) = \sum_{(i,[a,b]) \in \intervals(\bx,\by)} |b-a|^p$,
where $\intervals(\bx, \by) = \bigcup_i \{(i,[a,b]) : [a,b] \in \intervals_i(x_i, y_i)\}$.
Define the \emph{$\ell_p$ stretch} of a pair $\bx, \by \in \Set$ as $s_p(\bx, \by) = \|\bx - \by\|_p^p / d_p(\bx, \by)$.

\begin{lemma}[$\ell_p$ Stretch Bound]
  \label{appE:lem:stretch-bound}
  For any pair $\bx, \by$ in a set $\Set$ of $k$ points, $s_p(\bx, \by) \leq (k-1)^{p-1}$.
\end{lemma}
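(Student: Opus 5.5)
The plan is to decompose both $\|\bx-\by\|_p^p$ and $d_p(\bx,\by)$ coordinatewise and compare them dimension by dimension. Fix a coordinate $i$. The projections of the $k$ points of $\Set$ onto coordinate $i$ cut the segment between $x_i$ and $y_i$ into consecutive elementary intervals; these form the set $\intervals_i(x_i,y_i)$. Say their lengths are $a_1,\dots,a_\ell \ge 0$. Then
\[
|x_i-y_i| = \sum_{j=1}^{\ell} a_j, \qquad \sum_{[a,b]\in\intervals_i(x_i,y_i)} |b-a|^p = \sum_{j=1}^{\ell} a_j^p .
\]
Summing over $i$ gives exactly $\|\bx-\by\|_p^p$ on the left and $d_p(\bx,\by)$ on the right.

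Next I bound the number of intervals. Each interval lies between two consecutive distinct projected values, taken among those in the closed range between $x_i$ and $y_i$. That range contains at most $k$ distinct values from $\Set$, including both endpoints, so $\ell \le k-1$.

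Then I apply the power mean inequality. Convexity of $t\mapsto t^p$ for $p\ge1$ (equivalently, Hölder's inequality with exponents $p$ and $p/(p-1)$) gives
\[
\Bigl(\sum_{j=1}^{\ell} a_j\Bigr)^p \le \ell^{p-1}\sum_{j=1}^{\ell} a_j^p \le (k-1)^{p-1}\sum_{j=1}^{\ell} a_j^p .
\]
If $\ell=0$, both sides vanish. Summing over coordinates $i$ yields $\|\bx-\by\|_p^p \le (k-1)^{p-1} d_p(\bx,\by)$. Provided $d_p(\bx,\by)>0$, which holds whenever $\bx\ne\by$, dividing gives $s_p(\bx,\by)\le (k-1)^{p-1}$.

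There is no real obstacle here. The only point needing care is the counting claim $\ell\le k-1$: I must confirm that $\intervals_i(x_i,y_i)$ consists of the intervals of $\intervals_i$ lying between $x_i$ and $y_i$, so they are delimited by at most $k$ distinct projected values. I also need to use that the per-coordinate bound carries the same factor $(k-1)^{p-1}$ for every $i$, so the sum over coordinates goes through directly.
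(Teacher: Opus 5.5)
Your proposal is correct and follows the paper's proof essentially verbatim. Both arguments decompose coordinatewise into the at most $k-1$ elementary intervals between $x_i$ and $y_i$, apply the power-mean inequality $(\sum_j a_j)^p \le \ell_i^{p-1}\sum_j a_j^p$, and sum over coordinates. Your remarks on $\ell=0$ and on $d_p(\bx,\by)>0$ for $\bx\ne\by$ simply make explicit the degenerate cases that the paper leaves implicit.
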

\begin{proof}
In each dimension $i$, let $a_1^i, \ldots, a_{\ell_i}^i$ be the interval lengths between $x_i$ and $y_i$.
Then $|x_i - y_i| = \sum_j a_j^i$ and the pseudo-distance contribution is $\sum_j (a_j^i)^p$.
By the power mean inequality (convexity of $t \mapsto t^p$ for $p \geq 1$):
$\left(\sum_j a_j^i\right)^p \leq \ell_i^{p-1} \cdot \sum_j (a_j^i)^p.$
Since each $\ell_i \leq k - 1$, summing over $i$:
$\|\bx - \by\|_p^p = \sum_i |x_i - y_i|^p \leq (k-1)^{p-1} \sum_i \sum_j (a_j^i)^p = (k-1)^{p-1} \cdot d_p(\bx, \by).$
\end{proof}

\begin{remark}
For $p = 1$: $s_1 \leq 1$, so $\ell_1$ distance equals cut-metric distance.
For $p = 2$: $s_2 \leq k-1$, recovering the known bound.
The bound is tight: $k$ equally-spaced points on a line give stretch $(k-1)^{p-1}$.
\end{remark}

\begin{lemma}[Separation Probability]
  \label{appE:lem:sep-prob}
  For any $\bp \in \Set$,
  $\Pr_{(i,\theta) \sim D_p(\Set)}[\text{separates } \mathbf{0}, \bp] \leq 2^{p-1} \cdot \|\bp\|_p^p / L_p(\Set)$.
\end{lemma}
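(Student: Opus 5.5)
The plan is to condition on which interval the $D_p(\Set)$ cut lands in, bound the separation probability interval by interval, and sum. A cut $(i,\theta)$ separates $\mathbf{0}$ and $\bp$ exactly when $\theta$ lies strictly between $0$ and $p_i$. Since $D_p(\Set)$ picks a single interval $(i,[a,b])\in\intervals_{\text{all}}$ and then a single $\theta$, these are disjoint events, and
\[
\Pr[\text{separates } \mathbf{0},\bp] \;=\; \sum_{i}\sum_{[a,b]\in\intervals_i} \frac{(b-a)^p}{L_p(\Set)}\cdot \Pr_{\theta\sim P^{(p)}_{a,b}}\bigl[\theta\in [a,b]\cap J_i\bigr],
\]
where $J_i$ is the closed segment between $0$ and $p_i$. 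It therefore suffices to show, for each dimension $i$,
\[
\sum_{[a,b]\in\intervals_i} (b-a)^p \Pr\bigl[\theta\in [a,b]\cap J_i\bigr] \;\le\; 2^{p-1}|p_i|^p .
\]
Summing this over $i$ and dividing by $L_p(\Set)$ gives the lemma.

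Fix $i$. The key structural fact is that $p_i$ is itself an interval endpoint, because $\bp\in\Set$ and the endpoints of $\intervals_i$ are the projections of $\Set$. Hence $J_i\cap[\ell_i,u_i]$ decomposes into intervals of $\intervals_i$ that lie entirely inside $J_i$, plus at most one interval $[a,b]$ that contains $0$ in its interior and is only partially covered. If $0\notin[\ell_i,u_i]$ or $0$ is itself an endpoint, there is no partial interval at all. When a partial interval exists, the covered piece has length $\ell<b-a$ and is adjacent to an endpoint of $[a,b]$: it is $[0,b]$ or $[a,0]$.

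\begin{itemize}
\item For a fully contained interval of length $\ell_j=b-a$, its term is $(b-a)^p\cdot 1=\ell_j^p\le 2^{p-1}\ell_j^p$.
\item For the partial interval, the density $P^{(p)}_{a,b}$ is symmetric about the midpoint. If $\ell\le (b-a)/2$, the mass of a length-$\ell$ piece adjacent to an endpoint is
\[
\int_0^{\ell}\frac{p\,2^{p-1}}{(b-a)^p}\,t^{p-1}\,dt=\frac{2^{p-1}\ell^p}{(b-a)^p}.
\]
If $\ell>(b-a)/2$, then $2^{p-1}\ell^p/(b-a)^p>1/2$. The piece covers one full half, of mass $1/2$, plus a part of the other half of length $\ell-(b-a)/2$, whose mass is $1/2-2^{p-1}(b-\ell-a)^p/(b-a)^p$. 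This could slightly exceed $2^{p-1}\ell^p/(b-a)^p$ only if $(b-a)^p< 2^{p-1}\bigl(\ell^p+(b-a-\ell)^p\bigr)$. That inequality is the wrong direction: by convexity, $\ell^p+m^p\ge 2^{1-p}(\ell+m)^p$, so $2^{p-1}(\ell^p+m^p)\ge (\ell+m)^p$. Rechecked this way, the bound $\le 2^{p-1}\ell^p/(b-a)^p$ holds in both cases.
\end{itemize}

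In either case the partial interval's term, multiplied by $(b-a)^p$, is at most $2^{p-1}\ell^p$.

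Adding the pieces, with lengths $\ell_j$ satisfying $\sum_j\ell_j\le|p_i|$, gives
\[
\sum_j 2^{p-1}\ell_j^p\le 2^{p-1}\Bigl(\sum_j\ell_j\Bigr)^p\le 2^{p-1}|p_i|^p ,
\]
using the superadditivity of $t\mapsto t^p$ for $p\ge1$. This proves the per-dimension bound.

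The main obstacle is the partial interval containing the origin. A generic sub-interval placed in the middle of $[a,b]$ would have mass of order $\ell/(b-a)$ rather than $(\ell/(b-a))^p$, and the bound would fail. The argument depends on noticing that the partial piece always abuts an endpoint, since $p_i$ is a center projection, and that $P^{(p)}_{a,b}$ vanishes like $t^{p-1}$ at the endpoints. This is exactly why $D_p$ uses the density $\min(\theta-a,b-\theta)^{p-1}$. The remaining care is the $\ell>(b-a)/2$ case, where I would use the convexity identity above, or more simply the fact that the mass of the whole half containing the piece's far end is already accounted for.
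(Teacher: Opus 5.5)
Your proof is correct, but it is organized differently from the paper's. The paper never singles out the partial interval. Once the weight $(b-a)^p$ cancels the normalization of $P^{(p)}_{a,b}$, the integrand is $p\,2^{p-1}\min(\theta-a,b-\theta)^{p-1}$. Because $p_i$ is an interval endpoint (so $p_i\notin(a,b)$), there is the pointwise bound $\min(\theta-a,b-\theta)\le|p_i-\theta|$ on every interval. Integrating $p\,2^{p-1}|p_i-\theta|^{p-1}$ over the whole segment between $0$ and $p_i$ then gives exactly $2^{p-1}|p_i|^p$. That single domination treats full and partial intervals the same way, with no case split and no superadditivity step.

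Your interval-by-interval accounting costs more bookkeeping but gives a bit more information. Fully contained intervals contribute exactly $\ell_j^p$, so the factor $2^{p-1}$ is only really paid on the one interval containing the origin.

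Two small remarks on the partial interval. First, the split at $\ell=(b-a)/2$ is avoidable. On a piece of length $\ell$ abutting $b$, you may bound $\min(\theta-a,b-\theta)\le b-\theta$ and integrate to get $2^{p-1}\ell^p/(b-a)^p$ directly; this is the paper's trick in local form. Second, as written, your condition for the mass to exceed the bound has the inequality reversed. Exceeding would require $(b-a)^p>2^{p-1}(\ell^p+m^p)$ with $m=b-a-\ell$, and that is exactly what your convexity inequality $2^{p-1}(\ell^p+m^p)\ge(\ell+m)^p$ rules out.
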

\begin{proof}
The probability is
$\sum_i \sum_{[a,b] \in \intervals_i} \frac{|b-a|^p}{L_p} \int_a^b P_{a,b}^{(p)}(\theta) \cdot \ones[\theta \text{ between } 0, p_i]\,d\theta$.
Using $\min(\theta-a, b-\theta)^{p-1} \leq |p_i - \theta|^{p-1}$ when $\theta$ is between $0$ and $p_i$ (since $p_i$ lies at an interval boundary), this is at most
$\frac{p \cdot 2^{p-1}}{L_p} \sum_i \int_{-\infty}^\infty |p_i - \theta|^{p-1} \cdot \ones[\theta \text{ between } 0, p_i]\,d\theta = \frac{p \cdot 2^{p-1}}{L_p} \sum_i \frac{|p_i|^p}{p} = \frac{2^{p-1} \|\bp\|_p^p}{L_p}$.
\end{proof}

\subsection{Upper Bound: $O(k^{p-1} \ln \ln k)$}
\label{appE:sec:upper}

We generalize the solo/bulk framework of~\cite{appE:GPSY23} to $\ell_p^p$ for $p \geq 2$.

\subsubsection{Definitions}
For a subset $\Set \sse \Univ$:
\begin{itemize}[nosep]
  \item $\Delta_p(\Set) = \max_{\bx,\by \in \Set} \|\bx - \by\|_p^p$ is the $\ell_p^p$ diameter.
  \item A pair $\bx, \by$ is \emph{far} if $\|\bx - \by\|_p^p \geq \Delta_p(\Set)/2$, and \emph{close} if $\|\bx - \by\|_p^p < \Delta_p(\Set)/k^{2p}$.
  \item $s_p(\Set) = \max_{\text{far pairs}} s_p(\bx, \by)$ is the stretch of the set.
\end{itemize}

\paragraph{Restricted distributions.}
We use two restricted variants of $D_p$:
\begin{itemize}[nosep]
  \item $D_p'(\Set)$ is the distribution $D_p(\Set)$ restricted to intervals that are \emph{not} between close pairs.
  Formally, we remove all intervals $[a,b]$ such that $a$ and $b$ are projections of centers $\bx, \by$ with $\|\bx - \by\|_p^p < \Delta_p/k^{2p}$, and renormalize.
  Define $L_p'(\Set) = \sum_{(i,[a,b]) \in \intervals'} |b-a|^p$ where $\intervals'$ is the restricted set.
  \item $D_p''(\Set; \bp, \bq)$ is the distribution $D_p(\Set)$ conditioned on the sampled cut separating $\bp$ from $\bq$.
  Equivalently, we restrict to intervals $[a,b]$ such that $\theta \in [a,b]$ places $\bp$ and $\bq$ on opposite sides, and renormalize.
\end{itemize}

\subsubsection{The Algorithm}

Given a node $v$ with center subset $\Univ_v$:
\begin{enumerate}[nosep]
  \item \textbf{Solo cut:} If $s_p(v) \geq |\Univ_v|^{p-1} / \ln^2|\Univ_v|$, pick a far pair $(\bp, \bq)$ of maximal stretch, sample $(i, \theta) \sim D_p''(v; \bp, \bq)$, apply the cut, and recurse on both children.
  \item \textbf{Bulk cuts:} Otherwise, repeatedly sample $(i, \theta) \sim D_p'(v)$ and apply each sampled cut to the tree.
  Continue until all far pairs are separated.
  The resulting sequence of $D_p'$ cuts forms a subtree at $v$; recurse on each leaf of this subtree.
\end{enumerate}

We analyze the cost via a compressed tree $T'$ with solo and bulk nodes.

\begin{remark}
The algorithm prescribes a specific tree-building process: bulk cuts are accumulated (each sampled $D_p'$ cut is retained in the tree) until all far pairs are separated.
An implementation that discards the sampled bulk cuts and replaces them with alternative splits would yield a different tree from the one analyzed here.
\end{remark}

\subsubsection{Bulk Cut Analysis}

\begin{lemma}[Bulk Cost]
  \label{appE:lem:bulk-p}
  The expected $\ell_p^p$ cost increase due to all bulk cuts is $O(k^{p-1}) \cdot \|\bp^*\|_p^p$.
\end{lemma}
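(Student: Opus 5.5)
We follow the bulk-cut analysis of GPSY23 for $p=2$, with the $D_p$ tools of the previous subsection replacing the $D_2$ ones. Fix a bulk node $v$ with center set $\Set=\Univ_v$ containing the origin's current center, and normalize $\|\bp^*\|_p=1$. The origin's assigned center changes only when a sampled cut separates $\mathbf 0$ from its current center $\bp$. The cost of that event is at most $O(\Delta_p(\Set))$, up to a $2^{p-1}$ factor from the quasi-triangle inequality for $\|\cdot\|_p^p$, because the new center is still in $\Set$.

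The first step bounds the per-cut probability. By Lemma~\ref{appE:lem:sep-prob}, applied to the restricted distribution $D_p'$, one cut separates $\mathbf 0$ from $\bp$ with probability at most $2^{p-1}\|\bp\|_p^p/L_p'(\Set)$. The close-pair threshold $\Delta_p/k^{2p}$ gives $L_p'\ge L_p/2$: there are at most $dk$ removed intervals per relevant direction, each of $p$-th power length below $\Delta_p/k^{2p}$, while $L_p\ge \Delta_p/s_p\ge \Delta_p/k^{p-1}$ by Lemma~\ref{appE:lem:stretch-bound}. Since $\bp$ is the current center, the cost of replacing it by a center within the box is bounded by the distances to $\mathbf 0$. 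As in GPSY23, we charge against $\|\bp^*\|_p^p$ plus earlier increments.

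The second step bounds the number of cuts. Because we are at a bulk node, $s_p(v)<|\Univ_v|^{p-1}/\ln^2|\Univ_v|$. Every far pair $(\bx,\by)$ therefore has $d_p(\bx,\by)\ge \|\bx-\by\|_p^p/s_p\ge \Delta_p\ln^2 n/(2n^{p-1})$, where $n=|\Univ_v|$. A single $D_p'$ cut separates such a pair with probability at least $d_p(\bx,\by)/(2^{p-1}L_p)$, using the pointwise lower bound of the density on the middle of each interval. We then take a union bound over $n^2$ far pairs. It follows that after $m=O(L_p\, n^{p-1}\ln n/(\Delta_p\ln^2 n))$ cuts, all far pairs are separated with high probability, and the expected number of cuts obeys the same bound via geometric tails.

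The third step combines the two. The expected cost at $v$ is at most (number of cuts) $\times$ (separation probability) $\times$ (cost per separation) $= O\bigl(\tfrac{L_p n^{p-1}}{\Delta_p\ln n}\cdot \tfrac{\|\bp\|_p^p}{L_p}\cdot \Delta_p\bigr)=O(n^{p-1}\|\bp\|_p^p/\ln n)$. After the bulk phase every child has diameter at most $\Delta_p/2$, so along the origin's root-to-leaf path the diameters halve at each bulk node. We sum over bulk nodes on that path using the fact that the origin can only be reassigned to a center at distance $O(\Delta_p)$, and that $\|\bp\|_p^p\le 2^{p-1}(\|\bp^*\|_p^p+\Delta_p)$ relative to the original. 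The bulk costs then form a geometric-type series that sums to $O(k^{p-1})\|\bp^*\|_p^p$. The $1/\ln n$ gain absorbs the depth of the recursion, in the same way GPSY23 absorb it for $p=2$.

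The main obstacle is the third step, namely the charging. The cost of a separation must be compared to $\|\bp^*\|_p^p$ rather than to $\Delta_p$ when the origin lies close to its center. We will first try the GPSY23 device: split into the case $\Delta_p(\Set)\lesssim k^{p}\|\bp^*\|_p^p$, handled directly as above, and the case where $\mathbf 0$ is far from most of $\Set$, in which the separation probability $2^{p-1}/L_p'$ is tiny relative to $\Delta_p$. The $p$-dependent constants $2^{p-1}$ are harmless because $p$ is fixed.
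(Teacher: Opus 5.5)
\textbf{The gap is in your third step, the summation over bulk nodes.} Your own per-node computation is scale-free. In $\frac{L_p n^{p-1}}{\Delta_p\ln n}\cdot\frac{\|\bp\|_p^p}{L_p}\cdot\Delta_p$ both $L_p$ and $\Delta_p$ cancel, so the bound $O(n^{p-1}\|\bp\|_p^p/\ln n)$ is the same at every bulk node, whatever its diameter. Halving the diameter therefore gives no geometric series.

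Your case split does not change this. When $\Delta_p(v)\gg k^p\|\bp^*\|_p^p$ the per-cut separation probability is small, but you multiply it by $O(s_pL_p'\ln n/\Delta_p)$ cuts and a cost of order $\Delta_p$, which again gives $O(s_p\ln n)$. The only control left on the number of bulk nodes along the origin's path is the depth of the compressed tree. That depth can reach $k-1$, since a bulk phase is only guaranteed to remove one center from the origin's node. Your argument therefore yields $O(k^{p}/\ln k)$, not $O(k^{p-1})$.

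The missing idea is the paper's Step 5. Replace the crude bound $2^{p-1}\|\bp^*\|_p^p/L_p'$ by charging the separation of $\mathbf{0}$ from $\bp^*$ to the individual surviving intervals of $D_p'(v)$ that cross the segment between $0$ and $p^*_i$. Their weights sum to at most $2^{p-1}\|\bp^*\|_p^p$, as in the separation-probability lemma. Then argue that, because of the close-pair threshold $\Delta_p/k^{2p}$ and the halving of diameters, each interval is a threshold candidate in only $O(\ln k)$ bulk levels. This logarithmic overlap is what the leftover $1/\ln n$ pays for, giving $O(k^{p-1}/\ln k)\cdot O(\ln k)$. It is the real purpose of the close-pair restriction; you use that restriction only to get $L_p'\ge L_p/2$.

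\textbf{Your ``current center'' bookkeeping also compounds.} After one reassignment $\|\bp\|_p^p$ can be of order $2^{p-1}(\Delta_p+\|\bp^*\|_p^p)$, so later per-node bounds scale with $\Delta_p$ rather than $\|\bp^*\|_p^p$. Account instead only for the first separation of $\mathbf{0}$ from $\bp^*$. At that moment the final center lies in $\Univ_v$, so the cost is at most $2^{p-1}(\Delta_p(v)+\|\bp^*\|_p^p)$. Once the origin's node has diameter below $\|\bp^*\|_p^p$ with $\bp^*$ still present, the remaining cost is trivially $O(\|\bp^*\|_p^p)$.

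\textbf{Two smaller points.} First, your justification of $L_p'\ge L_p/2$ multiplies the number of removed intervals (about $dk$) by the maximum weight $\Delta_p/k^{2p}$. That bound grows with $d$ and exceeds $L_p/2\ge\Delta_p/(2k^{p-1})$ once $d>k^p/2$. Charge per close pair instead: the removed interval attributable to a pair $(\bx,\by)$ in coordinate $i$ has weight at most $|x_i-y_i|^p$. Summed over coordinates this is at most $\|\bx-\by\|_p^p<\Delta_p/k^{2p}$, so the total removed mass is at most $\binom{k}{2}\Delta_p/k^{2p}\le\Delta_p/(2k^{2p-2})$, independent of $d$.

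Second, in your second step the density profile is irrelevant: every $\theta$ in an interval lying between $x_i$ and $y_i$ separates $\bx$ from $\by$. What matters is that under $D_p'$ the separation probability is $d_p'(\bx,\by)/L_p'$, where $d_p'$ counts only surviving intervals. You need $d_p'\ge d_p/2$, which follows from the same removed-mass bound together with $d_p(\bx,\by)\ge\Delta_p/(2k^{p-1})$ for far pairs.
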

\begin{proof}
At a bulk node $v$, the stretch satisfies $s_p(v) < |\Univ_v|^{p-1}/\ln^2|\Univ_v|$.
Let $n = |\Univ_v|$.

\emph{Step 1: $L_p'(v) \geq L_p(v)/2$.}
Each close pair $(\bx, \by)$ satisfies $\|\bx - \by\|_p^p < \Delta_p/k^{2p}$.
The close-pair intervals associated with pair $(\bx,\by)$ across \emph{all} dimensions have total weight at most $\|\bx - \by\|_p^p$, because $\sum_j |x_j - y_j|^p = \|\bx - \by\|_p^p$ already accounts for all coordinate contributions (dimension-free).
With at most $\binom{k}{2}$ close pairs, the total removed weight is at most
\[
  \binom{k}{2} \cdot \frac{\Delta_p}{k^{2p}} \;\leq\; \frac{\Delta_p}{2k^{2p-2}}.
\]
Since $L_p(v) \geq \Delta_p(v)/s_p(v) \geq \Delta_p \ln^2 n / n^{p-1}$, we need $\Delta_p/(2k^{2p-2}) \leq L_p(v)/2$, i.e., $n^{p-1}/k^{2(p-1)} \leq \ln^2 n$.
For $n \geq 3$: since $n \leq k$, $n^{p-1}/k^{2(p-1)} \leq k^{p-1}/k^{2(p-1)} = 1/k^{p-1} \leq 1 \leq \ln^2 n$.
(For $n = 2$, the bulk phase requires at most one cut with cost $O(\|\bp^*\|_p^p)$, so the bound is immediate.)
Therefore $L_p'(v) \geq L_p(v)/2$.

\emph{Note:} The close-pair threshold $\Delta_p/k^{2p}$ generalizes the $\Delta_p/k^4$ threshold used in~\cite{appE:GPSY23} for $p = 2$ (where $k^{2p} = k^4$). The stronger threshold is needed for $p > 2$ because $L_p(v)$ scales as $\Delta_p/n^{p-1}$ rather than $\Delta_p/n$.

\emph{Step 2: Expected number of bulk cuts.}
Each bulk cut from $D_p'$ separates a far pair $(\bx,\by)$ with probability $d_p'(\bx,\by)/L_p'(v)$, where $d_p'(\bx,\by)$ is the pseudo-distance computed over the \emph{surviving} intervals only (those not removed as close-pair intervals).
We claim $d_p'(\bx,\by) \geq d_p(\bx,\by)/2$ for every far pair.

\emph{Proof of claim:}
The total weight of all close-pair intervals is at most $\Delta_p/(2k^{2p-2})$ (computed in Step~1).
Since $(\bx,\by)$ is a far pair, $\|\bx - \by\|_p^p \geq \Delta_p/2$, so
$d_p(\bx,\by) = \|\bx - \by\|_p^p / s_p(\bx,\by) \geq (\Delta_p/2)/(n-1)^{p-1} \geq \Delta_p/(2k^{p-1}).$
The removed mass attributed to $(\bx,\by)$ is at most $\Delta_p/(2k^{2p-2})$.
Using the lower bound $d_p(\bx,\by) \geq \Delta_p/(2k^{p-1})$, this removed mass satisfies
\[
  \frac{\Delta_p}{2k^{2p-2}} \;=\; \frac{\Delta_p/(2k^{p-1})}{k^{p-1}} \;\leq\; \frac{d_p(\bx,\by)}{k^{p-1}}.
\]
Therefore
\[
  d_p'(\bx,\by) \;\geq\; d_p(\bx,\by) - \frac{d_p(\bx,\by)}{k^{p-1}} \;=\; d_p(\bx,\by)\Bigl(1 - \frac{1}{k^{p-1}}\Bigr) \;\geq\; \frac{d_p(\bx,\by)}{2},
\]
where the last inequality uses $k^{-(p-1)} \leq 1/2$ for $k \geq 2, p \geq 2$.

Therefore each bulk cut separates $(\bx,\by)$ with probability at least $d_p(\bx,\by)/(2L_p'(v)) \geq (\Delta_p/(4s_p))/L_p'(v)$.
There are at most $\binom{n}{2}$ far pairs.
By a coupon-collector argument (as in~\cite{appE:GPSY23}), the expected number of bulk cuts is at most
$O(s_p(v) \cdot L_p'(v)/\Delta_p(v) \cdot \ln n)$.

\emph{Step 3: Cost per bulk cut.}
Each bulk cut from $D_p'$ costs at most $2^{p-1}\|\bp^*\|_p^p \cdot 2^p\Delta_p(v)/L_p'(v)$ in expectation (by the separation probability lemma applied to the origin; the $2^p\Delta_p$ factor bounds $\max_{\bmu \in \Univ}\|\bmu\|_p^p$ using the preprocessing assumption $\Delta_p \geq 1$, as in \Cref{appE:lem:solo-ratio-p}).
The product is $O(\|\bp^*\|_p^p \cdot \Delta_p(v)/L_p'(v))$ since $2^{p-1} \cdot 2^p = O_p(1)$ is a constant depending only on $p$.

\emph{Step 4: Total bulk cost at node $v$.}
By Wald's equation, the expected cost at $v$ is (number of cuts) $\times$ (cost per cut):
$O(s_p(v) \ln n) \cdot 2^p \|\bp^*\|_p^p.$
For bulk nodes, $s_p(v) \leq n^{p-1}/\ln^2 n$, so the cost per node is $O(n^{p-1}/\ln n) \cdot \|\bp^*\|_p^p$.

\emph{Step 5: Summing over levels.}
The logarithmic overlap lemma of~\cite{appE:GPSY23} (specifically, their Lemma on compressed-tree interval participation: each interval $[c^i_j, c^i_{j+1}]$ participates as a threshold candidate in at most $O(\ln k)$ levels of the compressed threshold tree) applies identically here.
The key structural hypothesis is that the tree has depth $O(\ln k)$ and each node's interval set is determined by the point projections at that node---both properties hold regardless of $p$.
Summing over all bulk nodes: $O(k^{p-1}/\ln k) \cdot O(\ln k) \cdot \|\bp^*\|_p^p = O(k^{p-1}) \cdot \|\bp^*\|_p^p$.
\end{proof}

\subsubsection{Solo Cut Analysis}

\begin{lemma}[Stretch-vs-Separation for $\ell_p$]
  \label{appE:lem:stretch-sep-p}
  For a far pair $\bp, \bq$ in $\Set$ with $|\Set| = n$ and stretch $s_p = s_p(\bp,\bq)$, if we sample $(i,\theta) \sim D_p''(\Set; \bp, \bq)$, the expected number of centers on the smaller side of the cut is at least
  \[
    \sigma \;\geq\; \frac{s_p}{C_p \cdot A_p(n)^{p-1}}
  \]
  where $A_p(n) = 2\sum_{j=1}^{\lfloor n/2 \rfloor} j^{-1/(p-1)}$ and $C_p$ depends only on $p$.
  For $p = 2$: $A_2(n) = \Theta(\ln n)$, so $\sigma \geq \Omega(s_2 / \ln n)$.
  For $p > 2$: $A_p(n) = \Theta(n^{(p-2)/(p-1)})$, so $\sigma \geq \Omega(s_p / n^{p-2})$.
\end{lemma}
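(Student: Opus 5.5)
I will mirror the $p=2$ stretch-vs-separation argument of~\cite{appE:GPSY23}, replacing Cauchy--Schwarz by H\"older's inequality with exponents $(p,\,p/(p-1))$. Set $d:=d_p(\bp,\bq)$. Write the interval set $\intervals(\bp,\bq)$ as a disjoint union over coordinates $i$ of consecutive intervals between $p_i$ and $q_i$, with lengths $a^i_1,\dots,a^i_{\ell_i}$. Under $D_p''(\Set;\bp,\bq)$, interval $(i,[a,b])$ is chosen with probability $|b-a|^p/d$. Whatever $\theta$ is drawn inside it, the cut puts the same set of centers on each side. So for each interval $I$ let $m(I)$ be the number of centers on the smaller side of a cut through $I$. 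Then $\sigma=\sum_{I}|I|^p\,m(I)/d$, and it suffices to show
\[
\sum_I |I|^p m(I)\ \ge\ \frac{\|\bp-\bq\|_p^p}{C_p A_p(n)^{p-1}},
\]
because $\|\bp-\bq\|_p^p/d=s_p$.

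The key step is a per-coordinate H\"older bound. Fix $i$. Because the pair is far and $|\Set|=n$, at most $2j$ intervals (counting from the two ends of the range in coordinate $i$) can have $m(I)\le j$, since each end contributes at most $j$ such intervals. Sorting the intervals by $m$, the $r$-th one therefore has $m\ge\lceil r/2\rceil$. Write $|p_i-q_i|=\sum_I |I| = \sum_I (|I|\,m(I)^{1/p})\cdot m(I)^{-1/p}$ and apply H\"older:
\[
|p_i-q_i|\ \le\ \Bigl(\sum_I |I|^p m(I)\Bigr)^{1/p}\Bigl(\sum_I m(I)^{-1/(p-1)}\Bigr)^{(p-1)/p}.
\]
The last sum is at most $2\sum_{j\le\lfloor n/2\rfloor} j^{-1/(p-1)}=A_p(n)$ by the rank bound. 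Raising to the $p$-th power gives $|p_i-q_i|^p\le A_p(n)^{p-1}\sum_{I\text{ in coord } i}|I|^p m(I)$. Summing over $i$ yields the displayed inequality with $C_p=1$, up to constant adjustments from the ceiling and the $m(I)\ge1$ boundary issue. The asymptotics of $A_p(n)$ then follow by comparing the sum with $\int_1^{n/2} x^{-1/(p-1)}\,dx$: this gives $\Theta(\ln n)$ when $p=2$ and $\Theta(n^{(p-2)/(p-1)})$ when $p>2$, so $A_p(n)^{p-1}=\Theta(n^{p-2})$.

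The main obstacle is justifying the rank bound $m\ge\lceil r/2\rceil$ correctly. Every interval between $p_i$ and $q_i$ must have $m(I)\ge1$, because both $\bp$ and $\bq$ are on opposite sides. The intervals with $m(I)\le j$ must lie within the first or last $j$ gaps counted from $\ell_i$ or $u_i$. That count uses distinct projections, so coincident projections need care: I would merge tied centers and count multiplicities, which only increases $m$. If the tie-handling loses a constant factor, it is absorbed into $C_p$.
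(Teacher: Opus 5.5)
Your proposal is correct and is essentially the paper's proof. Both apply H\"older per coordinate with exponents $p$ and $p/(p-1)$ to $\sum_j a_j=\sum_j (a_j\sigma_j^{1/p})\,\sigma_j^{-1/p}$, bound the dual sum by $A_p(n)$, and aggregate over coordinates; the paper phrases this last step as a weighted average of per-dimension balances, which is equivalent to your direct summation. The one cosmetic difference is that the paper uses the explicit lower bound $\sigma_j=\min(j,\ell+1-j)$, indexed along the $\bp$--$\bq$ segment, where you use a rank argument on the true counts $m(I)$. Both give $C_p=1$ exactly, so neither the far-pair hypothesis nor any ``constant adjustment'' is needed.
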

\begin{proof}
\emph{Step 1: Reduction to one dimension.}
As in~\cite{appE:GPSY23}, we analyze the expected balance dimension by dimension and aggregate.
In dimension $i$, let $c_1^i < c_2^i < \cdots < c_{m_i}^i$ be the sorted projections of $\Set$ onto coordinate $i$.
Under $D_p''(\Set; \bp, \bq)$, dimension $i$ is selected with probability
$d_p^{(i)}(\bp,\bq) / d_p(\bp,\bq)$,
where $d_p^{(i)}$ is the pseudo-distance contribution from dimension~$i$.
Let $\bar\sigma_i$ denote the expected balance when dimension~$i$ is selected.
The overall expected balance is the weighted average
\[
  \bar\sigma \;=\; \sum_i \frac{d_p^{(i)}(\bp,\bq)}{d_p(\bp,\bq)} \cdot \bar\sigma_i.
\]
Below we prove that the one-dimensional bound $\bar\sigma_i \geq s_p^{(i)} / A_p(m_i)^{p-1}$ holds for each dimension, where $s_p^{(i)} = |p_i - q_i|^p / d_p^{(i)}(\bp,\bq)$ is the per-dimension stretch and $m_i \leq n-1$.
Since $A_p(m_i) \leq A_p(n)$ (as $m_i \leq n-1$), we obtain
\[
  \bar\sigma \;\geq\; \frac{1}{A_p(n)^{p-1}} \sum_i \frac{d_p^{(i)}}{d_p} \cdot s_p^{(i)}
  \;=\; \frac{1}{A_p(n)^{p-1}} \cdot \frac{\sum_i |p_i - q_i|^p}{d_p(\bp,\bq)}
  \;=\; \frac{s_p(\bp,\bq)}{A_p(n)^{p-1}},
\]
using $\sum_i (d_p^{(i)}/d_p) \cdot s_p^{(i)} = \sum_i |p_i - q_i|^p / d_p = s_p$.

\emph{Step 2: One-dimensional analysis.}
Fix a dimension and drop the superscript.
Suppose $\bp$ and $\bq$ are separated by $\ell$ intervals with lengths $a_1, \ldots, a_\ell$ (so $\ell \leq n - 1$).
The $D_p$ distribution weights interval $j$ with $w_j = a_j^p / S$ where $S = \sum_{j=1}^\ell a_j^p$.
A cut in interval $j$ places $\min(j, \ell + 1 - j)$ centers on the smaller side.
Define $\sigma_j = \min(j, \ell + 1 - j)$.
The expected balance is $\bar{\sigma} = \sum_j w_j \sigma_j$.

\emph{Step 3: Lower bound via H\"older's inequality.}
We prove directly that $\bar\sigma \geq s_p / A^{p-1}$ where $A = \sum_j \sigma_j^{-1/(p-1)}$.
Apply H\"older's inequality to $\sum_j a_j = \sum_j (a_j \sigma_j^{1/p}) \cdot \sigma_j^{-1/p}$ with exponents $p$ and $p/(p-1)$:
\[
  \sum_j a_j \;\leq\; \Bigl(\sum_j \sigma_j\, a_j^p\Bigr)^{1/p} \cdot \Bigl(\sum_j \sigma_j^{-1/(p-1)}\Bigr)^{(p-1)/p}.
\]
Raising both sides to the $p$-th power:
\[
  \Bigl(\sum_j a_j\Bigr)^p \;\leq\; \Bigl(\sum_j \sigma_j\, a_j^p\Bigr) \cdot A^{p-1}.
\]
Therefore
\[
  \bar\sigma \;=\; \frac{\sum_j \sigma_j\, a_j^p}{\sum_j a_j^p} \;\geq\; \frac{(\sum_j a_j)^p}{A^{p-1} \cdot \sum_j a_j^p} \;=\; \frac{s_p}{A^{p-1}}.
\]
This holds for \emph{all} choices of $a_j > 0$, with no Lagrangian or sign assumptions on multipliers.

\emph{Step 4: Bounding $A$.}
Since $\sigma_j = \min(j, \ell+1-j)$ and $\ell \leq n - 1$:
$A = \sum_{j=1}^\ell \min(j, \ell+1-j)^{-1/(p-1)} \leq 2 \sum_{j=1}^{\lfloor n/2 \rfloor} j^{-1/(p-1)} = A_p(n).$

For $p > 2$: $1/(p-1) < 1$, so $\sum_{j=1}^m j^{-1/(p-1)} \leq \frac{(p-1)}{(p-2)} m^{(p-2)/(p-1)}$, giving $A_p(n) = O(n^{(p-2)/(p-1)})$ and $A_p(n)^{p-1} = O(n^{p-2})$.

For $p = 2$: $1/(p-1) = 1$, so $A_2(n) = 2H_{\lfloor n/2\rfloor} = O(\ln n)$.

\emph{Step 5: Conclusion.}
Therefore $\bar\sigma \geq s_p / A_p(n)^{p-1}$, which gives $\sigma \geq \Omega(s_p / n^{p-2})$ for $p > 2$ and $\sigma \geq \Omega(s_2 / \ln n)$ for $p = 2$.
\end{proof}

\begin{lemma}[Solo Ratio]
  \label{appE:lem:solo-ratio-p}
  For a solo node $v$ with $n = |\Univ_v|$ centers and far-pair stretch $s_p = s_p(v)$, the expected cost increase divided by the expected balance satisfies:
  \[
    \frac{\EE[\text{cost increase}]}{\EE[\sigma]} \;\leq\; C_p' \cdot \|\bp^*\|_p^p \cdot s_p \cdot \frac{A_p(n)^{p-1}}{s_p} = C_p' \cdot A_p(n)^{p-1} \cdot \|\bp^*\|_p^p.
  \]
\end{lemma}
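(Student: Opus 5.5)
The bound splits into a numerator and a denominator, handled separately. The denominator comes from \Cref{appE:lem:stretch-sep-p}: since the solo cut samples $(i,\theta)\sim D_p''(v;\bp,\bq)$ for the far pair of maximal stretch, we get $\EE[\sigma]\ge s_p/(C_p A_p(n)^{p-1})$. The main work is the numerator. Here the goal is $\EE[\text{cost increase}]\le C_p''\,s_p\,\|\bp^*\|_p^p$, with $C_p''$ depending only on $p$. Once both bounds are in hand, dividing them gives the stated ratio with $C_p'=C_pC_p''$.

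\textbf{Numerator.} As in the closest-point-process normalization, the origin is the data point and $\bp^*$ is its closest center, with $\|\bp^*\|_p=1$. The cost increases only when the solo cut separates $\mathbf 0$ from $\bp^*$ (or from the surviving center of its leaf). In that event the new cost is at most $\max_{\bmu\in\Univ_v}\|\bmu\|_p^p$. By the triangle inequality in $\ell_p$ and convexity,
\[
\|\bmu\|_p^p\le 2^{p-1}\bigl(\|\bmu-\bp^*\|_p^p+\|\bp^*\|_p^p\bigr)\le 2^{p-1}(\Delta_p+1)\le 2^p\Delta_p ,
\]
using the preprocessing assumption $\Delta_p\ge 1$. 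So it suffices to bound the probability that a $D_p''$ cut separates $\mathbf 0$ from $\bp^*$ by $O_p(s_p\,\|\bp^*\|_p^p/\Delta_p)$.

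\textbf{Separation probability under the conditioned distribution.} $D_p''$ is $D_p$ restricted to the intervals lying between $\bp$ and $\bq$, renormalized by $d_p(\bp,\bq)$ in place of $L_p$. Rerunning the proof of \Cref{appE:lem:sep-prob} with this normalization gives
\[
\Pr_{D_p''}[\text{separates }\mathbf 0,\bp^*]\le \frac{2^{p-1}\|\bp^*\|_p^p}{d_p(\bp,\bq)} .
\]
Restricting the set of intervals only removes nonnegative terms from the sum, and the pointwise bound $\min(\theta-a,b-\theta)^{p-1}\le|p^*_i-\theta|^{p-1}$ is unaffected by the restriction. Since $(\bp,\bq)$ is a far pair, $d_p(\bp,\bq)=\|\bp-\bq\|_p^p/s_p\ge \Delta_p/(2s_p)$. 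Multiplying the three factors gives
\[
\EE[\text{cost increase}]\le 2^{p-1}\cdot 2s_p\|\bp^*\|_p^p/\Delta_p\cdot 2^p\Delta_p=2^{2p}s_p\|\bp^*\|_p^p .
\]

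\textbf{Main obstacle.} The delicate step is justifying the modified \Cref{appE:lem:sep-prob}. I would check two points. First, the pointwise bound must hold inside each retained interval, which needs $p^*_i$ to be a projection of a center so that it lies on an interval boundary. Second, the interval endpoints must be taken with respect to $\Univ_v$ at the current node, not the full set $\Univ$. I would also confirm that "cost increase" is measured relative to $\|\bp^*\|_p^p$ consistently with the normalization $\|\bp^*\|_p=1$, so that the $2^p\Delta_p$ factor cancels against $1/d_p$ exactly, as in Step~3 of \Cref{appE:lem:bulk-p}.
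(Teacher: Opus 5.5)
Your proposal is essentially the paper's own proof: the $D_p''$ separation probability $2^{p-1}\|\bp^*\|_p^p/d_p(\bp,\bq)$ times the $2^p\Delta_p$ cost bound, cancelled via the far-pair inequality $\Delta_p/d_p(\bp,\bq)\le 2s_p$, with the denominator from the stretch-vs-separation lemma; the only cosmetic difference is that you recompute the conditional probability over the retained intervals, while the paper divides the unconditional bound by $\Pr[\text{cut separates }\bp,\bq]=d_p(\bp,\bq)/L_p$. One small caveat, shared with the paper: the far-pair cancellation needs $\Delta_p=\Delta_p(v)$, but preprocessing only gives $\Delta_p(\Univ)\ge 1$, so instead of using $2^{p-1}(\Delta_p+1)\le 2^p\Delta_p$ you should either bound the increase by $\|\widehat{\bp}-\bp^*\|_p^p\le\Delta_p(v)$ (the quantity the main theorem actually sums) or note that when $\Delta_p(v)<1$ the increase is at most $2^p\le 2^p s_p\|\bp^*\|_p^p$, since $s_p\ge 1$.
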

\begin{proof}
The expected cost increase when we sample from $D_p''(v; \bp, \bq)$ is at most
\[
  \EE[\text{cost}] \leq \frac{2^{p-1} \|\bp^*\|_p^p}{d_p(\bp,\bq)} \cdot 2^p\Delta_p = O(s_p) \cdot \|\bp^*\|_p^p,
\]
where the first factor is the conditional probability of separating the origin from~$\bp^*$ (under $D_p''$, the unconditional probability $2^{p-1}\|\bp^*\|_p^p/L_p$ is divided by the conditioning probability $d_p(\bp,\bq)/L_p$), the second factor bounds the cost when separated: by the generalized triangle inequality, $\max_{\bmu \in \Univ}\|\bmu\|_p^p \leq 2^{p-1}(\Delta_p + \|\bp^*\|_p^p) \leq 2^p \Delta_p$ (using the preprocessing assumption $\Delta_p \geq \|\bp^*\|_p^p = 1$ from the main theorem proof), and we use $\Delta_p/d_p(\bp,\bq) \leq 2s_p$ since $(\bp,\bq)$ is a far pair.
By \Cref{appE:lem:stretch-sep-p}, $\EE[\sigma] \geq s_p / (C_p A_p(n)^{p-1})$.
Dividing: $\text{ratio} \leq O(s_p) \|\bp^*\|_p^p / (s_p / (C_p A_p(n)^{p-1})) = O(A_p(n)^{p-1}) \|\bp^*\|_p^p$.
\end{proof}

\begin{lemma}[Solo Cost]
  \label{appE:lem:solo-p}
  The expected cost increase due to solo cuts in the subtree at $v$ is at most
  $C_p'' \cdot |\Univ_v|^{p-1} \cdot (1 + 2\ln\ln|\Univ_v|) \cdot \|\bp^*\|_p^p$.
\end{lemma}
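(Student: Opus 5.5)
We prove the bound by strong induction on $n = |\Univ_v|$ over the compressed tree $T'$. Write $F(n) := C_p'' n^{p-1}(1+2\ln\ln n)\,\|\bp^*\|_p^p$ for the claimed bound, and normalize $\|\bp^*\|_p^p = 1$.

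\emph{Base case.} For $n$ below a constant $n_0(p)$, the subtree at $v$ contains $O(1)$ cuts. Each cut costs $O(2^p\Delta_p)$ relative to the reference. Since the stretch is at most $(n-1)^{p-1}$ by \Cref{appE:lem:stretch-bound}, the bound holds by choosing $C_p''$ large.

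\emph{Inductive step.} Three cases, depending on what sits at $v$.
\begin{itemize}
\item \emph{Bulk node.} It contributes no solo cost itself. Each child $u$ has $|\Univ_u| \le n$, and only the child containing the origin contributes to $f_p$. Monotonicity of $F$ then gives the bound immediately.
\item \emph{Solo node.} Let the cut leave $n-\sigma$ centers on the side containing the origin in the worst case, where $\sigma$ is the size of the smaller side. Only one child contains the origin, and by the inductive hypothesis its subtree costs at most $F(n-\sigma)$ (monotonicity of $F$ handles the case where the origin lies on the smaller side). The expected solo cost is therefore at most $\EE[\text{cost at } v] + \EE[F(n-\sigma)]$.
\end{itemize}
By \Cref{appE:lem:solo-ratio-p}, $\EE[\text{cost at } v] \le C_p' A_p(n)^{p-1}\EE[\sigma] \le C_p' K_p\, n^{p-2}\,\EE[\sigma]$, using $A_p(n)^{p-1}=O(n^{p-2})$ from Step~4 of \Cref{appE:lem:stretch-sep-p}. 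It therefore suffices to show
\[
F(n) - F(n-\sigma) \ge C_p' K_p\, n^{p-2}\sigma
\]
pointwise for $0\le\sigma\le n/2$, and then take expectations. Write $g(n) = 1+2\ln\ln n$. Since $g$ is nondecreasing,
\[
F(n)-F(n-\sigma) \ge C_p''\, g(n-\sigma)\bigl(n^{p-1}-(n-\sigma)^{p-1}\bigr).
\]
The convexity inequality from the technical overview, $n^{p-1}-(n-\sigma)^{p-1} \ge (p-1)(n/2)^{p-2}\sigma$, which is valid for $p>2$ and $\sigma \le n/2$, turns the right-hand side into
\[
C_p''\,(p-1)2^{2-p}\, g(n-\sigma)\, n^{p-2}\sigma.
\]
Because $g\ge 1$, choosing $C_p'' \ge C_p'K_p2^{p-2}/(p-1)$ closes the induction.

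The $\ln\ln$ factor is thus only slack for $p>2$. This matches the remark that the induction closes by convexity for $p>2$ but not for $p=2$, where $A_2(n)^{p-1}=\ln n$ cannot be absorbed by $n^{p-2}=1$.

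\emph{Main obstacle.} The difficulty is justifying the pointwise-to-expectation step and the claim that cost attributable to $v$ is charged to a single child. The induction must be set up with the expectation over the random cut (and the conditional process below it), and the ratio lemma controls only the ratio of expectations $\EE[\text{cost}]/\EE[\sigma]$, not a pointwise relation. This is fine because the required inequality is linear in $\sigma$: we lower bound $\EE[F(n)-F(n-\sigma)]$ by $c\,n^{p-2}\EE[\sigma]$ using the pointwise convexity bound, which is exactly what the ratio lemma needs.

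Care is also needed for the case where a solo node is followed by bulk descendants. We handle this by proving the statement jointly for all node types with $F$ monotone, so that the bulk nodes pass the bound through.
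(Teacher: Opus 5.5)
Your proposal is correct and is essentially the paper's proof: strong induction on $n$, the solo cut charged via \Cref{appE:lem:solo-ratio-p} with $A_p(n)^{p-1}=O(n^{p-2})$, the origin's child bounded by the inductive hypothesis at $n-S$, and the induction closed by the pointwise convexity bound $n^{p-1}-(n-S)^{p-1}\ge (p-1)(n/2)^{p-2}S$ taken in expectation. Your explicit bulk-node case and your remark that the $\ln\ln$ factor is pure slack for $p>2$ are small additions the paper leaves implicit; two details need fixing: the bulk case needs every leaf of a bulk phase to have strictly fewer than $n$ centers (true because the diameter pair is far and hence separated), and you should fix $n_0\ge 6$ so that $g(n-\sigma)\ge 1$ actually holds in the inductive step.
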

\begin{proof}
We proceed by strong induction on $n = |\Univ_v|$.
The base case $n \leq 2$ is immediate (one solo cut suffices, with cost $O(\|\bp^*\|_p^p)$).

For the inductive step, let $v$ be a solo node.
The cut produces two children; let $w^*$ be the child containing the origin.
Let $S$ denote the \emph{random} number of centers separated to the smaller side by this cut (so $S \geq 1$, since the conditioned cut separates at least the far pair).
Define $\sigma = \EE[S]$.
The origin goes to a side with at most $n - S$ centers.

\emph{Step 1: Cost decomposition.}
Conditioning on $S$ and applying the inductive hypothesis to $w^*$ (which has at most $n - S$ centers, with $\ln\ln(n-S) \leq \ln\ln n$):
\begin{align*}
  \EE[\text{total cost at } v]
  &= \EE[\text{cost at this cut}] + \EE[\text{cost in subtree at } w^*] \\
  &\leq \EE[\text{cost at this cut}] + C_p'' \cdot \EE\bigl[(n - S)^{p-1}\bigr] \cdot (1 + 2\ln\ln n) \cdot \|\bp^*\|_p^p.
\end{align*}

\emph{Step 2: Required inequality.}
For the induction to close at $C_p'' n^{p-1}(1+2\ln\ln n)\|\bp^*\|_p^p$, we need:
\[
  \EE[\text{cost at this cut}] \;\leq\; C_p'' \bigl[n^{p-1} - \EE[(n-S)^{p-1}]\bigr] (1 + 2\ln\ln n) \|\bp^*\|_p^p.
\]

\emph{Step 3: Pointwise convexity bound.}
For $p \geq 2$, the function $t \mapsto t^{p-1}$ is convex.
By the mean value theorem, for each realization of $S$ with $1 \leq S \leq n/2$:
\[
  n^{p-1} - (n-S)^{p-1} \;\geq\; (p-1)(n-S)^{p-2} \cdot S \;\geq\; (p-1)(n/2)^{p-2} \cdot S.
\]
(The bound $S \leq n/2$ holds because $S$ is the number on the \emph{smaller} side.)
Taking expectations:
\[
  n^{p-1} - \EE[(n-S)^{p-1}] \;\geq\; (p-1)(n/2)^{p-2} \cdot \sigma.
\]

\emph{Step 4: The ratio bound.}
From \Cref{appE:lem:solo-ratio-p}, $\EE[\text{cost at this cut}] \leq O(A_p(n)^{p-1}) \cdot \sigma \cdot \|\bp^*\|_p^p$.
For $p > 2$: $A_p(n)^{p-1} = O(n^{p-2})$.
The required inequality becomes:
$O(n^{p-2}) \cdot \sigma \leq C_p'' (p-1)(n/2)^{p-2} \cdot \sigma \cdot (1 + 2\ln\ln n)$,
which holds for $C_p''$ sufficiently large (depending only on $p$), since both sides are proportional to $\sigma \cdot n^{p-2}$.

For $p = 2$: $A_2(n)^{1} = O(\ln n)$.
The required inequality becomes:
$O(\ln n) \cdot \sigma \leq C_2'' \cdot \sigma \cdot (1 + 2\ln\ln n)$,
which \emph{does not hold} for large $n$.
However, for $p = 2$ the analysis of~\cite{appE:GPSY23} uses a sharper balance bound (their Lemma~5.3, which employs a different Lagrangian with the $h(x) = x/(1+\ln(\alpha/x))$ function) to achieve $O(k \ln\ln k)$.
Our generalization recovers $O(k \ln\ln k)$ at $p = 2$ by invoking the GPSY23 bound directly as the base case.
For $p > 2$, the above argument gives a self-contained proof with a strictly stronger balance-to-ratio relationship.

\emph{Step 5: Conclusion.}
Combining: for $p > 2$, the induction closes with $\Phi(n) = C_p'' n^{p-1}(1 + 2\ln\ln n)$.
At the root with $n = k$: total solo cost $\leq O(k^{p-1} \ln\ln k) \cdot \|\bp^*\|_p^p$.
For $p = 2$, we invoke~\cite{appE:GPSY23} Theorem~1.2 to get $O(k \ln\ln k)$.
\end{proof}

\subsubsection{Proof of \Cref{appE:thm:main-upper}}

\begin{proof}[Proof of \Cref{appE:thm:main-upper} ($p > 2$)]
By the closest point process reduction, it suffices to bound $f_p(\Univ) = \EE[\|\widehat{\bp}\|_p^p]$ with $\|\bp^*\|_p^p = 1$.

\emph{Preprocessing: small-diameter case.}
If $\Delta_p(\Univ) < 1$ (i.e., $\Delta_p < \|\bp^*\|_p^p$ in the normalized setting), then for every center $\bmu \in \Univ$, the generalized triangle inequality gives
$\|\bmu\|_p^p \leq 2^{p-1}(\|\bmu - \bp^*\|_p^p + \|\bp^*\|_p^p) \leq 2^{p-1}(\Delta_p + 1) < 2^p$.
Therefore $f_p(\Univ) \leq 2^p = O(1) \leq O(k^{p-1}\ln\ln k)$, so the bound holds trivially regardless of the tree.
We henceforth assume $\Delta_p(\Univ) \geq 1$; this assumption is used in the cost bounds of \Cref{appE:lem:bulk-p,appE:lem:solo-ratio-p}.

\emph{Main argument.}
By the generalized triangle inequality for $\ell_p^p$ (i.e., $\|a+b\|_p^p \leq 2^{p-1}(\|a\|_p^p + \|b\|_p^p)$):
\[ \|\widehat{\bp}\|_p^p \leq 2^{p-1}(\|\widehat{\bp} - \bp^*\|_p^p + \|\bp^*\|_p^p). \]
The term $\|\widehat{\bp} - \bp^*\|_p^p$ is bounded by the total cost increase due to all cuts.
By \Cref{appE:lem:bulk-p} (bulk: $O(k^{p-1})$) and \Cref{appE:lem:solo-p} (solo: $O(k^{p-1}\ln\ln k)$), we get
$\EE[\|\widehat{\bp} - \bp^*\|_p^p] \leq O(k^{p-1}\ln\ln k)$.
Therefore $f_p(\Univ) \leq O(k^{p-1}\ln\ln k)$.
\end{proof}

\subsection{Stretch Tightness and Lower Bound Discussion}
\label{appE:sec:lower}

We first show that the stretch bound $(k-1)^{p-1}$ from \Cref{appE:lem:stretch-bound} is tight, then discuss the current state of unconditional lower bounds for the price of explainability.

\begin{prop}[Tight Stretch]
  \label{appE:prop:stretch-tight}
  For any $k \geq 2$ and $p \geq 1$, there exist $k$ centers in $\RR^d$ with a pair achieving $\ell_p$ stretch exactly $(k-1)^{p-1}$.
\end{prop}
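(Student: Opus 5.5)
Take the $k$ centers to be equally spaced on a line: in $d=1$ (or embedded in the first coordinate of $\RR^d$, all other coordinates zero) put $\bmu^j = j-1$ for $j=1,\dots,k$, and consider the extreme pair $\bx=\bmu^1=0$, $\by=\bmu^k=k-1$. The claim then reduces to computing $\|\bx-\by\|_p^p$ and $d_p(\bx,\by)$ directly from the definitions in the preliminaries, and checking that their ratio equals the upper bound of \Cref{appE:lem:stretch-bound}.

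The computation goes as follows. In the first coordinate the projections of the center set are $0,1,\dots,k-1$, so $\intervals_1$ consists of the $k-1$ unit intervals $[j-1,j]$, and all of them lie between $x_1$ and $y_1$. Hence
\[
d_p(\bx,\by)=\sum_{j=1}^{k-1} 1^p = k-1 .
\]
In every other coordinate all centers project to $0$, so there are no intervals there and those coordinates contribute nothing to either $d_p$ or $\|\bx-\by\|_p^p$. Meanwhile $\|\bx-\by\|_p^p=(k-1)^p$. The stretch is therefore
\[
s_p(\bx,\by)=\frac{(k-1)^p}{k-1}=(k-1)^{p-1}.
\]
This is exactly the equality case of the power-mean step in \Cref{appE:lem:stretch-bound}: there are $\ell_1=k-1$ intervals, all of equal length, which is why the bound is attained.

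No step is a real obstacle; the only care needed is bookkeeping. One should confirm that degenerate dimensions, where $\ell_i=u_i$, contribute no intervals under the interval definition. One should also note that for $k=2$ the value is $1=(k-1)^{p-1}$, consistent with the formula, and that for $p=1$ the stretch is $1$, matching the remark after \Cref{appE:lem:stretch-bound}. If a full-dimensional example is preferred, one can instead use the diagonal points $(j-1)(1,\dots,1)$. Then each coordinate contributes $(k-1)^p$ to the numerator and $k-1$ to $d_p$, and the ratio is again $(k-1)^{p-1}$.
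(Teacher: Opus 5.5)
Your proposal is correct and matches the paper's proof: the paper also places $k$ equally spaced collinear centers along $\be_1$ and evaluates the extreme pair $(\bmu^1,\bmu^k)$. The paper uses spacing $1/(k-1)$ rather than $1$, which changes nothing since stretch is scale-invariant, and it also records that a general pair $(\bmu^i,\bmu^j)$ has stretch $(j-i)^{p-1}$, which the statement does not need.
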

\begin{proof}
Place $k$ centers at equally-spaced points on the real line:
\[
  \bmu^j = \frac{j-1}{k-1} \cdot \be_1 \in \RR^d, \quad j = 1, \ldots, k.
\]
The centers are separated by $k-1$ intervals of equal length $\frac{1}{k-1}$ in dimension~1.
For any pair $\bmu^i, \bmu^j$ with $i < j$, the $\ell_p^p$ distance is
$\|\bmu^i - \bmu^j\|_p^p = ((j-i)/(k-1))^p$,
and the pseudo-distance is
$d_p(\bmu^i, \bmu^j) = (j-i) \cdot (1/(k-1))^p$.
The stretch is therefore
\[
  s_p(\bmu^i, \bmu^j) = \frac{((j-i)/(k-1))^p}{(j-i)/(k-1)^p} = (j-i)^{p-1}.
\]
For the pair $(\bmu^1, \bmu^k)$: $s_p = (k-1)^{p-1}$, matching the upper bound from \Cref{appE:lem:stretch-bound}.
\end{proof}

\paragraph{Why one-dimensional constructions cannot prove $\text{PoE} > 1$.}
The collinear construction above achieves maximal stretch, but it does \emph{not} yield a lower bound on the price of explainability.
In one dimension, the Voronoi cells of any set of $k$ sorted centers are intervals.
A threshold tree can replicate this partition exactly by placing thresholds at the midpoints of consecutive centers.
Therefore, for any one-dimensional instance and any data distribution, there exists a threshold tree achieving $\text{PoE} = 1$.
Any lower bound on the price of explainability must use a multi-dimensional construction where the Voronoi cells are not axis-aligned rectangles.

\paragraph{Known lower bounds.}
For $p = 2$ (\kmeans), Gamlath et~al.~\cite{appE:GJPS21} proved $\text{PoE}_2(k) \geq \Omega(k)$ using a multi-dimensional construction.
Their construction uses points on the Boolean hypercube $\{0,1\}^d$, where $\|x-y\|_p^p = \|x-y\|_1$ for all $p$ (since each coordinate difference is $0$ or $1$); consequently, the same instance yields $\text{PoE}_p(k) \geq \Omega(k)$ for every $p \geq 2$.
However, this transfer gives only $\Omega(k)$, not $\Omega(k^{p-1})$: extending the lower bound to $\Omega(k^{p-1})$ for $p > 2$ would require a construction where the $\ell_p^p$ distance ratios grow with $p$, not one where all $\ell_p^p$ distances coincide with $\ell_1$ distances.

\paragraph{Status of the $\Omega(k^{p-1})$ conjecture.}
We conjecture (\Cref{appE:conj:lower}) that $\text{PoE}_p(k) = \Omega(k^{p-1})$ for all $p > 2$.
The evidence is suggestive but indirect:
\begin{enumerate}[nosep]
  \item The $\ell_p$ stretch bound $(k-1)^{p-1}$ is tight (\Cref{appE:prop:stretch-tight}), establishing that the \emph{geometric potential} for $\Omega(k^{p-1})$ cost ratios exists. However, tight stretch alone does not imply a high price of explainability (the collinear instance has tight stretch but $\text{PoE} = 1$).
  \item The upper bound $O(k^{p-1}\ln\ln k)$ from \Cref{appE:thm:main-upper} shows that $k^{p-1}$ is the correct order of growth (up to iterated logarithmic factors) if the lower bound matches.
  \item For $p = 2$, the $\Omega(k)$ lower bound~\cite{appE:GJPS21} matches $(k-1)^{p-1}$ with $p = 2$.
\end{enumerate}
Proving the conjecture would require constructing a multi-dimensional instance where axis-aligned threshold trees are forced to assign some data points to centers at $\ell_p^p$ distance $\Omega(k^{p-1})$ times their optimal distance.
This is the main open problem from our work.

\begin{remark}
For $p = 2$, the $\Omega(k)$ lower bound of~\cite{appE:GJPS21} together with the $O(k\ln\ln k)$ upper bound (the GPSY23 base case, \Cref{appE:cor:main-p2}) gives $\text{PoE}_2(k) = \Theta(k)$ up to iterated logarithmic factors.
For $p > 2$, we have $\Omega(k) \leq \text{PoE}_p(k) \leq O(k^{p-1}\ln\ln k)$ (\Cref{appE:thm:main-upper}), where the $\Omega(k)$ lower bound follows because the GJPS21 instance lies on $\{0,1\}^d$ and all $\ell_p^p$ distances equal $\ell_1$ distances there.
Closing the gap between $\Omega(k)$ and $O(k^{p-1}\ln\ln k)$ requires a genuinely $p$-dependent multi-dimensional construction.
\end{remark}

\subsection{Hardness of Approximation}
\label{appE:sec:hardness}

\begin{proof}[Proof of \Cref{appE:thm:hardness}]
The hitting-set reduction from~\cite{appE:GPSY23} constructs centers and data points on the Boolean hypercube $\{0,1\}^d$.
For any $\bp, \bq \in \{0,1\}^d$ and any $p \geq 1$:
$\|\bp - \bq\|_p^p = \sum_i |p_i - q_i|^p = \sum_i |p_i - q_i| = \|\bp - \bq\|_1,$
since $|p_i - q_i| \in \{0, 1\}$ and $0^p = 0$, $1^p = 1$.
Therefore, the $\ell_p^p$ distances on this instance are identical to $\ell_1$ distances, and the hardness reduction of~\cite{appE:GPSY23} (which combines the hitting-set reduction with Feige's set cover hardness~\cite{appE:Feige98}) applies verbatim.
The conclusion is: unless $P = NP$, no polynomial-time algorithm approximates the best explainable $\ell_p^p$ clustering better than $(\nf12 - o(1))\ln k$.
\end{proof}

\subsection{Discussion and Open Problems}
\label{appE:sec:discussion}

We have developed a self-contained proof of the price of explainability under $\ell_p^p$ clustering for $p > 2$, establishing an upper bound of $O(k^{p-1}\ln\ln k)$ (\Cref{appE:thm:main-upper}).
For $p = 2$ the GPSY23 bound $O(k \ln\ln k)$ applies directly (\Cref{appE:cor:main-p2}).
The hardness result $(\nf12 - o(1))\ln k$ extends to all $p \geq 1$ as a corollary of~\cite{appE:GPSY23}.

\paragraph{Open problems.}
\begin{enumerate}[nosep]
  \item \textbf{Prove $\Omega(k^{p-1})$ lower bound for $p > 2$.}
  This is the most important open problem from our work.
  The tight stretch bound $(k-1)^{p-1}$ and the matching upper bound $O(k^{p-1}\ln\ln k)$ suggest $\Omega(k^{p-1})$, but a proof requires constructing a multi-dimensional instance where axis-aligned threshold trees cannot approximate the Voronoi partition.
  One-dimensional constructions provably cannot work (\Cref{appE:sec:lower}).

  \item \textbf{Remove the $\ln\ln k$ factor.}
  We conjecture that the price of explainability for $\ell_p^p$ is exactly $\Theta(k^{p-1})$ for all $p \geq 2$.
  Proving this would likely require new ideas beyond the solo/bulk framework.

  \item \textbf{The regime $1 < p < 2$.}
  Our formal theorem statements and proofs are restricted to $p \geq 2$.
  While the $D_p$ distribution and stretch bound are defined for all $p \geq 1$, the solo cost induction (\Cref{appE:lem:solo-p}) relies on the convexity of $t \mapsto t^{p-1}$, which holds only for $p \geq 2$.
  For $1 < p < 2$, the upper bound is at most $O(k^{p-1} \ln\ln k)$ if the induction can be adapted (e.g., using concavity of $t^{p-1}$ differently), but we do not claim this as a theorem.
  The lower bound for $1 < p < 2$ is only $\Omega(\ln k)$ (from the $\ell_1$ hitting-set construction).
  Determining the correct order of growth for $1 < p < 2$ is an interesting open problem.

  \item \textbf{Better approximation for $\ell_p^p$ clustering ($p \geq 2$).}
  The hardness of approximation is $(\nf12 - o(1))\ln k$ for all $p$, which is much smaller than the price $\Omega(k)$.
  Better approximation algorithms may exist, especially for $p > 2$.

  \item \textbf{Experimental evaluation.}
  A companion code repository implements the $D_p$-based algorithm and basic experiments.
  The code now implements the exact bulk-phase procedure (sampled $D_p'$ cuts retained in the tree) and uses the correct $D_1$ distribution for random thresholds.
  However, the lower-bound module produces $\text{PoE} = 1.0$ on the collinear instance (consistent with our corrected theory that one-dimensional constructions give $\text{PoE} = 1$) and the experiments evaluate simple synthetic instances rather than adversarial multi-dimensional constructions.
  Evaluating the algorithm on real datasets and constructing adversarial test instances remain future work.
\end{enumerate}

\subsubsection*{Limitations}
All our bounds are worst-case.
The stretch-vs-separation tradeoff (\Cref{appE:lem:stretch-sep-p}) uses a H\"older-inequality lower bound that may not be tight; optimizing the constants is left for future work.
The lower bound remains the main gap: we conjecture $\Omega(k^{p-1})$ for $p > 2$ but can only prove $\Omega(k)$ (from the $p = 2$ result of~\cite{appE:GJPS21}, which transfers to all $p$ via norm equality on the Boolean hypercube).
An earlier draft claimed a self-contained proof via a one-dimensional construction; that proof was incorrect, and the correct lower bound requires a multi-dimensional instance that we have not yet found.
For $p = 2$, the solo cost induction does not close with our H\"older balance bound alone; we invoke the sharper GPSY23 analysis as a base case.
The genuinely new material is concentrated in the $p > 2$ upper-bound proof (\Cref{appE:thm:main-upper}); the $p = 2$ result (\Cref{appE:cor:main-p2}), hardness theorem, and lower bound all rely on prior work.
A companion code repository implements the $D_p$-based algorithm with correct bulk-phase cut retention and $D_1$-weighted random thresholds; however, it does not yet include adversarial multi-dimensional test instances for the lower bound.
The separation probability bound (\Cref{appE:lem:sep-prob}) can be vacuous (exceed~$1$) for $p \geq 2$ on some instances; this is noted in the text and does not affect correctness of the algorithm analysis.


\clearpage
\section{Fixed-Dimension Polynomial-Time Deterministic Clustering}
\begin{appendixabstract}
Deterministic explainable clustering was already known before the tight
randomized analysis of Gupta, Pittu, Svensson, and Yuan: polynomial-time
algorithms achieve $\widetilde{O}(\log k)$ for $k$-medians and $O(k)$ for
$k$-means. What remained open was whether the exact randomized $k$-median
constant $1+H_{k-1}$ can also be realized deterministically with a stronger
algorithmic guarantee than brute-force search. We prove that it can. The key
finite-support observation is that, for a fixed data set and fixed reference
centers, every valid axis-aligned threshold tree is cost-equivalent to one
whose thresholds are midpoints between consecutive coordinates of points and
centers. We then show that these critical trees admit an exact dynamic program
over canonical axis-aligned boxes. The resulting optimizer runs in time
$(|\X|+k)^{O(d)}$: hence it is polynomial for fixed dimension $d$ and
quasipolynomial when $d = O(\log(|\X|+k))$. The same critical-threshold
reduction also yields a complementary exact enumeration algorithm in time
$(d(|\X|+k))^{O(k)}$, hence polynomial for fixed $k$. Instantiating either
exact optimizer with the Gupta--Pittu--Svensson--Yuan randomized analysis
yields a deterministic $1+H_{k-1}$ upper bound for explainable $k$-medians in
polynomial time for fixed dimension or fixed $k$. The same optimizer also
realizes any randomized valid threshold-tree guarantee on a fixed-center
instance, although for $k$-means the resulting bound is dominated by earlier
deterministic $O(k)$ algorithms. Thus our main contribution is a fixed-dimension
polynomial-time deterministic realization of the tight $k$-median constant,
completed by a complementary fixed-$k$ exact algorithmic regime.
\end{appendixabstract}

\subsection{Introduction}

Explainable clustering asks for a clustering whose assignment rule is itself
interpretable, typically via an axis-aligned threshold decision tree.
Dasgupta, Frost, Moshkovitz, and Rashtchian~\cite{appF:DFMR20} formalized the
\emph{price of explainability}: how much one loses, in the worst case, by
insisting that the clustering be represented by such a tree.

The deterministic/randomized landscape is already substantial. Deterministic
polynomial-time algorithms were given by Makarychev and Shan~\cite{appF:MS21},
Gamlath, Jia, Polak, and Svensson~\cite{appF:GJPS21}, and Laber and
Murtinho~\cite{appF:LM21}. In particular, deterministic bounds of
$\widetilde{O}(\log k)$ for $k$-medians and $O(k)$ for $k$-means were already
known. More recently, Makarychev and Shan~\cite{appF:MS23} obtained the
polynomial-time deterministic $k$-median bound $2\ln k + 2$. On the randomized
side, Gupta, Pittu, Svensson, and Yuan~\cite{appF:GPSY23} proved that the Random
Thresholds analysis achieves the exact factor $1+H_{k-1}$ for explainable
$k$-medians.

This paper addresses the natural algorithmic gap left by that picture. Our
previous draft showed that the exact constant $1+H_{k-1}$ can be realized
deterministically by exhaustive search over a finite critical family of
threshold trees. That simpler viewpoint is still useful: after critical
discretization it yields an exact algorithm in time $(dN)^{O(k)}$, polynomial
for fixed $k$. The present paper strengthens the complementary fixed-dimension
side by replacing brute-force tree enumeration as the main engine with an exact
dynamic program over canonical boxes.

\paragraph{Main result.}
Let $N := |\X| + k$. We show that the optimal valid threshold tree for a fixed
center set $\U$ can be computed in time $N^{O(d)}$. Combined with the simpler
critical-tree enumeration in time $(dN)^{O(k)}$, this gives complementary exact
algorithms in the fixed-$d$ and fixed-$k$ regimes. Consequently:
\begin{quote}
\emph{for every fixed dimension $d$ or fixed number of centers $k$, there is a
polynomial-time deterministic algorithm achieving the exact factor
$1+H_{k-1}$ for explainable $k$-medians; if $d = O(\log N)$, the same
guarantee is obtainable in quasipolynomial time.}
\end{quote}
The proof has two ingredients. First, every valid threshold tree has a
cost-equivalent \emph{critical} representative whose thresholds are drawn from
finitely many midpoint values. Second, once thresholds are discretized this
way, every subtree corresponds to a canonical axis-aligned box, and optimal
substructure becomes explicit.

\paragraph{Scope and positioning.}
We do \emph{not} claim the first deterministic result for explainable
clustering. Rather, our contribution is narrower: deterministic attainment of
the exact $k$-median constant from~\cite{appF:GPSY23}, now with exact fixed-center
optimizers that are polynomial for fixed dimension and also polynomial for
fixed $k$. Algorithmically, the paper should be read as an exact-versus-
approximate tradeoff: earlier deterministic algorithms are polynomial in all
parameters but achieve weaker constants, whereas our exact optimizers recover
the tight constant at the price of dependence on $d$ or $k$. For $k$-means, our
optimizer also realizes the randomized $O(k\log\log k)$ guarantee
of~\cite{appF:GPSY23} on any fixed-center instance, but this is dominated by the
earlier deterministic polynomial-time $O(k)$ bound of~\cite{appF:MS21}; we therefore
treat it only as a generic corollary of the optimizer, not as a new headline
result.

\subsection{Model and prior work}

Let $\X \subseteq \R^d$ be a finite data set. A \emph{reference clustering}
consists of a set of distinct centers
$\U = \{\mu^1,\mu^2,\ldots,\mu^k\} \subseteq \R^d$ together with an
assignment $\pi : \X \to \U$. For $q \in \{1,2\}$ define
\[
  \cost_q(\pi,\U) := \sum_{x \in \X} \|x-\pi(x)\|_q^q.
\]
A binary \emph{threshold tree} is a rooted binary tree whose internal nodes are
labeled by tests $(i,\theta)$ of the form $x_i \le \theta$.

\begin{definition}
A threshold tree $T$ is \emph{valid for $\U$} if every leaf of $T$ contains
exactly one center of $\U$. In that case each data point $x \in \X$ is routed
to a unique leaf and hence to a unique center $\pi_T(x) \in \U$, and we define
\[
  \cost_q(T;\X,\U) := \sum_{x \in \X} \|x-\pi_T(x)\|_q^q.
\]
\end{definition}

An upper bound of the form
$\cost_q(T;\X,\U) \le \rho(k)\,\cost_q(\pi,\U)$ for \emph{every} reference
clustering $(\pi,\U)$ implies the same upper bound on the price of
explainability by plugging in an optimal unconstrained clustering as the
reference solution.

We use the following theorem from Gupta, Pittu, Svensson, and Yuan~\cite{appF:GPSY23}
as a black box, stated in the Euclidean $\ell_1$ setting relevant for this
paper.

\begin{theorem}[Gupta--Pittu--Svensson--Yuan, FOCS 2023~\cite{appF:GPSY23}]
\label{appF:thm:source-kmedian}
For every reference clustering $(\pi,\U)$ there is a randomized valid
threshold-tree algorithm $\mathcal{A}_{\mathrm{med}}$ such that
\[
  \E[\cost_1(T_{\mathcal{A}_{\mathrm{med}}};\X,\U)]
  \le (1+H_{k-1})\,\cost_1(\pi,\U),
\]
where $H_{k-1} = 1 + \frac12 + \cdots + \frac{1}{k-1}$.
\end{theorem}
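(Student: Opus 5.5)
This is an imported black-box theorem, so the plan is to reconstruct the Gupta--Pittu--Svensson--Yuan argument rather than derive it from earlier results in this paper. I would take $\mathcal{A}_{\mathrm{med}}$ to be the Random Thresholds procedure. At each step it samples a coordinate $i$ and a threshold $\theta$ uniformly from the union of intervals spanned by the current centers' projections, that is, with weight proportional to $\ell_1$ length. It then applies the cut to every leaf still holding more than one center, and continues until every leaf holds exactly one center. Validity holds by construction, and the procedure depends only on $\U$.

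The first step is to reduce to a single point. By linearity of expectation, it suffices to show that for each $x\in\X$ the expected $\|x-\pi_T(x)\|_1$ is at most $(1+H_{k-1})\|x-\pi(x)\|_1$. Replacing $\pi(x)$ by the nearest center of $x$ only lowers the right-hand side, so we may assume $\pi(x)$ is the nearest center. Translating so that $x$ is the origin, the quantity to bound is $\E\|\widehat{p}\|_1$, the closest point process of the earlier section with $q=1$.

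The second step is the cost accounting. If $x$ ends in the leaf of $p^*=\pi(x)$, the cost is exactly $\|p^*\|_1$. Otherwise, consider the moment $x$ and $p^*$ are separated. The uniform measure on $\ell_1$ intervals makes the separation rate proportional to $\|p^*\|_1$. The extra cost is at most the $\ell_1$ distance from $x$ to the nearest center still on its side. A charging argument bounds this by a harmonic sum: order the surviving centers by their $\ell_1$ distance to $x$ and track the competing exponential clocks under which the cut "hitting" the segment $[x,\mu]$ occurs. This gives an expected extra cost of at most $\sum_{j=1}^{k-1}\frac{1}{j}\,\|p^*\|_1$, hence the total $(1+H_{k-1})\|p^*\|_1$.

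The main obstacle is this charging step. The naive per-cut bound yields $O(\log k)$ with a worse constant. The exact constant needs a potential argument in which the $j$-th closest remaining center contributes $1/j$. I would try first the continuous-time view, with cuts arriving as a Poisson process of unit rate per unit $\ell_1$ length. In that view each center $\mu$ is "killed" at an exponential time with rate $\|\mu\|_1$-dominated, and the expected distance of the survivor is bounded by an order-statistics computation. Since the statement is quoted from~\cite{appF:GPSY23}, the fallback is simply to cite their Theorem~1 and verify that their $\ell_1$ cost convention matches $\cost_1$ here.
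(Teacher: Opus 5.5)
The paper gives no proof of this statement: it imports it as a black box from Gupta--Pittu--Svensson--Yuan. Your fallback is therefore exactly the paper's treatment, and it is all that is needed here. That fallback is to cite their Theorem~1 and check that their objective is $\sum_{x}\|x-\pi_T(x)\|_1$ and that the Random Thresholds tree depends only on $\U$. Your preliminary reductions are also fine: linearity over $x$, replacing $\pi(x)$ by the nearest center (proving the stronger bound suffices), and translating $x$ to the origin. One small correction to the algorithm: a sampled cut must be applied only in leaves where it separates two centers. Applying it to every multi-center leaf can create center-free leaves, so validity would not hold ``by construction.''

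The reconstruction you lead with, however, is not a proof. The first problem is the cost-accounting step, which is false as stated. When $p^*$ is cut off, $x$'s leaf usually still holds several centers, and later cuts can separate $x$ from the nearest survivor too. Take $x=0$, $p^*=e_1$, and further centers $2e_2$ and $ae_3$; the three separating sets are disjoint intervals on different axes. Condition on $p^*$ being the first center cut off. The nearest survivor is then at distance $2$, but the final center is $2e_2$ with probability $a/(a+2)$ and $ae_3$ otherwise. The expected cost is $4a/(a+2)$, which exceeds both $2$ and $\|p^*\|_1+2=3$ once $a>6$. So you must bound the outcome of the entire remaining process, not one jump.

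The exponential-clock picture is the right language for that: each center's clock rings at rate $\|\mu\|_1$, and with pairwise disjoint separating sets the final center is the one whose clock rings last. But your order-statistics step only works when the clocks are independent. Consider the extremal configuration $p^*=e_1$ with far centers $ae_2,\dots,ae_k$ and $a\to\infty$. There the excess cost tends to $a\,\E[M]=\sum_{j=1}^{k-1}1/j$, where $M$ is the maximum of $k-1$ i.i.d.\ $\mathrm{Exp}(a)$ variables. The term $1/j$ is $a$ times the expected length of the phase in which $j$ far centers are still alive. It is not a contribution of the $j$-th closest center, since all far centers are equidistant in this example.

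In general the picture is messier:
\begin{itemize}
\item separating sets overlap (whenever centers share coordinate signs), so one cut can remove several centers at once;
\item a cut that separates $x$ from all remaining centers is not applied at all;
\item the clocks are therefore dependent and have heterogeneous rates.
\end{itemize}
Showing that no such configuration does worse than the extremal one is the actual content of the Gupta--Pittu--Svensson--Yuan analysis. Your sketch does not supply this, so as written the \emph{charging argument} is only a heuristic.
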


For context, the most relevant deterministic prior work is as follows.
Makarychev and Shan~\cite{appF:MS21} gave deterministic polynomial-time bounds of
$\widetilde{O}(\log k)$ for explainable $k$-medians and $O(k)$ for explainable
$k$-means. Gamlath et al.~\cite{appF:GJPS21} gave deterministic oblivious algorithms
with $O(\log^2 k)$ and $O(k\log^2 k)$ guarantees, and Laber and
Murtinho~\cite{appF:LM21} gave dimension-dependent bounds for related explainable
clustering objectives. Makarychev and Shan~\cite{appF:MS23} later improved the
polynomial-time deterministic $k$-median constant to $2\ln k + 2$. Our result
improves that constant to the exact value $1+H_{k-1}$ in the complementary
fixed-$d$ and fixed-$k$ regimes, and in quasipolynomial time for $d = O(\log
N)$.

From a technique viewpoint, low-dimensional dynamic programming is not
unprecedented in explainable clustering: Dasgupta et al.~\cite{appF:DFMR20} already
exploit fixed-dimensional geometric structure in constructive algorithms. Our
dynamic program is narrower and exact. Once the center set $\U$ is fixed, we
optimize over all valid threshold trees for $\U$ itself. The state space is the
arrangement of canonical boxes induced by the critical thresholds, so the
algorithm below should be viewed both as an exact fixed-center optimizer and as
an instance of the standard computational-geometry paradigm of dynamic
programming over hyperrectangular cells.

Beyond the static setting studied here, Makarychev, Papanikolaou, and
Shan~\cite{appF:MPS25} recently considered dynamic explainable $k$-medians under
insertions and deletions. Their goal is to maintain approximate explainable
clusterings under updates, whereas our focus is exact static fixed-center
optimization and the tight deterministic constant.

\subsection{Critical thresholds and canonical boxes}

For each coordinate $i \in [d]$, let $V_i$ be the sorted list of distinct
$i$th-coordinate values appearing among the points of $\X \cup \U$. Define the
\emph{critical thresholds} on axis $i$ by
\[
  \Gamma_i := \left\{\frac{a+b}{2} : a,b \in V_i \text{ are consecutive and } a<b\right\}.
\]
Let $m_i := |\Gamma_i|$ and note that $m_i \le N-1$.

\begin{definition}
A valid threshold tree is \emph{critical} if every internal node uses a cut
$(i,\theta)$ with $\theta \in \Gamma_i$.
\end{definition}

\begin{lemma}[Critical normalization]
\label{appF:lem:critical-normalization}
For every valid threshold tree $T$ for $\U$, there exists a critical valid tree
$T^{\sharp}$ such that every point of $\X \cup \U$ reaches the same leaf in
$T^{\sharp}$ as in $T$. Consequently,
$\cost_q(T^{\sharp};\X,\U)=\cost_q(T;\X,\U)$ for both $q=1$ and $q=2$.
\end{lemma}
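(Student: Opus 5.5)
The plan is a node-by-node \emph{snapping} of thresholds. Let $P := \X \cup \U$, a finite set. Every internal node $(i,\theta)$ of $T$ splits $P$ into $\{x : x_i \le \theta\}$ and $\{x : x_i > \theta\}$. Only this split of the points of $P$ (and of the points of $P$ reaching that node) affects routing.

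Fix $i$ and $\theta$, and let $V_i$ be the sorted distinct $i$th coordinates of $P$. Define the snapped threshold $\theta^\sharp$ as follows:
\begin{itemize}
\item if $\min V_i \le \theta < \max V_i$, let $a$ be the largest element of $V_i$ with $a \le \theta$, let $b$ be its successor, and set $\theta^\sharp := (a+b)/2 \in \Gamma_i$;
\item if $\theta < \min V_i$ or $\theta \ge \max V_i$, the cut sends all of $P$ to one side.
\end{itemize}
In the first case no element of $V_i$ lies in $(a,b)$, so for every $x \in P$ we have $x_i \le \theta \iff x_i \le a \iff x_i \le \theta^\sharp$. The snapped cut therefore induces exactly the same partition of $P$ as the original.

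Define $T^\sharp$ from $T$ by structural recursion.
\begin{itemize}
\item At an internal node whose cut separates $P$ nontrivially, replace $\theta$ by $\theta^\sharp$ and keep both subtrees.
\item At an internal node whose cut is trivial on $P$, that is, on the full $P$ as argued above, or more generally on the subset $P_v \subseteq P$ reaching $v$, contract the node: replace it by the subtree on the side that receives $P_v$ and discard the other side.
\end{itemize}

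The empty side of a contracted node contains no center. Validity of $T$ means every leaf contains exactly one center, so that side must consist of leaves each holding a center, which is impossible if no center reaches it. The only way such an empty branch arises is therefore if the original tree had no such node. Equivalently, a valid $T$ never has a cut that is trivial on the centers reaching it, since otherwise some leaf below the empty side would be centerless.

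So for valid $T$ every cut nontrivially splits $\U \cap P_v$, hence nontrivially splits $P_v$ in coordinate $i$. This gives $\min_{x\in P_v} x_i \le \theta < \max_{x\in P_v} x_i$, which implies $\min V_i \le \theta < \max V_i$, so snapping always applies and no contraction is needed.

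Snapping node by node along a root-to-leaf path, induction on depth shows that each $x \in P$ follows the same sequence of left/right decisions in $T^\sharp$ as in $T$. This holds because each snapped test agrees with the original on all of $P$, not just on $P_v$. Consequently every point of $\X \cup \U$ reaches the corresponding leaf. In particular each leaf of $T^\sharp$ has the same center as in $T$, so $T^\sharp$ is valid, $\pi_{T^\sharp} = \pi_T$ on $\X$, and the costs agree for $q=1,2$. $T^\sharp$ is critical by construction.

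The main subtle step is the boundary case, $\theta$ below all coordinates or at or above the maximum, where no consecutive pair exists. I will dispose of it using validity as above: every cut in a valid tree must separate two centers below it. If one wants robustness for non-minimal trees, the fallback is the contraction rule, which preserves leaf assignment for all points of $P$.
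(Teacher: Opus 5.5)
Your proof is correct and follows essentially the paper's route: snap each threshold to a critical midpoint, using validity (both children must receive a center) to guarantee that some consecutive pair of coordinate values straddles $\theta$. The only difference is that you snap relative to the global coordinate set $V_i$, so each snapped test agrees with the original on all of $\X \cup \U$; routing preservation is then immediate and the contraction fallback is never needed. The paper instead snaps relative to the extreme coordinates of the points routed to the node, and relies on top-down induction to keep those routed sets unchanged.
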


\begin{proof}
Process the internal nodes of $T$ top-down. Consider an internal node $v$
labeled by $(i,\theta)$, and let $S_v \subseteq \X \cup \U$ be the set of
points and centers routed to $v$. Since $T$ is valid, both children of $v$
contain at least one center. Hence both
\[
  L_v := \{p \in S_v : p_i \le \theta\}
  \qquad\text{and}\qquad
  R_v := \{p \in S_v : p_i > \theta\}
\]
are nonempty. Define
\[
  a := \max_{p \in L_v} p_i
  \qquad\text{and}\qquad
  b := \min_{p \in R_v} p_i.
\]
Then $a < b$. Both $a$ and $b$ belong to the global coordinate set $V_i$, so
there is a consecutive pair of values in $V_i$ whose open interval lies inside
$(a,b)$. Let $\theta' \in \Gamma_i$ be the midpoint of that pair. Then
$a < \theta' < b$, so every point in $L_v$ still satisfies $p_i \le \theta'$
and every point in $R_v$ still satisfies $p_i > \theta'$. Thus replacing
$(i,\theta)$ by $(i,\theta')$ leaves the partition of $S_v$ unchanged.
Applying this replacement at every internal node preserves the routed leaf of
every element of $\X \cup \U$, and therefore preserves the cost.
\end{proof}

\begin{remark}[Local form]
\label{appF:rem:local-normalization}
The same proof applies verbatim to any valid subtree rooted inside a box $B$:
if one restricts attention to the points and centers routed into $B$, then each
threshold in that subtree can again be snapped to a critical value from the
global set $\Gamma_i$ without changing any routed leaf inside $B$. We will use
this local form inside the dynamic-programming proof.
\end{remark}

\begin{remark}[Standard discretization]
Snapping continuous thresholds to critical values is standard whenever only a
finite comparison pattern matters. Likewise, dynamic programming over the
resulting canonical boxes fits the standard computational-geometry paradigm of
dynamic programming over hyperrectangular cells. Our point is not that either
ingredient is novel in isolation, but that for fixed-center explainable
clustering their combination yields a stronger exact optimizer than the
previous exhaustive-search formulation.
\end{remark}

\begin{definition}[Canonical boxes]
For each coordinate $i$, let
$\Delta_i := \{-\infty\} \cup \Gamma_i \cup \{+\infty\}$, written in sorted
order as
$-\infty = \tau_{i,0} < \tau_{i,1} < \cdots < \tau_{i,m_i} < \tau_{i,m_i+1}=+\infty$.
A \emph{canonical interval} on axis $i$ is an interval of the form
$(\tau_{i,a}, \tau_{i,b}]$ with $0 \le a < b \le m_i+1$. A
\emph{canonical box} is a Cartesian product
\[
  B = \prod_{i=1}^d (\tau_{i,a_i}, \tau_{i,b_i}].
\]
For such a box, write $\X_B := \X \cap B$ and $\U_B := \U \cap B$.
\end{definition}

No point of $\X \cup \U$ lies exactly on a critical threshold, because each
critical threshold is a midpoint between two distinct coordinates. Hence the
partition $x_i \le \theta$ versus $x_i > \theta$ is unambiguous for critical
trees.

\begin{lemma}[Canonical-box count]
\label{appF:lem:box-count}
The number of canonical boxes is at most
\[
  \prod_{i=1}^d (m_i+2)^2 \le (N+1)^{2d}.
\]
In particular, there are $N^{O(d)}$ canonical boxes.
\end{lemma}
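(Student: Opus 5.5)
The count is a direct product argument over the coordinates, followed by an elementary bound on each factor. By the definition of canonical boxes, a box $B=\prod_{i=1}^d(\tau_{i,a_i},\tau_{i,b_i}]$ is determined by the tuple of index pairs $(a_i,b_i)_{i\in[d]}$ with $0\le a_i<b_i\le m_i+1$. The map from such tuples to boxes is surjective by definition. It is in fact injective, since the $\tau_{i,j}$ are strictly increasing, but injectivity is not needed for an upper bound. So the number of canonical boxes is at most the number of admissible tuples. These choices are independent across coordinates, so that number equals $\prod_{i=1}^d c_i$, where $c_i$ counts the canonical intervals on axis $i$.

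For a single axis I count pairs $0\le a<b\le m_i+1$. There are $m_i+2$ values available for each endpoint, giving $c_i=\binom{m_i+2}{2}\le (m_i+2)^2$. Taking the product over $i$ gives the first inequality,
\[
\#\{\text{canonical boxes}\}\le \prod_{i=1}^d (m_i+2)^2 .
\]

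For the second inequality I use the observation recorded just before the definition, namely $m_i=|\Gamma_i|\le N-1$. It holds because $V_i$ contains at most $|\X\cup\U|\le |\X|+k=N$ distinct values, and consecutive distinct pairs number at most $N-1$. Hence $m_i+2\le N+1$, and therefore $\prod_i(m_i+2)^2\le (N+1)^{2d}$. Finally, $(N+1)^{2d}\le (2N)^{2d}=N^{O(d)}$ for $N\ge 1$, which is the asymptotic form stated.

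There is no real obstacle here. The only points needing care are the two boundary markers $\pm\infty$, which account for the $+2$ in $m_i+2$, and making sure the bound $m_i\le N-1$ holds even when $V_i$ has repeated coordinates. It does, because $\Gamma_i$ is built only from distinct consecutive values.
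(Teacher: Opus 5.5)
Your proposal is correct and is essentially the paper's argument: bound the endpoint pairs on each axis by $(m_i+2)^2$, take the product over the $d$ axes, and use $m_i\le N-1$ to reach $(N+1)^{2d}$. Your added details (the exact per-axis count $\binom{m_i+2}{2}$ and the justification of $m_i\le N-1$) are fine. One small nit: $(2N)^{2d}\le N^{O(d)}$ needs $N\ge 2$, which is harmless because when $N=1$ there is exactly one canonical box.
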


\begin{proof}
On each axis $i$, there are at most $(m_i+2)^2$ ordered choices of the pair of
endpoints $(\tau_{i,a},\tau_{i,b})$ with $a<b$. Taking the product over the
dimensions gives the claim.
\end{proof}

\begin{remark}[Worst-case size of the canonical-box universe]
\label{appF:rem:box-count-tight}
The exponent $2d$ in Lemma~\ref{appF:lem:box-count} is worst-case tight even if one
counts only boxes $B$ with $\U_B \neq \emptyset$. Let
\[
  \X=\{(t,\ldots,t): t=0,1,\ldots,2m\},
  \qquad
  \U=\{\mu^-,\mu^+\},
\]
where $\mu^-=(m,\ldots,m)$ and $\mu^+=(2m,\ldots,2m)$. Then $k=2$ and
$N=|\X|+k=2m+3=\Theta(m)$. On every axis $i$, the coordinate set
$V_i=\{0,1,\ldots,2m\}$ has $\Theta(N)$ distinct values, so the total number of
canonical intervals on one axis is $\binom{2m+2}{2}=\Theta(N^2)$. Moreover,
there are exactly $(m+1)^2=\Theta(N^2)$ canonical intervals on axis $i$
containing the interior coordinate $m$ of $\mu^-$: one may choose any left
endpoint from $\{-\infty,0.5,\ldots,m-0.5\}$ and any right endpoint from
$\{m+0.5,\ldots,2m-0.5,+\infty\}$. Taking the product over the $d$ coordinates
gives $\Theta(N^{2d})$ canonical boxes containing $\mu^-$, and therefore having
$\U_B \neq \emptyset$. Hence both the full canonical-box universe and the
subfamily $\{B : \U_B \neq \emptyset\}$ can have size $\Theta(N^{2d})$. In
particular, any approach that explicitly scans all canonical boxes, or even all
nonempty-center boxes, incurs $N^{\Omega(d)}$ worst-case overhead.
\end{remark}

\begin{proposition}[Complementary fixed-$k$ exact enumeration]
\label{appF:prop:fixed-k-enumeration}
For each $q \in \{1,2\}$, exact fixed-center optimization on $(\X,\U)$ can be
solved in time $(dN)^{O(k)}$. In particular, the problem is polynomial-time for
fixed $k$.
\end{proposition}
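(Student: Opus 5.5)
By Lemma~\ref{appF:lem:critical-normalization}, we may restrict the search to critical valid trees without losing optimality: every valid tree for $\U$ has a critical representative with identical routing, hence identical $\cost_q$. It therefore suffices to enumerate all critical valid trees for $\U$, evaluate the cost of each, and output the minimizer. The proof has three parts: bounding the number of critical valid trees, bounding the cost of evaluating each, and concluding.

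\textbf{Counting.} A valid tree has exactly $k$ leaves, since each leaf contains exactly one center and the centers are distinct. Hence it has exactly $k-1$ internal nodes. Its shape is a rooted full binary tree with $k$ leaves, and there are $C_{k-1}\le 4^{k}$ such shapes. Each internal node carries a label $(i,\theta)$ with $i\in[d]$ and $\theta\in\Gamma_i$. Since $|\Gamma_i| = m_i \le N-1$, there are at most $dN$ labels per node. The number of critical trees with $k$ leaves is therefore at most $4^k(dN)^{k-1} = (dN)^{O(k)}$. This family contains every critical valid tree, although it may also contain invalid ones, which we discard. We enumerate it in the same time, for instance by recursively generating shapes and then label assignments.

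\textbf{Evaluation.} For each candidate tree, we route all $N$ points of $\X\cup\U$ through it, which takes $O(Nk)$ time since the depth is at most $k-1$. We then check validity, meaning every leaf receives exactly one center. Finally we compute $\sum_{x\in\X}\|x-\pi_T(x)\|_q^q$ in $O(Nd)$ time. Note that no point of $\X\cup\U$ lies on a critical threshold, so routing is unambiguous. Critical thresholds are also well defined whenever two centers must be separated: distinct centers differ in some coordinate, so $\Gamma_i$ is nonempty on that axis. The total running time is $(dN)^{O(k)}\cdot\mathrm{poly}(N,d,k) = (dN)^{O(k)}$.

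\textbf{Correctness and the main subtlety.} The enumerated family includes $T^\sharp$ for an optimal valid tree $T$. By Lemma~\ref{appF:lem:critical-normalization}, $T^\sharp$ is valid and has the optimal cost, so the minimum over the enumerated valid trees equals the optimum over all valid trees. The main point to argue carefully is that optimal valid trees have exactly $k$ leaves. One might worry that a tree could contain extra cuts producing center-free leaves. Such trees are excluded by definition, since every leaf must contain exactly one center. This bounds the tree size by $k-1$ internal nodes and is what makes the count $(dN)^{O(k)}$ rather than unbounded. For fixed $k$ the bound is polynomial in $d$ and $N$, as claimed.
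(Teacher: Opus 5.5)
Your proposal is correct and follows essentially the same argument as the paper. You reduce to critical valid trees via Lemma~\ref{appF:lem:critical-normalization}, bound the candidates by $\mathrm{Cat}_{k-1}$ shapes times at most $d(N-1)$ critical labels at each of the $k-1$ internal nodes, and then route, validity-check and cost each candidate in polynomial time. The one point worth stating explicitly is that an optimal valid tree exists at all; this follows from the same normalization, since every valid tree's cost equals the cost of some member of the finite critical family.
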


\begin{proof}
By Lemma~\ref{appF:lem:critical-normalization}, it suffices to enumerate critical
valid trees. Let
\[
  M := \sum_{i=1}^d |\Gamma_i| \le d(N-1).
\]
Any valid tree for the $k$ centers in $\U$ has exactly $k$ leaves and therefore
$k-1$ internal nodes. There are at most $\mathrm{Cat}_{k-1}$ rooted full binary
tree shapes with $k$ leaves, where $\mathrm{Cat}_{k-1} = 4^{O(k)}$. After
choosing a shape, each internal node can be labeled by any of the $M$ critical
cuts. Thus the number of critical candidates is at most
$\mathrm{Cat}_{k-1} M^{k-1} = (dN)^{O(k)}$. For each candidate tree, we route all
points and centers through the tree, reject it if some leaf contains zero or
more than one center, and otherwise compute its cost. This evaluation takes
time polynomial in $N$ and $k$, so the total running time is $(dN)^{O(k)}$.
\end{proof}

\subsection{Exact dynamic programming for fixed centers}

We now complement the preceding fixed-$k$ enumeration with a dynamic program
indexed by canonical boxes, which is stronger in the orthogonal regime of fixed
dimension.

For a canonical box $B$ with $\U_B \neq \emptyset$, define the optimum value
$F_q(B)$ recursively as follows.
\begin{itemize}
  \item If $|\U_B| = 1$ and $\U_B = \{\mu(B)\}$, let
  \[
    F_q(B) := \sum_{x \in \X_B} \|x-\mu(B)\|_q^q.
  \]
  \item If $|\U_B| \ge 2$, let $F_q(B)$ be the minimum of
  $F_q(B^{\le}_{i,\theta}) + F_q(B^{>}_{i,\theta})$ over all critical cuts
  $(i,\theta)$ such that $\theta \in \Gamma_i$ lies strictly inside the $i$th
  interval of $B$ and both child center sets
  $\U_{B^{\le}_{i,\theta}}$ and $\U_{B^{>}_{i,\theta}}$ are nonempty, where
  \[
    B^{\le}_{i,\theta} := B \cap \{x \in \R^d : x_i \le \theta\},
    \qquad
    B^{>}_{i,\theta} := B \cap \{x \in \R^d : x_i > \theta\}.
  \]
\end{itemize}
Let $B_{\mathrm{all}} := \R^d$ denote the root box.

\begin{theorem}[Exact box dynamic program]
\label{appF:thm:box-dp}
For each $q \in \{1,2\}$, the value $F_q(B_{\mathrm{all}})$ equals the minimum
cost of any valid threshold tree for the fixed center set $\U$ on the instance
$(\X,\U)$. Moreover, a straightforward implementation computes
$F_q(B_{\mathrm{all}})$ in time $O(d N^{2d+2})$.
\end{theorem}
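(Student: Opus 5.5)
We prove correctness by showing two inequalities between $F_q(B)$ and $\mathrm{OPT}(B)$. Here $\mathrm{OPT}(B)$ is the minimum cost, over points of $\X_B$, of any threshold tree that is valid for $\U_B$ when restricted to the points and centers of $B$. The claim is $F_q(B)=\mathrm{OPT}(B)$ for every canonical box $B$ with $\U_B\neq\emptyset$. We argue by strong induction on $|\U_B|$, breaking ties by the number of critical thresholds strictly inside $B$; the latter ensures the recursion is well founded, since every child box is a strict canonical sub-box. Taking $B=B_{\mathrm{all}}$ then gives the theorem, because $\X_{B_{\mathrm{all}}}=\X$ and $\U_{B_{\mathrm{all}}}=\U$.

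\emph{Base case} $|\U_B|=1$. The only valid tree is a single leaf, so $\mathrm{OPT}(B)$ equals the defined value. Any extra cut would create a leaf with no center, so no other tree is valid.

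\emph{Upper bound} $F_q(B)\ge\mathrm{OPT}(B)$. Take the minimizing critical cut $(i,\theta)$. By induction, each child box $B'$ has an optimal tree of cost $F_q(B')$, since each child contains a nonempty, smaller center set. Gluing these two trees under the root $(i,\theta)$ gives a valid tree for $B$. Its cost is the sum of the children's costs, because every point of $\X_B$ is routed to exactly one child box.

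\emph{Lower bound} $F_q(B)\le\mathrm{OPT}(B)$. Take an optimal tree $T$ for $B$. By Remark~\ref{appF:rem:local-normalization}, we may replace $T$ by a critical tree with the same routing inside $B$. Its root cut $(i,\theta)$ has $\theta\in\Gamma_i$, and validity forces both sides to contain centers. The cut may be assumed to lie strictly inside the $i$th interval of $B$: otherwise one side would contain no centers of $B$, contradicting validity. Thus the two children are canonical boxes $B^{\le}_{i,\theta}$ and $B^{>}_{i,\theta}$, and the left and right subtrees are valid for them. Their costs are at least $F_q$ of those boxes by induction. The cost of $T$ splits additively over the two children, so $\mathrm{OPT}(B)\ge F_q(B^{\le})+F_q(B^{>})\ge F_q(B)$.

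The main subtlety is the lower bound. After the top-level snapping, each subtree must again be critical and must see exactly the point set of a canonical sub-box. This is exactly what the local form of the normalization lemma supplies. The point set is well defined because no point of $\X\cup\U$ lies on a critical threshold.

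\emph{Running time.} Memoize $F_q$ over all canonical boxes; by Lemma~\ref{appF:lem:box-count} there are at most $(N+1)^{2d}$ of them. Process boxes in order of increasing total width, measured by the index differences $b_i-a_i$. For each box we compute $\X_B$ and $\U_B$, and at a leaf box the leaf cost, in $O(dN)$ time. We then try at most $\sum_i m_i\le d(N-1)$ cuts. Each cut requires two table lookups and an $O(N)$ emptiness test, or an $O(1)$ test using precomputed center counts. This gives $O(dN)$ time per cut, and hence $O(dN^2)$ per box if we are careless. The total is $O(dN^{2d+2})$, using $(N+1)^{2d}=O(N^{2d})$ for fixed $d$.
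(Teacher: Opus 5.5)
Your proof is correct and follows essentially the same route as the paper. Both argue by induction on $|\U_B|$: the local critical normalization makes the root cut of an optimal tree critical, so its children are canonical boxes with nonempty center sets (you additionally check explicitly that the snapped cut lies strictly inside $B$'s $i$th interval); optimal child trees are glued together for the reverse inequality; and the running time is bounded by boxes $\times$ cuts $\times$ points. Two cosmetic points: the tie-breaker in your induction is unnecessary, since both children have strictly fewer centers, and ``$O(dN)$ time per cut'' should read $O(N)$ per cut, which is what your $O(dN^2)$-per-box total actually uses.
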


\begin{proof}
We first prove exactness by induction on $|\U_B|$.
If $|\U_B|=1$, any valid subtree rooted at $B$ has only one possible leaf
center, namely the unique center in $\U_B$, so the base case is correct.

Assume $|\U_B| \ge 2$. By Remark~\ref{appF:rem:local-normalization}, every valid
subtree rooted at $B$ has a cost-equivalent critical representative. Every
root-to-leaf region in a critical tree is obtained by intersecting $B$ with
halfspaces of the form $x_i \le \theta$ or $x_i > \theta$ for
$\theta \in \Gamma_i$, and is therefore a canonical box. Let $T$ be an optimal
critical valid subtree rooted at $B$, and let $(i,\theta)$ be its root cut.
Because $T$ is valid, both child subtrees contain at least one center, so both
child center sets are nonempty. The two child regions are exactly the canonical
boxes $B^{\le}_{i,\theta}$ and $B^{>}_{i,\theta}$. By optimality of the child
subtrees and the induction hypothesis, the cost of $T$ is at least
$F_q(B^{\le}_{i,\theta}) + F_q(B^{>}_{i,\theta})$, and hence at least the
minimum defining $F_q(B)$.

Conversely, take a cut $(i,\theta)$ attaining the minimum in the recurrence for
$F_q(B)$, and combine optimal subtrees for the two child boxes given by the
induction hypothesis. The resulting valid tree for $B$ has cost exactly
$F_q(B)$. Therefore $F_q(B)$ equals the optimum cost for every canonical box
$B$, and in particular for $B_{\mathrm{all}}$.

For the runtime, Lemma~\ref{appF:lem:box-count} gives at most $N^{O(d)}$ canonical
boxes. A straightforward implementation may scan the full canonical-box family,
or first discard boxes with $\U_B=\emptyset$; by
Remark~\ref{appF:rem:box-count-tight}, both choices still have worst-case size
$N^{O(d)}$. For each retained box, one scans all $O(dN)$ candidate critical
cuts and all $N$ points/centers needed to identify the relevant child center
sets and base costs. This yields the bound $O(dN^{2d+2})$.
\end{proof}

\begin{corollary}[Complementary algorithmic regimes]
\label{appF:cor:runtime-regimes}
The optimal valid threshold tree for a fixed center set $\U$ can be computed in
time $(dN)^{O(k)}$ and also in time $N^{O(d)}$. Consequently, exact
fixed-center optimization is polynomial-time for fixed $k$ or for fixed
dimension $d$. More generally, if $d = O(\log N)$, then the dynamic program of
Theorem~\ref{appF:thm:box-dp} runs in quasipolynomial time $N^{O(\log N)}$.
\end{corollary}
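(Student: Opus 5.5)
The corollary follows by assembling the two exact optimizers already in hand. For the fixed-$k$ bound I would invoke Proposition~\ref{appF:prop:fixed-k-enumeration} directly: by Lemma~\ref{appF:lem:critical-normalization} every valid tree has a cost-equivalent critical representative, and enumerating all critical trees (at most $\mathrm{Cat}_{k-1}M^{k-1}$ candidates with $M\le d(N-1)$) while evaluating each in time polynomial in $N$ and $k$ gives $(dN)^{O(k)}$. For the fixed-$d$ bound I would invoke Theorem~\ref{appF:thm:box-dp}, which states both that $F_q(B_{\mathrm{all}})$ equals the optimum over all valid threshold trees for $\U$ and that it can be computed in time $O(dN^{2d+2})$. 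Because $d\le N$ can be assumed without loss of generality, this is $N^{O(d)}$. If $d>N$, many coordinates are redundant: at most $N$ points can have distinct patterns on an axis. Alternatively, I would just keep the factor $d$ and note $d\cdot N^{2d+2}\le N^{O(d)}$ for $N\ge2$, since $d\le 2^d\le N^d$.

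Polynomiality then follows immediately. For fixed $k$, $(dN)^{O(k)}$ is polynomial in the input size, which is at least $dN$. For fixed $d$, $N^{O(d)}$ is polynomial. To recover an optimal tree rather than only its value, I would store the argmin cut in each box of the dynamic program (or keep the best candidate in the enumeration). This changes the running time by at most a constant factor.

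For the quasipolynomial claim, I would substitute $d\le c\log N$ into the bound $O(dN^{2d+2})$. This gives $O(\log N)\cdot N^{2c\log N+2}=N^{O(\log N)}=2^{O(\log^2 N)}$, which is quasipolynomial.

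I see no genuine obstacle here. The only point requiring care is the bookkeeping that absorbs the factor $d$ into $N^{O(d)}$ in the regime where $d$ is not bounded by $N$. I would handle it with the elementary inequality $d\le 2^{d}\le N^{d}$ for $N\ge 2$, and treat $N=1$ as a trivial case.
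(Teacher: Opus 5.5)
Your proposal is correct and follows the paper's own argument, which simply cites Proposition~\ref{appF:prop:fixed-k-enumeration} for the $(dN)^{O(k)}$ bound and Theorem~\ref{appF:thm:box-dp} for the $N^{O(d)}$ and quasipolynomial bounds; your extra bookkeeping is sound and slightly more careful than the paper, namely absorbing the factor $d$ via $d\le 2^{d}\le N^{d}$ and storing argmin cuts so that the tree itself is recovered rather than only its cost. Do drop the aside that $d\le N$ may be assumed without loss of generality, since having at most $N$ distinct values per axis does not make coordinates redundant (different axes can induce different orderings of the points), although your fallback inequality already makes that aside unnecessary.
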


\begin{proof}
The fixed-$k$ bound is Proposition~\ref{appF:prop:fixed-k-enumeration}; the fixed-$d$
and quasipolynomial bounds are immediate from Theorem~\ref{appF:thm:box-dp}.
\end{proof}

\begin{remark}[Choosing between the two exact algorithms]
\label{appF:rem:algorithm-crossover}
Given $(N,d,k)$, one may of course run whichever of the bounds
$(dN)^{O(k)}$ and $N^{O(d)}$ is smaller. Ignoring the extra $\log d$ factor in
the first expression, the crossover occurs roughly around $k \approx d$:
enumeration is preferable when $k \ll d$, while the canonical-box dynamic
program is preferable when $d \ll k$.
\end{remark}

\subsection{Deterministic realization and the tight k-median constant}

Let $\Best_q(\X,\U)$ denote an optimal valid threshold tree for the fixed-center
instance $(\X,\U)$ under the objective $\cost_q$.

\begin{theorem}[Deterministic realization theorem]
\label{appF:thm:realization}
Fix $q \in \{1,2\}$. Let $\mathcal{A}$ be any randomized algorithm that, given
$(\X,\U,\pi)$, outputs a valid axis-aligned threshold tree
$T_{\mathcal{A}}$. Then
\[
  \cost_q(\Best_q(\X,\U);\X,\U) \le \E[\cost_q(T_{\mathcal{A}};\X,\U)].
\]
In particular, if
$\E[\cost_q(T_{\mathcal{A}};\X,\U)] \le \rho(k)\,\cost_q(\pi,\U)$, then
\[
  \cost_q(\Best_q(\X,\U);\X,\U) \le \rho(k)\,\cost_q(\pi,\U).
\]
\end{theorem}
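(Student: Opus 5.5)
The argument is an averaging (minimum-is-at-most-expectation) argument. By definition, $\Best_q(\X,\U)$ is a valid threshold tree for $\U$ whose cost is minimal among all valid threshold trees for the fixed-center instance. Its existence and computability come from Theorem~\ref{appF:thm:box-dp}, together with Lemma~\ref{appF:lem:critical-normalization}, which show that this minimum is attained by a critical tree. So the infimum over all valid trees is a genuine minimum $\mathrm{OPT}_q(\X,\U) := F_q(B_{\mathrm{all}})$.

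The first step is to fix an arbitrary outcome of the internal randomness of $\mathcal{A}$. The output $T_{\mathcal{A}}$ is then a valid threshold tree for $\U$ on the same instance $(\X,\U)$. By minimality, $\cost_q(T_{\mathcal{A}};\X,\U) \ge \cost_q(\Best_q(\X,\U);\X,\U)$ holds pointwise, for every realization. I only need to check that validity is in the sense of the Definition above (every leaf holds exactly one center of $\U$), so that the optimizer's feasible family contains every possible output of $\mathcal{A}$.

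The second step is to take expectations over the randomness of $\mathcal{A}$. Since the right-hand side of the pointwise inequality is a deterministic constant, monotonicity of expectation gives the first claim. Some care is needed on measurability. Because critical normalization preserves cost, the random variable $\cost_q(T_{\mathcal{A}};\X,\U)$ takes values in the finite set of costs of critical valid trees. It is therefore a bounded random variable with finite range, and its expectation is well defined provided $\mathcal{A}$'s output is measurable, which I take as implicit in the statement. Chaining with the hypothesis $\E[\cost_q(T_{\mathcal{A}};\X,\U)]\le\rho(k)\cost_q(\pi,\U)$ then gives the second claim.

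There is no serious obstacle here; this is the probabilistic-method observation that some outcome is at least as good as the mean. The only point worth stating explicitly is that the optimizer is over \emph{all} valid trees, not just critical ones, which Lemma~\ref{appF:lem:critical-normalization} guarantees. Otherwise a randomized algorithm using non-critical thresholds, such as the continuous random cuts of~\cite{appF:GPSY23}, would not obviously be dominated.
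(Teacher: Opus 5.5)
Your proposal is correct and follows essentially the same route as the paper: fix a realization of the randomness of $\mathcal{A}$, use critical normalization together with the exactness of the box dynamic program to get that the optimizer's cost is at most $\cost_q(T_{\mathcal{A}};\X,\U)$ pointwise, then take expectations. Your extra remark that this cost takes values in a finite set, so the expectation is well defined, is a harmless refinement that the paper leaves implicit.
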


\begin{proof}
Consider any realization $\omega$ of the randomness of $\mathcal{A}$. The tree
$T_{\mathcal{A}}(\omega)$ is valid, so Lemma~\ref{appF:lem:critical-normalization}
gives a cost-equivalent critical valid tree. The dynamic program computes the
minimum cost over all valid trees for the fixed-center instance, so its output
has cost at most $\cost_q(T_{\mathcal{A}}(\omega);\X,\U)$ for every
realization $\omega$. Taking expectations proves the claim.
\end{proof}

\begin{corollary}[Polynomial-time deterministic $1+H_{k-1}$ in fixed-$d$ and fixed-$k$ regimes]
\label{appF:cor:kmedian}
For every reference clustering $(\pi,\U) \subseteq \R^d$, there is a
deterministically computable valid threshold tree $T^{\star}$
satisfying
\[
  \cost_1(T^{\star};\X,\U) \le (1+H_{k-1})\,\cost_1(\pi,\U).
\]
Consequently, the price of explainability for $k$-medians is at most
$1+H_{k-1}$ under a deterministic algorithm. This algorithm runs in polynomial
time for fixed $d$ or for fixed $k$, and in quasipolynomial time when
$d = O(\log N)$.
\end{corollary}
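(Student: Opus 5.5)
The plan is to take $T^{\star} := \Best_1(\X,\U)$, the tree returned by the exact fixed-center optimizer, and derive the bound by composing the realization theorem with the imported Gupta--Pittu--Svensson--Yuan guarantee. First, I will check that the hypotheses of Theorem~\ref{appF:thm:realization} hold for the Random Thresholds algorithm $\mathcal{A}_{\mathrm{med}}$ of Theorem~\ref{appF:thm:source-kmedian}. That algorithm, given $(\X,\U,\pi)$, outputs a valid threshold tree for the distinct center set $\U$ almost surely, and it satisfies $\E[\cost_1(T_{\mathcal{A}_{\mathrm{med}}};\X,\U)] \le (1+H_{k-1})\,\cost_1(\pi,\U)$. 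Applying Theorem~\ref{appF:thm:realization} with $q=1$ and $\rho(k)=1+H_{k-1}$ then yields
\[
\cost_1(T^{\star};\X,\U) \le (1+H_{k-1})\,\cost_1(\pi,\U).
\]
This argument needs no probabilistic rounding: the optimum over all valid trees is at most the cost of every realization, and hence at most the expectation.

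The price-of-explainability consequence follows by taking $(\pi,\U)$ to be an optimal unconstrained $k$-median clustering, as noted after the definition of validity. Then $\cost_1(T^{\star};\X,\U) \le (1+H_{k-1})\OPT$, and $T^{\star}$ is an explainable clustering. For determinism and running time, I will compute $T^{\star}$ either by the box dynamic program of Theorem~\ref{appF:thm:box-dp} (time $O(dN^{2d+2})$, polynomial for fixed $d$ and $N^{O(\log N)}$ when $d=O(\log N)$) or by the critical enumeration of Proposition~\ref{appF:prop:fixed-k-enumeration} (time $(dN)^{O(k)}$, polynomial for fixed $k$). Both were combined in Corollary~\ref{appF:cor:runtime-regimes}. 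The dynamic program must output a tree, not only the value $F_1(B_{\mathrm{all}})$. To get one, I will record the argmin cut at each box and reconstruct the tree top-down, which adds only linear overhead.

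I expect the main obstacle to be confirming that the optimum over valid trees computed by the dynamic program really dominates every tree the randomized algorithm can output. Random Thresholds samples continuous thresholds, and some sampled cuts may separate no centers. Such a cut produces a node with an empty-center child, so the raw output may not be valid in the strict sense of every leaf holding exactly one center. To handle this, I will prune every cut that leaves all centers of its node on one side: I replace that node with its center-containing child and route all points to that child. The concern is that this could move points into a different leaf and change their assigned center. However, under the Gupta et al.\ convention, a point sent to a center-free leaf has no center, and in their process cuts are drawn only among the current centers' range. So I will argue, or adopt as the convention, that the output is a valid tree whose assignment matches the analyzed cost. Once that is settled, Lemma~\ref{appF:lem:critical-normalization} snaps every realized tree to a critical tree with the same cost, and the exactness part of Theorem~\ref{appF:thm:box-dp} completes the argument.
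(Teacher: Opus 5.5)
Your proposal is correct and follows the paper's proof: apply Theorem~\ref{appF:thm:realization} to $\mathcal{A}_{\mathrm{med}}$ from Theorem~\ref{appF:thm:source-kmedian} with $q=1$, $\rho(k)=1+H_{k-1}$, and then take the running-time regimes from Corollary~\ref{appF:cor:runtime-regimes}. Recovering the tree from the recorded argmin cuts is a harmless detail the paper leaves implicit. You do not need the pruning argument for center-free cuts, because the imported theorem already asserts that $\mathcal{A}_{\mathrm{med}}$ outputs a valid tree (Random Thresholds applies a sampled cut only to leaves where it separates centers).
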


\begin{proof}
Apply Theorem~\ref{appF:thm:realization} to the randomized algorithm
$\mathcal{A}_{\mathrm{med}}$ from Theorem~\ref{appF:thm:source-kmedian}, then invoke
Corollary~\ref{appF:cor:runtime-regimes}.
\end{proof}

Corollary~\ref{appF:cor:kmedian} should be read in the correct prior-work context.
It is \emph{not} the first deterministic algorithm for explainable
$k$-medians. What it adds is the exact constant together with exact optimizers
in the complementary fixed-$d$ and fixed-$k$ regimes. Previous deterministic
all-parameter polynomial-time work achieved $2\ln k + 2$~\cite{appF:MS23}, whereas
Corollary~\ref{appF:cor:kmedian} achieves the exact randomized value
$1+H_{k-1}$. The tradeoff is that our exact algorithms incur fixed-parameter
dependence on $d$ or $k$.

\begin{remark}[Generic consequences and the dominated $k$-means case]
Theorem~\ref{appF:thm:realization} is generic: any randomized valid threshold-tree
bound for a fixed-center instance immediately transfers to the exact optimizer.
In our setting, critical normalization makes the feasible family discrete, and
Sections~3--4 make exact optimization over that family tractable in the
fixed-$d$ and fixed-$k$ regimes.
In particular, applying it to the randomized $k$-means construction
of~\cite{appF:GPSY23} gives a deterministic realization of its
$O(k\log\log k)$ bound, now computable in time $N^{O(d)}$. We do not present
this as a main contribution because it is dominated by the earlier deterministic
polynomial-time $O(k)$ guarantee of~\cite{appF:MS21}.
\end{remark}

\subsection{Discussion and open problems}

\paragraph{What changed relative to the previous draft.}
The previous round only made explicit a tree-enumeration view of the exact
optimizer. That enumeration is already polynomial-time for fixed $k$, with
running time $(dN)^{O(k)}$, but it does not help in fixed dimension. The new
canonical-box dynamic program complements it with an $N^{O(d)}$ exact algorithm,
hence polynomial-time for fixed dimension and quasipolynomial when
$d = O(\log N)$. The paper now states both exact fixed-parameter regimes
explicitly.

\paragraph{Why unrestricted polynomial time remains open.}
The dynamic program removes the combinatorial explosion over tree shapes, but it
still pays $N^{O(d)}$ for the number of canonical boxes. Thus the current paper
now gives a genuine fixed-dimension algorithmic result, but not a polynomial in
all parameters $(N,d)$ algorithm. Achieving the exact factor $1+H_{k-1}$ in
time polynomial in both $N$ and $d$ remains open. Remark~\ref{appF:rem:box-count-tight}
shows that the canonical-box universe itself can already have size
$\Theta(N^{2d})$, so any exact method that explicitly scans all canonical boxes
inherits an $N^{\Omega(d)}$ barrier.

\paragraph{How far naive output sensitivity can go.}
One natural refinement is to store only states $B$ with $\U_B \neq \emptyset$.
Remark~\ref{appF:rem:box-count-tight} already rules out a worst-case improvement from
that idea alone: even for $k=2$, the family $\{B : \U_B \neq \emptyset\}$ can
have size $\Theta(N^{2d})$. Thus ``scan only nonempty-center boxes'' may help
on structured instances, but it does not change the worst-case exponent. Any
genuine worst-case improvement would need a sharper notion of sparsity or
reachability than mere nonemptiness of $\U_B$. It remains open whether the
truly reachable subfamily---boxes that actually arise as dynamic-programming
subproblems from $B_{\mathrm{all}}$ via valid critical cuts---can itself have
size $\Theta(N^{2d})$ in the worst case.

\paragraph{Why generic preprocessing is a technical hurdle.}
It is tempting to ask whether one could preprocess an arbitrary instance down to
$O(\log N)$ dimensions and then invoke the same dynamic program. The first point
is that two different notions of compression must be separated. Our critical
normalization already compresses the \emph{precision} of each coordinate: after
Lemma~\ref{appF:lem:critical-normalization}, only the relative order among the
$O(N)$ critical intervals on each axis matters, so each coordinate needs only
$O(\log N)$ bits of discrete information. What remains expensive is the
\emph{number of coordinates}, not the number of bits needed to encode one
coordinate.

\begin{lemma}[Coordinate selection cannot remove dimension dependence]
\label{appF:lem:coordinate-selection-obstruction}
For every $d$, there is a fixed-center instance
$(\X,\U) \subseteq \R^d$ with $N=d+1$ such that no valid threshold tree using
only a proper subset of the original coordinates exists.
\end{lemma}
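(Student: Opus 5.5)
We take the standard simplex-type instance: let $\U=\{\mu^0,\mu^1,\ldots,\mu^d\}$ with $\mu^0=\mathbf{0}$ and $\mu^j=e_j$ (the $j$th standard basis vector) for $j\in[d]$. We take $\X=\emptyset$, or, if a nonempty data set is wanted, we place data points at the centers themselves so that $N=d+1$ still holds. Here $N$ counts the centers, so $N=k=d+1$ when $\X=\emptyset$. We then show that every coordinate $i\in[d]$ must be used by some internal node of any valid threshold tree for $\U$.

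The key observation concerns the pair $\mu^0$ and $\mu^i$. These two centers agree on every coordinate $j\neq i$: both have $j$th coordinate $0$. They differ only in coordinate $i$, where the values are $0$ versus $1$. Now suppose a tree $T$ uses only cuts $(j,\theta)$ with $j$ in a proper subset $J\subsetneq[d]$, and pick $i\notin J$. Then every test $x_j\le\theta$ with $j\in J$ returns the same answer on $\mu^0$ and on $\mu^i$. By induction down the tree, $\mu^0$ and $\mu^i$ are routed to the same child at every internal node, and hence reach the same leaf. That leaf therefore contains at least two centers, so $T$ is not valid.

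Hence every valid tree must use all $d$ coordinates. In particular, no valid tree exists over a proper subset of the coordinates, which is the claim. It is worth noting that validity is certainly achievable with all coordinates: cut $x_1\le\tfrac12$, then $x_2\le\tfrac12$ on the left side, and so on. This gives a caterpillar tree with $d$ cuts and $d+1$ leaves, one center per leaf. This confirms that the obstruction is genuinely about dropping coordinates rather than about infeasibility.

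The only delicate point is the bookkeeping of $N$. If the convention requires $\X\neq\emptyset$ while keeping $N=|\X|+k=d+1$, we must take $k=d$ centers and one data point. In that case we use $\U=\{e_1,\ldots,e_d\}$ together with the data point $\mathbf{0}$. The pair $e_i,e_j$ differs in coordinates $i$ and $j$, so the two-center argument above no longer isolates a single coordinate. Instead, for $d\ge 2$, we argue that a tree missing coordinate $i$ cannot separate $e_i$ from the remaining centers except via cuts $x_j\le\theta$ for $j\ne i$. Such a cut peels off $e_j$, so all other centers, including $e_i$, stay on one side. By induction, the last two centers to remain together must be separated by a cut on one of their own coordinates. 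I expect this case distinction to be the main obstacle: it requires fixing the convention for $N$ and checking the recursive peeling argument. My first choice, however, is the $\{\mathbf{0},e_1,\ldots,e_d\}$ instance with $\X=\emptyset$, where the argument is immediate.
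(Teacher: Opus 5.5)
Your main argument is correct and is essentially the paper's proof: the same center set $\{\mathbf{0},e_1,\dots,e_d\}$, and the same observation that for any $j$ outside the retained set, $\mathbf{0}$ and $e_j$ agree on every inspected coordinate and so land in the same leaf. The only difference is the data set. The paper takes $\mathcal{X}=\mathcal{U}$, which under its own definition $N:=|\mathcal{X}|+k$ gives $N=2d+2$. The paper is implicitly counting distinct points of $\mathcal{X}\cup\mathcal{U}$, and that is also the only sense in which your ``data at the centers'' variant keeps $N=d+1$. Your choice $\mathcal{X}=\emptyset$ meets $N=d+1$ literally.

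You should, however, delete your last paragraph, because the fallback is false, not merely delicate.
\begin{itemize}
\item For $\mathcal{U}=\{e_1,\dots,e_d\}$, the caterpillar tree with cuts $x_1\le\tfrac12,\,x_2\le\tfrac12,\dots,x_{d-1}\le\tfrac12$ isolates every center and never inspects coordinate $d$. Already for $d=2$, the single cut $x_1\le\tfrac12$ is valid.
\item Your own remark that the last remaining pair is split by a cut on one of \emph{its} coordinates is exactly why one coordinate is spared.
\item More fundamentally, a valid tree for $k$ distinct centers has exactly $k$ leaves, hence $k-1$ internal nodes, so it inspects at most $k-1$ coordinates. Whenever $k\le d$, every valid tree uses a proper subset of the coordinates. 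So no instance with $\mathcal{X}\neq\emptyset$ and $|\mathcal{X}|+k=d+1$ can witness the lemma.
\end{itemize}
Validity depends only on $\mathcal{U}$, so the data points play no role in the obstruction. Keep $\mathcal{X}=\emptyset$ if you want $N=d+1$ exactly, or add arbitrary data and accept $N=\Theta(d)$, as the paper effectively does.
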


\begin{proof}
Let $\X=\U=\{0,e_1,\ldots,e_d\}$, where $e_j$ denotes the $j$th standard basis
vector. Fix any proper subset $S \subsetneq [d]$, and choose
$j \in [d]\setminus S$. The two centers $0$ and $e_j$ agree on every retained
coordinate in $S$. Hence any threshold tree whose internal tests inspect only
coordinates from $S$ routes $0$ and $e_j$ identically at every node, so they end
in the same leaf. A valid tree must isolate every center in its own leaf, a
contradiction. Therefore every original coordinate can be essential.
\end{proof}

Lemma~\ref{appF:lem:coordinate-selection-obstruction} rules out the simplest kind of
generic preprocessing: one cannot hope to discard all but $O(\log N)$ of the
original features and still preserve the fixed-center explainable-clustering
problem on every instance. More aggressive preprocessing, such as PCA or a
Johnson--Lindenstrauss-type projection, faces a different obstacle. Those maps
mix coordinates. An axis-aligned cut in the reduced space, say $z_r \le \theta$,
then pulls back to an oblique inequality in the original variables rather than
to a test on a single original coordinate. Using such cuts would change the
explainability model of the paper instead of accelerating the same model.
The same obstruction persists if one adds any number of ordinary data points in
sufficiently small neighborhoods of the centers, so it is not an artifact of
the special choice $\X=\U$.

\paragraph{Why metric preservation is not enough.}
The obstacle is not merely that distances might change under preprocessing.
Pairwise distances, nearest-center relations, or even low-distortion embeddings
do not by themselves preserve the optimization problem we solve. Our dynamic
program must preserve the entire family of admissible recursive
axis-aligned partitions together with the validity constraint that each leaf
contain exactly one center. The state space is built from critical thresholds
and canonical boxes in the original coordinates; a generic embedding need not
preserve which subsets of centers can be separated by admissible cuts, nor how
cost decomposes over the resulting subproblems. This is the main technical
hurdle behind attempts to preprocess arbitrary instances down to
$d=O(\log N)$ while keeping the same exact explainability notion.

\paragraph{Why conditional expectations still do not solve the general case.}
A natural idea is to derandomize the proof of~\cite{appF:GPSY23} by fixing the
random choices one at a time. The difficulty is that their analysis is global:
it controls the full expected cost through linearity of expectation, a
cut-metric embedding, and pseudo-derivative inequalities for the completed
random tree. Under a partially fixed prefix of cuts, we still do not know an
efficiently computable state variable that determines the conditional expected
continuation cost. Our dynamic program avoids this issue by optimizing directly
over canonical boxes instead of trying to expose a tractable conditional
expectation. Put differently, a direct deterministic proof of the GPSY bound,
or a constructive reformulation that exposes such a continuation state, would
remove the sole black-box ingredient in Corollary~\ref{appF:cor:kmedian}; at
present, we do not know how to obtain such a proof.

The usual method of conditional expectations would need more than the existence
of a finite random support. One must maintain, after each exposed random choice,
an efficiently computable quantity whose value upper-bounds the expected final
cost of the \emph{unfinished} random tree and is guaranteed not to increase
under a good deterministic choice of the next cut. In our setting, the natural
candidate is the conditional expectation of the eventual explainable-clustering
cost given a partially revealed prefix of cuts. The obstacle is that such a
prefix does not appear to admit any compact description that is closed under the
GPSY analysis: it affects simultaneously which centers remain feasible in each
region, which pairs of points and centers will later be separated, and how the
future recursive validity constraints interact across siblings. The known proof
only bounds the expectation \emph{after} averaging over the completed random
tree, not through a local potential that can be updated one random decision at
a time.

Related standard derandomization tools face the same missing-state problem.
Pessimistic estimators are successful when one can write a surrogate potential
that is both efficiently maintainable under partial exposure and tight enough to
recover the target guarantee. Here no such estimator is known for the
$1+H_{k-1}$ bound: any plausible surrogate would have to summarize the future
geometry of all recursively induced subinstances and still upper-bound the final
valid-tree cost after every partial prefix. Likewise, replacing full randomness
by pairwise or bounded-independence distributions is useful when the analysis
depends only on low-order marginals or simple concentration bounds. The GPSY
argument instead relies on correlations created by the entire recursive tree via
its cut-metric embedding and pseudo-derivative inequalities, so there is no
known low-independence substitute that preserves the harmonic guarantee.

Finally, even very simple deterministic substitutes for the random choices are
not justified by the present analysis. Choosing the next threshold greedily, by
a median rule, or by any other purely local criterion may optimize the
immediate split while destroying the global cancellation that the randomized
proof exploits across many later cuts. The point is not that such rules are
provably impossible, only that the current proof offers no local certificate for
them. This is why our deterministic result proceeds by a different route:
instead of simulating the GPSY random process deterministically step by step, we
optimize exactly over the entire critical family once the centers are fixed.

\paragraph{Fixed-center versus end-to-end hardness.}
Existing hardness results for explainable clustering concern the end-to-end
problem of choosing both the centers and the threshold tree. Our algorithm, in
contrast, solves the \emph{fixed-center} optimization problem: given $\U$, find
the best valid threshold tree for $\U$. Those end-to-end hardness results do not
directly imply hardness for this fixed-center subproblem. To our knowledge, the
exact complexity of fixed-center optimal threshold-tree construction in
unrestricted dimension remains open.

\paragraph{Open problems.}
Four questions now seem especially natural.
\begin{enumerate}
  \item Can the exact factor $1+H_{k-1}$ for explainable $k$-medians be
  achieved in time polynomial in all parameters $(N,d)$?
  \item Can the $N^{O(d)}$ dependence of the canonical-box dynamic program be
  improved by exploiting a stronger notion of reachable states, or is some form
  of dimension dependence unavoidable for exact fixed-center optimization?
  Remark~\ref{appF:rem:box-count-tight} shows that merely restricting to boxes with
  $\U_B \neq \emptyset$ does not improve the worst-case exponent; it remains
  open whether the actually reachable subfamily can already have size
  $\Theta(N^{2d})$.
  \item What is the complexity of finding the minimum-cost valid threshold tree
  for a given center set $\U$ in unrestricted dimension?
  \item Is there any restricted preprocessing model, stronger than per-coordinate
  rank compression but weaker than arbitrary feature synthesis, that can reduce
  dimension while preserving axis-aligned explainability?
\end{enumerate}

\subsection{Conclusion}

We proved that the exact randomized $k$-median constant $1+H_{k-1}$ can be
realized deterministically by a stronger algorithm than brute-force tree
search. The key ingredients are critical normalization, a complementary
$(dN)^{O(k)}$ enumeration algorithm, and an exact dynamic program over canonical
boxes with running time $N^{O(d)}$. As a result, the tight constant is now
achievable in polynomial time for fixed $k$ or fixed dimension, and in
quasipolynomial time when $d = O(\log N)$. The remaining gap is no longer
whether deterministic attainment is possible, but whether the same exact
constant can be obtained in time polynomial in all parameters. The discussion
also clarifies why this gap cannot be closed by a black-box generic
dimension-reduction step without changing the axis-aligned explainability model.
Viewed abstractly, the paper isolates a simple transfer principle: once a
randomized fixed-center valid-tree guarantee is available, an exact optimizer
over the critical family yields a deterministic realization. Our contribution
is to make that optimizer tractable in the complementary fixed-$d$ and
fixed-$k$ regimes while mapping out the structural barriers that remain.

\end{document}